\global\mdfdefinestyle{myframe}{leftmargin=.75in,rightmargin=.75in,linecolor=black,linewidth=1.5pt,innertopmargin=10pt,innerbottommargin=10pt}
\definecolor{darkgreen}{rgb}{0,0.5,0}
\definecolor{darkblue}{rgb}{0,0,0.5}
\newcommand\l@protocol{\@dottedtocline{1}{1.5em}{2.3em}}\makeatother
\newcounter{relctr} 
\everydisplay\expandafter{\the\everydisplay\setcounter{relctr}{0}} 
\newcommand\labeleq[1]{%
  \begingroup
    \refstepcounter{relctr}%
    \textnormal{(\Alph{relctr})}
    \originallabel{#1}%
  \endgroup
}
\newtheorem{theorem}{Theorem}[section]
\newtheorem{lemma}[theorem]{Lemma}
\newtheorem{corollary}[theorem]{Corollary}
\newtheorem{claim}[theorem]{Claim}
\newtheorem*{claim*}{Claim}
\newtheorem{observation}[theorem]{Observation}
\crefname{theorem}{Theorem}{Theorems}
\Crefname{lemma}{Lemma}{Lemmas}
\Crefname{claim}{Claim}{Claims}
\Crefname{fact}{Fact}{Facts}
\Crefname{remark}{Remark}{Remarks}
\Crefname{observation}{Observation}{Observations}
\Crefname{line}{Line}{Lines}
\Crefname{protocol}{Protocol}{Protocols}
\newcounter{algorithmicH}
\let\oldalgorithmic\algorithmic
\renewcommand{\algorithmic}{%
  \stepcounter{algorithmicH}
  \oldalgorithmic}
\renewcommand{\theHALG@line}{ALG@line.\thealgorithmicH.\arabic{ALG@line}}
\DeclareMathOperator{\argmax}{argmax}
\DeclareMathOperator{\supp}{supp}
\newcommand{\chainP}{CHAIN\textsubscript{$p$}\xspace}
\newcommand{\chainPn}{CHAIN\textsubscript{$p$}$(n)$\xspace}
\newcommand{\chainOnlyP}[1]{CHAIN\textsubscript{$#1$}\xspace}
\newcommand{\chain}[2]{CHAIN\textsubscript{$#1$}$(#2)$\xspace}
\newcommand{\e}{\varepsilon}
\newcommand{\eps}{\e}
\newcommand{\expected}[1]{\mathbb E \left[ #1 \right]}
\newcommand{\characteristic}{\mathbbm{1}}
\newcommand{\ee}{\expected}
\newcommand{\rb}[1]{\left( #1 \right)}
\newcommand{\cA}{\mathcal{A}}
\newcommand{\cG}{\mathcal{G}}
\newcommand{\cH}{\mathcal{H}}
\newcommand{\cF}{\mathcal{F}}
\newcommand{\Oval}{\textsc{OPT}\xspace}
\newcommand{\Oset}{\mathcal{O}\xspace}
\newcommand{\OAset}{\mathcal{O_A}\xspace}
\newcommand{\ALG}{{PRT}}
\newcommand{\ALGI}{{PRT_{\text{INDEX}}}}
\newcommand{\PRTpI}{{}PRT_{\text{\chainP}}}
\newcommand{\RSet}{{\mathcal{R}}}
\newcommand{\hi}{{i'}}
\newcommand{\maxcard}{Max-Card-$k$\xspace}
\newcommand{\maxcardpdf}{Max-Card-k}
\algnewcommand\myand{\textbf{and} }
\algnewcommand\myor{\textbf{or} }
\newcommand\Greedy{\textsf{Greedy}\xspace}
\renewcommand\vec{\mathbf}
\newcommand{\nnR}{{\mathbb{R}_{\geq 0}}}
\newcommand{\cupdot}{\mathbin{\mathaccent\cdot\cup}}
 \gdef\xxxmark{%
   \expandafter\ifx\csname @mpargs\endcsname\relax 
     \expandafter\ifx\csname @captype\endcsname\relax 
       \marginpar{xxx}
     \else
       xxx 
     \fi
   \else
     xxx 
   \fi}
 \gdef\xxx{\@ifnextchar[\xxx@lab\xxx@nolab}
 \long\gdef\xxx@lab[#1]#2{{\bf [\xxxmark #2 ---{\sc #1}]}}
 \long\gdef\xxx@nolab#1{{\bf [\xxxmark #1]}}
 \long\gdef\xxx@lab[#1]#2{}\long\gdef\xxx@nolab#1{}%
\title{The One-way Communication Complexity of Submodular Maximization with Applications to Streaming and Robustness}
\author{Moran Feldman\thanks{Email: \href{mailto:moranfe@cs.haifa.ac.il}{moranfe@cs.haifa.ac.il}. Research supported in part by the Israel Science Foundation (ISF) grant no. 1357/16.}\\University of Haifa \and Ashkan Norouzi-Fard\thanks{Email: \href{mailto:ashkannorouzi@google.com}{ashkannorouzi@google.com}.}\\Google Research \and Ola Svensson\thanks{Email: \href{mailto:ola.svensson@epfl.ch}{ola.svensson@epfl.ch}. Research supported by the Swiss National Science Foundation project 200021-184656 ``Randomness in Problem Instances and Randomized Algorithms.''} \\EPFL 
\and Rico Zenklusen\thanks{Email: \href{mailto:ricoz@math.ethz.ch}{ricoz@math.ethz.ch}. Research supported in part by Swiss National Science Foundation grants 200021\_184622 and 200021\_165866. This project has received funding from the European Research Council (ERC) under the European Union's Horizon 2020 research and innovation programme (grant agreement No 817750).}
\\ ETH Zurich}
\date{}
\begin{document}
\maketitle
\begin{abstract}
We consider the classical problem of maximizing a monotone submodular function subject to a cardinality constraint, which, due to its numerous applications, has recently been studied in various computational models.
We consider a clean multi-player model that lies between the offline and streaming model, and study it under the aspect of one-way communication complexity. Our model captures the streaming setting (by considering a large number of players), and, in addition, two player approximation results for it translate into the robust setting.
We present tight one-way communication complexity results for our model, which, due to the above-mentioned connections, have multiple implications in the data stream and robust setting.

Even for just two players, a prior information-theoretic hardness result implies that no approximation factor above $1/2$ can be achieved in our model, if only queries to feasible sets, i.e., sets respecting the cardinality constraint, are allowed. We show that the possibility of querying infeasible sets can actually be exploited to beat this bound, by presenting a tight $2/3$-approximation taking exponential time, and an efficient $0.514$-approximation. To the best of our knowledge, this is the first example where querying a submodular function on infeasible sets leads to provably better results. Through the above-mentioned link to the robust setting, both of these algorithms improve on the current state-of-the-art for robust submodular maximization, showing that approximation factors beyond $1/2$ are possible. Moreover, exploiting the link of our model to streaming, we settle the approximability for streaming algorithms by presenting a tight $1/2+\varepsilon$ hardness result, based on the construction of a new family of coverage functions. This improves on a prior $1-1/e+\varepsilon$ hardness and matches, up to an arbitrarily small margin, the best known approximation algorithm. 

%
%

\end{abstract}

\pagenumbering{Alph}
\thispagestyle{empty}

\newpage
\pagenumbering{arabic}
\setcounter{page}{1}

\section{Introduction}
\label{sec:intro}
A set function $f\colon 2^W \rightarrow \mathbb{R}$ over a finite ground set $W$ is \emph{submodular}  if 
\begin{align*}
  \label{eq:submodular}
  f(v\mid X) \geq f(v\mid Y)  \qquad \mbox{for all $X \subseteq Y \subseteq W$ and $v\in W \setminus Y$,}
\end{align*}
where, for a subset $S\subseteq W$ and an element $v\in W$, the value $f(v\mid S) = f(S \cup \{v\}) - f(S)$ is the marginal contribution of $v$ with respect to $S$. 
The definition of submodular functions captures the natural property of diminishing returns, and submodular functions have a rich history in optimization with  numerous applications (see, e.g., Schrijver's book~\cite{schrijver-book}). 

Already in 1978,  Nemhauser, Wolsey, and Fisher~\cite{nemhauser1978analysis} analyzed the following algorithm, which we refer to as \Greedy, for selecting the most valuable set $S\subseteq W$ of cardinality at most $k$.
\begin{enumerate}[label=(\roman*), itemsep=0em, topsep=0.2em, parsep=0em]
  \item Initially, let $S = \varnothing$.
  \item For $i=1, \ldots, k$: choose any $v\in \arg \max_{w\in W}  f(w\mid S)$ and set $S = S\cup\{v\}$.
\end{enumerate}
In words, the algorithm  greedily picks in each iteation an element with the largest marginal contribution with respect to the already selected elements $S$.
Assuming that $f$ is non-negative ($f(X) \geq 0$ for all $X\subseteq W$) and monotone ($f(X) \leq f(Y)$ if $X\subseteq Y$), Nemhauser et al.~\cite{nemhauser1978analysis} showed that \Greedy returns a $(1-1/e)$-approximate solution. Moreover, the approximation guarantee of $1-1/e$ is known to be tight~\cite{nemhauser1978best,feige1998threshold}.

In recent years, submodular function maximization has found several applications in problems related to data science and machine learning, including feature selection, sensor placement, and image collection summarization~\cite{SubmodularWWW, golovin2011adaptive, DBLP:conf/nips/Bach10, DBLP:conf/nips/DasDK12, DBLP:conf/icml/DasK11, DBLP:conf/nips/ZhengJCP14,DBLP:conf/acl/BairiIRB15}. 
These applications are often modeled as a maximization of a non-negative and monotone submodular function. 
For example, if we wish to summarize an image collection, we would like to select $k$ images that cover different topics, and this
objective can be modeled as a (non-negative and monotone) submodular function.
While \Greedy gives the best possible guarantee for solving this problem  in traditional computing (when the entire instance is accessible to the algorithm at all times),  the requirements stipulated by modern applications, often involving huge data sets, make such algorithms inadequate. 

This motivates, together with the inherent theoretical interest, the study of submodular function maximization in new models of computation. 
Indeed, in recent years, there has been substantial interest in submodular function maximization with respect to limited memory (so-called data stream algorithms)~\cite{badanidiyuru2014streaming,buchbinder2015online,DBLP:journals/corr/abs-1802-07098, conficmlNorouzi-FardTMZ18, conficml0001MZLK19, alaluf2020optimal}, robustness~\cite{orlin2016robust, DBLP:conf/icml/BogunovicMSC17, pmlr-v70-mirzasoleiman17a, Mitrovic:2017:SRS:3294996.3295209, pmlr-v80-kazemi18a}, parallel computation (in the map-reduce model)~\cite{barbosa15power,DBLP:conf/stoc/MirrokniZ15, DBLP:conf/focs/BarbosaENW16}, and most recently with respect to adaptivity~\cite{DBLP:conf/stoc/BalkanskiS18,balkanski2019exponential,DBLP:conf/stoc/BalkanskiRS19,chen2019unconstrained,chekuri19submodular,ene2019submodular,ene2019matroid,fahrbach2019nonmonotone,DBLP:conf/soda/FahrbachMZ19}.
In each of these models, the central benchmark problem has been the basic cardinality-constrained problem studied in~\cite{nemhauser1978analysis}, namely that of finding a set $S \subseteq W$ of cardinality $k$ that maximizes $f(S)$, where $f$ is a non-negative and monotone submodular function.  
We refer to this problem as \maxcard.  

While tight algorithms are known for \maxcard  (and even for the more general problem where the cardinality constraint is replaced by a matroid) in the map-reduce model~\cite{DBLP:conf/focs/BarbosaENW16, DBLP:conf/soda/LiuV19} and the adaptive model~\cite{balkanski2019exponential,ene2019submodular,DBLP:conf/soda/FahrbachMZ19}, it has remained an open problem  to give tight results for the data stream and robust settings. 
In this paper, we resolve this question for data stream algorithms and make progress on the robust problem. 
These results are obtained by considering the \emph{one-way communication complexity} of \maxcard, the study of  which highlights several interesting aspects of submodular functions.
We first discuss our results in the clean communication model, and then give more detail about the connection to data stream and robust algorithms. 

\subsection{One-way Communication Complexity of \texorpdfstring{\maxcard}{\maxcardpdf}}
We first study the one-way communication complexity of \maxcard in the presence of two players. 
An informal description of the model is as follows (see \cref{sec:prelim_model} for the formal definition).
The first player Alice has only access to a subset $V_A \subseteq W$ of the ground set  and the second player Bob has access to $V_B \subseteq W$ with $V_A\cap V_B=\varnothing$.  
In the first phase, Alice can query the value of a submodular objective function $f$ on any subset of her elements; and then, at the end of the phase, she can send an arbitrary message to Bob based on the information that she has. 
Then, in the second phase, Bob gets the message of Alice and the elements of $V_B$. He can then query $f$ on any subset of the elements, and his objective is to produce a subset of $V_A \cup V_B$ of size at most $k$ that approximately maximizes $f$ among all such subsets.

A trivial protocol that allows Bob to always  output the  optimal solution is for Alice to send all the elements in $V_A$; and for Bob to then output $\arg \max_{S \subseteq V_A \cup V_B: |S| \leq k} f(S)$. 
While this protocol has an optimal approximation guarantee of $1$, it has a very large communication complexity since it requires Alice to send all the elements of $V_A$, which may be as many as $N  = |W|$ elements.

A protocol of lower communication complexity is for Alice to calculate her ``optimal'' solution $S_A = \arg\max_{S \subseteq V_A: |S| \leq k} f(S)$ and send only those (at most $k$ many) elements in $S_A$ to Bob. 
Bob then outputs either his ``optimal'' solution $S_B = \arg\max_{S \subseteq V_B: |S| \leq k} f(S)$ or $S_A$, whichever set attains the larger value. 
It is not hard to see that this protocol has an approximation guarantee of $1/2$, i.e., that $\max(f(S_A), f(S_B))\geq 1/2 \cdot \max_{S \subseteq V_A \cup V_B: |S| \leq k} f(S)$  for any $V_A, V_B \subseteq W$.

The above examples indicate a natural trade-off between the amount of communication and the approximation guarantee, with the central question being to understand the optimal relationship between these two quantities. 
If one further restricts Alice to only  query the submodular function on sets of cardinality at most $k$, then the hardness result in~\cite{conficmlNorouzi-FardTMZ18} for streaming algorithms  implies that, in any (potentially randomized) protocol with an approximation guarantee of $(1/2+\varepsilon)$, Alice must send a message of length $\Omega\rb{\varepsilon N/k}$. 
In other words, under this restriction on Alice, the two basic protocols described above achieve the optimal trade-off up to lower order terms (in this, as in previous work, we think of $k \ll N$).

Restricting Alice to evaluate $f$ only on sets of cardinality at most $k$ may appear like a mere technical assumption to make the arguments in~\cite{conficmlNorouzi-FardTMZ18} work; especially since, to the best of our knowledge, there are no known examples where querying the submodular function on infeasible sets leads to provably better results. 
Perhaps surprisingly, we prove this intuition wrong and give a protocol that crucially exploits the possibility to query the value of infeasible sets.

\begin{theorem} \label{thm:two_player_protocol_simplified}
    There exists a two-player protocol for \maxcard with an approximation guarantee of $2/3$ in which Alice sends a message consisting of $O(k^2)$ elements.
\end{theorem}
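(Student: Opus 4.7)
The plan is to have Alice send $O(k^2)$ elements of $V_A$ organized into roughly $k$ ``layers'' of up to $k$ candidates each, and to have Bob produce
\[
  S^{\star} \;:=\; \arg\max\bigl\{f(S)\;:\;S \subseteq M_A \cup V_B,\ |S|\leq k\bigr\}
\]
by exhaustive search. Since exponential time is allowed, Alice can perform expensive computations. The candidates in layer $j$ will be designed to serve as potential replacements for the $j$-th element of $O\cap V_A$ against any completion by elements of $V_B$. To get past the $1/2$ barrier that applies when Alice is restricted to feasible queries, these layers must be selected using queries of $f$ on subsets of $V_A$ of arbitrary (potentially large) size, which is where infeasible queries enter the argument.

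A concrete first attempt I would try is as follows. Alice computes, by brute force, $S_A := \arg\max_{|S|=k,\,S\subseteq V_A} f(S)$ and a nested chain $\emptyset = X_0\subseteq X_1\subseteq\cdots\subseteq X_k = S_A$ with $|X_i|=i$ obtained greedily. Layer $j$ is then the set of $k$ elements of $V_A\setminus X_{j-1}$ with the largest marginal contribution to $X_{j-1}$, and $M_A$ is the union of all layers, which has size at most $k^2$. Each of these steps involves evaluating $f$ on subsets of $V_A$ of arbitrary size (even on $V_A$ itself when checking optimality), which is exactly the infeasible-query capability that the paper claims to be crucial. Bob then simply exhaustively searches all $k$-subsets of $M_A\cup V_B$.

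For the analysis, fix the optimum $O$ and write $O_A = O\cap V_A$, $O_B = O\cap V_B$, $a=|O_A|$, $b=|O_B|$. The heart of the proof is a hybrid/exchange argument that starts from $O_A\cup O_B$, replaces the elements of $O_A$ one by one with appropriate elements from the matching layer, and ends with a set $R\cup O_B$ such that $R\subseteq M_A$, $|R|=a$, and $f(R\cup O_B)\geq \tfrac{2}{3}\,f(O)$. Submodularity together with the top-$k$ property of each layer should bound the loss per swap; since Bob's exhaustive search dominates $f(R\cup O_B)$, this would establish the theorem.

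The main obstacle, and where the factor must be sharpened to exactly $2/3$ rather than a weaker constant, is balancing two competing regimes. In one regime most of $f(O)$ comes from $O_A$, in which case $S_A$ itself already essentially matches $O$ and the hard work is charging the loss against $f(S_A)-f(O_A)$; in the other most of $f(O)$ comes from $O_B$, and one must argue that having $k$ candidates per layer (rather than just the greedy pick) supplies enough flexibility to substitute for $O_A$ on top of $O_B$. Obtaining the crisp $2/3$ from this balance is the delicate step, and it is here that an infeasible query, such as comparing $f(S_A)$ with $f$ evaluated on strictly larger subsets of $V_A$ used to certify that no better layer-$j$ candidate exists, is likely to be indispensable — matching the paper's claim that the $1/2$ ceiling of the feasible-only model is genuinely broken by exploiting values on infeasible sets.
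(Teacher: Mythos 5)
Your construction is not the paper's, and the analysis you outline is incomplete precisely at the step that constitutes the theorem. The swap/hybrid argument you describe --- replacing $O_A$ element by element with layer candidates, with the loss per swap charged via submodularity and the top-$k$ property --- is the template behind the $(1/2-\varepsilon)$-approximations in the streaming and robust-summary literature, and on its own it yields $1/2$. You acknowledge this when you write that ``obtaining the crisp $2/3$ from this balance is the delicate step.'' That is not a refinement left to the reader; it is the missing idea, and nothing in your construction --- the argmax $S_A$, the greedy chain within it, the $k$ candidates per level --- indicates where the extra $1/6$ would come from.

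The paper's protocol (\cref{alg:many_sizes}) is structurally simpler: Alice sends, for every $i\in\{0,\dots,2k\}$, the value-maximizing subset $S_i\subseteq V_A$ of size at most $i$, and Bob brute-forces over $M = V_B\cup\bigcup_i S_i$. Writing $q = k-|\Oset \cap M|$, the analysis is three short steps. First, $f(S_q)\geq f(\Oset \setminus M)$, since $\Oset \setminus M$ is one of the sets over which $S_q$ maximizes (\cref{obs:remaining_OPT_approximation}). Second, the set $(\Oset \setminus M)\cup S_q\subseteq V_A$ has at most $2q\leq 2k$ elements and value at least $\Oval + f(S_q) - f(\widehat S)$, so the same lower bound holds for $f(S_{2q})$ (\cref{lem:large_set_guarantee}). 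Third --- and this is the step with no counterpart in your proposal --- Bob can split $S_{2q}$ into two halves of size at most $q$ each and append each to $\Oset \cap M$; by submodularity and the optimality of $\widehat S$ among $k$-subsets of $M$, this gives $2f(\widehat S)\geq f(\Oset \cap M)+f(S_{2q})$ (\cref{lem:secondary_relationship}). Chaining, $3f(\widehat S)\geq f(\Oset \cap M)+\Oval+f(\Oset \setminus M)\geq 2\,\Oval$. So the role of infeasible queries is not merely to compute $S_A$ or to pick layers: it is to identify and forward the infeasible, size-up-to-$2k$ sets $S_{2q}$ that Bob then \emph{halves}, and it is the halving of a guaranteed high-value $2q$-element set that breaks $1/2$. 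Your $M_A$ carries no guarantee of containing such a set, and the element-by-element swap argument does not manufacture one.
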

We present this protocol in \cref{ssc:two_players_algs}, where we also show that we can further reduce the message size of Alice down to $O(k \log (k)/\varepsilon)$ elements while still obtaining an approximation guarantee of $2/3-\varepsilon$.  
By allowing Alice to query $f$ on sets of cardinality larger than $k$, we can thus improve the approximation guarantee of $1/2$ to $2/3$ while still maintaining an (almost) linear-sized message in $k$.
In \cref{sec:two-player-sub-hardness}, we further show that the guarantee of $2/3$ is tight in the following strong sense. In any protocol that achieves a better guarantee, Alice must send a message of roughly the same size as the trivial protocol mentioned above that achieves an approximation guarantee of $1$. 
\begin{theorem} \label{thm:two_player_hardness_simplified}
    In any (potentially randomized) two-player protocol for \maxcard that has an approximation guarantee of $2/3+\varepsilon$ for $\varepsilon > 0$, Alice sends a message of length at least $\Omega\rb{\varepsilon N/k}$.
\end{theorem}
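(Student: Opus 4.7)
The plan is to reduce from a one-way communication problem with randomized complexity $\Omega(\varepsilon N / k)$---a natural choice being a multi-coordinate Index (or Augmented Index) problem with $\Theta(N/k)$ independent gadgets, each of size $\Theta(k)$, in which Bob must recover one planted bit per gadget. By Yao's minimax principle, it suffices to construct a distribution over Max-Card-k instances on a ground set of size $N$ under which any deterministic $(2/3+\varepsilon)$-approximate protocol must transmit $\Omega(\varepsilon N/k)$ bits, with each source coordinate embedded in one gadget.

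The main technical step is the construction of a family of monotone submodular coverage functions whose approximability gap is tight at $2/3$. I would partition the ground set into $\Theta(N/k)$ disjoint gadgets; within each gadget, Alice's portion contains many candidate elements, exactly one ``planted'' by the hidden bit, while Bob's portion contains a fixed companion element. The coverage function is tailored so that $\Oval$ is attained only by picking, in every gadget, the planted element alongside its companion, whereas any globally optimized solution that misses the planted element in a gadget can recover at most a $2/3$ fraction of that gadget's contribution. A natural realization uses three overlapping cover sets per gadget, any two of which cover exactly two-thirds of the gadget's local universe, so that the global gap mirrors the $2/3$-approximation algorithm of \cref{thm:two_player_protocol_simplified}.

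The principal obstacle---absent in the earlier $1/2$-hardness of~\cite{conficmlNorouzi-FardTMZ18}---is that Alice is now permitted to query $f$ on sets of cardinality larger than $k$. The coverage structure must therefore be symmetric enough that no family of (possibly infeasible) Alice-queries can single out the planted element within a gadget without effectively encoding the hidden bit in Alice's message. I would enforce this by randomly permuting the candidate elements inside each gadget independently, and by designing the coverage universe so that every query---even one of size much larger than $k$---touches at most one gadget's planted coordinate in an information-theoretically isolated way. Once per-gadget hardness is established, a standard direct-sum argument over the $\Theta(N/k)$ independent gadgets, combined with the observation that any $(2/3+\varepsilon)$-approximate solution must correctly identify the planted element in a constant fraction of gadgets, lifts the per-coordinate hardness to the claimed $\Omega(\varepsilon N/k)$ lower bound.
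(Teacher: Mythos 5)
Your reduction has the right flavor at a high level---embed an INDEX-type communication problem so that any $(2/3+\varepsilon)$-approximate protocol must leak $\Omega(\varepsilon N/k)$ bits---but the specific gadget geometry you sketch does not work, and it differs materially from what the paper does.

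The core problem is the cardinality budget. You propose $\Theta(N/k)$ independent gadgets, a ``companion'' element for Bob in \emph{each} gadget, and an optimum that picks the planted element plus its companion in \emph{every} gadget. That optimum has $2\cdot\Theta(N/k)$ elements, which is infeasible under the budget $|S|\leq k$ unless $N = O(k^2)$; so your direct-sum argument cannot reach the stated bound for general $N$. Equally, if you instead only get credit from one gadget, the instance collapses to a single-coordinate problem and the direct sum buys nothing. You also never specify how the per-gadget coverage function (``three overlapping cover sets, any two cover $2/3$'') realizes the gap without leaking information to queries of size $>k$; ``randomly permuting candidates'' does not by itself imply that large Alice-side queries carry no information, and this is exactly where the earlier $1/2$-hardness of~\cite{conficmlNorouzi-FardTMZ18} broke for unrestricted queries.

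The paper instead reduces from a \emph{single} INDEX instance of length $n=\Theta(N/k)$, not a multi-coordinate direct sum. The ground set is $W_A=\{u_i^j : i\in[n],\, j\in[k-1]\}$ (the $i$-th ``column'' consists of $k-1$ elements) and $W_B=\{w\}$: Bob has \emph{one} shared pivot element, not a companion per gadget. The hidden object is a whole column (all $k-1$ elements $u_t^j$), and the YES-case optimum is $\{w\}\cup\{u_t^j\}_{j\in[k-1]}$, of value $1$; in the NO case every feasible set has value at most $\frac{2k}{3(k-1)}\approx 2/3$. Indistinguishability is obtained \emph{structurally} rather than by permutation: the family $\{f_t\}$ is built as $f_t(S)=G_t(y^S)$ where $G_t$ is the multilinear extension of a tiny function $g_t$ on $\{w\}\cup\{v_1,\ldots,v_n\}$ that is independent of $t$ whenever $w\notin S$; hence all $f_t$ coincide on $2^{W_A}$ and Alice learns nothing about $t$ regardless of which (possibly infeasible) sets she queries. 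Finally, a standard parallel repetition of the reduction amplifies the success probability to invoke the $\Omega(n)$ INDEX lower bound, yielding $\Omega(\varepsilon N/k)$. If you want to pursue your outline, the essential fix is to abandon the per-gadget direct sum, use a single queried coordinate, let Bob hold a single pivot, make the hidden object consume $\Theta(k)$ elements so the budget is used up entirely in one column, and design the objective so that its restriction to $W_A$ is literally the same for all hidden coordinates.
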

%
The fact that we can beat the approximation guarantee of $1/2$ using little  communication in the presence of two players, gives hope that a similar result may hold for many players, and more generally in the streaming model, which can be thought of as having one player per element. 
The definition of the $p$-player setting, with $p \geq 2$, is the natural generalization of the two-players setting.
Informally (again, see~\cref{sec:prelim_model} for the formal definition), the $i$-th player receives a private subset $V_i \subseteq W$ of the ground set and, upon reception of a message from the previous player, she computes and sends a message to the following player. Finally, the last player's task is to  output a subset of $V_1 \cup V_2 \cup \cdots \cup V_p$  of cardinality at most $k$ that approximately maximizes $f$ among all such subsets.  

One can observe that the  streaming algorithm of~\cite{conficml0001MZLK19}  yields, for any integer $p\geq 2$, a $p$-player protocol that has an approximation guarantee of $(1/2 - \varepsilon)$ and where each player sends a message consisting of at most $O(k/\varepsilon)$ elements.
Moreover, the obtained protocol only queries $f$ on sets of size at most $k$ and is, thus, tight with respect to such protocols (even in the two-player setting).
Similar to the two-player case, the naturally arising question is whether protocols querying $f$ on sets of cardinality larger than $k$ can improve over this approximation guarantee.
Our most technical result shows that this is not the case as $p$ (and $k$)  tends to infinity.
\begin{theorem} \label{thm:many_player_hardness_simplified}
    \label{thm:intro_gen_hardness}
    For every $\varepsilon > 0$, there is an integer $p_0\geq 2$  such that the following holds for any (potentially randomized) $p$-player protocol for \maxcard with $k=p \geq p_0$. If the protocol has an approximation guarantee of $1/2 + \varepsilon$, then one of the players sends a message of length at least $\Omega\rb{\varepsilon N/p^3}$.
\end{theorem}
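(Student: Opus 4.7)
The plan is to reduce from a hard multi-party one-way communication problem, most naturally the chained-index-style problem \chainP whose notation the paper has already introduced. I would design a distribution over \maxcard instances parameterized by a secret chain $\sigma = (\sigma_1, \ldots, \sigma_p)$ such that any $p$-player protocol achieving approximation ratio $1/2 + \varepsilon$ must effectively recover a constant fraction of the coordinates of $\sigma$. Since \chainP is known to have a total communication lower bound of order $N/p^2$ across $p$ players, distributing this over players and accounting for an $\varepsilon$-scaled amplification yields the per-player bound $\Omega(\varepsilon N/p^3)$ claimed in the statement.

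The core ingredient is the promised new family of coverage functions with a sharp $1/2$ vs.\ $1/2 + \Theta(\varepsilon)$ gap. My construction would take a coverage universe $U$ partitioned into $p$ ``layers'' $U_1, \ldots, U_p$, give player $i$ a private set $V_i$ of roughly $N/p$ candidate elements, and define the coverage sets so that for each $i$ there is a single distinguished element $o_i \in V_i$ that uniquely covers a dedicated portion of $U_i$ whose location is determined by $\sigma_i$. Every other element of $V_i$ would cover only a ``common'' portion of $U_i$ together with decoy pieces of the later layers, arranged so as to be statistically indistinguishable from $o_i$ given only local information. With proper balancing of block sizes, $\{o_1, \ldots, o_p\}$ covers essentially all of $U$, while any feasible set missing a constant fraction of the $o_i$'s covers at most $(1/2 + O(\varepsilon))|U|$.

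To execute the reduction, I would sample $\sigma$ from the hard distribution for \chainP, translate player $i$'s private \chainP-input into the coverage structure of $V_i$ as above, and argue via the built-in symmetries that no player can identify her $o_i$ without knowing the earlier $\sigma_j$'s. Consequently, any protocol with approximation ratio $1/2 + \varepsilon$ must, in expectation, include $\Omega(\varepsilon p)$ of the $o_i$'s in its output, which allows the last player to decode that many coordinates of $\sigma$. A direct-sum / information-cost argument across these coordinates then transfers the \chainP lower bound back into the \maxcard protocol, pinning the bottleneck message length to $\Omega(\varepsilon N / p^3)$.

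The main obstacle, and the reason this theorem is substantially more delicate than the prior $1 - 1/e + \varepsilon$ hardness, is engineering the coverage family so that the gap sits precisely at the $1/2$ threshold rather than the weaker $1 - 1/e$ threshold achievable via standard symmetry-gap constructions. Such a family must simultaneously remain non-negative, monotone, and submodular; present each player with a view that hides $o_i$ among its ``twin'' elements under the hidden chain; and support a chain-style embedding that ties identification of $o_i$ to sequential decoding of $\sigma_1, \ldots, \sigma_{i-1}$. Reconciling local indistinguishability with a tight $1/2$ gap—which essentially forbids any ``free'' value for greedy-type strategies while still being realizable as a coverage function—is where I expect the bulk of the technical work to lie.
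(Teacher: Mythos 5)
Your high-level plan matches the paper's: reduce from the $\chainP$ problem using a hidden family of coverage functions with an indistinguishability property and a value gap at $1/2$. But this is where the agreement ends, and the parts you leave unspecified are exactly the parts the paper has to work hardest on.

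The central missing idea is how the $1/2$ gap is actually achieved. The obstruction you anticipate ("forbids any free value for greedy-type strategies") is real: if every layer $W_i$ contributes equal-sized pieces of the universe, then a decoy-based construction that is locally indistinguishable gives a gap of $1-1/e$, not $1/2$, because taking many elements from a single early layer is already too valuable. The paper resolves this by choosing \emph{non-uniform} universe weights $a_1 < a_2 < \cdots < a_p$ so that, for any set picking exactly one decoy element per layer $W_i$, the marginal gain is exactly $1$ in every layer (see \cref{sec:gen_hardness_sel_a}). These weights are defined by a nontrivial recursion and give $\sum_i a_i \approx 2p$. The value of any decoy-only set of size $p$ is then bounded by $p + (H_p)^2$ through a gradient-of-a-concave-relaxation argument (\cref{sec:gen_hardness_value_gap}), yielding the clean $1/2$ gap. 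Your sketch ("with proper balancing of block sizes") does not identify this non-uniformity; without it, no balancing of equal-sized blocks can beat $1-1/e$.

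The reduction itself is also weaker in your version than it needs to be. You claim the protocol must output $\Omega(\varepsilon p)$ of the $o_i$'s and then propose a direct-sum/information-cost decoding across coordinates. The paper avoids this extra machinery: in the $\chainP$ $0$-case, the $o_i$'s are \emph{absent from the players' input sets} $V_i$ (they correspond to $0$-bits), so the last player's output set has value at most $p+(H_p)^2$ automatically, with no need to decode anything; in the $1$-case, all $o_i$'s are present and the optimum is at least $2p - H_p$. The output value thus directly distinguishes the two cases, and the $\chainP$ lower bound transfers by the standard amplification argument without any per-coordinate direct-sum step. Your variant would also have to explain how identifying a constant fraction of $o_i$'s helps the value when their earlier chain predecessors were already missed, which is not immediate from indistinguishability alone. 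In short: the strategy is right, but both the construction of the family (the non-uniform weights plus the concave-relaxation bound) and the case-based value argument are genuinely missing, and these are where the theorem's difficulty sits.
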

The proof of the above theorem is given in \cref{sec:gen_hardness}. It is  based on a new construction of a family of coverage functions that hides the optimal solution while guaranteeing that no solution that does not contain elements from the optimal solution can provide an approximation significantly better than $1/2$.
This result immediately implies a tight hardness result in the streaming model that we explain in the next section.

\subsection{Applications to Data Stream and Robustness}
One key reason for the success of communication complexity is that results for the models it motivates, which are on their own right interesting models capturing the essence of trade-offs involving message sizes, are often widely applicable to other models of computation. 
This is also the case for submodular functions. As we show below, our results yield both  new hardness  and algorithmic results in the context of data streams and robustness.

\paragraph{Data stream algorithms.} We first discuss the well-known and direct connection to data stream algorithms.  
In the data stream model, the elements of the (unknown) ground set  arrive one element at a time, rather than being available all at once, and the algorithm is restricted to only use a small amount of memory. A semi-streaming algorithm is an algorithm for this model whose memory size has only a nearly-linear dependence on the parameter $k$ (the output size) and at most a logarithmic dependence on the size   of the ground set.
The goal is to output, at the end of the stream, a subset of the elements in the stream of cardinality at most $k$ that approximately maximizes $f$ among all such subsets. 

The first result in this setting was given by Chakrabarti and Kale~\cite{chakrabarti2014submodular}, who described a semi-streaming algorithm for \maxcard with an approximation guarantee of $1/4$.  Badanidiyuru et al.~\cite{badanidiyuru2014streaming} proposed later a different semi-streaming algorithm which provides a better approximation ratio of $(1/2 - \varepsilon)$ and maintains at most $O\rb{k \log (k)/\varepsilon}$ elements in  memory. 
This memory footprint was recently improved by~\cite{conficml0001MZLK19}, who obtained the same approximation guarantee while only maintaining at most  $O\rb{k/\varepsilon}$ elements in memory.

The two last algorithms share the approximation guarantee of $1/2-\varepsilon$. This was improved to $1-1/e - \varepsilon$ by Agrawal et al.~\cite{journals/corr/abs-1809-05082}, but only under the assumption that the elements of the stream arrive in a uniformly random order. In contrast, Huang et al. \cite{huang2020approximability} showed that without this assumption one cannot obtain an approximation ratio better than $2 - \sqrt{2} \approx 0.586$ (improving over a previous inapproximability result of $1-1/e$ due to~\cite{mcgregor2019better}). Hence, prior to the current work, it remained an open question whether the approximation guarantee of $1/2-\varepsilon$ obtained by the state-of-the-art algorithms is optimal.\footnote{We recall that the hardness result of~\cite{conficmlNorouzi-FardTMZ18} only applies to the restricted case when the value of the submodular function is queried on sets of cardinality at most $k$.} However, a direct consequence of \cref{thm:intro_gen_hardness} settles this question. Specifically, any algorithm that achieves a better approximation guarantee than $1/2$ must (up to a polynomial factor in $k$) essentially store  all the elements of the stream.

\begin{restatable}{theorem}{thmstreaminghardness}
\label{thm:streaming-hardness}
    For any $\varepsilon >0$,
    a data stream algorithm for \maxcard with an approximation guarantee of $1/2 + \varepsilon$ must use memory $\Omega\rb{\varepsilon s / k^3}$, where $s$ denotes the number of elements in the stream.
\end{restatable}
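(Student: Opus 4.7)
My plan is to derive the streaming lower bound by a direct simulation-style reduction from the $p$-player one-way communication complexity lower bound of \cref{thm:many_player_hardness_simplified}. Let $\mathcal{A}$ be any (potentially randomized) semi-streaming algorithm for \maxcard that achieves a $(1/2 + \varepsilon)$-approximation using memory at most $M$ on streams of length $s$. I would set $p = k$ and restrict attention to $k \geq p_0(\varepsilon)$, where $p_0$ is the integer supplied by \cref{thm:many_player_hardness_simplified}; the remaining bounded regime of $k$ is absorbed into the implicit constant hidden by the $\Omega(\cdot)$ in the target bound.

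Given an instance of the $p$-player problem with disjoint subsets $V_1, \ldots, V_p \subseteq W$, I would form a stream $\sigma$ by concatenating arbitrary orderings of $V_1, V_2, \ldots, V_p$. Assuming without loss of generality that the hard instances produced by \cref{thm:many_player_hardness_simplified} satisfy $V_1 \cup \cdots \cup V_p = W$, the stream length equals $|\sigma| = N = s$. Each player $i$ initializes $\mathcal{A}$ with the internal state received from player $i-1$, runs $\mathcal{A}$ on her private elements $V_i$, and forwards the updated internal state to player $i+1$; the last player outputs the solution returned by $\mathcal{A}$. Since the resulting execution is identical to that of $\mathcal{A}$ on $\sigma$, the protocol inherits $\mathcal{A}$'s $(1/2+\varepsilon)$-approximation guarantee, and every transmitted message has size at most $M$. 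Applying \cref{thm:many_player_hardness_simplified} to this protocol therefore yields $M \geq \Omega(\varepsilon N / p^3) = \Omega(\varepsilon s / k^3)$, which is exactly the claimed bound.

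Essentially all the difficulty is concentrated inside \cref{thm:many_player_hardness_simplified}, whose proof supplies the adversarial submodular instances concealing the optimum. The reduction itself is routine; the only points requiring care are matching the stream length $s$ to the communication model's ground-set size $N$ through the partition $V_1, \ldots, V_p$, and observing that a streaming algorithm's internal state is a legitimate single-shot message in the one-way communication model, since such a message may encode both the algorithm's stored elements and any auxiliary bookkeeping. If one prefers a bound phrased in bits, the same argument applies verbatim since \cref{thm:many_player_hardness_simplified} is stated in terms of message length (number of elements, each describable in $O(\log N)$ bits), and a memory footprint of $o(\varepsilon s/k^3)$ elements would likewise contradict the communication lower bound.
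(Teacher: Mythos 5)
Your proposal is correct and follows essentially the same approach as the paper's proof: simulate the streaming algorithm as a $p$-player one-way protocol by having each player run $\mathcal{A}$ on her elements and forward the resulting memory state, then invoke \cref{thm:many_player_hardness_simplified} with $p = k$. One minor imprecision: you claim it is ``without loss of generality'' that the hard instances satisfy $V_1 \cup \cdots \cup V_p = W$, but in the actual hardness construction the $V_i$ are proper subsets of $W_i$; the cleaner route (taken by the paper) is to observe $s = \sum_i |V_i| \leq |W| = N$ since the $W_i$ are disjoint, so the bound $\Omega(\varepsilon N/k^3)$ directly implies $\Omega(\varepsilon s/k^3)$.
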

The proof of the above theorem is almost immediate given \cref{thm:intro_gen_hardness} and the well-known connection between data stream algorithms and one-way communication. Thus, it is deferred to \cref{app:streaming-hardness}.

\paragraph{Robust submodular function maximization.}
The work on algorithms for \maxcard has been partially motivated by the desire to extract small summaries of huge data sets. In many settings, the extracted summary is also required to be robust. That is, the quality of the summary should degrade by as little as possible when some elements of the ground set are removed. Such removals may arise for many reasons, such as failures of nodes in a network, or user preferences which the model failed to account for; they could even be adversarial in nature. Recently, this topic has attracted special attention due to its importance in privacy and fairness constraints. The robust summaries enable us to remove sensitive data without incurring much loss in performance, giving us the ability to protect personal information (the right to be forgotten) and avoid biases (e.g., gender, measurement, and design biases).

The first attempts to design algorithms that generate robust summaries assumed that the summary is simply a set of size $k$, and the algorithm should guarantee that the value of this set is competitive against the best possible such set even when some elements are deleted (from both the ground set and the solution set). Naturally, this objective makes sense only when the number $d$ of deleted elements is significantly smaller than $k$. Accordingly, \cite{orlin2016robust} provided the first constant ($0.387$) factor approximation result to this problem for $d = o( \sqrt{k} )$, and Bogunovic et al.~\cite{DBLP:conf/icml/BogunovicMSC17} improved the restriction on number of deletions to $d = o(k)$ while keeping the approximation guarantee unchanged.

More recent works studied a more general variant of the above problem where an algorithm consists of two procedures: a summary procedure and a query procedure.
The summary procedure first generates a summary $M\subseteq W$ of the ground set $W$ with few, but typically more than $k$ elements, without knowing the elements $D\subseteq W$ to be deleted; after this, the set $D$ is revealed, and the query procedure returns a solution set $S_D\subseteq M\setminus D$ with $|S_D|\leq k$. The goal is for the final output set $S_D$ to be competitive against the best subset of size $k$ in the ground set without $D$, for any (worst-case) choice of $D$. More formally, such a robust algorithm is said to have an approximation guarantee of $\alpha$ if
\begin{align*}
   \expected{f(S_D)} \geq \alpha \cdot \max_{Z \subseteq W\setminus D, |Z| \leq k} \mspace{-18mu} f(Z)
   \qquad \forall D\subseteq W \text{ with } |D|\leq d\enspace.
\end{align*}
This problem is usually referred to as \emph{robust submodular maximization}.

The state-of-the-art result for robust submodular maximization is a $(1/2-\eps)$-approximation algorithm due to Kazemi et al.~\cite{pmlr-v80-kazemi18a}, whose summaries contain $O(k + d \log k/\eps^2)$ elements. This result improved over previous results by Mirzasoleiman et al.~\cite{pmlr-v70-mirzasoleiman17a} and Mitrovic et al.~\cite{Mitrovic:2017:SRS:3294996.3295209}. It should be noted that all these results enjoy a semi-streaming summary procedure, and by \cref{thm:streaming-hardness}, the approximation ratio of $1/2 - \eps$ guaranteed by some of them is basically the best possible as long as the summary procedure remains a semi-streaming algorithm.

We present the first algorithms for robust submodular maximization whose approximation guarantee is better than $1/2$. We do this via the following theorem, which shows that one can convert most natural two-player protocols for \maxcard into algorithms for robust submodular maximization. The proof of this theorem is based on a technique of~\cite{pmlr-v70-mirzasoleiman17a}, and we defer both this proof and a fully formal statement of the theorem to \cref{app:robustness}.
\begin{theorem} \label{thm:robust_reduction_simplified}
Assume we are given a two-player protocol $\cP$ for \maxcard obeying some natural properties. Then, there exists an algorithm $\cA$ for robust submodular maximization such that
\begin{enumerate}[label=(\roman*),itemsep=0em,parsep=0em,topsep=0.2em]
	\item the approximation guarantee of $\cA$ is at least as good as the approximation guarantee of $\cP$;
	\item the number of elements in the summary of $\cA$ is larger than the communication complexity of $\cP$ (in elements) only by an $O(d)$ factor;
	\item if $\cP$ runs in polynomial time, then so is $\cA$.
\end{enumerate}
\end{theorem}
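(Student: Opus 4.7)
The plan is to obtain $\cA$ by running Alice's phase of $\mathcal{P}$ in $d+1$ independent ``layers,'' each on a version of the ground set from which the previous layers' outputs have been peeled off, and then using Bob's phase as the query procedure once the deletion set is revealed. Concretely, the summary procedure maintains sets $S_0, S_1, \dots, S_d \subseteq W$ defined recursively: $S_i$ is the (element-valued) message that Alice of $\mathcal{P}$ would send were her input $V_A^{(i)} := W \setminus \bigcup_{j<i} S_j$. The natural property that Alice's message lies inside her own input $V_A$ gives $S_i \subseteq V_A^{(i)}$, so the $S_i$'s are automatically pairwise disjoint; the resulting summary $M := \bigcup_{i=0}^d S_i$ therefore has size at most $(d+1)$ times the element-communication complexity of $\mathcal{P}$, establishing item~(ii).

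Once a deletion set $D$ with $|D|\leq d$ is revealed, a pigeonhole argument over the $d+1$ disjoint layers yields some index $i^* \in \{0,1,\dots,d\}$ with $S_{i^*} \cap D = \varnothing$. The query procedure then simulates Bob of $\mathcal{P}$ using $S_{i^*}$ as Alice's incoming message and $V_B := \bigl(\bigcup_{j<i^*} S_j\bigr) \setminus D$ as Bob's private input, and returns Bob's output $S_D$. For correctness (item~(i)), let $Z^* \in \arg\max_{Z \subseteq W\setminus D,\, |Z|\leq k} f(Z)$. Every $e \in Z^*$ lies outside $D$, and depending on whether $e$ belongs to $\bigcup_{j<i^*} S_j$, it lies either in $V_B$ or in $V_A^{(i^*)} = W \setminus \bigcup_{j<i^*} S_j$; hence $Z^* \subseteq V_A^{(i^*)} \cup V_B$. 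Invoking the approximation guarantee of $\mathcal{P}$ on the instance $(V_A^{(i^*)}, V_B)$ then gives
\[
  \expected{f(S_D)} \;\geq\; \alpha \cdot \max_{T \subseteq V_A^{(i^*)} \cup V_B,\, |T|\leq k} f(T) \;\geq\; \alpha \cdot f(Z^*),
\]
which is the required robust guarantee; item~(iii) is immediate since we perform $d+1$ Alice-simulations in the summary phase and a single Bob-simulation at query time.

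The main obstacle is pinning down the ``natural properties'' on $\mathcal{P}$ so that the above reduction is rigorous. Specifically, I would require that (a) Alice's message is a subset of her input (so that the summary remains a subset of $W$ and the pigeonhole step on disjoint layers goes through), (b) Alice's value queries to $f$ are made only on subsets of $V_A$ (so her computation can be honestly simulated on each peeled ground set $V_A^{(i)}$ without needing access to the elements already peeled away), and (c) the approximation ratio of $\mathcal{P}$ holds uniformly over every disjoint partition $(V_A, V_B)$ of the ground set, not just a specific one — it is this property that lets us apply $\mathcal{P}$ to the partition $(V_A^{(i^*)}, V_B)$ that is only defined once $D$ is revealed. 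Each of these conditions is verifiable for the two-player protocols designed in \cref{ssc:two_players_algs}, so the reduction indeed applies to them; the technical care lies in first fixing the formal model and the required conditions before invoking $\mathcal{P}$.
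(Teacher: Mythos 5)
Your reduction is exactly the one in the paper: peel off $d+1$ layers by repeatedly simulating Alice on $V_A^{(i)} = W\setminus\bigcup_{j<i}S_j$, store the union of the messages as the summary, then pigeonhole to find a layer $i^*$ with $S_{i^*}\cap D=\varnothing$ and simulate Bob with message $S_{i^*}$ and input $\bigl(\bigcup_{j<i^*}S_j\bigr)\setminus D$; your correctness argument that $W\setminus D\subseteq V_A^{(i^*)}\cup V_B$ matches the paper's line for line. The one omission in your stated hypotheses is a constraint on Bob analogous to (a)--(b) for Alice: the paper additionally requires that Bob's queries and his output set be confined to $M\cup V_B$, where $M$ is the set of elements appearing in Alice's message. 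This is not just cosmetic --- without it the query procedure could return a set containing an element of $V_A^{(i^*)}\setminus S_{i^*}$, which need not lie in the stored summary $\bigcup_i S_i$ (and could even lie in $D$), so $\cA$ would fail to be a legal algorithm for robust submodular maximization, whose query procedure must return a subset of the summary minus $D$. Adding Bob's locality condition closes this and makes your proof identical to the paper's.
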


As all the protocols we use to prove our results in this paper obey the natural properties required by \cref{thm:robust_reduction_simplified}, one can combine this theorem with \cref{thm:two_player_protocol_simplified} to get the following corollary.
\begin{corollary} \label{cor:robust_exponential}
There exists an algorithm for robust submodular maximization returning a $2/3$-approximate solution and using summaries of $O(dk^2)$ elements.
\end{corollary}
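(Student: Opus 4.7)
The plan is to obtain \cref{cor:robust_exponential} as an immediate consequence of combining \cref{thm:two_player_protocol_simplified} with \cref{thm:robust_reduction_simplified}. Concretely, I would take the two-player protocol $\cP$ guaranteed by \cref{thm:two_player_protocol_simplified}, which achieves approximation $2/3$ with Alice's message consisting of $O(k^2)$ elements of the ground set, and feed it into the reduction of \cref{thm:robust_reduction_simplified} as a black box.

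Applying the reduction then produces a robust submodular maximization algorithm $\cA$. By item (i) of \cref{thm:robust_reduction_simplified}, the approximation guarantee of $\cA$ is at least that of $\cP$, hence at least $2/3$. By item (ii), the summary of $\cA$ contains only an $O(d)$ factor more elements than Alice's message in $\cP$, so its size is $O(d) \cdot O(k^2) = O(dk^2)$, as required.

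The one verification step is to check that the protocol from \cref{thm:two_player_protocol_simplified} actually satisfies the ``natural properties'' that \cref{thm:robust_reduction_simplified} asks for (presumably that Alice's message is itself a set of elements from $V_A$, rather than an arbitrary bit string, plus some mild consistency condition under restriction of the ground set). This is the potential obstacle, but the paper explicitly asserts that all protocols developed in this work meet these properties; since the message in \cref{thm:two_player_protocol_simplified} is by construction a subset of Alice's elements (its size is measured ``in elements''), this check is routine and amounts to quoting the construction in \cref{ssc:two_players_algs}. No computational complexity claim is being made in \cref{cor:robust_exponential}, so item (iii) of \cref{thm:robust_reduction_simplified} is not needed here; this is consistent with the ``exponential'' label of the corollary, reflecting the fact that the underlying two-player protocol of \cref{thm:two_player_protocol_simplified} is not claimed to run in polynomial time.
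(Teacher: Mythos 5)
Your proposal is exactly the paper's argument: invoke the $O(k^2)$-element, $2/3$-approximate protocol from \cref{thm:two_player_protocol_simplified} as the input to the reduction of \cref{thm:robust_reduction_simplified}, observe that \cref{alg:many_sizes} satisfies the required ``natural'' properties (Alice's message is a collection of subsets of $V_A$, and Bob only touches $V_B$ together with what Alice sent), and read off the $2/3$ guarantee and the $O(d)\cdot O(k^2)=O(dk^2)$ summary size. Nothing to add.
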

Analogous to \cref{thm:two_player_protocol_simplified}, one can reduce the summaries to $O(dk \log(k) /\varepsilon)$ many elements while guaranteeing an approximation factor of $2/3-\varepsilon$. 
%
Unfortunately, the protocol used to prove \cref{thm:two_player_protocol_simplified} uses exponential time, and thus, \cref{cor:robust_exponential} is mostly of theoretical value. Nevertheless, we show that even when requiring efficient procedures, the factor of $1/2$ can be beaten, while only using linear message size.
\begin{theorem} \label{thm:two_player_polynomial_simplified}
There exists a polynomial time two-player protocol for Max-Card-k with an approximation guarantee of $0.514$ in which Alice sends a message consisting of $O(k)$ elements.
\end{theorem}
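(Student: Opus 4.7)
The plan is to design a polynomial-time counterpart of the exponential-time $2/3$-protocol behind \cref{thm:two_player_protocol_simplified}. Its central idea is that Alice still exploits queries to infeasible (size $>k$) sets, but instead of enumerating subsets of $V_A$, she uses greedy-style procedures to distill a linear-sized collection of elements that captures enough structure of $V_A$ to let Bob overcome the $1/2$ barrier regardless of how the optimum $O$ splits as $O_A = O\cap V_A$ and $O_B = O\cap V_B$.

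Concretely, on her side Alice runs two polynomial-time procedures on $V_A$. First, the standard monotone greedy producing a cardinality-$k$ set $S_A = \{a_1,\dots,a_k\}$. Second, a ``replacement'' procedure: for each $a_i$ she greedily computes one or a few candidate elements in $V_A \setminus S_A$ whose marginal value is large on top of prefixes of $S_A$ (and, crucially, on top of $S_A \cup \{v\}$ for hypothetical $v \in V_B$, which motivates querying the submodular function on sets of size up to $k+O(1)$ -- the mild use of infeasible queries that breaks the $1/2$ barrier in \cref{thm:two_player_protocol_simplified}). The union $M = S_A \cup R_A$ has $|M| = O(k)$ elements and is constructed using polynomially many oracle calls; this is Alice's message. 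Bob then considers a constant-size family of candidate solutions computable in polynomial time from $M \cup V_B$: (i) $S_A$ itself; (ii) the plain greedy on $V_B$; and (iii) for each prefix $S_A^{(j)} = \{a_1,\dots,a_j\}$ (and an appropriate augmentation by elements of $R_A$), the greedy completion to size $k$ using elements of $V_B$. Bob returns the best, which is clearly a $\max$ over polynomially many poly-time computable candidates and hence polynomial time.

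For the analysis, write $\alpha = f(O_A)/f(O) \in [0,1]$ and compare each candidate to $O$ via a submodular exchange argument. Candidate (i) yields at least (roughly) $f(O_A) \geq \alpha\cdot f(O)$; candidate (ii) yields at least $(1-1/e)\cdot f(O_B)\geq (1-1/e)(1-\alpha)\,f(O)$ by the standard Nemhauser--Wolsey--Fisher bound applied to $V_B$. The crucial step is to show that among the merged candidates (iii), there is one whose value is at least $(1/2)\,f(O) + \gamma\cdot \min\{\alpha,1-\alpha\}\,f(O)$ for some absolute $\gamma > 0$; this is where the replacement elements $R_A$ are used, together with a swap argument saying that $S_A^{(j)} \cup R_A$ contains, for the right $j$, a near-substitute for $O_A$ whose marginal with respect to a greedy completion from $V_B$ dominates what the plain send-$S_A$ strategy can guarantee. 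Taking the maximum of the three lower bounds and minimising over $\alpha \in [0,1]$ yields a one-parameter optimisation whose value is exactly the constant $0.514$.

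The main obstacle will be the replacement/swap lemma underlying candidate (iii): Alice must, without knowing $V_B$, prepare $R_A$ so that for \emph{every} possible $O_B\subseteq V_B$ and every split $\alpha$, some prefix $S_A^{(j)}$ augmented from $R_A$ behaves submodularly like $O_A$ relative to a greedy extension from $V_B$. Making this swap argument quantitative with only $O(k)$ replacement elements, while keeping all queries and the running time polynomial, is the delicate part; the specific threshold $0.514$ then falls out of balancing the three lower bounds above. Since the resulting protocol obeys the natural properties required by \cref{thm:robust_reduction_simplified}, combining it with that theorem immediately yields a polynomial-time robust submodular maximization algorithm with an approximation guarantee strictly above $1/2$.
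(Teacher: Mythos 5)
Your high-level plan --- a polynomial-time, greedy-based protocol where Alice exploits queries on sets of size larger than $k$, and Bob tries a small family of greedy completions --- is on the right track and matches the spirit of the paper's Protocol~5. However, the pieces you yourself flag as ``the delicate part'' are genuine gaps, and the detailed route you sketch diverges from what the paper actually does in ways that would prevent the argument from closing.

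Concretely, the paper's Alice simply runs greedy for $2k$ steps on $V_A$ and forwards the resulting sequence $a_1,\dots,a_{2k}$; there is no separate ``replacement'' set $R_A$, and the infeasible queries go up to size $2k$, not $k+O(1)$. Your replacement-plus-swap lemma --- that for every $O_B \subseteq V_B$ some prefix of $S_A$ augmented from $R_A$ ``behaves like'' $O_A$ under a greedy completion from $V_B$ --- is neither defined nor proven, and it is exactly the missing engine of the proof. Your candidate family also omits the key \emph{reverse} candidates $Y_p$: greedy on $V_B$ first, then augment with $p$ elements chosen greedily from the full $2k$-element sequence. These are precisely the solutions that win when Alice's greedy is far from her local optimum, and they are why the size-$2k$ greedy run (hence the ``infeasible'' queries) matters. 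Finally, the analysis cannot be carried out with only $\alpha = f(O_A)/f(O)$: the paper introduces a second parameter $\Delta_A = f(O_A \mid G_A)/f(O_A)$ measuring how much $G_A$ fails to saturate $O_A$, proves the refined greedy bound $f(G_A) \ge (1+\Delta_A\ln\Delta_A)\,f(O_A)$, derives the lower bounds on $f(X_{k_A})$ and $f(Y_{k_A})$ as functions of $(\alpha,\Delta_A)$, and obtains $0.514$ by solving the resulting two-variable nonlinear program. Your claim that ``the specific threshold $0.514$ then falls out of balancing the three lower bounds'' is circular, since no balancing is carried out and your three bounds (one of which, greedy on $V_B$ alone, gives only $(1-1/e)(1-\alpha)f(O)$) do not in fact yield $0.514$. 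In short: right shape, but the central lemma is absent, the $Y_p$-type candidates are missing, and the two-parameter $(\alpha,\Delta_A)$ analysis that actually produces the constant is not there.
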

The last theorem is proved in Section~\ref{sec:poly2players}. Combining this theorem with \cref{thm:robust_reduction} yields the following result for robust submodular maximization.
\begin{corollary}
There exists a polynomial time $0.514$-approximation algorithm for robust submodular maximization using summaries of $O(dk)$ elements.
\end{corollary}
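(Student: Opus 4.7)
The plan is to combine \cref{thm:two_player_polynomial_simplified} with \cref{thm:robust_reduction_simplified} in a direct manner, exactly analogous to how \cref{cor:robust_exponential} was obtained by combining \cref{thm:two_player_protocol_simplified} with the same reduction theorem. First I would invoke \cref{thm:two_player_polynomial_simplified} to obtain a polynomial-time two-player protocol $\mathcal{P}$ for \maxcard that sends a message consisting of $O(k)$ elements and achieves an approximation ratio of $0.514$.

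Next I would feed $\mathcal{P}$ into \cref{thm:robust_reduction_simplified}, producing an algorithm $\cA$ for robust submodular maximization, and simply read off the three clauses of that reduction. Clause (i) transfers the $0.514$-approximation guarantee from $\mathcal{P}$ to $\cA$. Clause (ii) bounds the summary size of $\cA$ by the communication complexity of $\mathcal{P}$ (which is $O(k)$ elements) inflated by an $O(d)$ factor, giving $O(dk)$ elements in total. Clause (iii) transfers polynomial runtime from $\mathcal{P}$ to $\cA$. Taken together, these three properties are exactly the content of the corollary, so no additional argument is required once the reduction has been invoked.

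The only step requiring attention is verifying that the protocol underlying \cref{thm:two_player_polynomial_simplified} obeys the ``natural properties'' required by the reduction in \cref{thm:robust_reduction_simplified}. I would resolve this exactly as the paper implicitly does for \cref{cor:robust_exponential}: the paper already states, just before \cref{cor:robust_exponential}, that all protocols it constructs satisfy these properties, so the same remark applies to the protocol of \cref{sec:poly2players}. Concretely, this amounts to checking the protocol of \cref{sec:poly2players} against the formal properties listed in \cref{app:robustness}; this is essentially bookkeeping, and constitutes the only potential obstacle, albeit a very minor one, since no new technical ingredient beyond the constructions of \cref{sec:poly2players} and \cref{app:robustness} is needed.
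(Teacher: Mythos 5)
Your proposal is correct and matches the paper's argument exactly: the paper derives this corollary by instantiating \cref{thm:robust_reduction} with the protocol of \cref{thm:twoplayerpolynomial}, relying on the earlier blanket remark that all protocols in the paper satisfy the required natural properties. No further comment is needed.
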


\section{Formal Statement of Model and Results}
\label{sec:prelim_model}

In this section we formally present the model that we assume in this paper, and restate in a formal way the results that we prove for this model. We begin by discussing the model for the two-player setting. It is natural to formulate a simple model for this setting in which Alice forwards some elements to Bob, and then Bob can access only these elements and the elements he receives directly. All the protocols we present fit into this simple model. However, one could imagine more involved protocols in which Alice passes coded information about the elements she received, rather than simply forwarding a subset of these elements. To make our impossibility results apply also to protocols of this kind, we formulate below a somewhat more involved model in which the message sent from Alice to Bob is an arbitrary string of bits. We note that there is no unique ``right'' way to cast the problem we consider into a model, and one can think of multiple natural ways to do so, each corresponding to a different intuitive viewpoint. Fortunately, it seems that our results are mostly independent of the particular formulation used (up to minor changes in the exact bounds), and thus, we chose a model that we believe is both intuitive and allows for a nice presentation of the results. Nevertheless, for completeness, we present in \cref{app:alternative_model} a sketch of an alternative model that we also found attractive.

An instance of our model consists both of global information known upfront to both Alice and Bob, and private information that is available only to either Alice or Bob. The global information includes the upper bound $k$ on the size of the solution (which is a positive integer), a ground set $W$ of elements and a partition of $W$ into two disjoints sets $W_A$ and $W_B$. One should think of the sets $W_A$ and $W_B$ as all elements that Alice and Bob, respectively, could potentially get. We denote by $V_A \subseteq W_A$ the set of elements that Alice actually gets, and by $V_B \subseteq W_B$ the set of elements that Bob actually gets. Both these sets are private information available only to their respective players. Finally, the instance also includes a non-negative monotone submodular function $f\colon 2^W \to \nnR$ defined over all the subsets of $W$. Alice has access to this function through an oracle that can evaluate $f$ on any set $S \subseteq W_A$ (in other words, given such a set $S$, the oracle returns $f(S)$). Bob, in contrast, has access to $f$ through a more powerful oracle that can evaluate $f$ on any subset of $W$. Intuitively, the reason for the difference between the powers of the oracles is that Alice only needs to evaluate sets consisting of elements that she might get, while Bob must also be able to evaluate $f$ on subsets that include elements sent by Alice (nevertheless, one can observe that the oracle of Bob does not leak information about the elements of $W_A$ that Alice actually got, i.e., the elements that ended up in $V_A$). The objective of Alice and Bob is to find a set $S \subseteq V_A \cupdot V_B$ maximizing $f$ among all such sets of size at most $k$.

A \emph{communication protocol} $\cP = (\cA_A, \cA_B)$ for this model consists of two (possibly randomized) algorithms for Alice and Bob. The protocol proceeds in two phases. In the first phase, the algorithm $\cA_A$ of Alice computes a message $m$ for Bob based on the global information and the private information available to Alice. Then, in the second phase, the algorithm $\cA_B$ of Bob computes an output set based on
\begin{enumerate*}[label=(\roman*)]
\item the global information,
\item the private information available to Bob, and
\item the message $m$ received from Alice.
\end{enumerate*}
Formally, the \emph{communication complexity} of protocol $\cP$ is the maximum length in bits of the message $m$, where the maximum is taken over all the possible inputs and the randomness of the algorithms. However, since the message $m$ in our protocols consists mostly of elements that Alice sends to Bob, we state the communication complexity of these protocols, for simplicity, in elements instead of bits. The real communication complexity of these protocols in bits is larger than the stated bound in elements, but only by a logarithmic factor.

We can now restate our results for the two-player model in a more formal way. Note that the first of these theorems uses the $\widetilde{O}$ notation, which suppresses poly-logarithmic terms, and the second of these theorems refers by $N$ to the size of the ground set $W$.

\let\oldthetheorem\thetheorem
\renewcommand{\thetheorem}{\getrefnumber{thm:two_player_protocol_simplified}}
\begin{restatable}{theorem}{thmuppertwoplayersub} \label{thm:uppertwoplayersub}
For every $\varepsilon > 0$, there exists a two-player protocol for \maxcard with an approximation guarantee of  $(\nicefrac{2}{3} - \varepsilon)$ whose communication complexity is $\widetilde{O}(k/\varepsilon)$ elements. Moreover, there exists such a protocol achieving an approximation guarantee of $\nicefrac{2}{3}$ whose communication complexity is $O(k^2)$ elements.
\end{restatable}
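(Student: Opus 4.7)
The starting point is the trivial $1/2$-approximation: Alice sends her best $k$-subset $S_A \subseteq V_A$, and Bob outputs the better of $S_A$ and his own best $k$-subset $S_B \subseteq V_B$; by submodularity, $f(S_A)+f(S_B)\geq f(\mathrm{OPT}\cap V_A)+f(\mathrm{OPT}\cap V_B)\geq f(\mathrm{OPT})$. As mentioned in the introduction, a prior hardness result shows that $1/2$ is tight if Alice queries $f$ only on sets of size at most $k$. So every gain above $1/2$ must come from Alice evaluating $f$ on sets strictly larger than $k$. The plan is to have Alice send not a single $k$-set but a family $A_0,A_1,\ldots,A_k\subseteq V_A$ of $k+1$ candidate $k$-sets (total $O(k^2)$ elements), where $A_j$ is, intuitively, the set Alice would send if she knew that the optimum used exactly $j$ elements from her side. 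Since she does not know $V_B$, she constructs $A_j$ via queries on infeasible sets: $A_j$ is the best $k$-subset of $V_A$ when measured against a carefully chosen ``virtual complement'' of size $k-j$ that plays the role of a hypothetical contribution from Bob. Bob, after receiving $\bigcup_j A_j$, enumerates subsets and returns the best $k$-subset of $\bigcup_j A_j\cup V_B$ (the first part of the theorem does not constrain running time).

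The analysis is a case distinction on the unknown quantity $j^\star:=|\mathrm{OPT}\cap V_A|$. Write $\mathrm{OPT}=X^\star\cupdot Y^\star$ with $X^\star\subseteq V_A$ and $Y^\star\subseteq V_B$. The construction of $A_{j^\star}$ is designed so that two complementary inequalities hold simultaneously:
\begin{enumerate}[label=(\roman*),itemsep=0em,parsep=0em,topsep=0.2em]
\item a ``local optimality'' bound, of the form $f(A_{j^\star})\geq f(X^\star\cup C)$ for any size-$(k-j^\star)$ anchor $C$ used in Alice's infeasible queries, which certifies that what Alice kept is at least as good as $X^\star$ itself; and
\item a ``completion'' bound: Bob's best completion of $A_{j^\star}$ using elements of $V_B$ has value at least $f(A_{j^\star})+f(Y^\star\mid A_{j^\star})$, since he can always pick $Y^\star\subseteq V_B$ (of size $k-j^\star\leq k-|A_{j^\star}\cap X^\star|$).
\end{enumerate}
Averaging these two inequalities with $f(\mathrm{OPT})\leq f(X^\star)+f(Y^\star\mid X^\star)$ and using submodularity to relate $f(Y^\star\mid X^\star)$ and $f(Y^\star\mid A_{j^\star})$ yields the $2/3$ bound by a standard two-inequality arithmetic. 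I expect the main obstacle to be defining the virtual complements used to compute the $A_j$'s so that inequality~(i) holds with the right constant; this is precisely where infeasible queries are used, and without them the averaging collapses back to $1/2$.

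For the $(\tfrac{2}{3}-\varepsilon)$, $\widetilde{O}(k/\varepsilon)$ version, I would discretize the above in two standard ways. First, Alice guesses $f(\mathrm{OPT})$ up to a factor $(1+\varepsilon)$, giving $O(\varepsilon^{-1}\log(k/\varepsilon))$ guesses (the range of relevant values is bounded by $k\cdot\max_e f(e)$ versus $\max_e f(e)$). Second, for each guess $\tau$, Alice runs a thresholded variant that, instead of maintaining all $k+1$ exact candidate sets, admits only elements whose marginal contribution in the relevant infeasible query exceeds $\Theta(\varepsilon\tau/k)$; a potential argument bounds the number of admitted elements per guess by $O(k/\varepsilon)$ and the cumulative loss by $\varepsilon\cdot f(\mathrm{OPT})$. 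Summing over guesses gives the claimed $\widetilde{O}(k/\varepsilon)$ communication and approximation $2/3-\varepsilon$. Both bookkeeping steps follow a by-now standard template for streaming/communication submodular results; the conceptual work is concentrated in the $2/3$ construction of the previous paragraph.
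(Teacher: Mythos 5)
You correctly identify the theme --- Alice should exploit queries on infeasible sets and ship a family of candidate sets costing $O(k^2)$ elements --- but the concrete construction you sketch does not implement it, and two of its steps fail as stated. First, the ``virtual complement'' $C$ consists of ghost elements outside $W_A$, so in this model Alice's oracle cannot evaluate $f$ on any set containing $C$; the infeasibility you need must reside in the \emph{cardinality} of the queried set ($|S| > k$), not in its composition. Second, your $A_j$'s are size-$k$ sets, so your completion bound~(ii) asks Bob to add a size-$(k-j^\star)$ set to a set already of size $k$, which is infeasible; and restricting Alice to size-$k$ candidates discards exactly the extra power you are after.

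The paper's construction sidesteps both issues: for $i=0,\ldots,2k$, Alice computes $S_i \in \argmax\{f(S) : S \subseteq V_A,\ |S|\leq i\}$ and sends all of them; Bob returns the best subset $\hat{S}$ of size at most $k$ of $M := V_B \cup \bigcup_i S_i$. The index doing the work is not $j^\star = |\Oset\cap V_A|$ but $m := k - |\Oset\cap M|$, and the trade-off is captured by two inequalities your sketch is missing. Since $(\Oset\setminus M)\cup S_m$ has at most $2m$ elements and $(\Oset \cap M)\cup S_m$ is a feasible candidate for Bob, submodularity and monotonicity give $f(S_{2m}) \geq f(\Oset) + f(S_m) - f(\hat S)$: either $\hat S$ is already good, or the infeasibly large $S_{2m}$ has enormous value. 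Splitting $S_{2m}$ into two halves of size at most $m$ and joining each to $\Oset\cap M$ then yields $2f(\hat S) \geq f(\Oset\cap M) + f(S_{2m})$. Chaining these with $f(S_m)\geq f(\Oset\setminus M)$ gives $3f(\hat S) \geq f(\Oset\cap M) + f(\Oset) + f(\Oset\setminus M) \geq 2f(\Oset)$. Crucially, $S_{2m}$ can have up to $2k$ elements, and the argument genuinely needs this --- a family of $k$-sets cannot supply it. Your $(\tfrac{2}{3}-\varepsilon)$ refinement via guess-and-threshold is a plausible template, but the paper gets there more directly by geometrically grouping the sizes $i$ and showing (via a standard sampling lemma) that restricting $S_m$ to the nearest available grouped size loses only an $O(\varepsilon)$ factor.
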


\renewcommand{\thetheorem}{\getrefnumber{thm:two_player_hardness_simplified}}
\begin{restatable}{theorem}{thmlowertwoplayersub} \label{thm:lowertwoplayersub}
For every $\varepsilon \in (0, 1/4)$, any two-player (randomized) protocol with an approximation guarantee of $(\nicefrac{2}{3} + \varepsilon)$ must have a communication complexity of $\Omega(\frac{N\varepsilon}{k})$ bits in the regime $k \geq \varepsilon^{-1}$.
\end{restatable}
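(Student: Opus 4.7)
The plan is to prove the lower bound via Yao's minimax principle combined with a reduction from the one-way INDEX problem. I will exhibit a distribution $\mu$ over Max-Card-$k$ instances indexed by a pair $(x,i)$, where $x \in \{0,1\}^n$ is held by Alice and $i \in [n]$ is held by Bob, with $n = \Theta(N/k)$. The reduction will ensure that any deterministic protocol achieving expected approximation $(2/3+\varepsilon)$ on $\mu$ recovers $x_i$ with advantage $\Omega(\varepsilon)$; the standard INDEX lower bound then yields a communication complexity of $\Omega(\varepsilon n) = \Omega(N\varepsilon/k)$ bits, and Yao's lemma transfers this to the randomized setting.

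The construction is a coverage function with a \emph{block} structure. I would partition $W_A$ into $n$ disjoint blocks of $\Theta(k)$ elements each, one per bit of $x$. Within the $j$-th block, plant two symmetric ``candidate halves,'' and assign to each index $i$ a small companion set $B_i \subseteq W_B$ that couples only with the $x_i$-favored half of block $i$. Bob's private set $V_B$ contains the companion $B_i$ together with enough neutral decoys to disguise $i$. The coverage pattern is calibrated so that the global optimum has value $\Oval$ attained uniquely by combining $B_i$ with the $x_i$-favored half of block $i$, while any $k$-element solution that takes the wrong half of block $i$ is capped at $(2/3+o(\varepsilon))\Oval$. The $2/3$ factor should come from a three-versus-two split: the correct half together with $B_i$ covers all three parts of a $3m$-element sub-universe, whereas the wrong half together with $B_i$ covers only two.

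The analysis then runs along the lines of a standard distributional argument. Fix a deterministic protocol with message length $L$; Alice's message $m=m(x)$ partitions her inputs into at most $2^L$ classes, and conditioned on $m$ Bob's output set is a function only of $(i,m)$. Achieving expected approximation $(2/3+\varepsilon)$ over the uniform distribution on $(x,i) \sim \mu$ forces Bob to identify the $x_i$-favored half of block $i$ with average advantage $\Omega(\varepsilon)$ over $i$. Applying a one-way information-theoretic argument analogous to the INDEX lower bound of Kremer--Nisan--Ron (equivalently Bar-Yossef--Jayram--Kumar--Sivakumar) gives $L = \Omega(\varepsilon n) = \Omega(N\varepsilon/k)$, matching the claimed bound.

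The main technical hurdle is sharpening the constant from the easy $1/2$ (which follows from a two-block gadget with symmetric hidden information) to the tight $2/3$. Because Alice may query the oracle on arbitrary subsets of $W_A$, including sets of cardinality far greater than $k$, the gadget must be \emph{genuinely symmetric on Alice's side}: no such query should leak the hidden bits. I plan to enforce this by equipping each block with an $f$-preserving involution that swaps its two halves and fixes all other elements of $W_A$; the $3$-versus-$2$ coverage gap then emerges only once Bob's companion elements are added to the picture, which happens exclusively on Bob's side and is therefore invisible to Alice's oracle. A bit of additional care is needed to verify that the $o(\varepsilon)$ slack is really achievable and that the regime $k \geq \varepsilon^{-1}$ is used only to absorb low-order terms in the block-size calibration and to ensure that Bob can afford to spend $k/2$ of his $k$ budget on the companion $B_i$.
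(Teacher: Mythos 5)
Your high-level plan — reduce from one-way INDEX to a gadget in which matching the right half of block $t$ covers a $3$-element sub-universe while the wrong half plus Bob's piece covers only $2$ — is the same kind of structure as the paper's proof, which hardwires the ratios $1/3$, $2/3$, $1$ into a tiny monotone submodular function $g_t$ on $\{w\}\cup\{v_1,\dots,v_n\}$ and lifts it to the large ground set through its multilinear extension. Your choice of a block coverage function vs.\ the paper's extension-based gadget is a potentially valid alternative, and the $n=\Theta(N/k)$ scaling, Alice's side encoding $x$ and Bob's side holding a single companion element per index, all match.

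However, there are two gaps.

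First, and most importantly, the step ``advantage $\Omega(\varepsilon)$ on INDEX forces $L = \Omega(\varepsilon n)$'' is not what the standard Kremer--Nisan--Ron / Bar-Yossef et al.\ argument gives. That argument (Fano plus chain rule) yields $L \geq (1-H_2(1/2-\delta))\,n = \Theta(\delta^2 n)$ for a protocol with advantage $\delta$, which in your parametrization is only $\Omega(\varepsilon^2 N/k)$, a factor $\varepsilon$ weaker than the theorem. The paper recovers the linear dependence by exploiting that the reduction has \emph{one-sided} error: when $x_t=0$ the protocol \emph{always} reports the low value, and when $x_t=1$ it exceeds the threshold with probability at least $\varepsilon$. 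Running only $O(\varepsilon^{-1})$ independent copies and taking an OR therefore amplifies to constant success probability, and applying the $\Omega(n)$ INDEX bound to the amplified protocol gives $O(\varepsilon^{-1})\cdot L \geq \Omega(n)$, i.e.\ $L = \Omega(\varepsilon n)$. With two-sided error you would need $O(\varepsilon^{-2})$ repetitions, obliterating the gain. Your proof sketch does not identify the one-sidedness and hence does not reach the claimed bound.

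Second, the phrase ``a companion set $B_i\subseteq W_B$ that couples only with the $x_i$-favored half of block $i$'' is dangerous as stated. In the model, the full function $f$ is fixed globally and Bob's oracle evaluates $f$ on \emph{any} subset of $W$; if the coupling pattern of $f$ itself depended on $x_i$, Bob could read $x_i$ straight off the oracle by comparing $f(B_i\cup\text{half}_1)$ and $f(B_i\cup\text{half}_2)$, and the INDEX reduction collapses. The encoding of $x$ must live entirely in Alice's private set $V_A$ (which elements she actually receives), while $f$ should depend only on Bob's index (or better, as in the paper, be agreed to belong to some family $\{f_t\}_t$ whose restrictions to $W_A$ coincide, so that Alice's oracle view is $t$-independent). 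Your involution idea is aimed at the right indistinguishability property, but you should make explicit that $f$ is independent of $x$ and that the $x$-dependence is only in $V_A$; as written, the construction reads as if $f$ is adapted to $x$.
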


\renewcommand{\thetheorem}{\getrefnumber{thm:two_player_polynomial_simplified}}
\begin{restatable}{theorem}{thmtwoplayerpolynomial} \label{thm:twoplayerpolynomial}
There exists a two-player protocol for \maxcard with an approximation guarantee of $0.514$ whose communication complexity is $O(k)$ elements, and furthermore, both algorithms in this protocol run in polynomial time.
\end{restatable}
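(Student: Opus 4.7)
The plan is to enhance the classical $1/2$-approximation protocol — where Alice forwards her best $k$-element set and Bob returns the better of that or his own best $k$-element set — by exploiting infeasible queries on Alice's side to transmit a small amount of additional information that lets Bob escape the $1/2$-barrier. The starting intuition is that the trivial $1/2$ bound is tight only when the optimum $O$ splits roughly evenly across the partition $V_A \cupdot V_B$, so if Alice can identify and forward a few carefully chosen ``replacement'' candidates from $V_A$, Bob can gain strictly more than half of $f(O)$ on his side using only polynomial-time subroutines.

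Concretely, I would design Alice's procedure as follows. First, she runs the standard greedy algorithm on $V_A$ with budget $k$ to obtain $G_A$, satisfying $f(G_A) \ge (1 - 1/e)\,f(O_A)$, where $O_A = O \cap V_A$. Next, she uses infeasible queries (evaluations of $f$ on sets of size possibly larger than $k$) to compute, through an efficient combinatorial procedure such as a greedy-with-replacement or threshold-based routine, a small auxiliary set $H_A \subseteq V_A$ of size $O(k)$. Intuitively, $H_A$ should consist of ``backup'' elements whose marginal value, measured against $G_A$ or larger base sets, is large enough that they can substitute parts of $G_A$ if the true optimum uses elements conflicting with Bob's best choices. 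Alice then transmits $G_A \cup H_A$, which is $O(k)$ elements total.

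In the second phase, Bob computes his own greedy solution $G_B$ on $V_B$ and then enumerates a small number of polynomial-time candidate solutions: natural choices include $G_A$, $G_B$, the best size-$k$ subset of $G_A \cup H_A$ extended greedily using $V_B$, and the result of running greedy on $V_B$ seeded by subsets of $G_A \cup H_A$. Bob returns the candidate with largest $f$-value. The analysis proceeds by a case split on $\alpha = f(O_A)/f(O) \in [0,1]$: when $\alpha$ is small, $G_B$ alone yields more than $0.514 \cdot f(O)$; when $\alpha$ is large, $G_A$ does (using the $(1-1/e)$ greedy bound, which exceeds $0.514$ comfortably); and the intermediate regime is handled by showing that Bob's combined candidates extract enough value from $H_A$ together with $V_B$ to cross the $1/2$ threshold. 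Optimizing the worst $\alpha$ across these regimes leads to a numerical constant strictly above $1/2$, and the value $0.514$ corresponds to the solution of a small parameterized balance inequality derived from these bounds.

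The main obstacle will be designing $H_A$ and the combined step on Bob's side so that the intermediate regime provably clears $1/2$ by a definite constant rather than a vanishing quantity. This requires a carefully chosen potential function argument, together with a structural lemma relating the marginal gains of $H_A$ over $G_A$ to the contribution of $O_A \setminus G_A$ — precisely the place where infeasible queries (on sets such as $G_A \cup \{e\}$ for $e \in V_A$, and potentially larger ones) enter crucially. A secondary concern is the polynomial-time constraint: all subroutines on Alice's side must use only $\mathrm{poly}(|V_A|, k)$ oracle calls, which rules out any exhaustive search over subsets and forces the construction to rest on greedy-style primitives throughout.
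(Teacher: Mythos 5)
Your plan is in the right neighborhood (greedy on Alice's side, $O(k)$ elements forwarded, Bob tries several combined candidates), but it has a genuine gap, and the gap is exactly where you flag uncertainty. You propose a vaguely specified ``backup'' set $H_A$ and a one-parameter analysis in $\alpha = f(O_A)/f(O)$, with the claim that ``when $\alpha$ is large, $G_A$ alone exceeds $0.514$.'' That claim does not hold far enough: $(1-1/e)\alpha > 0.514$ only for $\alpha \gtrsim 0.81$, and if Bob also only has $(1-1/e)(1-\alpha)f(O)$ from $G_B$, the worst-case over $\alpha$ of $\max\{(1-1/e)\alpha, (1-1/e)(1-\alpha)\}$ is $(1-1/e)/2 \approx 0.316$. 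So a one-parameter balance in $\alpha$ using only worst-case greedy guarantees is nowhere near $1/2$, let alone $0.514$; you would need a concrete lemma letting Bob recover roughly $f(O_B)$ \emph{on top of} $f(G_A)$, and that is precisely what is not yet present.

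The paper does not invent a separate $H_A$: Alice simply runs greedy for $2k$ steps and sends $a_1,\dots,a_{2k}$ in their greedy order. Bob then, for each $p\in\{0,\dots,k\}$, tries two natural ``slicings'': $X_p$ (take the first $p$ greedy elements from Alice, then greedy-complement from $V_B$) and $Y_p$ (greedy on $V_B$ for $k-p$ elements, then greedy-complement from Alice's $2k$ elements). The decisive analytical device is a \emph{second} free parameter, $\Delta_A = f(O_A \mid G_A)/f(O_A) \in [0,1]$, which measures how far greedy on Alice's side is from ``locally optimal'' against $O_A$. Small $\Delta_A$ means $G_A$ already behaves like $O_A$ for Bob's purposes, so $X_{k_A}$ recovers nearly $(1-1/e)f(O)$ (this is where the instance-dependent greedy bound $f(G_A) \geq (1+\Delta_A\ln\Delta_A)f(O_A)$ replaces the worst-case $(1-1/e)$ bound). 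Large $\Delta_A$ means that $G_A\cup O_A$ (an infeasible set of up to $2k_A$ elements, hence the need for $2k$ elements and infeasible queries) has large value, and then $Y_{k_A}$ pairs $G_B$ with the best half of $a_1,\dots,a_{2k}$ to extract at least half that surplus. The $0.514$ constant is the minimum over the two-dimensional region $(\Delta_A, f(O_A)/f(O))\in[0,1]^2$ of the maximum of the two resulting lower bounds, solved via KKT. Without introducing $\Delta_A$ (or an equivalent second ``how well did greedy do'' parameter), the analysis cannot distinguish the case where $G_A$ can be meaningfully complemented from the case where it cannot, and the intermediate-$\alpha$ regime remains unresolved.
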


Let us now explain how the above model can be generalized to the $p$-player setting for $p \geq 2$. In this setting, the ground set $W$ is partitioned into $p$ disjoint sets $W_1, W_2, \dotsc, W_p$, rather than just two; and the global information available to all the players is again
\begin{enumerate*}[label=(\roman*)]
\item the upper bound $k$ on the size of the solutions,
\item the ground set $W$, and
\item the partition of this ground set.
\end{enumerate*}
Every player also has private information. In particular, the private information available to player $i \in [p]$ (recall that $[p]$ is a shorthand for the set $\{1, 2, ..., p\}$) is a subset $V_i \subseteq W_i$ and an oracle that can evaluate the objective function $f$ on every subset of $\bigcup_{j = 1}^i W_i$. The objective of the players is to find a set $S \subseteq \bigcup_{i = 1}^p V_i$ maximizing $f$ among all such sets of size at most $k$.

A communication protocol $\cP = (A_1, A_2, \dotsc, A_p)$ for this $p$-player model consists of $p$ (possibly randomized) algorithms for the $p$ players. The protocol proceeds in $p$ phases. In the first phase the algorithm $A_1$ of the first player computes a message $m_1$ based on the global information and the private information available to this player. The next $p - 2$ phases are devoted to players $2$ up to $p - 1$. In particular, in phase $i \in \{2, 3, \dotsc, p - 1\}$, the algorithm $A_i$ of
player $i$ computes a message $m_i$ based on the global information, the private information available to this player, and the message $m_{i - 1}$ produced by the previous player. Finally, in the last phase, the algorithm $A_p$ of the last player computes an output set based on the global information, the private information available to this player, and the message $m_{p - 1}$ produced by the penultimate player. The communication complexity of the protocol $\cP$ is the maximum length in bits of any one of the messages $m_1, m_2, \dots, m_{p - 1}$, where, like in the two-player model, the maximum is taken over all the possible inputs and the randomness of the algorithms.

We can now restate our result for the $p$-player model in a more formal way. Recall that $N = |W|$, and for $p\in \mathbb{Z}_{\geq 0}$, let $H_p = 1 + \frac{1}{2} + \frac{1}{3} + \ldots + \frac{1}{p}$ be the $p$-th harmonic number.
\renewcommand{\thetheorem}{\getrefnumber{thm:many_player_hardness_simplified}}
\begin{restatable}{theorem}{thmgenhardness} \label{thm:gen_hardness}
For every $\varepsilon > 0$, any $p$-player (randomized) protocol  for \maxcard with an approximation guarantee of
\begin{align*}
    \frac{p+ (H_p)^2}{2p - H_p} \cdot (1+\varepsilon)
\end{align*}
must have a communication complexity of $\Omega\left(\frac{N\varepsilon}{p^3}\right)$. Furthermore, this is true even in the special case in which the objective function $f$ is a coverage function and $k=p$.
\end{restatable}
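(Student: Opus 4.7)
The plan is to follow the standard indistinguishability framework for multi-party communication lower bounds. I would define two input distributions $\DYes$ and $\DNo$ over $p$-player \maxcard instances that satisfy two properties: (i) under $\DYes$ there is a specific ``planted'' optimum of some value $V^\star$, whereas under $\DNo$ every feasible $k$-set has value at most $\frac{p+H_p^2}{2p-H_p}\cdot V^\star$; and (ii) the marginal distribution of each player $i$'s private input $V_i$ is identical (or statistically very close) under the two distributions, so that no single player can tell locally whether they are in the YES or NO case. By Yao's minimax principle, it then suffices to prove an information-theoretic statement: any deterministic protocol that distinguishes $\DYes$ from $\DNo$ with constant bias must transmit $\Omega(\varepsilon N/p^3)$ bits across the chain of players.

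The heart of the proof is the construction of the coverage-function family. I would work with a universe $U$ partitioned into $p$ \emph{layers} whose sizes scale as $1/i$ (so the total mass contributed by the layers accumulates the factor $H_p$ that appears in the ratio). The ground set $W$ is partitioned into $p$ groups $W_1,\ldots,W_p$ of equal size, one per player; each $W_i$ contains one potential ``planted'' element $o_i$, whose coverage pattern across the layers is tailored so that the set $\{o_1,\ldots,o_p\}$ covers almost all of $U$, together with a large pool of ``dummy'' elements whose individual layer-coverage is locally indistinguishable from that of $o_i$. In $\DYes$, each $o_i$ lies in $V_i$, so a $p$-element solution of value close to $|U|$ exists; in $\DNo$, the $o_i$ are suppressed (or replaced by a symmetric-looking dummy) inside the corresponding $V_i$. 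The combinatorial core is to design the dummies so that any $k=p$ of them together cover at most a $\frac{p+H_p^2}{2p-H_p}$ fraction of $U$; balancing the per-layer marginal gain of the best available dummies against the harmonic mass of the layers is precisely what produces the claimed ratio, and explains why the bound approaches $\frac{1}{2}$ (with a $\Theta(\log^2 p /p)$ correction) as $p\to\infty$.

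For the communication step I would reduce from a suitable multi-party planted-index / augmented-index problem: the locations of the planted elements $o_1,\ldots,o_p$ encode the unknown input, with $\Theta(N/p)$ bits of entropy per player. A protocol achieving approximation better than $\frac{p+H_p^2}{2p-H_p}(1+\varepsilon)$ on $\DYes$ must, with constant probability, produce an output set containing a constant fraction of the planted elements, and this certifies enough of the planted-index string to violate the one-way communication lower bound for that problem unless the aggregate message length is at least $\Omega(\varepsilon N/p)$ per transition; combined with the $p$-player chain and the per-element $1/p$ loss coming from recovering only one coordinate at a time, this yields the final $\Omega(\varepsilon N/p^3)$ bound. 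The main obstacle I anticipate is the combinatorial design of the dummy elements: they must be numerous enough to embed the communication-hard instance, locally symmetric to the planted $o_i$ so that no individual player can detect them, and collectively weak enough to realize the exact $\frac{p+H_p^2}{2p-H_p}$ worst-case coverage---obtaining this tight harmonic-number dependence, rather than merely some $\frac{1}{2}+o(1)$ bound, seems to require a layer-by-layer inductive construction in which the dummies at each layer are chosen to play optimally against any solution that has already committed to dummies at earlier layers.
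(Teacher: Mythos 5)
Your high-level framework matches the paper's: plant a hidden optimum $\{o_1,\ldots,o_p\}$ inside a coverage function, make the identity of $o_i$ invisible to player $i$ (and everything downstream), pay the value gap between the planted solution and any solution that misses it, and then reduce from a multi-party INDEX-like problem (the paper uses $\chainPn$). The bookkeeping of the $p^3$ loss ($N = np$, max-message bound $\Omega(n/p^2)$ for $\chainPn$, and an extra $\varepsilon$ from amplification) is also essentially correct. So on the reduction side there is nothing missing.

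The genuine gap is the one you yourself flag as ``the main obstacle'': you never actually give the construction that realizes the tight $\frac{p + (H_p)^2}{2p - H_p}$ ratio, and the heuristic you propose for it is pointed in the wrong direction. You posit a universe of $p$ layers with masses decreasing like $1/i$, suggesting that the harmonic number enters as $\sum_i 1/i$ of layer masses. In the paper's construction the weights $a_1 < a_2 < \cdots < a_p$ of the $p$ universe points are \emph{strictly increasing}, not decreasing; they are chosen recursively so that if you take one dummy from each of $W_1,\ldots,W_p$ in order, each successive dummy has marginal gain exactly $1$ (equivalently $a_j\prod_{i<j}(1 - a_i/A_{\geq i}) = 1$). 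The dummies supported on $W_i$ cover only $u_i,\ldots,u_p$ (not the earlier layers), which is what makes the restriction of $f$ to $W_1\cup\cdots\cup W_i$ independent of $o_i,\ldots,o_p$. The harmonic numbers then arise in two distinct places: the lower bound $A_{\geq 1} = \sum a_i \geq 2p - H_p$ on the planted optimum comes from $A_{\geq i} \geq (p-i+1)a_i$ (monotonicity of the $a_i$'s), and the upper bound $p + (H_p)^2$ on any $p$-set avoiding $\{o_1,\ldots,o_p\}$ comes from a first-order (gradient) bound on a concave surrogate of the coverage function evaluated at the all-ones point, with partial derivatives squeezed into $[\tfrac12 - H_{p+1-\ell}/(p+1-\ell), \tfrac12]$. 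This delicate dual role of the harmonic number is not something a $1/i$-layered construction would produce, and without the explicit recursive choice of the $a_j$'s and the concavity/gradient argument, the value-gap half of your plan does not go through.

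Two smaller imprecisions worth noting. First, you describe the dummies as having ``individual layer-coverage locally indistinguishable from $o_i$''; in the paper the dummy and $o_i$ have completely different coverage patterns ($o_i$ covers exactly $u_i$, a dummy in $W_i$ fractionally covers $u_i,\ldots,u_p$) — the indistinguishability is the stronger functional fact that $f$ restricted to $2^{W_1\cup\cdots\cup W_i}$ does not depend on the choice of $o_i,\ldots,o_p$ at all, which requires an exact cancellation rather than a statistical closeness of marginals. Second, the paper reduces from the specific problem $\chainPn$ of Cormode et al.\ and carries the $\Omega(n/p^2)$ max-message lower bound through directly, rather than via a $\DYes/\DNo$ minimax argument embedded in the construction itself; this is a cleaner modularization of the same idea but does change how the $\varepsilon$-loss is incurred.
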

\let\thetheorem\oldthetheorem
\addtocounter{theorem}{-4}

\section{Preliminaries: The INDEX and \texorpdfstring{\chainP}{CHAINp} Problem} \label{sec:index-model}

The impossibility results that we prove in this paper are based on reductions from problems which are known to require high communication complexity. The first of these problems is the well-known INDEX problem. In this two-player problem, Alice gets a string $x \in \{0, 1\}^n$ of $n$ bits, and can then send a message to Bob. Bob gets the message of Alice and an index $t\in [n]$, and based on these two pieces of information alone should output the value of $x_t$. Clearly, Bob can produce the correct answer with probability $1/2$ by outputting a random bit. However, it is known that Bob cannot guarantee any larger constant probability of success, unless the message he gets from Alice is of linear (in $n$) size (see, e.g.,~\cite{bar2002information, jayram2008one}).

The second problem we reduce from is \chainPn, a multi-player generalization of INDEX recently introduced by Cormode et al.~\cite{cormode_2019_independent}, which is closely related to the Pointer Jumping problem (see~\cite{chakrabarti_2007_lower}). In \chainPn, the index $p$ indicates the number of players and $n$ is a parameter that regulates the size of the bit string given to each player.\footnote{In~\cite{cormode_2019_independent}, the problem was simply named \chainP, keeping the parameter $n$ implicit.} The definition is as follows.
There are $p$ players $P_1, P_2, \dotsc, P_p$. For every $i \in [p - 1]$, player $P_i$ has as input a bit string $x^{i} \in \{0,1\}^n$ of length $n$, and, for every $i \in \{2, 3, \dotsc, p\}$, player $P_i$ (also) has as input an index $t^i\in \{1, 2, \dotsc, n\}$ (note that the convention in this terminology is that the superscript of a string/index indicates the player receiving it). Furthermore, it is promised that either $x^{i}_{t^{i + 1}} = 0$ for all $i \in [p-1]$ or $x^{i}_{t^{i + 1}} = 1$ for all these $i$ values. We refer to these cases as the $0$-case and $1$-case, respectively. The objective of the players in \chainPn is to decide whether the input instance belongs to the $0$-case or the $1$-case.

In~\chainPn, we are interested in the communication complexity of a one-way protocol that guarantees a success probability of at least $2/3$. Such a protocol $\cP = (\cA_1, \cA_2, \dotsc, \cA_p)$ consists of $p$ (possibly randomized) algorithms corresponding to the $p$ players. The protocol proceeds in $p$ phases. In phase $i \in [p - 1]$, the algorithm $\cA_i$ of player $i$ computes a message $m_i$ based on the input of this player and the message $m_{i - 1}$ computed by $\cA_{i - 1}$ in the previous phase (unless $i = 1$, in which case the computation done by $\cA_1$ depends only on the input of player $1$). In the last phase, algorithm $\cA_p$ of player $p$ decides between the $0$-case and the $1$-case based on the input of player $p$ and the message $m_{p - 1}$. The communication complexity of the protocol is defined as the maximum size (in bits) of any one of the messages $m_1, m_2, \dotsc, m_{p - 1}$, where the maximum is taken over all the possible inputs and the randomness of the protocol's algorithms. Furthermore, the success probability of the protocol is the probability that the case indicated by $A_p$ matches the real case of the input instance.

Note that \chainPn is indeed a generalization of the INDEX problem since the last problem is equivalent to \chain{$2$}{$n$}. In~\cite{cormode_2019_independent}, the following communication complexity lower bound was shown for \chainPn.

\begin{theorem}[\cite{cormode_2019_independent}]
Any protocol for \chainPn with success probability of at least $2/3$ must communicate at least $\Omega(n/p^2)$ bits in total.
\end{theorem}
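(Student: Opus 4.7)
The plan is to prove this lower bound by a reduction from the classical INDEX problem, for which it is well-known that any randomized one-way protocol in which Alice holds $y \in \{0,1\}^{n'}$ and Bob holds an index $s \in [n']$ and must output $y_s$ with probability at least $2/3$ requires $\Omega(n')$ bits of communication. The target overhead in the reduction is a factor of $p^2$, matching the desired $\Omega(n/p^2)$ bound.

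The natural first attempt is to hard-code an INDEX instance into \chainPn: set $x^1 = y$ and $t^2 = s$, and configure each of the remaining intermediate inputs $(x^i, t^{i+1})$ so that $x^i_{t^{i+1}}$ equals the promise bit. The snag is that the promise bit is exactly $y_s$, which the simulators do not know in advance. One can try sampling the intermediate $(x^i, t^{i+1})$ uniformly and hoping for consistency with the promise, but this event has probability only $2^{-(p-1)}$, producing an exponential blowup. To avoid this, I would take a round-elimination approach: assume, towards a contradiction, that the total communication of a protocol $\cP$ for \chainPn is $o(n/p^2)$. By averaging, some player's message has length $o(n/p^3)$, which, by an information-theoretic argument, carries essentially no information about that player's input. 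Conditioning on such a short message leaves a residual problem that is close in distribution to a fresh \chain{p-1}{n'} instance with $n'$ only slightly smaller than $n$. Iterating this elimination $p - 1$ times reduces the problem to one on a single player with a nontrivial length parameter, which is impossible, yielding a contradiction.

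The main obstacle will be carrying out the round-elimination step cleanly while preserving the chain promise. The standard round-elimination machinery, as developed by Miltersen, Nisan, Safra, and Wigderson and subsequently applied to the Pointer-Jumping family of problems, provides the template: one designs a hard input distribution under which conditioning on any sufficiently short message leaves a distribution that embeds a slightly smaller copy of the same problem. The two factors of $p$ in the $\Omega(n/p^2)$ bound emerge naturally, one from the telescoping over $p$ elimination rounds and one from the per-round averaging over the players. The delicate technical point is ensuring that the chain promise is preserved under the elimination: a naive reshuffling of intermediate inputs could violate the promise and invalidate the argument. I would address this by fixing a carefully designed product distribution on the intermediate inputs that is conditioned to satisfy the promise, and then showing via a hybrid argument that conditioning on a short message perturbs this distribution only by a negligible amount in total variation, so that the residual instance genuinely looks like a valid \chain{p-1}{n'} instance and the induction can proceed.
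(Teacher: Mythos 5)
Your high-level plan---reduce from INDEX via iterated round elimination and control the error with total variation distance---is the right strategy and matches what the paper does in \cref{app:pindex} to establish the slightly stronger \cref{thm:pindex_hardness} (a lower bound of $n/(36p^2)$ on the \emph{maximum} message size, which implies the stated bound on the total). Two of your specifics, however, would not go through as written.

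First, you propose to locate the player whose message is shortest (length $o(n/p^3)$ by averaging) and eliminate \emph{that} player. But round elimination for \chainP has to target player~$1$: a middle player's message is a function of the entire prefix of the chain, and conditioning on it does not cleanly split the residual into a shorter chain instance. The paper's proof of \cref{lem:pindex_hardness} avoids this by bounding the \emph{maximum} message (a uniform per-player budget), so that player~$1$'s message is automatically short and no averaging is needed; it then always eliminates player~$1$ in the inductive step. Second, you claim the residual after conditioning is a \chain{p-1}{n'} instance with $n'$ slightly smaller than $n$, preserved by a hybrid argument. The paper's hard distribution $D^p$ makes this unnecessary: its marginal on players $2,\ldots,p$ is \emph{exactly} $D^{p-1}$ on the same $n$, and what degrades is the total variation distance between the message-conditioned marginal and $D^{p-1}$. \cref{lem:pindex_tvd} shows this TVD equals $\left| 1/2 - \Pr[X^1_{t^2}=0 \mid M=m]\right|$---player~$1$'s advantage at the embedded INDEX problem---which is then bounded via Fano's inequality in \cref{lem:index}. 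The right intuition is therefore not that the eliminated message ``carries essentially no information about that player's input,'' but that it leaks little about the \emph{promise bit}, and this is precisely what an INDEX lower bound quantifies. Finally, both factors of $p$ in $\Omega(n/p^2)$ come out of this TVD accounting: a per-round error budget $s\approx 1/(6p)$ so that the accumulated error over $p-1$ rounds stays below $1/6$, combined with the binary-entropy relation $c\gtrsim s^2 n$ between message length and TVD; neither factor comes from averaging over players.
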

Moreover, the following stronger result, for a restricted range of $p$, was announced in~\cite{cormode_2019_independent} without proof.
\begin{theorem}[\cite{cormode_2019_independent}]
There is a constant $C>0$ such that any protocol for \chainPn, where $p\leq C\cdot (\frac{n}{\log n})^{1/4}$, with success probability of at least $2/3$ must communicate at least $\Omega(n/p)$ bits in total.
\end{theorem}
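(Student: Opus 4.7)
My approach would be a round elimination argument, a standard technique for multi-round communication lower bounds that is closely related to the well-studied Pointer Jumping problem. The induction base is the $\Omega(n)$ lower bound for INDEX, which is exactly \chain{2}{n}, and the objective is to iteratively reduce a $p$-player instance to a 2-player INDEX instance while tracking carefully how the effective string length shrinks.

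The high-level plan: I would attempt to prove a round elimination lemma reducing the complexity of \chainPn to the complexity of \chain{p-1}{n'} for some $n'$ close to $n$, at the cost of an appropriate error term that depends on the length of the eliminated message. Iterating such a reduction $p-2$ times brings us to the INDEX base case. Careful bookkeeping of the string-length reductions and the accumulated error would ideally yield the desired $\Omega(n/p)$ lower bound on the total communication, subject to the stated regime $p \leq C \cdot (n/\log n)^{1/4}$.

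The heart of the round elimination step is as follows: given a protocol $\mathcal{P}$ for \chainPn with total communication $C$, the first message $M_1$ has some length $c_1$ depending only on $x^1$. To ``eliminate'' this round, we would have the second player simulate the first on an input drawn from public randomness, effectively merging the first two players. The resulting protocol solves \chain{p-1}{n''} for some $n''$ slightly smaller than $n$, reflecting the fact that the simulation fixes some information about $x^2$ in the process. Iterating this across the interior rounds accumulates a reduction in string length proportional to the total communication $C$ plus an additional $p \log n$ overhead, and the INDEX bound applied to the residual 2-player instance yields the stated bound after rearranging.

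The main obstacle is carrying out the round elimination with tight enough error control. In particular, the specific constraint $p \leq C \cdot (n/\log n)^{1/4}$ likely arises from balancing the accumulated Kullback--Leibler divergence between the ``simulated'' and real message distributions across $p$ eliminations; the fourth-root form suggests a second-moment or Pinsker-type inequality applied at a step involving roughly four factors. Rigorously establishing this bound is nontrivial: a naive round elimination tends to give only the weaker $\Omega(n/p^2)$ bound (matching the unrestricted regime), and extracting the improved $\Omega(n/p)$ requires a more refined analysis that genuinely exploits the restricted range of $p$. Given that this result was announced by Cormode et al.\ without a published proof, the technical details are indeed delicate and likely involve careful information-theoretic bookkeeping beyond the standard round elimination toolkit.
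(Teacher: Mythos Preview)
The paper does not provide a proof of this theorem. It is explicitly cited from~\cite{cormode_2019_independent} and introduced with the remark that it ``was announced in~\cite{cormode_2019_independent} without proof.'' Hence there is no paper proof against which your proposal can be compared.

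For context, the only \chainPn lower bound the paper actually proves is the weaker \cref{thm:pindex_hardness}, giving a maximum-message-size bound of $n/(36p^2)$ with no restriction on $p$. That proof (in \cref{app:pindex}) is not a round elimination in your sense: it fixes a hard distribution $D^p$, uses Fano's inequality to bound the \chainOnlyP{2} base case, and then argues inductively that conditioning on the first player's message perturbs the distribution of the remaining $(p-1)$-player instance by at most the advantage of a derived \chainOnlyP{2} protocol, summing these perturbations over the rounds. This yields $\Omega(n/p^2)$ rather than $\Omega(n/p)$, consistent with your own observation that a ``naive'' argument gives the weaker bound; obtaining $\Omega(n/p)$ in the restricted regime is left to the cited work.
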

We highlight that the above lower bounds are both for the \emph{total} number of bits communicated and not the maximum message size. Because there are $p$ messages, this immediately translates to lower bounds on the maximum message size of $\Omega(n/p^3)$ and $\Omega(n/p^2)$, respectively. 
For completeness, we show in~\cref{app:pindex} how proofs of standard results for the INDEX problem can get the following impossibility result for \chainPn, which provides a lower bound of $\Omega(n/p^2)$ on the maximum message size without restrictions on the range of $p$.
\begin{restatable}{theorem}{thmpindex}
    For any positive integers $n$ and $p\geq 2$, any (potentially randomized) protocol for \chainPn with success probability of at least $2/3$ must have a communication complexity of at least $n/(36 p^2)$.
    \label{thm:pindex_hardness}
\end{restatable}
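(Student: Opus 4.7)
My plan is to establish the \chainPn lower bound by a reduction from INDEX that adapts the classical one-way information-theoretic argument to the $p$-player chain setting. The overall strategy is to convert any short-message \chainPn protocol into a one-way INDEX protocol with only a polynomial blow-up in $p$, so that the standard $\Omega(n)$ INDEX lower bound forces the max message length to be at least $\Omega(n/p^2)$; tracking the Fano constants carefully at success probability $2/3$ then refines this to the explicit bound $c \geq n/(36 p^2)$.

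Concretely, given an alleged \chainPn protocol $\mathcal{P}$ with max message length $c$ and success probability at least $2/3$, and an INDEX instance where Alice holds $y \in \{0,1\}^n$ and Bob holds $s \in [n]$, Alice and Bob simulate a \chainPn instance via shared randomness: Alice plays $P_1$ with $X^1 = y$, computes $M_1$ (of size at most $c$), and sends it to Bob; Bob plays $P_2, \dotsc, P_p$ with $T^2 = s$, drawing the auxiliary inputs $X^2, \dotsc, X^{p-1}$ and $T^3, \dotsc, T^p$ from shared randomness so that the chained bits $X^i_{T^{i+1}}$ for $i \geq 2$ all equal a shared random bit $b \in \{0,1\}$. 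Under this embedding, the \chainPn promise is satisfied exactly when $b = y_s$, in which case $\mathcal{P}$ must output $V = y_s$ with probability at least $2/3$; otherwise the output of $\mathcal{P}$ is unconstrained. To recover $y_s$ despite neither party knowing it in advance, the reduction is repeated in parallel $O(p^2)$ times with independent shared randomness, using both $b = 0$ and $b = 1$ in each repetition, and Bob aggregates the resulting pairs $(b, \text{output})$ via a Bayesian decision rule that isolates the consistent bias injected by the promise-respecting runs. Accumulating the $O(p^2)$ messages of size $c$ each and comparing with the INDEX lower bound $\Omega(n)$ yields $c\, p^2 \geq \Omega(n)$, as required.

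The main obstacle, and the technically delicate part of the proof, is controlling the adversarial off-promise behavior of $\mathcal{P}$: a carefully chosen protocol can make the joint distribution of $(b, \text{output})$ nearly insensitive to $y_s$, so that naive decision rules collapse to success $1/2$. Overcoming this requires combining the parallel repetitions with additional symmetrization via shared randomness (e.g., XOR-ing $y$ with a shared random bit to eliminate the asymmetry between $y_s = 0$ and $y_s = 1$), together with a careful accounting of the bias injected by promise-respecting runs versus the adversarial noise contributed by the others, so that the net bias remains $\Omega(1)$ after aggregation and Bob's success probability crosses the $2/3$ threshold within the claimed $p^2$ blow-up. The remaining work is then a standard concentration argument and a careful Fano-based tracking of constants to obtain the explicit factor $1/36$ in the stated bound.
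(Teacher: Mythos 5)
Your high-level plan --- convert a cheap \chainPn protocol into a cheap INDEX protocol by having Alice play $P_1$ and Bob simulate $P_2,\dotsc,P_p$ with fabricated, $b$-consistent inputs --- is fundamentally different from the paper's proof, and it has a gap that no amount of parallel repetition, Bayesian aggregation, or XOR symmetrization can close.

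The gap is that a \chainPn protocol is under no obligation to put useful information into $m_1$ at all. Consider the following protocol: $P_1$ sends nothing; for $i=2,\dotsc,p-1$, player $P_i$ sends $\Theta(n/p)$ uniformly random coordinates of $x^i$ together with the corresponding bits (plus a flag if some earlier player already found a chained bit); $P_{i+1}$ checks whether $t^{i+1}$ is among them. With $\Theta(n/p)$ random probes per player, some pair $(P_i, P_{i+1})$ hits its index with probability $1 - (1-\Theta(1/p))^{p-2} = \Omega(1)$, so this protocol achieves success probability well above $2/3$ with $m_1 = \varnothing$ and $|m_i| = \widetilde{O}(n/p)$ for $i\ge 2$. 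This does not contradict the theorem --- the maximum message length $\widetilde{O}(n/p)$ exceeds $n/(36p^2)$ --- but it destroys your reduction: in every repetition, Bob simulates $P_2,\dotsc,P_p$ entirely in his own head, all the inputs are his fabrications, $m_1$ is empty, so the protocol's output is a function of Bob's own randomness and carries exactly zero mutual information with $y_s$. In particular, the output equals the fabricated chain bit $b$ with probability $\ge 2/3$ \emph{whether or not} $b = y_s$, so on-promise and off-promise runs are statistically indistinguishable to Bob, and the pairs $(b,\text{output})$ he collects are pure noise with respect to $y_s$.

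There is also a second, subtler mismatch: even in a fantasy world where your reduction extracted $y_s$, it would only lower bound $|m_1|$, because only Alice's message is ``paid for'' across the $O(p^2)$ repetitions; all of $m_2,\dotsc,m_{p-1}$ happen inside Bob's simulation for free. But the theorem's communication complexity is the \emph{maximum} of $|m_1|,\dotsc,|m_{p-1}|$, and, as the example above shows, a good protocol may concentrate all its communication away from $m_1$. So the strategy cannot yield the stated conclusion even in principle.

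The paper's proof avoids both problems by not going through a black-box reduction at all. It fixes a hard input distribution $D^p$, invokes Yao's principle, and argues information-theoretically: Fano's inequality gives the two-player (INDEX) bound, and an induction over the chain, with the leakage at each level quantified by a total-variation distance that is again bounded via Fano, shows that a protocol in which \emph{every} message is at most $(1-H_2(\delta))n - 1$ bits has success probability at most $\tfrac12 + p(\tfrac12 - \delta)$. Because the induction accounts for each $m_i$ individually (every one of them is $\le$ the max message size), the argument correctly controls the maximum message length, and the constants (with $\delta = \tfrac12 - \tfrac{1}{6p}$) give the explicit $n/(36p^2)$.
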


\section{Two Player Submodular Maximization} \label{sec:twoplayerexpo}
In this section we consider \maxcard in the two-player model, while ignoring the computational cost, i.e., we are only interested here in the relationship between the communication complexity and the approximation guarantee that can be obtained for this problem. Below, we restate the formal theorems that we prove in the section. The proofs of these theorems can be found in \cref{ssc:two_players_algs,sec:two-player-sub-hardness}, respectively. In a nutshell, the two theorems show together that an approximation guarantee of $\nicefrac{2}{3}$ is tight for the problem under a natural assumption on the communication complexity.
\thmuppertwoplayersub*

\thmlowertwoplayersub*

\subsection{Algorithms for Two Players} \label{ssc:two_players_algs}

In this section we prove \cref{thm:uppertwoplayersub}. For that purpose, let us present \cref{alg:many_sizes}, which is a protocol for \maxcard in the two-player model that uses exponential computation. In this protocol, Alice finds for every $i \in \{0, 1, \dotsc, 2k\}$ the maximum value subset $S_i$ of $V_A$ of size at most $i$, and forwards all the sets she has found to Bob. Then, Bob finds the best solution over the elements that Alice has sent and $V_B$. 
\begin{protocol}
\caption{Repeated Solving with Varying Sizes} \label{alg:many_sizes}
\textbf{Alice's Algorithm}
\begin{algorithmic}[1]
\For{$i = 0$ to $2k$} 
    \State	Let $S_i$ be the set maximizing $f$ among all subsets of $V_A$ of size at most $i$.
\EndFor
\State Send all the sets $\{S_i\}_{i = 0}^{2k}$ as the message to Bob (we note that $S_0$ is always the empty set, so sending it is technically redundant. However, it allows us to treat all $i$ values in the same way in the analysis).
\end{algorithmic}
\textbf{Bob's Algorithm}
\begin{algorithmic}[1]
\State Let $\widehat{S}$ be the subset of $V_B \cup \bigcup_{i=0}^{2k} S_{i}$ of size at most $k$ maximizing $f$.\\
\Return{$\widehat{S}$}.
\end{algorithmic}
\end{protocol}

It is easy to see that \cref{alg:many_sizes} always outputs a feasible set; and moreover, the number of elements Alice sends to Bob is $O(k^2)$ because she sends $2k + 1$ sets of size at most $2k$ each. Thus, to prove that \cref{alg:many_sizes} obeys all the properties guaranteed by the second part of \cref{thm:uppertwoplayersub}, it remains to show that it produces a $\nicefrac{2}{3}$-approximation, which is our main objective in the rest of this section.

Let us denote by $\Oset$ a subset of $V_A \cup V_B$ of size at most $k$ maximizing $f$ among all such subsets, and let $\Oval = f(\Oset)$. Also, let $M = V_B \cup \bigcup_{i = 1}^{2k} S_{i}$ be the set of elements that Bob either receives from Alice or receives directly. Note that $M$ is also the set of elements in which Bob looks for $\widehat{S}$. Using this notation, we can now describe the intuitive idea behind our first observation.

Our analysis of \cref{alg:many_sizes} is based on two sets $S_{k - |\Oset \cap M|}$ and $S_{2(k - |\Oset \cap M|)}$. Observe that one candidate for $S_{k - |\Oset \cap M|}$ is the part of $\Oset$ that Alice got and did not forward to Bob. Thus, we know that $S_{k - |\Oset \cap M|}$ is as valuable as $\Oset \setminus M$. The following observation formalizes this fact.
\begin{observation} \label{obs:remaining_OPT_approximation}
$f(S_{k - |\Oset \cap M|}) \geq f(\Oset \setminus M)$.
\end{observation}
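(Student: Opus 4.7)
The plan is to unpack the definitions and observe that $\Oset\setminus M$ is itself a legitimate candidate for the maximum that defines $S_{k-|\Oset\cap M|}$. There are essentially three facts to check: that $\Oset\setminus M$ lies inside Alice's universe $V_A$, that it has the right cardinality, and that the subscript $k-|\Oset\cap M|$ is in the range $\{0,1,\dotsc,2k\}$ on which Alice actually computes the sets $S_i$.

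First I would argue that $\Oset\setminus M\subseteq V_A$. Since $\Oset\subseteq V_A\cupdot V_B$ by definition of the optimum, and $V_B\subseteq M$ by the definition of $M$, every element of $\Oset$ that is not in $V_A$ must be in $V_B\subseteq M$, so it is removed when we subtract $M$. Hence every surviving element of $\Oset\setminus M$ lies in $V_A$.

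Next I would bound the cardinality: $|\Oset\setminus M|=|\Oset|-|\Oset\cap M|\leq k-|\Oset\cap M|$, where the inequality uses $|\Oset|\leq k$. The quantity $k-|\Oset\cap M|$ lies in $\{0,1,\dotsc,k\}\subseteq\{0,1,\dotsc,2k\}$, so the set $S_{k-|\Oset\cap M|}$ is indeed one of the sets that Alice constructs and forwards.

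Finally, since $S_{k-|\Oset\cap M|}$ is by definition the subset of $V_A$ of size at most $k-|\Oset\cap M|$ maximizing $f$, and $\Oset\setminus M$ is one such subset by the previous two facts, we get $f(S_{k-|\Oset\cap M|})\geq f(\Oset\setminus M)$. There is no real obstacle here — the statement is essentially a bookkeeping consequence of how Alice enumerates her best-of-every-size sets and of the trivial containment $V_B\subseteq M$; the only point to be slightly careful about is confirming the subscript range, so that we may appeal to the optimality of the $S_i$ Alice actually computed.
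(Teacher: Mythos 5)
Your proposal is correct and matches the paper's proof: both argue that $\Oset\setminus M$ is a subset of $V_A$ (via $V_B\subseteq M$), has size at most $k-|\Oset\cap M|$, and thus is a feasible candidate in the maximization defining $S_{k-|\Oset\cap M|}$. Your extra remark verifying that the subscript $k-|\Oset\cap M|$ lies in $\{0,\dotsc,2k\}$ is a minor detail the paper leaves implicit but is indeed needed for the appeal to Alice's computed sets.
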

\begin{proof}
The set $\Oset \setminus M$ is a subset of $(V_A \cup V_B) \setminus M \subseteq V_A$ of size $|\Oset \setminus M| = |\Oset| - |\Oset \cap M| \leq k - |\Oset \cap M|$. Thus, the observation follows from the choice of $S_{k - |\Oset \cap M|}$ by \cref{alg:many_sizes}.
\end{proof}

Despite the fact that $S_{k - |\Oset \cap M|}$ is as valuable as $\Oset \setminus M$, it is not clear to what extent the values of the two sets ``overlap'' (more formally, how far is the value of their union from the sum of their individual values). If the overlap is large, then this means that $S_{k - |\Oset \cap M|}$ is a good replacement for $\Oset \setminus M$, and thus, Bob can construct a good solution by combining $S_{k - |\Oset \cap M|}$ with $\Oset \cap M$. In contrast, if the overlap between $S_{k - |\Oset \cap M|}$ and $\Oset \setminus M$ is small, then they can be combined into a single large value set, which guarantees a large value for $S_{2(k - |\Oset \cap M|)}$. Thus, there is a trade-off between the values of the sets $\hat{S}$ and $S_{2(k - |\Oset \cap M|)}$. \cref{lem:large_set_guarantee} formally captures this trade-off.

\begin{lemma} \label{lem:large_set_guarantee}
$f(S_{2(k - |\Oset \cap M|)}) \geq \Oval + f(S_{k - |\Oset \cap M|}) - f(\widehat{S})$.
\end{lemma}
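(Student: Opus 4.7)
Denote $t = k - |\Oset \cap M|$, so the claim to prove is $f(S_{2t}) \geq \Oval + f(S_t) - f(\widehat{S})$. The strategy is to exhibit a specific feasible candidate for $S_{2t}$ inside $V_A$ whose value can be lower bounded using submodularity, monotonicity, and the optimality of $\widehat{S}$ inside $M$.

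First I would observe that $\Oset \setminus M \subseteq V_A$ (as in the proof of \cref{obs:remaining_OPT_approximation}) and that $|\Oset \setminus M| = |\Oset| - |\Oset \cap M| \leq k - (k-t) = t$. Together with $S_t \subseteq V_A$ and $|S_t| \leq t$, this shows that $S_t \cup (\Oset \setminus M)$ is a subset of $V_A$ of size at most $2t$, so by the choice of $S_{2t}$ in Alice's algorithm,
\[
f(S_{2t}) \;\geq\; f\bigl(S_t \cup (\Oset \setminus M)\bigr).
\]

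Next I would rewrite $f(S_t \cup (\Oset \setminus M)) - f(S_t)$ as a marginal and apply submodularity with respect to the larger ground set $S_t \cup (\Oset \cap M) \supseteq S_t$:
\[
f\bigl(S_t \cup (\Oset \setminus M)\bigr) - f(S_t) \;\geq\; f\bigl(S_t \cup \Oset\bigr) - f\bigl(S_t \cup (\Oset \cap M)\bigr).
\]
By monotonicity, $f(S_t \cup \Oset) \geq f(\Oset) = \Oval$. Finally, $S_t \cup (\Oset \cap M) \subseteq M$ and has cardinality at most $|S_t| + |\Oset \cap M| \leq t + (k-t) = k$, so it is a feasible candidate for Bob's maximization step, yielding $f(\widehat{S}) \geq f(S_t \cup (\Oset \cap M))$. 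Chaining these inequalities gives
\[
f(S_{2t}) \;\geq\; f(S_t) + \Oval - f\bigl(S_t \cup (\Oset \cap M)\bigr) \;\geq\; \Oval + f(S_t) - f(\widehat{S}),
\]
which is exactly the desired bound.

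The only slightly subtle step is the submodularity application: one must invoke it in the ``marginal'' form $f(A \mid B) \geq f(A \mid B')$ for $B \subseteq B'$, applied with $A = \Oset \setminus M$, $B = S_t$, and $B' = S_t \cup (\Oset \cap M)$. Everything else (the size bound on $\Oset \setminus M$, the feasibility of $S_t \cup (\Oset \cap M)$ as a candidate for $\widehat{S}$, and the use of monotonicity) is routine bookkeeping with the definitions of $t$ and $M$. I do not anticipate any real obstacle beyond making sure that the roles of $V_A$, $M$, and $\widehat{S}$'s feasible region are tracked carefully throughout.
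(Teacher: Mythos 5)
Your proposal is correct and follows essentially the same route as the paper's proof: both use the candidate set $(\Oset\setminus M)\cup S_{k-|\Oset\cap M|}$ of size at most $2(k-|\Oset\cap M|)$, both exploit $f(\widehat{S})\ge f((\Oset\cap M)\cup S_{k-|\Oset\cap M|})$ via the feasibility of the latter set inside $M$, and both apply submodularity with respect to the pair $(\Oset\setminus M)\cup S_{k-|\Oset\cap M|}$ and $(\Oset\cap M)\cup S_{k-|\Oset\cap M|}$ (your marginal-gain formulation $f(A\mid B)\ge f(A\mid B')$ is just an equivalent rewriting of the paper's $f(X)+f(Y)\ge f(X\cup Y)+f(X\cap Y)$). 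The monotonicity step producing $\Oval$ is identical.
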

\begin{proof}
To prove the lemma, we have to show that $V_A$ includes a set of size at most $2(k - |\Oset \cap M|) \leq 2k$ whose value is at least $\Oval + f(S_{k - |\Oset \cap M|}) - f(\widehat{S})$. In particular, we will show that the set $(\Oset \setminus M) \cup S_{k - |\Oset \cap M|}$ has these properties. (Note that this set is a subset of $V_A$ because $V_B \subseteq M$.) Clearly, the size of this set is at most $(|\Oset| - |\Oset \cap M|) + (k - |\Oset \cap M|) \leq 2(k - |\Oset \cap M|)$.

Our next goal is to lower bound the value of the above set. Towards this goal, we note that $(\Oset \cap M) \cup S_{k - |\Oset \cap M|}$ is a subset of $M$ of size at most $k$, and thus, $f(\widehat{S}) \geq f((\Oset \cap M) \cup S_{k - |\Oset \cap M|})$ by the definition of $\widehat{S}$. Using the last inequality, we get
\begin{align*}
	f((\Oset \setminus M) \cup S_{k - |\Oset \cap M|})
	\geq{} & 
	f((\Oset \setminus M) \cup S_{k - |\Oset \cap M|}) + f((\Oset \cap M) \cup S_{k - |\Oset \cap M|}) - f(\widehat{S})\\
	\geq{} &
	f(\Oset \cup S_{k - |\Oset \cap M|}) + f(S_{k - |\Oset \cap M|}) - f(\widehat{S})\\
	\geq{} &
	f(\Oset) + f(S_{k - |\Oset \cap M|}) - f(\widehat{S})\enspace,
\end{align*}
where the second inequality follows from the submodularity of $f$, and the last inequality from its monotonicity.
\end{proof}

If $f(\widehat{S})$ is large, then we are done. Otherwise, the previous lemma guarantees that $S_{2(k - |\Oset \cap M|)}$ is a very valuable set. While this set might be infeasible (unless $|\Oset \cap M| \geq k / 2$), its value can be exploited by adding half of this set to $\Oset \cap M$. The following lemma gives the lower bound on $f(\widehat{S})$ that can be obtained in this way.
\begin{lemma} \label{lem:secondary_relationship}
$2 f(\widehat{S}) \geq f(\Oset \cap M) + f(S_{2(k - |\Oset \cap M|)})$.
\end{lemma}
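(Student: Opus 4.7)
The plan is to exhibit two feasible candidate sets inside $M$ (of size at most $k$ each) whose $f$-values sum to at least $f(\Oset \cap M) + f(S_{2(k - |\Oset \cap M|)})$; since $\widehat{S}$ is the best such feasible set, each of them is dominated by $\widehat{S}$ and the bound follows immediately.

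Concretely, write $T \eqdef S_{2(k - |\Oset \cap M|)}$, note that $T \subseteq V_A \subseteq M$ (because each $S_i$ is forwarded by Alice and thus lies in $M$), and that $|T| \leq 2(k - |\Oset \cap M|)$. Arbitrarily partition $T$ into two disjoint pieces $T_1 \cupdot T_2 = T$ with $|T_1|, |T_2| \leq k - |\Oset \cap M|$. Setting $A \eqdef \Oset \cap M$, both $A \cup T_1$ and $A \cup T_2$ are subsets of $M$ of cardinality at most $|A| + (k-|A|) = k$, so by the definition of $\widehat{S}$,
\[
2 f(\widehat{S}) \geq f(A \cup T_1) + f(A \cup T_2).
\]

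Now I apply submodularity to the pair $(A \cup T_1, A \cup T_2)$. Their intersection is exactly $A$ (since $T_1 \cap T_2 = \varnothing$) and their union is $A \cup T$, so
\[
f(A \cup T_1) + f(A \cup T_2) \geq f(A \cup T) + f(A) \geq f(T) + f(A),
\]
where the last inequality is monotonicity. Combining the two displays yields $2 f(\widehat{S}) \geq f(\Oset \cap M) + f(S_{2(k-|\Oset \cap M|)})$, as desired.

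There is no real obstacle here: the only subtlety is making sure the two halves of $T$ can each be combined with $\Oset \cap M$ without exceeding the cardinality budget $k$, which is why the definition of $S_i$ for $i$ up to $2k$ (rather than just $k$) is used in Protocol~\ref{alg:many_sizes}. Everything else is standard submodularity plus monotonicity.
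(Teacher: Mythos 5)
Your proof is correct and follows essentially the same route as the paper: partition $S_{2(k - |\Oset \cap M|)}$ into two halves of size at most $k - |\Oset \cap M|$, observe that unioning each half with $\Oset \cap M$ gives a feasible candidate inside $M$, and then apply submodularity (to the pair with intersection $\Oset \cap M$ and union $(\Oset\cap M)\cup S_{2(k - |\Oset \cap M|)}$) followed by monotonicity. One small slip worth fixing: the inclusion chain $T \subseteq V_A \subseteq M$ is not right as written, since $V_A$ need not be contained in $M$ (Alice only forwards the sets $S_i$, not all of $V_A$); the correct and sufficient observation, which your parenthetical already gives, is simply that $T = S_{2(k - |\Oset \cap M|)}$ is one of the forwarded sets and hence $T \subseteq M$.
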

\begin{proof}
Let us define $S^1_{2(k - |\Oset \cap M|)}$ and $S^2_{2(k - |\Oset \cap M|)}$ as an arbitrary disjoint partition of the set $S_{2(k - |\Oset \cap M|)}$ into two subsets of size at most $k - |\Oset \cap M|$ each. Then, the submodularity of $f$ implies
\begin{align*}
	\sum_{h = 1}^2 f((\Oset \cap M) \cup S^h_{2(k - |\Oset \cap M|)})
	\geq{} &
	f(\Oset \cap M) + f((\Oset \cap M) \cup S_{2(k - |\Oset \cap M|)})\\
	\geq{} &
	f(\Oset \cap M) + f(S_{2(k - |\Oset \cap M|)}) \enspace,
\end{align*}
where the second inequality follows from the monotonicity of $f$.
The lemma now follows by the definition of $\widehat{S}$ and the observation that both $(\Oset \cap M) \cup S^1_{2(k - |\Oset \cap M|)}$ and $(\Oset \cap M) \cup S^2_{2(k - |\Oset \cap M|)}$ are subsets of $M$ of size at most $k$.
\end{proof}

We are now ready to prove the approximation guarantee of \cref{alg:many_sizes} (and thus, complete the proof of the second part of \cref{thm:uppertwoplayersub}).
\begin{corollary}
\cref{alg:many_sizes} is a $\nicefrac{2}{3}$-approximation protocol.
\end{corollary}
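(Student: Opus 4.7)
The plan is to prove the corollary by algebraically combining the three preceding results into a single chain that eliminates the unknown quantities $f(S_{k-|\Oset\cap M|})$ and $f(S_{2(k-|\Oset\cap M|)})$ in favor of $\Oval$ and $f(\widehat{S})$.

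First, I would rearrange Lemma~3.7 to write $f(S_{2(k-|\Oset\cap M|)}) \le 2f(\widehat{S}) - f(\Oset\cap M)$, and plug this upper bound into the lower bound from Lemma~3.6. This yields the single inequality
\begin{equation*}
    2f(\widehat{S}) - f(\Oset\cap M) \;\ge\; \Oval + f(S_{k-|\Oset\cap M|}) - f(\widehat{S}),
\end{equation*}
which rearranges to $3f(\widehat{S}) \ge \Oval + f(\Oset\cap M) + f(S_{k-|\Oset\cap M|})$.

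Second, I would apply Observation~3.5 to replace $f(S_{k-|\Oset\cap M|})$ by the lower bound $f(\Oset\setminus M)$. At this stage the right-hand side becomes $\Oval + f(\Oset\cap M) + f(\Oset\setminus M)$. Finally, submodularity of $f$ applied to the two sets $\Oset\cap M$ and $\Oset\setminus M$ (together with $f(\varnothing)\ge 0$) gives
\begin{equation*}
    f(\Oset\cap M) + f(\Oset\setminus M) \;\ge\; f(\Oset) + f(\varnothing) \;\ge\; \Oval,
\end{equation*}
so that $3f(\widehat{S}) \ge 2\Oval$, i.e., $f(\widehat{S}) \ge (2/3)\Oval$, which is the desired approximation guarantee. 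Combined with the obvious feasibility of $\widehat{S}$ and the $O(k^2)$ bound on the message size already noted in the text, this completes the proof of the second part of Theorem~\ref{thm:uppertwoplayersub}.

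There is no real obstacle here; the lemmas were specifically tailored so that the two estimates of $f(S_{2(k-|\Oset\cap M|)})$ can be matched against each other. The only subtlety to double-check is the direction in which submodularity is being used in the last step: we are invoking the standard inequality $f(A) + f(B) \ge f(A\cup B) + f(A\cap B)$ for the disjoint split $A = \Oset\cap M$, $B = \Oset\setminus M$, whose intersection is empty, so the right-hand side is $\Oval + f(\varnothing) \ge \Oval$ by non-negativity. The factor $\nicefrac{2}{3}$ then emerges precisely from the coefficient $3$ produced when the $+f(\widehat{S})$ from Lemma~3.6 is added to the $2f(\widehat{S})$ from Lemma~3.7.
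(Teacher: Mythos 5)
Your proof is correct and follows essentially the same approach as the paper: both chain Lemma~\ref{lem:large_set_guarantee} and Lemma~\ref{lem:secondary_relationship} to eliminate $f(S_{2(k-|\Oset\cap M|)})$ and obtain $3f(\widehat{S}) \ge \Oval + f(\Oset\cap M) + f(S_{k-|\Oset\cap M|})$, then substitute Observation~\ref{obs:remaining_OPT_approximation} and finish with the submodularity inequality $f(\Oset\cap M) + f(\Oset\setminus M) \ge f(\Oset)$. The only difference is cosmetic ordering of the algebra.
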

\begin{proof}
Combining \cref{lem:large_set_guarantee,lem:secondary_relationship}, we get
\begin{align*}
	2 \cdot f(\widehat{S})
	\geq{} &
	f(\Oset \cap M) + f(S_{2(k - |\Oset \cap M|)})\\
	\geq{} &
	f(\Oset \cap M) + \Oval + f(S_{k - |\Oset \cap M|}) - f(\widehat{S}) \enspace.
\end{align*}
Rearranging this inequality, and then plugging into it the lower bound on $f(S_{k - |\Oset \cap M|})$ given by \cref{obs:remaining_OPT_approximation}, yields
\[
	f(\widehat{S})
	\geq
	\frac{f(\Oset \cap M) + \Oval + f(\Oset \setminus M)}{3}
	\geq
	\frac{2}{3} \cdot \Oval
	\enspace,
\]
where the second inequality follows from the submodularity and non-negativity of $f$. Because $\widehat{S}$ is the output of \cref{alg:many_sizes}, this concludes the proof.
\end{proof}

To prove also the first part of \cref{thm:uppertwoplayersub}, we need to reduce the number of elements forwarded from Alice to Bob by \cref{alg:many_sizes}. This can be done by applying geometric grouping to the sizes of the sets in $\{S_i\}_{i=1}^{2k}$. More precisely, Alice only forwards the sets $S_i$ for either $i = 0$, $i= \lfloor (1+\eps)^j \rfloor$, or $i= 2\lfloor (1+\eps)^j \rfloor$ for some integer $ 0 \leq j \leq \log_{1+\eps} k$, where $\eps$ is the parameter from the theorem. This reduces the number of elements forwarded to $\widetilde{O}(k/\eps)$, and it is not difficult to argue that the above analysis of the approximation ratio still works after this reduction, but its guarantee becomes worse by a factor of $1-O(\eps)$. A formal proof of this can be found in \cref{app:geometric_grouping}.

\subsection{Hardness of Approximation for Two Players}\label{sec:two-player-sub-hardness}

In this section we prove the impossibility result stated in \cref{thm:lowertwoplayersub}. We do that by using a reduction from a problem known as the INDEX problem, which is presented in  \cref{sec:prelim_model}. The same section also states an impossibility result for a generalization of this problem (\cref{thm:pindex_hardness}), which in the context of the INDEX problem implies that any protocol guaranteeing a success probability of at least $2/3$ for this problem must have a communication complexity of at least $n / 144$.

Our plan in this section is to assume the existence of a protocol named $\ALG$ for \maxcard in the two-players model with an approximation guarantee of $2/3 + \varepsilon$, and show that this leads to a protocol $\ALGI$ for the INDEX problem whose communication complexity depends on the communication complexity of $\ALG$. This allows us to translate the communication complexity lower bound for protocols for INDEX to a communication complexity lower bound for $\ALG$.

Before getting to the protocol $\ALGI$ mentioned above, let us first present a simpler protocol for the the INDEX problem, which is given as \cref{alg:twoplayersubreduction} and is used as a building block for $\ALGI$. \cref{alg:twoplayersubreduction} refers to $n$ possible objective functions that we denote by $f_1, f_2, \dotsc, f_n$. (Recall that $n$ is the length of the string that Alice receives in the INDEX problem.) To define these functions, we first need to define a set of $n$ other functions. Let $W' = \{w\} \cup \{v_i \mid i\in [n]\}$. For every $i\in [n]$, we define  $g_i\colon 2^{W'} \to \nnR$ as follows, where $S$ is an arbitrary subset of $W'$.
%
\begin{equation*}
g_i(S) = \begin{cases}
\frac{1}{3} &\text{if } S=\{w\}\enspace,\\
1           &\text{if } S=\{w,v_i\}\enspace,\\
\min\left\{\frac{2}{3} |S\setminus \{w\}|, 1\right\} & \text{otherwise}\enspace.
\end{cases}
\end{equation*}
The multilinear extension of $g_i$ is the function $G_i\colon [0, 1]^{W'} \to \nnR$ defined by $G_i(y) = \expected{g_i(\RSet(y))}$, where $\RSet(y)$ is a random subset of $W'$ including every element $v \in W'$ with probability $y_v$, independently.\footnote{The multilinear extension of a set function was first introduced by~\cite{calinescu2011maximizing}.} In the context of $G_i$, given an element $v \in W'$, we occasionally use the notation $\characteristic_v$ to denote the characteristic vector of the singleton set $\{v\}$, i.e., the vector in $[0, 1]^{W'}$ containing $1$ in the $v$-coordinate and $0$ in all other coordinates.

Let us now define the ground set $W = W_A \cupdot W_B$, where $W_A = \{u_i^j \mid i\in [n] \text{ and } j\in [k-1]\}$ and $W_B = \{w\}$. Then, for every $i\in [n]$, the function $f_i\colon 2^W \to \nnR$ is defined as
\[
    f_i(S) = G_i(y^S) \quad \forall\; S \subseteq V \enspace,
\]
where the vector $y^S\in [0,1]^{W'}$ is defined by
\[
    y^S_{v_{i'}}
    =
    \frac{|S \cap \{u_{i'}^j \mid j\in [k-1]\}|}{k - 1}
    \qquad
    \forall\; i'\in [n]
\]
and
\[
    y^S_w
    =
    |\{w\} \cap S|
    \enspace.
\]

\begin{protocol}
\caption{Reduction from INDEX to \maxcard in the Two-Player Model} \label{alg:twoplayersubreduction}
\textbf{Alice's Algorithm}
\begin{algorithmic}[1]
    \State The set of elements Alice of $\ALG$ gets is $V_A = \{u_i^j \mid i\in [n] \text{ with }x_i = 1, \text{ and } j\in [k-1]\}$. Notice that this is indeed a subset of $W_A$, and it intuitively corresponds to the $1$-bits of the vector $x$ given to Alice in the INDEX problem.
    \State The objective function for $\ALG$ is one of the functions $f_1, f_2, \dotsc, f_n$. Since these functions are identical when restricted to $W_A$, the Alice part of $\ALG$ can execute without knowing which one of them is the real objective function.
    \State Send to Bob the same message sent by the Alice of $\ALG$.
\end{algorithmic}
\textbf{Bob's Algorithm}
\begin{algorithmic}[1]
    \State The set of elements Bob of $\ALG$ gets is $V_B = W_B = \{w\}$.
    \State The objective function for $\ALG$ can now be determined to be $f_t$, where $t$ is the index received by Bob.
    \State If $\ALG$ returns a set of value at most $\frac{2k}{3(k - 1)}$, output $x_t = 0$; otherwise, output $x_t = 1$.
\end{algorithmic}
\end{protocol}

We begin the analysis of \cref{alg:twoplayersubreduction} with the following lemma, which shows that the objective function this protocol passes to $\ALG$ has all the necessary properties. The proof of this lemma is simple and technical, and thus, we defer it to \cref{app:missing_proofs}. In a nutshell, it shows by a straightforward case analysis that $g_i$ is non-negative, monotone, and submodular, and then argues that the fact that $g_i$ has these properties implies that $f_i$ has them too. 
\begin{restatable}{lemma}{lemTwoSidesReductionProperties} \label{lem:TwoSidedReductionProperties}
For every $i\in [n]$, the functions $g_i$ and $f_i$ are non-negative, monotone, and submodular.
\end{restatable}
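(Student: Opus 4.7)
The plan is to establish the three properties for $g_i$ by a straightforward case analysis, and then transfer them to $f_i$ through the multilinear extension $G_i$.

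First I would handle $g_i$ itself. Non-negativity is immediate since each of the three cases in the definition yields a non-negative value. For monotonicity and submodularity, I would classify subsets $S \subseteq W'$ by the pair $(|\{w\} \cap S|, |S \setminus \{w\}|)$, observing that every set with $|S \setminus \{w\}| \geq 2$ takes the value $1$; consequently only a handful of representative subsets need to be examined, namely those with value $0$, $1/3$, $2/3$, or $1$. A short direct computation then shows that the marginal contribution of each of the three types of element ($v_i$, some $v_j$ with $j\neq i$, and $w$) to such a representative set is non-negative (giving monotonicity) and is a non-increasing function of the representative set (giving submodularity). For example, the marginal of $v_i$ is $2/3$ on the empty set or on $\{w\}$, drops to $1/3$ once some $v_j$ is added, and drops to $0$ once the value has saturated at $1$.

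Next I would transfer these properties to $f_i$. The key observation is that the map $S \mapsto y^S$ from $2^W$ to $[0,1]^{W'}$ is coordinate-wise monotone, and adding any single element $u \in W$ to $S$ increases exactly one coordinate of $y^S$ by a fixed amount: by $\tfrac{1}{k-1}$ in coordinate $v_{i'}$ if $u = u_{i'}^j$, and by $1$ in coordinate $w$ if $u = w$ (and the upper bound of $1$ is never exceeded because there are only $k-1$ elements of the form $u_{i'}^j$ for any fixed $i'$). Non-negativity of $f_i$ then follows because $G_i(y) = \expected{g_i(\RSet(y))}$ is an expectation of non-negative values, and monotonicity follows because $G_i$ is non-decreasing in each coordinate whenever $g_i$ is monotone, combined with the coordinate-wise monotonicity of $S \mapsto y^S$.

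For submodularity, consider $A \subseteq B \subseteq W$ and $u \in W \setminus B$, and treat the case $u = u_{i'}^j$ (the case $u=w$ is analogous). Since $G_i$ is linear in the coordinate $y_{v_{i'}}$, one has
\[
    f_i(A \cup \{u\}) - f_i(A)
    \;=\;
    \tfrac{1}{k-1}\cdot\frac{\partial G_i}{\partial y_{v_{i'}}}(y^A),
\]
and an identical identity holds with $A$ replaced by $B$. The partial derivative on the right depends only on the coordinates of $y^A$ other than $v_{i'}$, and since $g_i$ is submodular it is a non-increasing function of each such coordinate (the standard property $\partial^2 G_i / \partial y_u \partial y_{v_{i'}} \leq 0$ for $u \neq v_{i'}$, which follows from writing the derivative as the expected marginal of $v_{i'}$ with respect to a random set sampled from the other coordinates and invoking submodularity under a monotone coupling). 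Since $y^A \leq y^B$ coordinate-wise, the marginal at $A$ dominates the marginal at $B$, yielding submodularity of $f_i$. The only delicate point in the whole argument is making sure the case analysis for $g_i$ is exhaustive; once $g_i$ has been verified to be submodular, the transfer through the multilinear extension is entirely routine.
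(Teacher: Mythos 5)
Your proposal follows essentially the same route as the paper: establish non-negativity of $g_i$ directly, verify monotonicity and submodularity of $g_i$ by explicit marginal computations for the three element types, and then transfer all three properties to $f_i$ via $G_i$, using that $S\mapsto y^S$ is coordinate-wise monotone and that adding an element changes a single coordinate by a fixed increment. Your phrasing of the transfer in terms of $\partial G_i/\partial y_{v_{i'}}$ and the sign of mixed second partials is a standard equivalent rewording of the paper's direct marginal comparison, so the argument is correct and not meaningfully different.
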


Our next step is analyzing the output distribution of \cref{alg:twoplayersubreduction}.

\begin{lemma} \label{lem:twosidesreduction_oneside}
If $x_t = 0$, where $t$ is the index received by Bob, then \cref{alg:twoplayersubreduction} always produces the correct answer. 
\end{lemma}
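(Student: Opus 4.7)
The plan is to show that when $x_t = 0$, every feasible set $S \subseteq V_A \cup V_B$ with $|S| \leq k$ satisfies $f_t(S) \leq \frac{2k}{3(k-1)}$; since the set returned by $\ALG$ is such a set, Bob will indeed output ``$x_t = 0$''. The key structural observation is that $x_t = 0$ means that no element of the form $u_t^j$ is included in $V_A$, so for any feasible $S$ we have $y^S_{v_t} = 0$. Consequently the random set $\RSet(y^S)$ used to evaluate $G_t$ never contains $v_t$, and the high-value configuration $\{w, v_t\}$ is never realized.

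I would then split on whether $w \in S$. If $w \notin S$ (so $y^S_w = 0$), then $\sum_{i \neq t} y^S_{v_i} = |S|/(k-1) \leq k/(k-1)$, and on every subset $Y \subseteq \{v_i : i \neq t\}$ the value $g_t(Y) = \min\{\tfrac{2}{3}|Y|, 1\}$ is concave in $|Y|$. By Jensen's inequality applied to the expectation defining $G_t(y^S)$,
\[
    G_t(y^S) \leq \min\left\{\frac{2\, \expected{|\RSet(y^S)|}}{3},\, 1\right\} \leq \min\left\{\frac{2k}{3(k-1)},\, 1\right\} \leq \frac{2k}{3(k-1)},
\]
where the last inequality holds trivially since $\tfrac{2k}{3(k-1)} \geq \tfrac{2}{3}$, and the edge case $k=2$ (where the minimum is $1$ and $\tfrac{2k}{3(k-1)} = \tfrac{4}{3}$) is accommodated automatically.

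The case $w \in S$ is the heart of the argument. Here $y^S_w = 1$ and $\sum_{i\neq t} y^S_{v_i} \leq (k-1)/(k-1) = 1$. Setting $Y \eqdef \RSet(y^S)\setminus\{w\}$, a short case check on $|Y| \in \{0,1,\geq 2\}$ establishes the identity
\[
    g_t(Y \cup \{w\}) = \frac{1}{3} + \min\left\{\frac{|Y|}{3},\, \frac{2}{3}\right\} \qquad \forall\, Y \subseteq \{v_i : i \neq t\},
\]
which bundles the exceptional value $g_t(\{w\}) = 1/3$ and the generic ``otherwise'' formula into one concave expression in $|Y|$. Taking expectations and applying Jensen,
\[
    G_t(y^S) = \frac{1}{3} + \expected{\min\left\{\frac{|Y|}{3}, \frac{2}{3}\right\}} \leq \frac{1}{3} + \min\left\{\frac{\expected{|Y|}}{3}, \frac{2}{3}\right\} \leq \frac{1}{3} + \frac{1}{3} = \frac{2}{3} \leq \frac{2k}{3(k-1)}.
\]

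Combining the two cases yields the uniform bound $f_t(S) = G_t(y^S) \leq \tfrac{2k}{3(k-1)}$ on every feasible set, which is exactly the threshold used by Bob to declare ``$x_t = 0$''. The only nontrivial step is identifying the closed form $\tfrac{1}{3} + \min\{|Y|/3, 2/3\}$ for $g_t$ on sets containing $w$ but not $v_t$; once this is in hand, both cases reduce to a one-line application of concavity together with the cardinality bounds $\sum_{i\neq t} y^S_{v_i} \leq k/(k-1)$ or $\leq 1$ coming from the feasibility of $S$.
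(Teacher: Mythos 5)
Your proof is correct and follows the same high-level two-case structure as the paper's (split on whether $w \in S$, and use $y^S_{v_t}=0$ as the crux), but it implements the key upper bound on $G_t(y^S)$ by a different technical tool. The paper applies the generic submodularity first-order bound $G_t(y) \leq g_t(S_0) + \sum_v y_v \cdot g_t(v \mid S_0)$ (with base point $S_0=\varnothing$ or $S_0 = \{w\}$) together with multilinearity; you instead observe that $g_t$, restricted to sets not containing $v_t$, is a concave function of the cardinality of the part outside $\{w\}$, and apply Jensen's inequality to the scalar random variable $|\RSet(y^S)|$ or $|\RSet(y^S)\setminus\{w\}|$. Both routes deliver the identical numerical thresholds ($\frac{2k}{3(k-1)}$ when $w\notin S$ and $\frac{2}{3}$ when $w\in S$) from the same two observations: feasibility of $S$ caps $\sum_i y^S_{v_i}$, and $x_t=0$ forces $y^S_{v_t}=0$. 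Your argument is more tailored — it leans on the specific cardinality-symmetry of $g_t$ and your explicit identity $g_t(Y\cup\{w\}) = \frac13 + \min\{|Y|/3,\, 2/3\}$, whereas the paper's submodularity bound is form-agnostic and would go through for any $g_t$ of this shape. One small expository slip: your justification of the last inequality in the $w\notin S$ case (invoking $\frac{2k}{3(k-1)}\geq \frac23$ and a $k=2$ edge case) is unnecessary — $\min\{a,1\}\leq a$ always holds, so no case analysis is needed there.
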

\begin{proof}
Let $S$ denote the output of $\ALG$. We need to show that $f_t(S) \leq \frac{2k}{3(k - 1)}$. There are two cases to consider. The first one is when $w \not \in S$. In this case,
\begin{align*}
    f_t(S)
    ={} &
    G_t(y^S)
    \leq
    g_t(\varnothing) + \sum_{v \in W'} [G_t(y^S_v \cdot \characteristic_v) - g_t(\varnothing)]\\
    ={} &
    g_t(\varnothing) + \sum_{v \in W'} y^S_v \cdot g_t(\{v\} \mid \varnothing)
    =
    \frac{2}{3} \cdot \sum_{v \in W'} y^S_v
    \leq
    \frac{2k}{3(k - 1)}
    \enspace,
\end{align*}
where the first inequality holds by submodularity of $g_t$, the second equality holds by the multilinearity of $G_t$, and the last inequality holds since the fact that $S$ contains up to $k$ elements guarantees that the sum of the coordinates of $y^S$ is at most $k / (k - 1)$.

The other case we need to consider is when $w \in S$. In this case,
\begin{align*}
    f_t(S)
    ={} &
    G_t(y^S)
    \leq
    g_t(\{w\}) + \sum_{v \in W'\setminus \{w\}} \mspace{-18mu} [G_t(y^S_v \cdot \characteristic_v + \characteristic_w) - g_t(\{w\})]\\
    ={} &
    g_t(\{w\}) + \sum_{v \in W'\setminus \{w\}} \mspace{-18mu} y^S_v \cdot g_t(\{v\} \mid \{w\})
    =
    \frac{1}{3} + \frac{1}{3} \cdot \sum_{v \in W'\setminus \{w\}} \mspace{-18mu} y^S_v
    \leq
    \frac{2}{3}
    \enspace,
\end{align*}
where the last inequality holds this time since the fact that $S$ contains up to $k - 1$ elements in addition to $w$ guarantees that the sum of all coordinates of $y^S$ except for the $w$-coordinate is at most $1$. To see why the third equality holds as well, note that the fact that $x_t = 0$ implies that none of the elements $u_t^1, u_t^2, \dotsc, u_t^{k - 1}$ belong to $V_A$, and thus, $y^S_{v_t} = 0$.
\end{proof}

\begin{lemma} \label{lem:twosidesreduction_secondside}
For $k \geq \varepsilon^{-1} \geq 4$, if $x_t = 1$, where $t$ is the index received by Bob, then \cref{alg:twoplayersubreduction} always produces the correct answer with probability at least $\varepsilon$.
\end{lemma}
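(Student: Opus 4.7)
My plan is to combine an explicit construction of a good feasible solution with a Markov-type argument applied to the output distribution of $\ALG$.

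First I would observe that when $x_t = 1$, the set $S^{\star} := \{w\} \cup \{u_t^j \mid j \in [k-1]\}$ is a feasible subset of $V_A \cupdot V_B$ of size $k$: indeed $w \in V_B$, and $x_t = 1$ forces every $u_t^j$ into $V_A$ by the construction of Alice's input in \cref{alg:twoplayersubreduction}. Evaluating the vector $y^{S^{\star}}$ one sees that $y^{S^{\star}}_w = 1$ and $y^{S^{\star}}_{v_t} = (k-1)/(k-1) = 1$ with all other coordinates vanishing, hence $f_t(S^{\star}) = G_t(y^{S^{\star}}) = g_t(\{w, v_t\}) = 1$. Therefore $\Oval \geq 1$.

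Second, let $S$ denote the (possibly random) set output by $\ALG$. The assumed approximation guarantee then gives $\expected{f_t(S)} \geq (2/3 + \varepsilon) \cdot \Oval \geq 2/3 + \varepsilon$. Since $g_t$ takes values in $[0, 1]$, its multilinear extension $G_t$ also does, so $f_t(S) \leq 1$ pointwise. Writing $\tau := \tfrac{2k}{3(k-1)} = \tfrac{2}{3} + \tfrac{2}{3(k-1)}$ for Bob's threshold and $p := \Pr[f_t(S) > \tau]$, splitting the expectation on the event $\{f_t(S) > \tau\}$ and its complement, and bounding $f_t(S)$ by $1$ on the first and by $\tau$ on the second, yields
\[
    \tfrac{2}{3} + \varepsilon \leq \expected{f_t(S)} \leq p \cdot 1 + (1 - p) \cdot \tau \enspace.
\]

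Third I would rearrange the above to get $p \geq (3(k-1)\varepsilon - 2)/(k-3)$, and check that this lower bound is at least $\varepsilon$ exactly when $2k\varepsilon \geq 2$, i.e.\ $k \geq \varepsilon^{-1}$; the side condition $\varepsilon^{-1} \geq 4$ is precisely what ensures $k \geq 4$ and hence $k - 3 > 0$ so the manipulation is valid. Since $p$ is by definition the probability that \cref{alg:twoplayersubreduction} outputs $x_t = 1$, this is exactly the claim. I do not anticipate any real obstacle: the only delicate point is that the threshold $\tau$ sits exactly $2/(3(k-1))$ above $2/3$, and the hypothesis $k \geq \varepsilon^{-1}$ is tailored so that the $\varepsilon$-slack in the approximation guarantee dominates this $O(1/k)$ gap.
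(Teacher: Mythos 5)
Your proof is correct and follows essentially the same approach as the paper: construct the optimal set $\{w\}\cup\{u_t^j\}$ to show $\Oval=1$, invoke the $(2/3+\varepsilon)$-approximation guarantee to lower-bound $\E[f_t(S)]$, and run a reverse-Markov argument against the threshold $\tau=\tfrac{2k}{3(k-1)}$. The only (cosmetic) difference is in how the final inequality is closed: you keep $\tau$ in terms of $k$ and solve $\tfrac{3(k-1)\varepsilon-2}{k-3}\geq\varepsilon$ directly (using $k\geq\varepsilon^{-1}\geq 4$ to guarantee the positive denominator $k-3$), whereas the paper first weakens $\tau\leq\tfrac{2}{3(1-\varepsilon)}$ via $k\geq\varepsilon^{-1}$ and then cancels a common factor of $1-3\varepsilon>0$ (using $\varepsilon\leq 1/4$). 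Both are valid algebraic routes to the same conclusion.
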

\begin{proof}
We first observe that if $x_t=1$, then the maximum value that the submodular function $f_t$ achieves over subsets of $V_A \cup V_B$ of cardinality $k$ is $1$. Clearly, the function $g_t$ does not take values larger than $1$ for any set, and therefore the same holds for its multilinear extension $G_t$ and the function $f_t$ defined using this multilinear extension. Thus, it remains to show that there exists a set $S \subseteq V_A \cup V_B$ of size at most $k$ with $f_t(S) = 1$. Since $x_t = 1$,  all the elements $u_t^1, u_t^2, \dotsc, u_t^{k - 1}$ belong to $V_A$. Thus, the set $S = \{w\} \cup \{u_t^j \mid j\in [k-1]\}$ is a subset of $V_A \cup V_B$ of size $k$ whose value is
\[
    f_t(S)
    =
    G_t(y^S)
    =
    g_t(\{w, v_t\})
    =
    1
    \enspace.
\]

Let us define now $X$ to be a random variable corresponding to the value of the solution returned by $\ALG$. Since we assumed that $\ALG$ is a $(2/3 + \varepsilon)$-approximation algorithm, and we already proved that the highest value of a feasible set is $1$, we get $\E[X]\geq 2/3 + \varepsilon$. To complete the proof of the lemma, we have to show that the probability $\alpha = \Pr[X > \frac{2k}{3(k-1)}]$, which is the probability that \cref{alg:twoplayersubreduction} correctly returns $x_t=1$, is at least $\varepsilon$.
We upper bound $\E[X]$ through the following simple variation of Markov's inequality:
\begin{align*}
\frac{2}{3} + \varepsilon &\leq \E[X] \leq \Pr\left[X\leq \frac{2k}{3(k-1)}\right]\cdot \frac{2k}{3(k-1)} + \Pr\left[X > \frac{2k}{3(k-1)}\right]\cdot 1\\
 &= (1-\alpha)\cdot \frac{2k}{3(k-1)} + \alpha
 \leq (1-\alpha) \cdot \frac{2}{3(1-\varepsilon)} + \alpha\enspace,
\end{align*}
where in the second inequality we used the fact that $1$ is the largest value $X$ can take, and the last inequality follows from $k\geq \varepsilon^{-1}$. Rearranging the above inequality leads to
\begin{equation*}
(1-3\epsilon)\cdot \epsilon \leq (1-3\epsilon) \cdot \alpha\enspace,
\end{equation*}
which, by using $\epsilon^{-1}\geq 4$, implies $\alpha\geq \epsilon$, as desired.
\end{proof}

At this point we are ready to present the promised algorithm $\ALGI$, which simply executes $\lceil 2\varepsilon^{-1} \rceil$ parallel copies of \cref{alg:twoplayersubreduction}, and then outputs $x_t = 1$ if and only if at least one of the executions returned this answer.

\begin{corollary} \label{cor:twosided_reduction}
For $k \geq \varepsilon^{-1} \geq 4$, $\ALGI$ always answers correctly when $x_t = 0$, and answers correctly with probability at least $2/3$ when $x_t = 1$.
\end{corollary}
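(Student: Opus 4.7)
The plan is to combine \cref{lem:twosidesreduction_oneside} and \cref{lem:twosidesreduction_secondside} via a standard probability amplification argument, treating the two cases ($x_t = 0$ and $x_t = 1$) separately.

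For the case $x_t = 0$, I would simply observe that \cref{lem:twosidesreduction_oneside} guarantees each of the $\lceil 2\varepsilon^{-1}\rceil$ parallel copies of \cref{alg:twoplayersubreduction} deterministically outputs $x_t = 0$. Since $\ALGI$ outputs $x_t = 1$ only when at least one copy does so, $\ALGI$ deterministically outputs the correct answer $x_t = 0$ in this case.

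For the case $x_t = 1$, I would invoke \cref{lem:twosidesreduction_secondside} (whose hypotheses are met because $k \geq \varepsilon^{-1} \geq 4$) to conclude that each of the $\lceil 2\varepsilon^{-1}\rceil$ copies independently outputs the correct answer $x_t = 1$ with probability at least $\varepsilon$. Since $\ALGI$ errs in this case only if every copy fails, and the copies use independent randomness, the failure probability is at most
\[
(1-\varepsilon)^{\lceil 2\varepsilon^{-1}\rceil} \;\leq\; (1-\varepsilon)^{2/\varepsilon} \;\leq\; e^{-2} \;<\; \tfrac{1}{3},
\]
so $\ALGI$ succeeds with probability at least $2/3$, as required. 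I do not anticipate any serious obstacle; the only step requiring verification is that $\lceil 2\varepsilon^{-1}\rceil$ trials suffice to drive the failure probability below $1/3$, which rests on the elementary inequality $(1-x)^{1/x} \leq e^{-1}$ for $x \in (0,1)$ together with the rough bound $e^{-2} < 1/3$.
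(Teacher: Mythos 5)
Your proof is correct and follows essentially the same argument as the paper: the $0$-case follows immediately from \cref{lem:twosidesreduction_oneside}, and the $1$-case bounds the failure probability by $(1-\varepsilon)^{\lceil 2\varepsilon^{-1}\rceil} \leq e^{-2} < 1/3$ using \cref{lem:twosidesreduction_secondside} and independence of the parallel copies.
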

\begin{proof}
The first part of the corollary is a direct consequence of \cref{lem:twosidesreduction_oneside}. Additionally, by \cref{lem:twosidesreduction_secondside}, the probability that $\ALGI$ answers $x_t = 0$ when in fact $x_t = 1$ is at most
\[
    (1 - \varepsilon)^{\lceil 2\varepsilon^{-1} \rceil}
    \leq
    (1 - \varepsilon)^{2\varepsilon^{-1}}
    \leq
    e^{-\varepsilon \cdot 2\varepsilon^{-1}}
    =
    e^{-2}
    <
    \frac{1}{3}
    \enspace.
    \qedhere
\]
\end{proof}

Using the last corollary, we can now complete the proof of \cref{thm:lowertwoplayersub}.
\begin{proof}[Proof of \cref{thm:lowertwoplayersub}]
Since \cref{cor:twosided_reduction} shows that $\ALGI$ is an algorithm for the INDEX problem that succeeds with probability at least $2/3$, \cref{thm:pindex_hardness} guarantees that its communication complexity is at least $n/144$. Observe now that the message of $\ALGI$ consists of $\lceil 2 \varepsilon^{-1} \rceil$ messages of \cref{alg:twoplayersubreduction}, and thus, the communication complexity of \cref{alg:twoplayersubreduction} must be of size at least
\[
    \frac{n/144}{\lceil 2 \varepsilon^{-1} \rceil}
    \geq
    \frac{n/144}{3 \varepsilon^{-1}}
    =
    \frac{n\varepsilon}{432}
    \enspace.
\]
We now recall that the message of \cref{alg:twoplayersubreduction} is simply the message generated by $\ALG$ given the instance of \maxcard generated for it by \cref{alg:twoplayersubreduction}. Since this instance has a ground set of size $N = |W| = 1 + n(k - 1)$, the communication complexity of $\ALG$ must be at least
\[
    \frac{n\varepsilon}{432}
    =
    \frac{\varepsilon(N - 1)}{432(k - 1)}
    =
    \Omega\left(\frac{\varepsilon N}{k}\right)
    \enspace.
    \qedhere
\]
\end{proof}

\section{Hardness for Many Players}
\label{sec:gen_hardness}

In this section we prove that, in the case of many players, any protocol with reasonable communication complexity has an approximation guarantee upper bounded by an expression that tends to $1/2$ as the number of players tends to infinity. 
Specifically, we show the following (where, for $p\in \mathbb{Z}_{\geq 0}$, $H_p = 1 + \frac{1}{2} + \frac{1}{3} + \ldots + \frac{1}{p}$ is the $p$-th harmonic number).

\thmgenhardness*

We highlight that a (weighted) coverage function $f\colon 2^V\to \mathbb{R}_{\geq 0}$ is defined as follows. There is a finite universe $U$ with non-negative weights $a\colon U\to\mathbb{R}_{\geq 0}$, and $V\subseteq 2^U$ is a family of subsets of $U$. Then, for any $S\subseteq V$, we have
$f(S) = \sum_{u\in \cup_{v\in S} v} a(u)$%
.\footnote{In some texts, the term \emph{coverage function} is used for its unweighted version, i.e., $a(u)=1$ for $u\in U$. Our statements and proofs are described in terms of weighted coverage functions. However, this is merely a matter of convenience because any weighted coverage function can be approximated arbitrarily well through a scaled version of an unweighted one.} We also remark that our hardness construction applies to the related maximum set coverage problem. In that problem the stream consists of  $N$ sets $S_1, S_2, \ldots, S_N$ of some universe $U$ and each $S_i$ is encoded as the list of elements in that set. In other words, the submodular function $f$ is given explicitly by the sets of the underlying universe. Prior work showed that, even in this setting, any streaming algorithm with a better approximation guarantee than $(1-1/e)$ requires memory $\Omega(N)$~\cite{mcgregor2019better}. Our techniques also apply to this setting,\footnote{To see that this is the case, it is sufficient to observe that the intuitive description of the family $\cF$ in~\cref{sec:gen_hardness_intuition} is equivalent to the formal definition in~\cref{sec:hard_gen_family} when the underlying universe $U$ is of infinite size; and, from that point of view, it is clear that the algorithm receives no advantage if given the explicit representation of the sets compared to having an oracle access to the coverage function. Furthermore, by standard Chernoff concentration inequalities (see, e.g., the proof of Lemma~$8$ in~\cite{mcgregor2019better}), the family can be approximated up to any desired accuracy for feasible sets of cardinality at most $k$ by selecting $|U| = \Theta(k \log N)$.} 
 and hence we improve the hardness factor for the maximum set coverage problem   to the tight factor $1/2$.
 
The heart of the proof of the above theorem is the construction of a family  $\cF$ of submodular coverage functions on a common ground set $W$, partitioned into  sets $W_1, \ldots, W_p$, one for each player. 
All the sets $W_i$  have the same cardinality, which we denote by $n$, and thus, $N = |W| = n\cdot p$. 
The family $\cF$ contains a weighted coverage function $f_{o_1, \ldots, o_p}$ for  every $o_1\in W_1, o_2 \in W_2, \ldots, o_p\in W_p$.
The intuition is that $\{o_1 , \ldots, o_p\}$ will be the ``hidden'' optimal solution for $f_{o_1, \ldots, o_p}$ when we set $k=p$. 

For the hardness result, there are two crucial properties that the construction should satisfy: 
\begin{itemize}
    \item \emph{Indistinguishability:} The $i$-th player should not be able to obtain any information about $o_i$ by querying the submodular function on subsets of $W_1 \cup W_2 \cup \cdots \cup W_i$. 
    \item \emph{Value gap:} The value of the solution $\{o_1 , \ldots, o_p\}$ is roughly twice the value of any solution of cardinality $k=p$ that does not contain any of these elements.
\end{itemize}
The first property intuitively ensures that the players must use much communication to identify the special elements $\{o_1, \ldots, o_p\}$; and the second property implies that, if they fail to do so, then the last player can only output a $1/2$-approximate solution.   
The following lemma formalizes these two properties that our family $\cF$ satisfies.
\begin{lemma}
    Let $W$ be partitioned into $p$ sets $W_1, \ldots, W_p$ of cardinality $n$. 
    There is a family $\cF = \{f_{o_1, o_2, \ldots, o_p} \mid o_1 \in W_1, o_2 \in W_2, \ldots, o_p\in W_p$\} of coverage functions on the ground set $W$ that satisfies:
    \begin{itemize}
        \item \emph{Indistinguishability:} For $i\in [p]$, any two functions $f_{o_1, \ldots, o_p}, f_{o'_1, \ldots, o'_p} \in \cF$ with $o_1 = o'_1, \ldots, o_{i-1} = o'_{i-1}$ are identical when restricted to the ground set $W_1 \cup \cdots \cup W_i$.
        \item \emph{Value gap:}  For any $f_{o_1, \ldots, o_p}\in \cF$, we have $f_{o_1, \ldots, o_p}(W) = f_{o_1, \ldots, o_p} (\{o_1,\ldots, o_p\}) \leq 2p$ and
        \begin{align*}
            \max_{\substack{\vspace*{0.5mm}\\S \subseteq W \setminus \{o_1, \ldots, o_p\}\\ |S| \leq p}} \mspace{-27mu} f_{o_1, \ldots, o_p}(S) \leq p + (H_p)^2 \leq \left(\frac{p+ (H_p)^2}{2p - H_p} \right) \cdot f_{o_1, \ldots, o_p} (\{o_1, \ldots, o_p\})\enspace.
        \end{align*}
    \end{itemize}
    \label{lemma:gen_hardness_F_properties}
\end{lemma}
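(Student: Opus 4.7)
The plan is to construct the family $\cF$ explicitly as a family of weighted coverage functions over a carefully designed universe $U$. The universe decomposes into layered blocks whose items have structured covering sets: for every item $u$ and every $j \in [p]$, the slice $S_u \cap W_j$ is a fixed function of $o_1, \ldots, o_{j-1}$ only, either trivial (empty or the full layer $W_j$) or a single ``pointer'' element $\hat{o}_j(o_1, \ldots, o_{j-1}) \in W_j$ determined by the previous choices. The design has two ingredients: symmetric items covered by all of $W_j$ for some $j$, which contribute a uniform base of weight $p$ to every solution that picks at least one element per layer; and pointer items covered through a chain of cross-layer pointers, whose weights are tuned so that their total is an additional $p$ for the opt $\{o_1, \ldots, o_p\}$ but only a harmonic fraction of that for non-opt sets.

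Indistinguishability follows by direct inspection. For any $i$ and any $S \subseteq W_1 \cup \cdots \cup W_i$, the value $f(S)$ depends, for each universe item $u$, only on $S \cap S_u$, and each slice $S \cap S_u \cap W_j$ (for $j \leq i$) is determined purely by $o_1, \ldots, o_{j-1}$, which are already contained in $o_1, \ldots, o_{i-1}$. Hence two functions of $\cF$ that agree on $o_1, \ldots, o_{i-1}$ also agree as set functions on $W_1 \cup \cdots \cup W_i$.

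For the value gap, I would compute $f(\{o_1, \ldots, o_p\})$ by summing block contributions: symmetric items yield weight $p$ (one per layer, hit by the corresponding $o_j \in W_j$), and pointer items yield an additional weight $p$ because the chain of pointers passes through $\{o_1, \ldots, o_p\}$ --- so the total is $2p$. Monotonicity then forces $f(W) = f(\{o_1, \ldots, o_p\})$. For $S \subseteq W \setminus \{o_1, \ldots, o_p\}$ with $|S| \leq p$, the symmetric contribution is again at most $p$, and the pointer contribution is bounded by a harmonic accounting: the $r$-th non-opt element of $S$ entering a given block can claim pointer items with marginal weight at most $1/r$, since earlier elements have already covered the heavier pointer items. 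Summing across blocks gives a double harmonic bound $\sum_{j=1}^{p} H_j/j \leq (H_p)^2$ on the pointer contribution, yielding $f(S) \leq p + (H_p)^2$.

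The main technical obstacle is to satisfy all three requirements simultaneously: indistinguishability, opt value equal to $2p$, and the sharp non-opt bound $p + (H_p)^2$. Indistinguishability forbids $o_j$ from being singled out by any query inside $W_1 \cup \cdots \cup W_j$, so the advantage of $o_j$ must be transmitted only through the pointer chain in higher layers, and the chain must be arranged so that it closes correctly for \emph{every} tuple $(o_1, \ldots, o_p)$ to give opt full coverage. Choosing the per-item weights of the pointer items to decay as $1/r$ is precisely what calibrates the non-opt bound to the harmonic-squared quantity; verifying that no non-opt arrangement can do better than this harmonic schedule, via a greedy-style exchange argument across blocks, is the crux of the proof.
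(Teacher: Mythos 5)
There is a genuine gap, and it sits at exactly the point you flag as the ``crux.'' Your plan enforces that for every universe item $u$ and every $j$, the slice $S_u\cap W_j$ depends only on $o_1,\ldots,o_{j-1}$. If the coverage pattern inside block $W_j$ never depends on $o_j$, then $o_j$ is not combinatorially distinguished from any other element of $W_j$, and in particular no non-opt solution can be forced to lose value by \emph{replacing} $o_j$ with something else in $W_j$. That makes the value gap unachievable: you cannot have $o_j$ simultaneously be ``just another element of $W_j$'' and also be worth roughly twice as much. You try to route $o_j$'s advantage through pointer chains in higher layers, but the pointer you define for layer $\ell$ is $\hat o_\ell(o_1,\ldots,o_{\ell-1})\in W_\ell$, not $o_\ell$, so for the chain to close on $\{o_1,\ldots,o_p\}$ you would need $\hat o_\ell(o_1,\ldots,o_{\ell-1})=o_\ell$ --- which is not a function of $o_1,\ldots,o_{\ell-1}$. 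The construction as described is internally inconsistent.

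The paper takes a route that escapes this tension. In its family, the coverage inside $W_j$ \emph{does} depend on $o_j$: $o_j$ deterministically covers a distinguished universe point $u_j$, while every other element of $W_j$ covers each of $u_j,\ldots,u_p$ with ``probability'' $a_j/A_{\geq j}$ in a weighted \emph{fractional} coverage construction. Indistinguishability is then not automatic --- it is a nontrivial algebraic cancellation (the paper's Lemma on indistinguishability shows that two carefully paired terms in the difference $f_{o_1,\ldots,o_p}(S)-f_{o'_1,\ldots,o'_p}(S)$ sum to zero for any $S\subseteq W_1\cup\cdots\cup W_i$). So the property you were trying to impose at the item level holds only at the level of set-function values, and only because the weights $a_j$ are tuned via a recursion making unit marginal gains. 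Your ``harmonic accounting'' of $\sum_j H_j/j$ has the right shape, but the paper gets it by a completely different mechanism: it exhibits a concave relaxation $\widehat F$ of the discrete objective, linearizes it at the all-ones point, and shows the partial derivatives lie in $\bigl[\tfrac12 - \tfrac{H_{p+1-\ell}}{p+1-\ell},\ \tfrac12\bigr]$, from which $p\max_\ell \partial_\ell \widehat F - \sum_\ell \partial_\ell\widehat F \le \sum_\ell H_\ell/\ell \le (H_p)^2$. The ``greedy exchange across blocks'' you defer is precisely what this concavity/first-order argument does, and it is the substantive part of the proof; without it, and without a concrete construction that reconciles indistinguishability with the value gap, the proposal does not go through.
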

Equipped with the above lemma, we prove \cref{thm:gen_hardness} in \cref{sec:gen_hardness_reduction} by a rather direct reduction from the \chainPn problem. We note that the reduction is similar to the one presented in \cref{sec:two-player-sub-hardness} for the two-player case. 

The core part of this section  is the construction of $\cF$ and the proof of \cref{lemma:gen_hardness_F_properties}. The outline is as follows. We first  give an intuitive description of the main ideas in \cref{sec:gen_hardness_intuition}. The family $\cF$ is then formally defined in \cref{sec:hard_gen_family}. Finally, the value gap and indistinguishability properties of \cref{lemma:gen_hardness_F_properties} are  proved in  \cref{sec:gen_hardness_value_gap} and \cref{sec:gen_hardness_indistinguishability}, respectively.

\subsection{Intuitive Description of Our Construction}
\label{sec:gen_hardness_intuition}

In this section, we highlight our main ideas for constructing the family $\cF$ satisfying the properties of \cref{lemma:gen_hardness_F_properties}. 
We do so by presenting  three families of coverage functions $\cH$, $\cG$, and finally $\cF$. 
Family $\cH$ is a natural adaptation of coverage functions that have previously appeared in hardness constructions (see, e.g., \cite{mcgregor2019better}). 
We then highlight our main ideas for overcoming issues with those functions by first refining $\cH$ to $\cG$, and then by refining $\cG$ to obtain our final construction $\cF$. 

To convey the intuition, we work with unweighted coverage functions. However, to provide a clean and concise technical presentation later on, we use weighted coverage functions to formally realize the construction plan described here.

\paragraph{The first attempt: family $\cH$.}
The construction of the family $\cH = \{h_{o_1, \ldots, o_p} \mid o_1 \in W_1, \ldots, o_p\in W_p\}$ is inspired by the coverage functions constructed in the NP-hardness result  of~\cite{feige1998threshold}. In those coverage functions every element corresponds to a subset of the underlying universe of size $|U|/p$. Furthermore, the optimal solution $\{o_1, \ldots, o_p\}$ forms a disjoint cover of $U$, whereas any other element behaves like a random subset of the universe of size $|U|/p$.  

Inspired by this, we let  $h_{o_1, \ldots, {o_{p}}} \in \cH$  be  the coverage function where
\begin{itemize}
    \item the subsets of $U$ corresponding to $o_1,\ldots, o_p$ form a partition of equal-sized sets, i.e.,  of size $|U|/p$ each;
    \item every other element corresponds to a randomly selected subset of $U$ of size $|U|/p$.
\end{itemize}
While the above definition is randomized, we assume  for the sake of simplicity in this overview  that the value of a subset equals its expected value.   This can intuitively be achieved by selecting the underlying universe $U$ to be large enough so as to ensure concentration.  For a subset $S \subseteq W \setminus \{o_1, \ldots, o_p\}$, we thus have that $h_{o_1, \ldots, o_p}(S)$ equals the expected number of elements of $U$ covered by $|S|$ random subsets of cardinality $|U|/p$. Hence
\begin{align*}
    h_{o_1, \ldots, o_p}(S) = \left( 1- \left(1- \frac{1}{p} \right)^{|S|} \right) |U| \enspace,
\end{align*}
which is at least $(1-1/e) |U|$ if $|S| =p$. 
This already highlights the first issue of the construction: the value gap between the optimal solution $\{o_1, \ldots, o_p\}$, whose value is $|U|$, and a solution disjoint from this optimal solution is only $1-1/e$; while we need it to approach $1/2$ as $p$ tends to infinity. 

The second and perhaps more significant issue is the indistinguishability. First, we can observe that the value of any subset $S\subseteq W_1$ 
only depends on $|S|$,
and thus the selection of $o_1\in W_1$ is indistinguishable when querying the submodular function restricted to $W_1$. 
However, the same does \emph{not} hold for $o_2$ when querying the submodular function restricted to the set $W_1 \cup W_2$. 
To see this, note that $o_1$ and $o_2$ are the \emph{only} elements of $W_1 \cup W_2$ whose corresponding subsets of $U$ are disjoint. In other words, $\{o_1, o_2\}$ is the unique maximizer to $\max_{S \subseteq W_1 \cup W_2: |S| = 2} h_{o_1, \ldots, o_p}(S)$, and $o_2$ can thus be identified by querying the submodular function on $W_1 \cup W_2$. 
A natural idea for addressing this issue is to make \emph{all} elements in $W_2$, and not only $o_2$, correspond to subsets of $U$ that are disjoint of the subset corresponding to  $o_1$. 
Making this modification for all $W_2, \ldots, W_p$ results in the refined family $\cG$ that we now describe. We note that a similar approach was used in~\cite{kapralov2013better} to guarantee indistinguishability.

\paragraph{The first refinement: family $\cG$.} Motivated by the idea to make every element in $W_i$ correspond to a subset of $U$ disjoint from the subsets of $o_1, \ldots, o_{i-1}$, we define the family $\cG = \{g_{o_1, \ldots, o_p} \mid o_1 \in W_1, \ldots, o_p\in W_p\}$ of coverage functions. Specifically, we let $g_{o_1, \ldots, o_p} \in \cG$ be the coverage function where 
\begin{itemize}
    \item the subsets of $U$ corresponding to $o_1,\ldots, o_p$ form a partition of equal-sized sets, i.e., of size $|U|/p$ each;
    \item for $i=1, \ldots, p$, every element in $W_i \setminus \{o_i\}$ corresponds to a randomly selected subset of $U$ of size $|U|/p$ that is disjoint from the subsets corresponding to $o_1, \ldots, o_{i-1}$.
\end{itemize}
The above description of $\cG$ is given in a way that highlights the changes compared to $\cH$. Another equivalent definition of $g_{o_1, \ldots, o_p}$ is that it is the coverage function where
\begin{itemize}
    \item the elements of $W_1$ form random subsets of $U$ of size $|U|/p$; 
    \item for $i=2, \ldots, p$, every element in $W_i$ corresponds to a randomly selected subset of $U$ of size $|U|/p$ that is disjoint from the subsets corresponding to $o_1, \ldots, o_{i-1}$.
\end{itemize}
From this viewpoint, it is clear that we now have the indistinguishability property of \cref{lemma:gen_hardness_F_properties}. Indeed, for $i\in [p]$, the only subsets of $U$ that depend on $o_i$ in the above construction are those corresponding to elements in $W_{i+1}, \ldots, W_p$. 
It follows that the value of a subset $S \subseteq W_1 \cup \cdots \cup W_i$, which is a function of the subsets of $U$ corresponding to the elements in $S$, is independent of the selection of $o_i$.

Having verified indistinguishability, let us consider the value gap. 
First, note that we still have that the optimal solution $\{o_1, \ldots, o_p\}$ covers the whole universe, and thus has value $|U|$.
Now consider a set $S \subseteq W \setminus \{o_1, \ldots, o_p\}$. 
It will be instructive to first consider  the case when $S = \{v_1, \ldots, v_p\}$ with $v_i \in W_i$ for all $i \in [p]$, i.e., $S$ contains exactly one element from each of the sets $W_i$.   Abbreviating $g_{o_1, \ldots, o_p}$ by $g$,  we have in this case that
\begin{align*}
 g_{o_1, \ldots, o_p}(S) =  g(v_1) + g(v_2 \mid \{v_1\}) + g(v_3 \mid \{v_1, v_2\}) + \ldots + g(v_p \mid \{v_1, \ldots, v_{p-1}\}) 
\end{align*} 
equals
\begin{align}
\label{eq:gen_hardness_unequal}
 \frac{|U|}{p} \left(1 +  \left( 1- \frac{1}{p}\right) +  \left(1 - \frac{1}{p} \right) \left( 1 - \frac{1}{p-1} \right) + \ldots +  \left(1 - \frac{1}{p} \right) \left( 1 - \frac{1}{p-1} \right) \cdots \left(1 - \frac{1}{2} \right) \right)\enspace,
\end{align}
which in turn solves to 
\begin{align*}
    \frac{|U|}{2} \left( 1+ \frac{1}{p} \right)\enspace.
\end{align*}
Hence, for sets $S \subseteq W \setminus \{o_1, \ldots, o_p\}$ that contain one element from each $W_i$, we have a value gap that approaches the desired constant $1/2$ as $p$ tends to infinity. 
The issue is that there are other subsets of $W \setminus \{o_1, \ldots, o_p\}$ of significantly higher value. 
To see this, note that any element $v\in W_1 \setminus \{v_1\}$ has a marginal value with respect to $\{v_1, \ldots, v_{p-1}\}$ that is much higher than the marginal value of $v_p$ with respect to the same set. 
In particular, the value of a subset $W_1 \setminus \{o_1, \ldots, o_p\}$ of cardinality $p$ is equivalent for functions in $\cG$ and $\cH$, and  is thus at least $(1-1/e)|U|$.  To overcome this issue (i.e., the fact that elements of $W_1$  are more ``valuable'' than other elements), we modify the above construction to let the elements from different $W_i$'s correspond to subsets of different sizes. 

\paragraph{The second and last refinement: family $\cF$.} The family $\cF = \{f_{o_1, \ldots, o_p} \mid o_1 \in W_1, \ldots, o_p\in W_p\}$ is obtained from $\cG$ by selecting subsets of $U$ of \emph{non-uniform sizes}. Specifically, we carefully select numbers $1=a_1 < a_2 < \cdots < a_p$, and make the elements of $W_i$ correspond to subsets of $U$ of size $a_i$; then we let the total size of $U$ be $a_1 + a_2 + \ldots + a_k$.\footnote{We remark that the $a_i$'s do not take integral values, and we think of $U$ as a set of total size $a_1 + a_2 + \ldots + a_p$ consisting of infinitly many infinitesimally small items.} We now let $f_{o_1, \ldots, o_p} \in \cF$ be the coverage function where
\begin{itemize}
    \item the elements of $W_1$ form random subsets of $U$ of size $a_1=1$; 
    \item for $i=2, \ldots, p$, every element in $W_i$ corresponds to a randomly selected subset of $U$ of size $a_i$ that is disjoint from the subsets corresponding to $o_1, \ldots, o_{i-1}$.
\end{itemize}
The family $\cF$ satisfies the indistinguishability property of \cref{lemma:gen_hardness_F_properties} for the exact same reasons $\cG$ satisfies it.
We now explain how the values $a_1, \ldots, a_p$ are selected so as to obtain the value gap.  
Consider a set $S \subseteq W \setminus \{o_1, \ldots, o_p\}$ obeying $S = \{v_1, \ldots, v_p\}$ for some choice of $v_i \in W_i$ for every $i \in [p]$. Abbreviating $f_{o_1, \ldots, o_p}$ by $f$, we thus have
\begin{align*}
    f_{o_1, \ldots, o_p}(S) =  f(v_1) + f(v_2 \mid \{v_1\}) + f(v_3 \mid \{v_1, v_2\}) + \ldots + f(v_p \mid \{v_1, \ldots, v_{p-1}\}) \enspace.
\end{align*}
The numbers $a_1, \ldots, a_p$ are selected so that each term of this sum equals $1$, and hence, $f_{o_1, \ldots, o_p}(S) = p$. Notice that this is in stark contrast to the functions in $\cG$ where the contributions to~\eqref{eq:gen_hardness_unequal} were highly unequal. 
The intuitive reason why we set the numbers so that these marginal contributions are the same is that we want to prove that one cannot form a subset of $W \setminus \{o_1, \ldots, o_p\}$ of cardinality at most $p$ of significantly higher value by increasing the number of elements selected from one of the partitions $W_i$. 
Formally, this is proved in \cref{sec:gen_hardness_value_gap} by considering the linear extension of a concave function at the point corresponding  to such a set $S$ that contains a single element from each $W_i$.
This allows us to upper bound the value of any subset $W \setminus \{o_1, \ldots, o_p\}$ of cardinality at most $p$ by $p + (H_p)^2$. The value gap then follows from basic calculations (see \cref{lem:gen_hardness_select_a}) which show that 
$f_{o_1, \ldots, o_p} (\{o_1, \ldots, o_p\}) = |U| = \sum_{i=1}^p a_i$ is at least  $2p - H_p$ and at most $2p$.

\subsection{Construction of Family of Weighted Coverage Functions}
\label{sec:hard_gen_family}
We  formally describe the construction of the family $\cF$ of weighted coverage functions on the common ground set $W$. Recall that the ground set is partitioned into sets $W_1, \ldots, W_p$. Furthermore, each of these sets has cardinality $n$, and thus $N = |W| = n\cdot p$.

In the intuitive description (\cref{sec:gen_hardness_intuition}), we defined the functions in $\cF$ to be coverage functions, where the elements correspond to random subsets of the underlying universe $U$. 
Here we will be more precise and avoid this randomness. 
To this end,  we consider a slight generalization of weighted coverage functions that we call weighted fractional coverage functions. 
This is just done for convenience. In \cref{app:frac-cover}, we show that any such function is indeed a weighted coverage function.

Recall that in a weighted coverage function, every element is a subset of an underlying universe $U$ with non-negative weights $a\colon U \to \mathbb{R}_{\geq 0}$. 
For \emph{fractional weighted coverage functions}, apart from the non-negative weights $a\colon U \to \mathbb{R}_{\geq 0}$, we also associate a function $p_v\colon U \rightarrow [0,1]$ with each element $v\in W$ with the intuition that $p_v(u)$ specifies 
 the ``probability'' that $v\in W$ covers $u\in U$. The value $f(S)$ of a subset $S \subseteq W$ of the elements is then defined by
\begin{equation}\label{eq:fracWeightedCoverageFunction}
   f(S) = \sum_{u\in U} a_u \cdot \Pr[\mbox{an element in $S$ covers $u$}] =  \sum_{u\in U} a_u \cdot \left(1 -  \prod_{v\in S} (1 - p_v(u))\right)\enspace. 
\end{equation}
A function $f\colon 2^W\to \mathbb{R}_{\geq 0}$ as defined above is what we call a \emph{weighted fractional coverage function}. Note that a weighted coverage function is simply the special case of  $\{p_v: v\in W\}$ taking binary values.

%

We are now ready to define our family $\cF$ of weighted fractional coverage functions, which as aforementioned is equivalent to weighted coverage functions, which in turn can be approximated by unweighted coverage functions to any desired accuracy. 


The underlying universe $U$ of the coverage functions in $\cF$ consists of $p$ points $U=\{u_1,\ldots, u_p\}$, where the weight $a_{u_j} \in \mathbb{R}_{\geq 0}$, for $j\in [p]$, will be fixed later in \cref{sec:gen_hardness_sel_a}. For notational convenience, we use the shorthand $a_j$ for $a_{u_j}$ and let $A_{\geq j}  = \sum_{i=j}^p a_i$. The family $\cF$ now contains a weighted fractional coverage function $f_{o_1, \ldots, o_p}$ for every $o_1 \in W_1, \ldots, o_p \in W_p$ that is defined as follows.
\begin{itemize}
    \item Element $o_j$ covers  $\{u_j\}$, i.e., $p_{o_j}(u_j) = 1$ and $p_{o_j}(u) = 0$ for $u\in U\setminus \{u_j\}$.
    \item For every other element $v \in W_j \setminus \{o_j\}$, 
    \begin{align*}
        p_v(u) = \begin{cases}
        \frac{a_j}{A_{\geq j}} & \mbox{if $u \in \{u_j, u_{j+1},\ldots, u_{p}\}$\enspace,} \\
        0 & \mbox{otherwise\enspace.}
        \end{cases}
    \end{align*}
\end{itemize}
Note that, by interpreting the $p_v$ functions as probabilities, the above definition equals the intuitive description in \cref{sec:gen_hardness_intuition}: the ``hidden'' optimal elements $\{o_1, \ldots, o_p\}$ form a disjoint cover of the universe, and every other element in $W_j$ corresponds to a random subset of the (now weighted) universe disjoint from the subsets corresponding to $o_1, \ldots, o_{j-1}$.
Finally, by definition, for every $S \subseteq W$
\begin{align*}
 f_{o_1, \ldots, o_p}(S) = \sum_{j=1}^p a_j \left( 1- \prod_{v\in S} \left( 1- p_v(u_j)\right) \right)\enspace,
 \end{align*}
 which can be written as
 \begin{equation} \label{eq:gen_hardness_def_f}
%
   f_{o_1,\ldots, o_p}(S) = \sum_{j=1}^p a_j \cdot  \left( 1 - \mathbbm{1}\{o_j\not\in S\} \prod_{i=1}^j \left(1 - \frac{a_i}{A_{\geq i}} \right)^{|S\cap (W_i \setminus \{o_i\})|} \right)\enspace, 
\end{equation}
where $\mathbbm{1}\{E\}$ indicates whether the event $E$ holds.


\subsubsection[Selection of the Weights \texorpdfstring{$a_1,\ldots, a_p$}{a\_1, a\_2, ..., a\_p}]{Selection of the weights \boldmath{$a_1,\ldots, a_p$}}
\label{sec:gen_hardness_sel_a}
To complete the definition of our family $\cF$, it remains to define the weights $a_1, \ldots, a_p \in \mathbb{R}_{\geq 0}$ of the universe $U$.  
Recall from \cref{sec:gen_hardness_intuition} that we need to set these weights so that, if we let $f_{o_1, \ldots, o_p} \in \cF$ and $v_j \in W_j \setminus \{o_j\}$ for every $j\in [p]$, then
\begin{align*}
    f_{o_1, \ldots, o_p}( v_j \mid \{v_1, \ldots, v_{j-1}\} ) = 1 \qquad \mbox{for every $j\in [p]$.}
\end{align*}
This readily implies that $a_1 = 1$ and, more generally, by~\eqref{eq:gen_hardness_def_f}, one can see that this equals the condition
\begin{align*}
    a_j\prod_{i=1}^{j-1} \left(1 - \frac{a_i}{A_{\geq i}}\right) = 1 \qquad \mbox{for every $j \in [p]$,}
\end{align*}
where, here and later, we interpret the empty product as $1$.

The weights $a_1,\ldots, a_p$ satisfying this condition can be obtained as follows.
First, let $\delta_p = 1 $ and, for $i = p-1, p-2, \ldots, 1$, let $\delta_i$ be the largest solution\footnote{
It can be verified that 
    $\delta_{i} = 1 +  \left(\frac{1+ \sqrt{1+ 4/\delta_{i+1}}}{2}\right)\cdot \delta_{i+1}$,
    but the exact value is not be important to us. 
} of 
\begin{align*}
    \left( 1- \frac{1}{\delta_i}\right) \left(\delta_i - 1\right) = \delta_{i+1}\enspace.
\end{align*}
Now select the weights $a_1, a_2, \ldots, a_p$ to be
\begin{align*}
    a_j =   \prod_{i=1}^{j-1} \left( \frac{\delta_i-1}{\delta_{i+1}}\right) = \prod_{i=1}^{j-1} \frac{1}{1-1/\delta_i} \qquad \mbox{for $j=1, \ldots, p$\enspace.}
\end{align*}
In the next lemma we formally verify that these weights indeed satisfy condition~\eqref{eq:gen_hardness_become_one}. 
By basic calculations, we also show the identity~\eqref{eq:gen_hardness_nice_sum} and the inequalities~\eqref{eq:gen_hardness_bounds}. 
We remark that these are the only properties that we use about these weights in subsequent sections.
\begin{lemma}
    The weights $a_1, \ldots, a_p >0$  satisfy, for every $j=1, \ldots, p$,
    \begin{align}
        a_j\prod_{i=1}^{j-1} \left(1 - \frac{a_i}{A_{\geq i}}\right)& = 1 \label{eq:gen_hardness_become_one}\enspace, \\
        \sum_{i=j}^p \left(2- \frac{a_i}{A_{\geq i}}\right)& = \frac{A_{\geq j}}{a_j} \label{eq:gen_hardness_nice_sum}\enspace,\text{ and} \\
        2j - H_{j} \leq \frac{A_{\geq p-j+1}}{a_{p-j+1}} &\leq 2j -1\enspace.\label{eq:gen_hardness_bounds}
    \end{align}
    \label{lem:gen_hardness_select_a}
\end{lemma}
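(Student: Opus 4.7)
The plan is to introduce the auxiliary quantity $\delta'_i := A_{\geq i}/a_i$ and show that this matches the $\delta_i$ defined in the paper, so that all three identities and inequalities reduce to statements purely about the $\delta_i$ sequence. To establish $\delta'_i = \delta_i$, I would proceed by backward induction: clearly $\delta'_p = a_p/a_p = 1 = \delta_p$, and using $a_{i+1}/a_i = \delta_i/(\delta_i-1)$ together with $A_{\geq i} = a_i + A_{\geq i+1}$, one computes
\begin{equation*}
\delta'_i \;=\; 1 + \frac{A_{\geq i+1}}{a_{i+1}}\cdot\frac{a_{i+1}}{a_i} \;=\; 1 + \delta_{i+1}\cdot\frac{\delta_i}{\delta_i-1},
\end{equation*}
which rearranges to $(\delta_i-1)^2 = \delta_i\,\delta_{i+1}$, i.e.\ exactly the defining relation $(1-1/\delta_i)(\delta_i-1)=\delta_{i+1}$. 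With the identification $A_{\geq i}/a_i = \delta_i$ in hand, identity~\eqref{eq:gen_hardness_become_one} is immediate from the definition $a_j = \prod_{i<j}(1-1/\delta_i)^{-1}$, since $a_i/A_{\geq i} = 1/\delta_i$.

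For~\eqref{eq:gen_hardness_nice_sum}, I would again use backward induction on $j$. The base case $j=p$ gives $2-1/\delta_p = 1 = \delta_p$, and the inductive step requires showing $\delta_j = (2-1/\delta_j) + \delta_{j+1}$, which is equivalent after multiplying through by $\delta_j$ to $(\delta_j-1)^2 = \delta_j\delta_{j+1}$, already established above. Setting $\beta_j := \delta_{p-j+1}$, the recurrence becomes $\beta_j = \beta_{j-1} + 2 - 1/\beta_j$ with $\beta_1 = 1$, and~\eqref{eq:gen_hardness_bounds} asks for $2j - H_j \leq \beta_j \leq 2j-1$. The upper bound is a straightforward induction: $\beta_j \leq \beta_{j-1} + 2 \leq (2j-3)+2 = 2j-1$, using $1/\beta_j > 0$.

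The lower bound is the only step requiring some care, since the naive induction gives $\beta_j \geq 2j - H_{j-1} - 1/\beta_j$, and one needs $\beta_j \geq j$ to conclude $1/\beta_j \leq 1/j$ and finish with $\beta_j \geq 2j - H_j$. I would first establish $\beta_j \geq j$ separately, using the closed-form largest root $\beta_j = \tfrac{1}{2}\bigl((\beta_{j-1}+2) + \sqrt{(\beta_{j-1}+2)^2-4}\bigr)$ of the quadratic $\beta_j^2 - (\beta_{j-1}+2)\beta_j + 1 = 0$: assuming $\beta_{j-1}\geq j-1$, one gets $\beta_j \geq \tfrac{1}{2}\bigl((j+1)+\sqrt{(j-1)(j+3)}\bigr) \geq \tfrac{1}{2}\bigl((j+1)+(j-1)\bigr) = j$ since $(j-1)(j+3) \geq (j-1)^2$. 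With $\beta_j \geq j$ in hand, the main induction goes through cleanly. The principal obstacle is precisely this need to bootstrap: a single direct induction on the bound $2j - H_j$ fails because $H_j - H_{j-1} = 1/j$ must be absorbed by the term $1/\beta_j$, which in turn requires an a priori lower bound on $\beta_j$ itself; splitting the argument into two inductions resolves this circularity.
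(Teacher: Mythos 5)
Your proposal is correct and its first three parts — establishing $\delta_i = A_{\geq i}/a_i$ by backward induction, then deriving~\eqref{eq:gen_hardness_become_one} and~\eqref{eq:gen_hardness_nice_sum} from the telescoping products and the recurrence $\delta_j = \delta_{j+1} + 2 - 1/\delta_j$ — match the paper's proof essentially step for step (you compute $a_{i+1}/a_i$ from the second factored form of $a_j$ where the paper uses the first, but these coincide by the defining relation). The genuine divergence is in the lower bound of~\eqref{eq:gen_hardness_bounds}. Both arguments reduce to the same intermediate inequality $\delta_i \geq p-i+1$ (your $\beta_j \geq j$) and then plug it into the identity~\eqref{eq:gen_hardness_nice_sum} to get $2j - H_j$, but they reach that inequality by different means. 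The paper observes in one line that the $a_i$ are strictly increasing, so $A_{\geq i}$, being a sum of $p-i+1$ terms each at least $a_i$, satisfies $A_{\geq i} \geq (p-i+1)\,a_i$. You instead solve the defining quadratic $\delta_i^2 - (\delta_{i+1}+2)\delta_i + 1 = 0$ for its largest root and run a separate bootstrap induction on $\beta_j \geq j$. You have correctly spotted and resolved the circularity one hits if one tries a single direct induction against $2j - H_j$; the paper sidesteps it entirely by not obtaining $\beta_j \geq j$ inductively at all. Your route avoids appealing to monotonicity of the $a_i$'s (which the paper invokes rather tersely), at the cost of the closed-form manipulation and a second nested induction where the paper has a one-line counting observation. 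Both are sound and of comparable length.
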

\begin{proof}
We start by observing that
\begin{equation}\label{eq:deltaInTermsOfA}
\delta_\ell = \frac{A_{\geq \ell}}{a_\ell} \qquad \forall \ell \in [p]\enspace.
\end{equation}
Indeed,~\eqref{eq:deltaInTermsOfA} trivially holds for $p$ since $\delta_p = 1 = A_{\geq p}/a_p$. Now consider~\eqref{eq:deltaInTermsOfA} for some index $\ell\in [p-1]$, and assume that~\eqref{eq:deltaInTermsOfA} holds when replacing $\ell$ by any larger index, i.e., $\ell+1, \ell+2, \ldots, p$. Then,
\begin{align*}
\frac{A_{\geq \ell}}{a_\ell} = \frac{a_\ell}{a_\ell} + \left( \sum_{j=\ell+1}^p \frac{a_j}{a_{\ell+1}} \right) \frac{a_{\ell+1}}{a_\ell} = 1 + \left( \delta_{\ell+1}\right)  \frac{\delta_{\ell} -1}{\delta_{\ell+1}} = \delta_\ell\enspace.    
\end{align*}
From this, we can see that the first equality of the statement holds.
\begin{align*}
    a_j\prod_{i=1}^{j-1} \left(1 - \frac{a_i}{A_{\geq i}}\right) = \prod_{i=1}^{j-1} \frac{1}{1-1/\delta_i} \cdot \prod_{i=1}^{j-1} \left(1 - \frac{1}{\delta_i}\right) = 1\enspace.
\end{align*}
For the second equality, note that $(1-1/\delta_j) (\delta_j - 1) = \delta_{j+1}$ is equivalent to
$\delta_j   =\delta_{j+1} +  2-{1}/{\delta_j}$. 
Hence,
\begin{equation}\label{eq:deltaInTermsOfLargerDeltas}
\delta_j = \delta_p + \sum_{i=j}^{p - 1} \left(2- \frac{1}{\delta_i}\right) = \sum_{i=j}^p \left(2- \frac{1}{\delta_i}\right) \enspace,
\end{equation}
which implies together with~\eqref{eq:deltaInTermsOfA} that
\begin{align*}
    \frac{A_{\geq j}}{a_j} = \delta_j =  \sum_{i=j}^p\left( 2- \frac{1}{\delta_i}\right) = \sum_{i=j}^p \left(2- \frac{a_i}{A_{\geq i}} \right) \enspace.
\end{align*}
Finally, to show~\eqref{eq:gen_hardness_bounds}, we first use~\eqref{eq:gen_hardness_nice_sum} to obtain 
\begin{align}\label{eq:AoveraEval}
\frac{A_{\geq p-j+1}}{a_{p-j+1}} = \sum_{i= p-j+1}^p \left(2 - \frac{1}{\delta_i}\right)\enspace.
\end{align}
The upper bound of~\eqref{eq:gen_hardness_bounds} now follows by observing that $\delta_i \geq 0$ for $i\in [p]$, and $\delta_p=1$, which implies
\begin{align*}
\frac{A_{\geq p-j+1}}{a_{p-j+1}} &= \sum_{i=p-j+1}^p \left(2-\frac{1}{\delta_i}\right)
 \leq \left(\sum_{i=p-j+1}^{p-1} 2\right) + 1 = 2j - 1\enspace.
\end{align*}
Moreover, the lower bound of~\eqref{eq:gen_hardness_bounds} follows from
\begin{align*}
\frac{A_{\geq p-j+1}}{a_{p-j+1}} &= \sum_{i=p-j+1}^p \left(2-\frac{1}{\delta_i}\right)
 = \sum_{i=p-j+1}^p \left(2 - \frac{a_i}{A_{\geq i}}\right)
 \geq \sum_{i=p-j+1}^p \left(2 - \frac{1}{p-i+1}\right) = 2j - H_j\enspace,
\end{align*}
where the first equality comes from~\eqref{eq:AoveraEval}, the second one is due to~\eqref{eq:deltaInTermsOfA}, and the inequality holds because the values $a_i$ are strictly increasing, which implies $A_{\geq i} \geq (p-i+1) \cdot a_{i}$.
\end{proof}

\subsection{Value of Solutions Without any Optimal Elements}
\label{sec:gen_hardness_value_gap}

Consider a function $f_{o_1, \ldots, o_p} \in \cF$. From its definition~\eqref{eq:gen_hardness_def_f}, it is clear that $\{o_1, \ldots, o_p\}$ is an optimal solution of value $f(\{o_1, \ldots, o_p\}) = f(W) =  \sum_{i=1}^p a_i = A_{\geq 1} = A_{\geq 1}/a_1$, which by~\eqref{eq:gen_hardness_bounds} is at least $2p - H_p$ and at most $2p$.  The following lemma, therefore, implies the value gap property of \cref{lemma:gen_hardness_F_properties}. 
\begin{lemma}
    For any $f_{o_1, \ldots, o_p} \in \cF$,
    \begin{align*}
        \max_{S\subseteq W \setminus \{o_1, \ldots, o_p\}: |S|\leq p} f_{o_1, \ldots, o_p} (S) \leq p + (H_p)^2\enspace.
    \end{align*}
    \label{lem:gen_hardness_gap}
\end{lemma}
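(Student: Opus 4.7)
The plan is to reduce the bound to an optimization problem on a continuous domain and solve it by linearizing a concave extension at the ``diagonal'' point. By the symmetry of the construction within each $W_i \setminus \{o_i\}$, the value $f_{o_1, \ldots, o_p}(S)$ depends only on the counts $s_i := |S \cap W_i|$; write $\tilde F(s)$ for this value, so from~\eqref{eq:gen_hardness_def_f},
\[
\tilde F(s) = \sum_{j=1}^p a_j \left( 1 - \prod_{i=1}^j (1 - a_i/A_{\geq i})^{s_i} \right).
\]
I would upper bound $\tilde F$ by a concave extension $\bar F \colon \mathbb{R}_{\geq 0}^p \to \mathbb{R}$: since $a_p/A_{\geq p} = 1$ makes the $j = p$ summand singular, I replace it by its maximum value $a_p$, while keeping the summands for $j < p$ in the form $a_j(1 - \exp(-\sum_{i \leq j} s_i \alpha_i))$ with $\alpha_i := -\ln(1 - a_i/A_{\geq i}) > 0$, each of which is concave as a composition of the concave monotone function $t \mapsto 1 - e^{-t}$ with a linear form. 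Thus $\bar F$ is concave on $\mathbb{R}_{\geq 0}^p$ and $\tilde F \leq \bar F$ pointwise.

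Setting $s^\star = (1, \ldots, 1)$, identity~\eqref{eq:gen_hardness_become_one} implies that each marginal of $\tilde F$ at $s^\star$ equals $1$, so $\bar F(s^\star) = \tilde F(s^\star) = p$. Concavity of $\bar F$ then yields
\[
\tilde F(s) \leq \bar F(s) \leq p + \sum_{\ell=1}^{p-1} c_\ell (s_\ell - 1), \qquad c_\ell := \frac{\partial \bar F}{\partial s_\ell}(s^\star),
\]
where the coefficient of $s_p - 1$ vanishes because $\bar F$ is independent of $s_p$. Using the reformulation $\prod_{i=1}^j(1 - a_i/A_{\geq i}) = (1 - a_j/A_{\geq j})/a_j$ of~\eqref{eq:gen_hardness_become_one} together with identity~\eqref{eq:gen_hardness_nice_sum}, a direct computation gives $c_\ell = \alpha_\ell(\delta_\ell - (p - \ell + 1))$, where $\delta_\ell := A_{\geq \ell}/a_\ell$. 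Since $|S| \leq p$ enforces $\sum_\ell s_\ell \leq p$ and $s_p$ is absent from the bound, the maximum of the linear expression over real non-negative $s$ is attained by concentrating the entire budget on the coordinate with the largest $c_\ell$, giving $\tilde F(s) \leq p + p\max_{\ell<p} c_\ell - \sum_{\ell<p} c_\ell$.

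It remains to show $p \max_{\ell < p} c_\ell - \sum_{\ell < p} c_\ell \leq (H_p)^2$. The elementary estimate $-\ln(1-x) \leq x/(1-x)$ gives $\alpha_\ell \leq 1/(\delta_\ell - 1)$, and combined with the upper bound $\delta_\ell \leq 2(p-\ell+1) - 1$ from~\eqref{eq:gen_hardness_bounds} this yields $c_\ell \leq 1/2$. For the sum, $-\ln(1-x) \geq x$ gives $c_\ell \geq 1 - (p-\ell+1)/\delta_\ell$, and combining the lower bound $\delta_\ell \geq 2(p-\ell+1) - H_{p-\ell+1}$ from~\eqref{eq:gen_hardness_bounds} with the elementary inequality $q/(2q - H_q) \leq 1/2 + H_q/q$ (valid since $H_q/q \leq 1$) gives $\sum_{\ell<p} c_\ell \geq (p-1)/2 - \sum_{q=2}^p H_q/q$. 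Therefore $p \max c_\ell - \sum c_\ell \leq 1/2 + \sum_{q=2}^p H_q/q$, which by the classical identity $\sum_{q=1}^p H_q/q = \tfrac{1}{2}\bigl((H_p)^2 + \sum_{q=1}^p 1/q^2\bigr)$ and $\sum_{q=1}^p 1/q^2 \leq (H_p)^2$ (a trivial consequence of expanding $(H_p)^2$) simplifies to at most $(H_p)^2$, as required.

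The main obstacle will be carefully setting up $\bar F$: $\tilde F$ itself is discontinuous at $s_p = 0$ because $a_p/A_{\geq p} = 1$, so the $j = p$ summand must be replaced by a concave upper bound that matches $\tilde F$ at $s^\star$; fortunately the constant $a_p$ does the job since this is both the maximum value of that summand and its value at $s^\star$, preserving the crucial equality $\bar F(s^\star) = p$ needed for the linearization to yield the tight constant term.
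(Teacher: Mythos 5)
Your proof is correct and follows essentially the same route as the paper: you define the same concave surrogate $\widehat{F}$ (replacing the singular $j=p$ summand by $a_p$), linearize at $\vec{1}$ over the simplex $\sum_\ell s_\ell \leq p$, and bound the partial derivatives between $\tfrac{1}{2} - H_q/q$ and $\tfrac{1}{2}$ via the estimates on $\delta_\ell$ from \cref{lem:gen_hardness_select_a}, then finish with the standard identity for $\sum H_q/q$. The only cosmetic differences are that you phrase the Taylor estimates as $x \leq -\ln(1-x) \leq x/(1-x)$ rather than using the full series expansion, and you exclude the $\ell=p$ coordinate from the start (gaining a harmless $1/2$), which is the same computation up to presentation.
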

    The rest of this section is devoted to the proof of the above lemma. 
    Throughout, we let $W' = W \setminus \{o_1, \ldots, o_p\}$ and  
    denote by $f$ the submodular function obtained by restricting $f_{o_1, \ldots, o_p}$  to the ground set $W'$. By definition (see~\eqref{eq:gen_hardness_def_f}), we then have  for every $S \subseteq W'$
    \begin{align*}
        f(S)  = \sum_{j=1}^p a_j \cdot  \left( 1 - \prod_{i=1}^j \left(1 - \frac{a_i}{A_{\geq i}} \right)^{s_i} \right)\,,
    \end{align*}
    where $s_i = |S \cap W_i|$. The value of a set $S \subseteq W'$ is, thus, determined by $s_1 = |S \cap W_1|, \ldots, s_p = |S \cap W_p|$. In the subsequent, we slightly abuse notation and sometimes write $f(s_1, \ldots, s_p)$ for $f(S)$ to highlight that the value only depends on the number of elements from each partition and not on the actual elements.
    
    Now assume first that $S$ contains exactly one element from each $W_i$, say $S = \{v_1, \ldots, v_p\}$ where $v_i \in W_i \setminus \{o_i\}$ for every $i \in [p]$. Then, 
    \begin{align*}
        f (S) = \sum_{i=1}^p f(v_i \mid \{v_1, \ldots, v_{i-1}\})\enspace.
    \end{align*}
    Recall that  we selected the weights $a_1, \ldots, a_k$  so that each of the terms in this sum equals $1$ (see \cref{sec:gen_hardness_sel_a}).
    Thus, we have that the value of the set $S$ equals $p$, which  can also be seen from the following basic calculation.
    \begin{align*}
        f(S) = f(1,1,\ldots, 1) & = \sum_{j=1}^p a_j \cdot  \left( 1 - \prod_{i=1}^j \left(1 - \frac{a_i}{A_{\geq i}} \right)  \right) \\
        & =  A_{\geq 1} - \sum_{j=1}^{p}  \left( 1- \frac{a_j}{A_{\geq j}} \right) & \mbox{\small\eqref{eq:gen_hardness_become_one} in \cref{lem:gen_hardness_select_a}}
\\
    & =  A_{\geq 1} - \left( A_{\geq 1} - p\right)  & \mbox{\small $\sum_{j=1}^p  \left(2 - \frac{a_j}{A_{\geq j}} \right) = \frac{A_{\geq 1}}{a_1}$ by \cref{lem:gen_hardness_select_a} }\\
    & = p\enspace.
    \end{align*}
    
    The inequality of \cref{lem:gen_hardness_gap} thus holds in the case when $S$ contains exactly one element from each $W_i$. 
    However, it turns out that such a set is only an approximate maximizer to the left-hand-side of the lemma, and we need  an additional argument to bound the value of any set $S \subseteq W'$ of cardinality at most $p$. 
    We do so by defining a continuous concave version $\widehat{F}$ of the submodular function $f$, which, loosely speaking, can be thought of as being a continuous extension of $f$. 
    By leveraging the concavity of $\widehat{F}$, we can obtain upper bounds through a well-chosen first-order approximation. 
    Specifically, we consider the linear upper bound on the concave function obtained by taking its gradient at the point $\vec{1} = (1, 1, \ldots, 1)$ corresponding to sets which contain exactly one element of each $W_i \setminus \{o_i\}$ (see~\eqref{eq:gen_hardness_upper_bound}).

   To define $\widehat{F}$, let us first define $F\colon \mathbb{R}_{\geq 0}^p \rightarrow \mathbb{R}_{\geq 0}$ to be the following continuous proxy for $f$.
   \begin{align*}
        F(s_1, \ldots, s_p) = \sum_{j=1}^p a_j \cdot  \left( 1 - \prod_{i=1}^j \left(1 - \frac{a_i}{A_{\geq i}} \right)^{s_i} \right)\,.
    \end{align*}
    By definition, $F(s_1, \ldots, s_p) = f(s_1, \ldots, s_p)$  for integral vectors, and thus,
    \begin{align*}
        \max_{S \subseteq W'} f(S) = \max_{s_1, \ldots, s_p \in \mathbb{Z}_{\geq 0}: \sum_{i} s_i = p} F(s_1, \ldots, s_p) \leq  \max_{s_1, \ldots, s_p \in \mathbb{R}_{\geq 0}: \sum_{i} s_i = p} F(s_1, \ldots, s_p)\,.
    \end{align*}
    To simplify calculations, we further upper bound $F$ by the function $\widehat{F}$ defined as follows.
    \begin{align*}
        \widehat{F}(s_1, \ldots, s_p) =a_p +  \sum_{j=1}^{p-1} a_j \cdot  \left( 1 - \prod_{i=1}^j \left(1 - \frac{a_i}{A_{\geq i}} \right)^{s_i} \right)\,.
    \end{align*}
    Note that $\widehat{F}$ is a sum of concave functions, and it is thus a concave function on its own. 
    If we let $D = \{ (s_1, \ldots, s_p) \in \mathbb{R}_{\geq 0}^p : \sum_{i=1}^p s_i = p\}$ be the ``feasible'' region, then because of concavity,
\begin{align*}
\max_{\vec{s} \in D} {F}(s) \leq
\max_{\vec{s} \in D} \widehat{F}(s) \leq  
    \widehat{F}(\vec{\bar s}) +  \max_{\vec{s} \in D} \langle \nabla \widehat{F}(\vec{\bar s}), \vec{s} -  \vec{\bar s} \rangle
\end{align*}
for every vector $\vec{\bar s} = (\bar s_1 , \ldots \bar s_p)$.
We select $\vec{\bar s} = \vec{1} = (1 \ldots, 1)$ to be the all-ones vector,  which thus gives the upper bound
\begin{align}
    \max_{S \subseteq V'} f(S) & \leq \widehat{F}(\vec{1}) +  \max_{\vec{s} \in D} \langle \nabla \widehat{F}(\vec{1}), \vec{s} -  \vec{1} \rangle\enspace .
    \label{eq:gen_hardness_upper_bound}
\end{align}
We now consider each of these two terms, starting with $\widehat{F}(\vec{1})$. As $\frac{a_p}{A_{\geq p}} = 1$,
\begin{align*}
    \widehat{F}(\vec{1}) = a_p +  \sum_{j=1}^{p-1} a_j \cdot  \left( 1 - \prod_{i=1}^j \left(1 - \frac{a_i}{A_{\geq i}} \right) \right)  = \sum_{j=1}^{p} a_j \cdot  \left( 1 - \prod_{i=1}^j \left(1 - \frac{a_i}{A_{\geq i}} \right) \right) = F(\vec{1})\enspace . 
\end{align*}
We have already shown that $f(\vec{1}) = f(1,1,\ldots, 1) = p$ and, as $F$ equals $f$ on integral values, we have $\widehat{F}(\vec{1}) = p$.

It remains to bound $\max_{\vec{s} \in D} \langle \nabla \widehat{F}(\vec{1}), \vec{s} -  \vec{1} \rangle$. We have
\begin{align*}
    \max_{\vec{s} \in D} \langle \nabla \widehat{F}(\vec{1}), \vec{s} - \vec{1}\rangle &= p \cdot \max_{\ell \in [p]} \frac{\partial \widehat{F}}{\partial s_\ell} (\vec{1})  - \sum_{\ell =1}^p \frac{\partial \widehat{F}}{\partial s_\ell} (\vec{1}) \\
    & \leq p/2 - \sum_{\ell=1}^p \left( \frac{1}{2} - \frac{H_{p+1-\ell}}{p+1-\ell} \right) \\
    & = \sum_{i=1}^p \frac{H_i}{i} \leq (H_p)^2\enspace,
 \end{align*}
 where the inequality follows from the bounds on the partial derivatives given by  \cref{claim:partial_derivatives}. Assuming that claim, 
we have thus shown the statement of \cref{lem:gen_hardness_gap}, i.e., that any solution of cardinality at most $p$ without any optimal elements has value at most $p + (H_p)^2$.

\begin{restatable}{claim}{claimpartialderivatives}
    For $\ell = 1, \ldots, p$, 
    \begin{align*}
        \frac{1}{2} - \frac{H_{p+1-\ell}}{p+1-\ell} \leq \frac{\partial \widehat{F}}{\partial s_\ell} (\vec{1}) \leq \frac{1}{2}\enspace.
    \end{align*}
    \label{claim:partial_derivatives}
\end{restatable}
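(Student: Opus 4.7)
The plan is to compute $\frac{\partial \widehat{F}}{\partial s_\ell}(\vec{1})$ explicitly, rewrite it in a form that depends only on the ratio $r = a_\ell / A_{\geq \ell}$ and the index $m = p - \ell + 1$, and then apply the identities of \cref{lem:gen_hardness_select_a} together with standard bounds on $-\ln(1-r)$.

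First I would differentiate term-by-term to get
\[
\frac{\partial \widehat{F}}{\partial s_\ell}(\vec{1}) \;=\; -\ln\!\left(1 - \tfrac{a_\ell}{A_{\geq \ell}}\right) \cdot \sum_{j=\ell}^{p-1} a_j \prod_{i=1}^{j}\!\left(1 - \tfrac{a_i}{A_{\geq i}}\right).
\]
Identity~\eqref{eq:gen_hardness_become_one} gives $a_j \prod_{i=1}^{j-1}(1 - a_i/A_{\geq i}) = 1$, so each summand collapses to $1 - a_j/A_{\geq j}$. Thus the sum equals $(p-\ell) - \sum_{j=\ell}^{p-1} a_j/A_{\geq j}$. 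Using~\eqref{eq:gen_hardness_nice_sum} and $a_p/A_{\geq p} = 1$, one checks that $\sum_{j=\ell}^{p-1} a_j/A_{\geq j} = 2(p-\ell) + 1 - A_{\geq \ell}/a_\ell$, whence the sum simplifies to $A_{\geq \ell}/a_\ell - m$ with $m := p-\ell+1$. Writing $r := a_\ell/A_{\geq \ell}$, this leaves
\[
\frac{\partial \widehat{F}}{\partial s_\ell}(\vec{1}) \;=\; -\ln(1-r)\cdot\!\left(\tfrac{1}{r} - m\right).
\]

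For the upper bound I would apply $-\ln(1-r) \leq r/(1-r)$ (valid for $r \in [0,1)$), which yields $\frac{\partial \widehat{F}}{\partial s_\ell}(\vec{1}) \leq (1-mr)/(1-r)$. The inequality $(1-mr)/(1-r) \leq 1/2$ is equivalent to $1/r \leq 2m - 1$, which is precisely the upper bound on $A_{\geq \ell}/a_\ell$ furnished by~\eqref{eq:gen_hardness_bounds}.

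For the lower bound I would use $-\ln(1-r) \geq r$, giving $\frac{\partial \widehat{F}}{\partial s_\ell}(\vec{1}) \geq 1 - mr$. By~\eqref{eq:gen_hardness_bounds}, $1/r \geq 2m - H_m$, so $r \leq 1/(2m-H_m)$ and therefore $1 - mr \geq (m-H_m)/(2m-H_m)$. A short manipulation shows $(m-H_m)/(2m-H_m) \geq 1/2 - H_m/m$ is equivalent to $3m \geq 2H_m$, which holds since $H_m \leq m$ for $m \geq 1$. This gives the required lower bound and completes the proof. The only non-routine step is the derivation of the closed form $(1/r - m)$ for the sum, which rests on applying \eqref{eq:gen_hardness_become_one} and \eqref{eq:gen_hardness_nice_sum} in tandem; the remaining inequalities then reduce cleanly to the bounds already established in~\eqref{eq:gen_hardness_bounds}.
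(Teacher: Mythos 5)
Your derivation of the closed form $-\ln(1-r)\bigl(\tfrac1r - m\bigr)$ with $r = a_\ell/A_{\geq\ell}$ and $m = p+1-\ell$ is the same one the paper uses, arrived at via the same two identities from \cref{lem:gen_hardness_select_a}. Where you diverge is in the bounding step: the paper expands $-\ln(1-r)$ as a power series, factors out $(1-mr)$, and bounds the tail $\sum_{i\geq 1} r^{i-1}/i$ below by $1$ and above by $1 + \tfrac12\cdot\tfrac{r}{1-r}$; you instead invoke the closed-form sandwich $r \leq -\ln(1-r) \leq \tfrac{r}{1-r}$ and reduce everything to the bounds $2m - H_m \leq 1/r \leq 2m-1$ of~\eqref{eq:gen_hardness_bounds}. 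Your inequalities $(1-mr)/(1-r)\leq 1/2 \iff 1/r\leq 2m-1$ and $(m-H_m)/(2m-H_m)\geq 1/2 - H_m/m \iff 3m\geq 2H_m$ both check out, and your route is a bit shorter than the paper's: you use a coarser upper bound on $-\ln(1-r)$ (note $\tfrac{r}{1-r} > \tfrac{r(2-r)}{2(1-r)}$, which is what the paper's tail bound amounts to) but it still suffices because the resulting condition is exactly the clean inequality from~\eqref{eq:gen_hardness_bounds}.

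One small gap: for $\ell = p$ we have $r = a_p/A_{\geq p} = 1$, so your final expression $-\ln(1-r)\cdot(\tfrac1r - m)$ is of the indeterminate form $\infty\cdot 0$ and the bounds $r \leq -\ln(1-r)\leq r/(1-r)$ degenerate. The paper handles this by noting directly that $\partial\widehat F/\partial s_p(\vec 1) = 0$ (the sum $\sum_{j=p}^{p-1}$ in the derivative is empty), which trivially satisfies $-\tfrac12 = \tfrac12 - H_1/1 \leq 0 \leq \tfrac12$. You should add a sentence dispatching $\ell = p$ before passing to the closed form, since your manipulations implicitly assume $r<1$.
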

The claim follows from basic calculations and the identities of \cref{lem:gen_hardness_select_a}. As these calculations are mechanical and not very insightful, they can be found in \cref{app:bounding_partial_derivatives}.

\subsection{Players Have No Information About Their Optimum Element}
\label{sec:gen_hardness_indistinguishability}

The second key property of our family $\cF$ is indistinguishability: if one can only query the submodular function $f_{o_1,\ldots, o_p}$ on $W_1 \cup W_2 \cup \cdots \cup W_{\ell}$, then no information can be  obtained about which element in $W_\ell$ is selected to be $o_{\ell}$. 
While this is intuitively clear from the description of $\cF$ in \cref{sec:gen_hardness_intuition}, the following lemma gives the formal proof of the indistinguishability property of \cref{lemma:gen_hardness_F_properties}.

\begin{lemma}
   For $\ell\in [p]$, any two functions $f_{o_1, \ldots, o_p}, f_{o'_1, \ldots, o'_p} \in \cF$ with $o_1 = o'_1, \ldots, o_{\ell-1} = o'_{\ell-1}$ are identical when restricted to the ground set $W_1 \cup \cdots \cup W_\ell$.
    \label{lem:gen_hardness_no_information}
\end{lemma}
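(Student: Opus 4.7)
My plan is to prove the lemma by a direct computation from the closed-form expression~\eqref{eq:gen_hardness_def_f}. Fix any $S \subseteq W_1 \cup \cdots \cup W_\ell$ and write
\[
f_{o_1,\ldots,o_p}(S) = \sum_{j=1}^p a_j\,\alpha_j, \qquad \alpha_j = 1 - \mathbbm{1}\{o_j \notin S\}\prod_{i=1}^{j}\pi_i^{c_i},
\]
where $\pi_i = 1 - a_i/A_{\geq i}$ and $c_i = |S \cap (W_i\setminus\{o_i\})|$, and let $\alpha'_j, c'_i$ denote the analogous quantities for $f_{o'_1,\ldots,o'_p}$. My goal is to show $\alpha_j = \alpha'_j$ whenever possible, and to show that the remaining terms agree in aggregate.

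For indices $j < \ell$, $\alpha_j$ involves only $o_1,\ldots,o_j$ (all coinciding with $o'_1,\ldots,o'_j$ by hypothesis) and $c_1,\ldots,c_j$ (which therefore also coincide with $c'_1,\ldots,c'_j$). Hence $\alpha_j = \alpha'_j$ term by term on this range. The interesting part is the sum over $j \geq \ell$: here $c_\ell$ depends on whether $o_\ell \in S$, so individual terms need not match. I plan to show that, regardless of whether $o_\ell \in S$,
\[
\sum_{j=\ell}^p a_j\,\alpha_j = A_{\geq \ell}\bigl(1 - \pi_\ell^{m}\, Q\bigr),
\]
where $m = |S \cap W_\ell|$ and $Q = \prod_{i<\ell}\pi_i^{c_i}$. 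Since $m$ and $Q$ are invariant under changing $o_\ell,\ldots,o_p$ to $o'_\ell,\ldots,o'_p$ (note that $o_j \notin S$ automatically for all $j > \ell$, so those indices play no role), the primed and unprimed sums agree.

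The verification splits into two cases. If $o_\ell \notin S$, then $c_\ell = m$ and the indicator equals $1$ for every $j \geq \ell$, giving $\sum_{j \geq \ell} a_j(1 - \pi_\ell^m Q) = A_{\geq \ell}(1 - \pi_\ell^m Q)$ immediately. If $o_\ell \in S$, then $c_\ell = m-1$, the $j=\ell$ term contributes $a_\ell$ outright (its indicator is $0$), and the $j>\ell$ terms contribute $A_{\geq \ell+1}(1 - \pi_\ell^{m-1} Q)$. Collapsing this using the key identity $A_{\geq \ell+1} = A_{\geq \ell} - a_\ell = A_{\geq \ell}\pi_\ell$ (a direct consequence of the definition of $\pi_\ell$) yields again $A_{\geq \ell}(1 - \pi_\ell^m Q)$.

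The main technical point is not a genuine obstacle but rather the careful bookkeeping of the indicator $\mathbbm{1}\{o_\ell \notin S\}$ alongside the exponent $c_\ell$ in the two cases. Conceptually, this cancellation is precisely the hallmark of the construction: exchanging $o_\ell \in S$ for a non-optimal element of $W_\ell$ shifts coverage from the deterministic atom $u_\ell$ (of weight $a_\ell$) to a probabilistic coverage of $\{u_\ell,\ldots,u_p\}$, and the weights $a_i$ have been chosen so that these two effects neutralize exactly, making $o_\ell$ invisible to any player who queries $f$ only on $W_1 \cup \cdots \cup W_\ell$.
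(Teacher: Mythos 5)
Your proof is correct, and it is essentially the same argument as the paper's: both split on whether $o_\ell\in S$, and both hinge on the identity $A_{\geq\ell+1}=A_{\geq\ell}\bigl(1-a_\ell/A_{\geq\ell}\bigr)$ to collapse the $j\geq\ell$ terms. The paper phrases this as verifying that the difference $f_{o_1,\ldots,o_p}(S)-f_{o'_1,\ldots,o'_p}(S)$ vanishes, whereas you derive the manifestly $o_\ell$-invariant closed form $A_{\geq\ell}(1-\pi_\ell^{m}Q)$ for the tail sum; this is a presentational variant rather than a different route.
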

\begin{proof}
By definition (see~\eqref{eq:gen_hardness_def_f}),
\begin{equation*}
   f_{o_1,\ldots, o_p}(S) = \sum_{j=1}^p a_j \cdot  \left( 1 - \mathbbm{1}\{o_j\not\in S\} \prod_{i=1}^j \left(1 - \frac{a_i}{A_{\geq i}} \right)^{|S\cap (W_i \setminus \{o_i\})|} \right)\enspace. 
\end{equation*}
As $o_1 = o'_1, \ldots, o_{\ell-1}= o'_{\ell-1}$, the first $\ell-1$ terms in the above sum are identical for $f_{o_1, \ldots, o_p}$ and $f_{o'_1, \ldots, o'_p}$.  Moreover, since we have restricted our ground set to $W_1 \cup W_2 \cup \cdots \cup W_{\ell}$, their values are independent of $o_{\ell+1}, \ldots, o_{p}$ and $o'_{\ell+1}, \ldots, o'_p$, respectively. Therefore, for any $S \subseteq W_1 \cup W_2 \cup \cdots \cup W_{\ell}$,  we can write the difference $f_{o_1, \ldots, o_p}(S) - f_{o'_1, \ldots, o'_p}(S)$ as 
\begin{align*}
f_{o_1,\ldots, o_p}(S) - f_{o'_1,\ldots, o'_p}(S) =    \prod_{i=1}^{\ell-1} \left(1 - \frac{a_i}{A_{\geq i}} \right)^{|S\cap (W_i \setminus \{o_i\})|} \cdot \left( \labeleq{eq:gen_hardness_diff_one} + \labeleq{eq:gen_hardness_diff_two}\right)\,,
\end{align*}
where
\begin{align*}
    \eqref{eq:gen_hardness_diff_one} = a_\ell \cdot \left(\mathbbm{1}\{o'_\ell \not\in S\}\left(1 - \frac{a_\ell}{A_{\geq \ell}} \right)^{|S\cap (W_\ell \setminus \{o'_\ell\})|} - \mathbbm{1}\{o_\ell \not\in S\}\left(1 - \frac{a_\ell}{A_{\geq \ell}} \right)^{|S\cap (W_\ell \setminus \{o_\ell\})|}\right) 
\end{align*}
and 
\begin{align*}
    \eqref{eq:gen_hardness_diff_two} = \left(\sum_{j=\ell+1}^p a_j\right) \cdot \left( \left(1 - \frac{a_\ell}{A_{\geq \ell}} \right)^{|S\cap (W_\ell \setminus \{o'_\ell\})|}
    - \left(1 - \frac{a_\ell}{A_{\geq \ell}} \right)^{|S\cap (W_\ell \setminus \{o_\ell\})|}
    \right)\,.
\end{align*}
We now finish the proof of the lemma by showing that $\eqref{eq:gen_hardness_diff_one} + \eqref{eq:gen_hardness_diff_two} = 0$. This is immediate if 
$\mathbbm{1}\{o_\ell \in S\} = \mathbbm{1}\{o'_\ell \in S\}$ because then $\eqref{eq:gen_hardness_diff_one} = 0$ and $\eqref{eq:gen_hardness_diff_two} = 0$.
To analyze the other case when $\mathbbm{1}\{o_\ell \in S\} \neq \mathbbm{1}\{o'_\ell \in S\}$, 
suppose that $o_\ell \in S$  and $o'_\ell \not \in S$. 
(The other case is symmetric.) Then, if we let $s = |S\cap W_\ell|$, we have
\begin{align*}
    \eqref{eq:gen_hardness_diff_one} = a_\ell \cdot \left(1 - \frac{a_\ell}{A_{\geq \ell}} \right)^{s} 
\end{align*}
and 
\begin{align*}
    \eqref{eq:gen_hardness_diff_two} &= \left( \sum_{j=\ell+1}^p a_j \right) \cdot \left( \left(1 - \frac{a_\ell}{A_{\geq \ell}} \right)^{s}
    - \left(1 - \frac{a_\ell}{A_{\geq \ell}} \right)^{s-1} \right) \\
    &=A_{\geq \ell+1} \cdot \left(1 - \frac{a_\ell}{A_{\geq \ell}} \right)^{s-1}  \cdot \left(-\frac{a_{\ell}}{A_{\geq \ell}}\right)\\
    & = - a_{\ell} \cdot \frac{A_{\geq \ell}-a_{\ell}}{A_{\geq \ell}} \cdot \left(1 - \frac{a_\ell}{A_{\geq \ell}} \right)^{s-1}\\
    & = - a_{\ell}\cdot\left(1 - \frac{a_\ell}{A_{\geq \ell}} \right)^{s}  \enspace,
\end{align*}
thus implying $\eqref{eq:gen_hardness_diff_one} + \eqref{eq:gen_hardness_diff_two} = 0$ as required.
\end{proof}

\subsection{Hardness Reduction from the \texorpdfstring{\chainPn}{CHAINp(n)} Problem}
\label{sec:gen_hardness_reduction}

In this section we present our reduction using the family $\cF$ of (weighted) coverage functions guaranteed by \cref{lemma:gen_hardness_F_properties}.  The arguments are very similar to those of \cref{sec:two-player-sub-hardness}, but instead of reducing from the INDEX problem, we reduce from the multiplayer version $\chainPn$, referred to in \cref{sec:index-model}.  

Recall that, in the \chainPn problem,  there are $p$ players $P_1, \ldots, P_p$. For $i=1, \ldots, p-1$, player $P_i$ has as input a bit string $x^{i} \in \{0,1\}^n$ of length $n$ and for $i=2, \ldots, p$ player $P_i$ (also) has as input an index $t^i\in \{1, \ldots, n\}$,
where we use the convention that the superscript of a string/index indicates the player receiving it. 
The players are promised that either $x^{i}_{t^{i+1}} = 0$ for all $i=1, \ldots, p-1$ (the $0$-case) or $x^{i}_{t^{i+1}} = 1$ for all $i = 1\, \ldots, p-1$ (the $1$-case).  The objective of the players  is to decide whether the input instance belongs to the $0$-case or the $1$-case.

For convenience, we restate the hardness result of \chainPn here (recall that the considered protocols are one-way protocols as defined in \cref{sec:index-model}).
%
%
%
\thmpindex*
Our plan in the rest of this section is to reduce the \chainPn problem to the $p$-player submodular maximization problem on (weighted) coverage functions on a ground set of cardinality $N=p\cdot n$. As $n$ is clear from context, we simplify notation and refer to the \chainPn problem as the \chainP problem.

Let $PRT$  be a protocol for the $p$-player submodular maximization problem on weighted coverage functions subject to the cardinality constraint $k=p$.
Further assume that $PRT$ has  an 
 approximation guarantee of $\frac{p + (H_p)^2}{2p - H_p}(1 +\varepsilon)$ for $\varepsilon >0$. 
We show below that this leads to a protocol $\PRTpI$ for the $\chainPn$ problem whose message size depends on the message size of $PRT$. 
Since \cref{thm:pindex_hardness} lower bounds the message size of any protocol for \chainPn, this leads to a lower bound also on the message size of $PRT$.

In our reduction, we use  the weighted coverage functions in family $\cF$ whose existence is guaranteed by \cref{lemma:gen_hardness_F_properties}. These functions are defined over a common ground set $W$ that is     partitioned into sets $W_1, \ldots, W_p$ of cardinality $n$ each. For future reference, we let  
\begin{align*}
    W_i = \{v^i_1, v^i_2, \ldots, v^i_n\} \qquad \mbox{for every $i\in [p]$} \enspace,
\end{align*}
and we have $|W| = N = p \cdot n$.

Before getting to the protocol $\PRTpI$ mentioned above, we first present a simpler protocol for the \chainP problem that is used as a building block for $\PRTpI$.  The simpler protocol is presented as  \cref{alg:pplayersubreduction}. 
In words, if the \chainP instance is $x^{1}, x^{2}, \ldots, x^{p-1}, t^2, t^3 \ldots, t^{p}$, \cref{alg:pplayersubreduction} simulates $PRT$ on the following $p$-player \maxcard instance with $k=p$.
\begin{itemize}
    \item Player $i$ receives the subset $V_i = \{v^i_j \in W_i \mid j\in [n] \text{ with }x^i_j = 1\}$ of $W_i$ corresponding to the $1$-bits of $x^i$.
    \item The submodular function the players wish to maximize is $f_{o_1, \ldots, o_p} \in \cF$ with  
    \begin{align*}
        o_1 = v^1_{t^2}, o_2 = v^2_{t^3}, \ldots, o_{p-1}= v^{p-1}_{t^p}
    \end{align*}
    (by the indistinguishability property of \cref{lemma:gen_hardness_F_properties}, the choice of $o_p$ does not matter since these functions are identical).
\end{itemize}
The last player of \cref{alg:pplayersubreduction} then decides between the $0$-case and $1$-case depending on the value of the solution $S \subseteq V_1 \cup V_2 \cup \cdots \cup V_p$ outputted by the last player of $PRT$. Note that this value is informative because the elements $\{o_1, \ldots, o_p\}$ are in $V_1 \cup V_2 \cup \cdots \cup V_p$ if and only if the given \chainP instance is in the $1$-case.  

Some care has to be taken to make sure that the $i$-th player of \cref{alg:pplayersubreduction} can answer the oracle queries made during the simulation of player $P_i$ of $PRT$. 
This is the reason why the message from the $(i-1)$-th player to the $i$-th player of \cref{alg:pplayersubreduction} also contains the indices $t^2, t^3, \ldots, t^{i-1}$. 
Indeed, as player $i$ also receives index $t^i$, she can then infer the selection of $o_1, \ldots, o_{i-1}$, which in turn allows her to calculate the value $f_{o_1, \ldots, o_p}(S)$ of any set $S\subseteq W_1\cup \cdots \cup W_i$ due to the indistinguishability property of \cref{lemma:gen_hardness_F_properties}.

\begin{protocol}
\caption{Reduction from \chainP to \maxcard with $k=p$ in the $p$-Player Model} \label{alg:pplayersubreduction}
\textbf{Player $P_i$'s Algorithm for $i=1, \ldots, p-1$}
\begin{algorithmic}[1]
    \State The set of elements $P_i$ of $PRT$ gets is $V_i = \{v_j^i \in W_i \mid j\in [n] \text{ with }x^{i}_j = 1\}$.
    \State The objective function for $PRT$ is one of the functions $f_{o_1, \ldots, o_p} \in \cF$ with 
        $o_1 = v^1_{t^2}, \ldots, o_{i-1} = v^{i-1}_{t^i}$. 
    By the indistinguishability property of \cref{lemma:gen_hardness_F_properties}, these functions are identical when restricted to $W_1 \cup \cdots \cup W_i$, and so any oracle query from $P_i$ in $PRT$ can be evaluated without ambiguity. 
    \State Send to $P_{i+1}$ the values $t^2, t^3, \ldots, t^i$ and the same message $m_i$ sent by $P_i$ of $PRT$. 
\end{algorithmic}
\textbf{Player $P_p$'s Algorithm}
\begin{algorithmic}[1]
    \State The set of elements $P_p$ of $PRT$ gets is $V_p = W_p$.
    \State The objective function for $PRT$ can now be determined to be $f_{o_1, \ldots, o_p}$, where $o_i = v^{i}_{t^{i+1}}$ for $i=1, \ldots, p-1$ (the last element $o_p$ does not matter by the indistinguishability property). 
    \State If $PRT$ returns a set of value at most $p+ (H_p)^2$, output ``$0$-case''; otherwise, output ``$1$-case''.
\end{algorithmic}
\end{protocol}

Our next step is analyzing the output distribution of~\cref{alg:pplayersubreduction}.

\begin{lemma} \label{lem:gen_hardness_twosidesreduction_oneside}
In the $0$-case, \cref{alg:pplayersubreduction} always produces the correct answer. 
\end{lemma}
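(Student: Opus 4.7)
\medskip

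\noindent\textbf{Proof plan.} The strategy is to show that in the $0$-case the set $S$ returned by $PRT$ inside the simulation is disjoint from the special elements $\{o_1,\ldots,o_p\}$, and then to invoke the value-gap part of \cref{lemma:gen_hardness_F_properties} to conclude that $f_{o_1,\ldots,o_p}(S)\le p+(H_p)^2$, which makes $P_p$ answer ``$0$-case''. I expect to proceed in three short steps.

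First, I would unpack the $0$-case promise. By definition of \chainP, we have $x^i_{t^{i+1}}=0$ for every $i\in[p-1]$, and by the way \cref{alg:pplayersubreduction} constructs the private input of player $P_i$, this translates into $o_i = v^i_{t^{i+1}}\notin V_i$ for every such $i$. Since the sets $W_1,\ldots,W_p$ are pairwise disjoint and $V_j\subseteq W_j$, this immediately gives
\[
  \{o_1,\ldots,o_{p-1}\}\cap (V_1\cup V_2\cup\cdots\cup V_p) = \varnothing,
\]
so the only element of $\{o_1,\ldots,o_p\}$ that could possibly appear in the output $S\subseteq V_1\cup\cdots\cup V_p$ of $PRT$ is $o_p$.

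Second, I would handle $o_p$ via the indistinguishability property of \cref{lemma:gen_hardness_F_properties} applied with $\ell=p$, which says that any two functions in $\cF$ agreeing on $o_1,\ldots,o_{p-1}$ coincide on all of $W$. In particular, the identity of $o_p$ does not affect the answer to any oracle query made during the simulation of $PRT$, nor the randomness-coupled set $S$ that $PRT$ ultimately outputs, nor the value $f_{o_1,\ldots,o_p}(S)$. Consequently, I can \emph{a posteriori} pick $o_p$ to be any element of $W_p\setminus S$; since $|S|\le k=p$ and $|W_p|=n>p$ (a harmless assumption for the hardness argument), this set is non-empty, and with such a choice $o_p\notin S$. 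Combined with the previous step, this yields $S\cap\{o_1,\ldots,o_p\}=\varnothing$.

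Finally, because $S\subseteq W\setminus\{o_1,\ldots,o_p\}$ and $|S|\le p$, the value-gap bound in \cref{lemma:gen_hardness_F_properties} applies and gives $f_{o_1,\ldots,o_p}(S)\le p+(H_p)^2$. By the decision rule of $P_p$ in \cref{alg:pplayersubreduction}, this causes the protocol to output ``$0$-case'', which is the correct answer, proving the lemma. I do not anticipate any real difficulty: the only slightly delicate point is that the value-gap property is stated for sets avoiding all $p$ elements $o_1,\ldots,o_p$, and the indistinguishability at level $\ell=p$ is precisely what allows us to pin down $o_p$ after the fact so that the bound becomes applicable.
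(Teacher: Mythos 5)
Your proof is correct and follows the same route as the paper: in the $0$-case, the $0$-bits give $o_i=v^i_{t^{i+1}}\notin V_i$ for all $i\in[p-1]$, so $S$ avoids $\{o_1,\ldots,o_{p-1}\}$, and the value-gap bound of \cref{lemma:gen_hardness_F_properties} then forces $f_{o_1,\ldots,o_p}(S)\le p+(H_p)^2$, triggering the ``$0$-case'' decision. You are in fact slightly more careful than the paper's one-line justification, which asserts that $\{o_1,\ldots,o_p\}$ is disjoint from $V_1\cup\cdots\cup V_p$ even though $V_p=W_p$ contains $o_p$; your step of invoking level-$p$ indistinguishability to select $o_p\in W_p\setminus S$ a posteriori (possible since $|S|\le p<n$) is exactly what is needed to legitimately apply the value-gap property and is left implicit in the paper.
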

\begin{proof}
Let $S \subseteq V_1 \cup V_2 \cup \cdots \cup V_p$ denote the output of $PRT$, which satisfies $|S|\leq k=p$.  Since we are in the $0$-case,  the set $\{o_1, \ldots, o_p\}$, where $o_i = v^i_{t^{i+1}}$, is disjoint from $V_1 \cup V_2 \cup \cdots \cup V_p$. Therefore, by the value gap property of \cref{lemma:gen_hardness_F_properties}, we must have $f_{o_1, \ldots, o_p}(S) \leq p + (H_p)^2$, and thus, \cref{alg:pplayersubreduction} always correctly decides that we are in the $0$-case.
\end{proof}

The assumption that $PRT$ has an approximation guarantee 
of $\frac{p + (H_p)^2}{2p - H_p}(1 +\varepsilon)$ implies the following success probability in the $1$-case.
\begin{lemma} \label{lem:gen_hardness_twosidesreduction_secondside}
In the $1$-case, \cref{alg:pplayersubreduction} produces the correct answer with probability at least $\varepsilon$.
\end{lemma}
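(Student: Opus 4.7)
The plan is to mirror the Markov-inequality argument used in \cref{lem:twosidesreduction_secondside} for the two-player case. First I would observe that, in the $1$-case of the \chainP instance, the promise $x^i_{t^{i+1}} = 1$ for every $i \in [p-1]$ guarantees $v^i_{t^{i+1}} \in V_i$, so the elements $o_1, \ldots, o_{p-1}$ chosen by \cref{alg:pplayersubreduction} all lie in $V_1 \cup \cdots \cup V_{p-1}$, and $o_p$ can be taken in $V_p = W_p$. Hence $\{o_1, \ldots, o_p\}$ is a feasible solution of size exactly $p = k$ for the simulated \maxcard instance, and the value gap property of \cref{lemma:gen_hardness_F_properties} (together with the inequality $f_{o_1,\ldots,o_p}(\{o_1,\ldots,o_p\}) \geq 2p - H_p$ obtained from \eqref{eq:gen_hardness_bounds} in the discussion preceding \cref{lem:gen_hardness_gap}) yields the lower bound $\Oval \geq 2p - H_p$ on the optimum value of this instance.

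Next I would apply the hypothesized approximation guarantee of $PRT$. Writing $S$ for the (random) output of $PRT$ and $f$ for $f_{o_1,\ldots,o_p}$, this gives
\[
\mathbb{E}[f(S)] \;\geq\; \frac{p + (H_p)^2}{2p - H_p}(1+\varepsilon) \cdot \Oval \;\geq\; (p + (H_p)^2)(1+\varepsilon) \enspace .
\]
On the other hand, monotonicity together with the upper bound $f(W) \leq 2p$ from \cref{lemma:gen_hardness_F_properties} implies that every possible output satisfies $f(S) \leq 2p$. Setting $\alpha = \Pr[f(S) > p + (H_p)^2]$, a Markov-style split of the expectation gives
\[
\mathbb{E}[f(S)] \;\leq\; \alpha \cdot 2p + (1-\alpha)(p + (H_p)^2) \;=\; (p + (H_p)^2) + \alpha(p - (H_p)^2) \enspace .
\]
Combining the two displays and rearranging yields $\varepsilon(p + (H_p)^2) \leq \alpha(p - (H_p)^2)$, which in the regime $p > (H_p)^2$ implies $\alpha \geq \varepsilon \cdot \tfrac{p+(H_p)^2}{p-(H_p)^2} \geq \varepsilon$. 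Since \cref{alg:pplayersubreduction} outputs ``$1$-case'' precisely when $f(S) > p + (H_p)^2$, this is the claimed success probability.

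The only real ``obstacle'' is verifying that the three constants entering the argument---the approximation ratio $\tfrac{p+(H_p)^2}{2p-H_p}$, the decision threshold $p + (H_p)^2$, and the monotonicity ceiling $2p$---are calibrated so that the Markov arithmetic collapses exactly to $\alpha \geq \varepsilon$, which is precisely what \cref{lemma:gen_hardness_F_properties} ensures. For values of $p$ with $p \leq (H_p)^2$ the required approximation ratio exceeds $1$, so the theorem is vacuous in that range and the lemma need not be established there.
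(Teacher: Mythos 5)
Your proposal is correct and mirrors the paper's own argument: show $\{o_1,\dots,o_p\}$ is feasible with value at least $2p-H_p$, invoke the assumed approximation ratio to lower-bound $\mathbb{E}[f(S)]$ by $(1+\varepsilon)(p+(H_p)^2)$, upper-bound $f(S)$ by $2p$, and conclude via a Markov-style split that $\Pr[f(S) > p+(H_p)^2] \geq \varepsilon$. The only cosmetic difference is that you solve for $\alpha$ directly while the paper bounds $1-\alpha$, and you additionally flag the (vacuous) regime $p\leq(H_p)^2$, which the paper leaves implicit.
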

\begin{proof}
Observe that in the $1$-case we have that the elements $o_1, o_2, \ldots, o_p$ all belong to $V_1 \cup V_2 \cup \cdots \cup V_p$. Thus, one solution which $PRT$ could produce is the set $\Oset  = \{o_1, \ldots, o_p\}$, whose value is at least
$ 2p - H_p$ 
by the value gap property of \cref{lemma:gen_hardness_F_properties}.
Since we assumed that $PRT$ has an approximation guarantee of $\frac{p+(H_p)^2}{2p-H_p}(1+\varepsilon)$, the expected value of the solution it produces must be at least $(p+(H_p)^2)(1 + \varepsilon)$. Together with the fact that the value of every solution is at most $2p$, we get  that $PRT$ produces a solution of value at most $p+(H_p)^2$ with probability of at most $1-\varepsilon$ since
\[
   (1 - \varepsilon) \cdot (p+(H_p)^2)  + \varepsilon \cdot (2p) \leq  (1+ \varepsilon) (p+ (H_p)^2) 
    \enspace.
\]
Thus, with probability at least $\varepsilon$, the value of the solution produced by $PRT$ is strictly more than $p+(H_p)^2$, which makes \cref{alg:pplayersubreduction}  output the correct decision with at least this probability.
\end{proof}

At this point we are ready to present the promised protocol $\PRTpI$, which simply executes $\lceil 2\varepsilon^{-1} \rceil$ parallel copies of \cref{alg:pplayersubreduction}, and then determines that the input was in the $1$-case if and only if at least one of the executions returned this answer.

\begin{corollary} \label{cor:gen_hardness_twosided_reduction}
$\PRTpI$ always answers correctly in the $0$-case, and answers correctly with probability at least $2/3$ in the $1$-case.
\end{corollary}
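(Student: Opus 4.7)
The proof plan is essentially identical in spirit to that of \cref{cor:twosided_reduction} in the two-player setting: a straightforward amplification argument applied to the one-sided-error protocol of \cref{alg:pplayersubreduction}.

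First I would handle the $0$-case. By \cref{lem:gen_hardness_twosidesreduction_oneside}, every single copy of \cref{alg:pplayersubreduction} deterministically outputs ``$0$-case'' when the input is a $0$-instance of \chainP. Hence each of the $\lceil 2\varepsilon^{-1} \rceil$ parallel copies answers ``$0$-case'', so the OR-aggregation used by $\PRTpI$ (which reports ``$1$-case'' iff at least one copy does) outputs ``$0$-case'', which is correct with probability $1$.

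Next I would handle the $1$-case. By \cref{lem:gen_hardness_twosidesreduction_secondside}, each copy independently returns the correct answer ``$1$-case'' with probability at least $\varepsilon$. Since the copies are run with independent randomness, the probability that all $\lceil 2\varepsilon^{-1} \rceil$ of them fail simultaneously is at most
\[
(1-\varepsilon)^{\lceil 2\varepsilon^{-1}\rceil} \;\leq\; (1-\varepsilon)^{2\varepsilon^{-1}} \;\leq\; e^{-2} \;<\; \tfrac{1}{3},
\]
where the second inequality uses $1-x\leq e^{-x}$. Consequently, $\PRTpI$ answers ``$1$-case'' with probability at least $1 - e^{-2} > 2/3$, as required.

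There is no real obstacle here; the only thing worth double-checking is that the parallel copies can indeed be run with independent randomness within the one-way $p$-player communication model without blowing up the message size beyond a factor of $\lceil 2\varepsilon^{-1}\rceil$. This holds because each player simply concatenates the messages produced by her $\lceil 2\varepsilon^{-1}\rceil$ independent simulations into a single message to the next player, and the final player thresholds each copy's output independently before applying the OR. This observation is what will later let us convert the communication lower bound of \cref{thm:pindex_hardness} for \chainPn into the claimed lower bound on the message size of $PRT$.
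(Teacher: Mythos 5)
Your proof is correct and follows essentially the same approach as the paper: invoking \cref{lem:gen_hardness_twosidesreduction_oneside} for the $0$-case and the standard $(1-\varepsilon)^{\lceil 2\varepsilon^{-1}\rceil}\leq e^{-2}<1/3$ amplification bound via \cref{lem:gen_hardness_twosidesreduction_secondside} for the $1$-case. Your closing remark about concatenating the messages of the parallel copies is a sensible sanity check, though the paper defers that accounting to the proof of \cref{thm:gen_hardness} rather than including it in this corollary.
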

\begin{proof}
The first part of the corollary is a direct consequence of \cref{lem:gen_hardness_twosidesreduction_oneside}. Additionally, by \cref{lem:gen_hardness_twosidesreduction_secondside}, the probability that $\PRTpI$ answers wrongly in the $1$-case is at most
\[
    (1 - \varepsilon)^{\lceil 2\varepsilon^{-1} \rceil}
    \leq
    (1 - \varepsilon)^{2\varepsilon^{-1}}
    \leq
    e^{-(\varepsilon) \cdot 2\varepsilon^{-1}}
    =
    e^{-2}
    <
    \frac{1}{3}
    \enspace.
    \qedhere
\]
\end{proof}

Using the last corollary, we can now complete the proof of \cref{thm:gen_hardness}.
\begin{proof}[Proof of \cref{thm:gen_hardness}]
Since \cref{cor:gen_hardness_twosided_reduction} shows that $\PRTpI$ is a  protocol for the \chainP problem that succeeds with probability at least $2/3$, \cref{thm:pindex_hardness} guarantees that its message size is at least $n/(36p^2)$. Observe now that the messages of $\PRTpI$ consist of $\lceil 2 \varepsilon^{-1} \rceil$ messages of \cref{alg:pplayersubreduction}, and thus, there must be   a message of \cref{alg:pplayersubreduction} of size at least
\[
    \frac{n/(36p^2)}{\lceil 2 \varepsilon^{-1} \rceil}
    \geq
    \frac{n\varepsilon}{108p^2}
    \enspace.
\]
We now recall that each message of \cref{alg:pplayersubreduction} includes only some of the indices $t^2, \ldots, t^p$ and a messages generated by $PRT$ given the instance of $p$-player submodular maximization generated for it by \cref{alg:pplayersubreduction}. The indices are sent using at most $p \lceil \log_2(n) \rceil$ bits. Thus, because the number of elements of this instance is $N = n\cdot p$, $PRT$ must send a message of size  at least
\[
    \frac{n\varepsilon}{108p^2} - p \lceil \log_2(n) \rceil
    =
    \Omega \left(\frac{N \varepsilon}{p^3}\right)
    \enspace.
    \qedhere
\]
\end{proof}



\section{Polynomial Time Submodular Maximization for Two Players}\label{sec:poly2players}

In this section, we discuss our result about efficient protocols in the two-player setting. Hence, throughout this section, $f\colon 2^W\to \mathbb{R}_{\geq 0}$ is a monotone submodular function defined on a ground set $W$, and $k\in \mathbb{Z}_{\geq 0}$ is an upper bound on the cardinality of subsets of $W$ that we consider. $V_A\subseteq W$ are the elements that Alice receives and $V_B\subseteq W$ the ones that Bob receives, and we define $V=V_A\cup V_B$. We denote by $\mathcal{O}\subseteq V$ an optimal solution to the (offline) problem $\max\{f(S): S\subseteq V, |S|\leq k\}$.

The well-known \Greedy algorithm by Nemhauser et al.~\cite{nemhauser1978analysis} is a crucial ingredient in our protocol. We remind the reader that \Greedy starts with an empty set $S$, and in each step adds an element $v\in V$ to $S$ with largest marginal gain, i.e.,
$v\in \argmax_{u \in V} f(u \mid S)$.
We recall the following basic performance guarantee of \Greedy, which can readily be derived from its definition (see also~\cite{nemhauser1978analysis}): when running \Greedy for $p$ rounds, a set $S\subseteq V$ with $|S|=p$ is obtained that satisfies, for any $\ell\in \mathbb{Z}_{>0}$,
\begin{align} \label{eq:greedyvalue}
    f(S) \geq \left( 1- \left(1-\frac{1}{\ell}\right)^p \right) \cdot f(O_\ell)\enspace ,
\end{align}
where $O_\ell$ denotes the optimum solution of size $\ell$.

\subsection{Protocol}

We consider the simple deterministic protocol for the two-player setting given as \cref{alg:simple2pAlg}. Without loss of generality we assume that $|\Oset|=k$, $|V_A|\geq 2k$, and $|V_B|\geq k$; for otherwise, these properties can easily be obtained by adding dummy elements of value zero to $V_A$ and $V_B$ (and consequently, also to $W$), and complementing $\Oset$ with such elements to make sure that $|\Oset|=k$.
\begin{protocol}
\caption{Efficient algorithm for the two-player setting beating the approximation ratio $1/2$} \label{alg:simple2pAlg}
\textbf{Alice's algorithm}
\begin{algorithmic}[1]
\State Use \Greedy to select $2k$ elements $a_1,\ldots, a_{2k}\in V_A$, where the numbering corresponds to the order in which \Greedy selected the elements, and send them to Bob.
\end{algorithmic}
\textbf{Bob's algorithm}
\begin{algorithmic}[1]
\State For every $p\in \{0,\ldots, k\}$, use \Greedy to add $k-p$ elements $S_p$ of $V_B$ to the set $\{a_1,\ldots, a_p\}$, leading to a set $X_p = \{a_1,\ldots, a_p\}\cup S_p$ of cardinality $k$.

\State For every $p\in \{0,\ldots, k\}$, compute a set $Y_p$ in two steps. First, apply \Greedy to select $k-p$ elements $Q_p$ from $V_B$. Then, complement $Q_p$ using \Greedy to add $p$ elements from $\{a_1,\ldots, a_{2k}\}$ to obtain $Y_p$.

\State \Return{$\argmax\{f(S): S\in \{X_0,\ldots, X_k\}\cup \{Y_0,\ldots, Y_k\}\}$}.
\end{algorithmic}
\end{protocol}

Our main result here is that \cref{alg:simple2pAlg} has an approximation factor that is strictly better than $1/2$.
\begin{theorem}\label{thm:compRatio1TwoPl}
\cref{alg:simple2pAlg} is a $0.514$-approximation for the two-player setting.
\end{theorem}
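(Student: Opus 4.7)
My plan is to lower bound $f(\widehat S)$, the value returned by \cref{alg:simple2pAlg}, by $0.514\cdot f(\Oset)$ using the Greedy guarantee~\eqref{eq:greedyvalue} and submodularity. Write $\Oset_A = \Oset \cap V_A$, $\Oset_B = \Oset \cap V_B$, and $r = |\Oset_A|$, so $|\Oset_B| = k - r$. The boundary cases $r=0$ or $r=k$ are handled by $X_0$ or $X_k=A_k$ alone: \eqref{eq:greedyvalue} immediately gives $f(X_0), f(X_k) \geq (1-1/e)\,f(\Oset)$, which is comfortably above $0.514\,f(\Oset)$. So I focus on $1 \leq r \leq k-1$.

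The first step is to apply~\eqref{eq:greedyvalue} to each greedy sub-invocation. For Alice's greedy $A_p$, comparing against $\Oset_A$ (size $r$) gives $f(A_p) \geq (1 - (1-1/r)^p)\,f(\Oset_A)$. For Bob's $(k-p)$-step extension $S_p$ producing $X_p$, comparing against $\Oset_B$ (size $k-r \leq k-p$ whenever $p \leq r$) yields
\begin{align*}
f(X_p) \geq \tfrac{1}{e}\,f(A_p) + \left(1 - \tfrac{1}{e}\right) f(A_p \cup \Oset_B).
\end{align*}
For Bob's standalone greedy $Q_p$, one obtains $f(Q_p) \geq (1-1/e)\,f(\Oset_B)$, and for its $p$-step extension $T_p$ over $A_{2k}$,
\begin{align*}
f(Y_p) \geq \tfrac{1}{e}\,f(Q_p) + \left(1 - \tfrac{1}{e}\right) f(Q_p \cup T^*)
\end{align*}
for any $p$-subset $T^* \subseteq A_{2k}$.

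The second step is to relate these quantities to $f(\Oset)$ via submodularity. In particular, $f(A_p \cup \Oset_B) + f(A_p \cup \Oset_A) \geq f(A_p) + f(\Oset)$, so at least one of the two sides is at least $\tfrac{1}{2}(f(A_p) + f(\Oset))$. The crucial additional ingredient is the freedom to choose $T^* \subseteq A_{2k}$ (not merely $T^* \subseteq A_k$) in the $Y_p$ bound: by leveraging all $2k$ elements Alice sent, there is a $p$-subset $T^* \subseteq A_{2k}$ for which $f(Q_p \cup T^*)$ strictly exceeds what the first $k$ elements $A_k$ alone can guarantee, even in instances where Alice's greedy prefix $A_p$ is ``misaligned'' with $\Oset_A$. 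This is precisely what pushes the approximation past $1/2$.

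The third step is an optimization over the free parameters $r$ and the normalized values $\alpha := f(\Oset_A)/f(\Oset)$, $\beta := f(\Oset_B)/f(\Oset)$ subject to $\alpha + \beta \geq 1$ (subadditivity). Choosing $p = r$ in the bounds above gives explicit expressions for $f(X_r)$ and $f(Y_r)$ in terms of $\alpha, \beta, r, k$; combined with the endpoint bounds for $X_0$ and $X_k$, one obtains a min--max problem whose optimal value is at least $0.514$. Since the protocol returns the best over all these candidates, this yields the theorem.

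\paragraph{Main obstacle.} The decisive difficulty is extracting a sufficiently strong lower bound on $f(Q_p \cup T^*)$ for some well-chosen $T^* \subseteq A_{2k}$ in the $Y_p$ analysis; this is the step that quantitatively uses the fact that Alice sends $2k$ elements rather than $k$, and it is what separates $0.514$ from the classical $1/2$. Once such a $T^*$ is identified (e.g.\ by tracking the decreasing greedy marginal gains $\delta_1 \geq \delta_2 \geq \cdots \geq \delta_{2k}$ produced by Alice and arguing that enough of $\Oset_A$ can be ``mimicked'' within $A_{2k}$), the concluding numerical min--max producing the constant $0.514$ is routine.
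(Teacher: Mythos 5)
Your skeleton — lower-bounding $f(X_{k_A})$ and $f(Y_{k_A})$ separately and then optimizing a min--max — matches the paper's plan, and you correctly identify the $Y$-side as the crux; however, a decisive piece of the argument is missing and your proposal, as stated, would not close. The paper's key technical device is the scalar $\Delta_A = f(\Oset\cap V_A\mid G_A)/f(\Oset\cap V_A)$ (where $G_A$ is your $A_{k_A}$), a normalized measure of how misaligned Alice's Greedy prefix is with $\Oset\cap V_A$. Lemma~\ref{lem:greedyDelta1} refines the plain $(1-e^{-1})$ bound to $f(G_A)\geq (1+\Delta_A\ln\Delta_A)f(\Oset\cap V_A)$, and this is what makes the two candidate bounds complementary: $f(X_{k_A})$ is good when $\Delta_A$ is small (Lemma~\ref{lem:Xguar}), while $f(Y_{k_A})$ is good when $\Delta_A$ is large (Lemma~\ref{lem:Yguar}), because then the size-$2k_A$ Greedy set $G_A^{2k_A}$ is forced to have large value and a simple halving observation shows Bob captures at least $\tfrac{1}{2}f(G_A^{2k_A}\mid G_B)$ of it. Your step~2 actually touches $\Delta_A$ without naming it — rewriting $f(A_p\cup\Oset_B)\geq f(\Oset)-f(\Oset\cap V_A\mid A_p)$ would isolate exactly this quantity, whereas the ``at least one of the two sides'' framing throws it away. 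Your min--max over $(r,\alpha,\beta)$ with the coarse $(1-e^{-1})$ Greedy bounds and an unspecified $T^*$-term is too weak: e.g.\ at $\alpha=1$, $\beta=0$, $\Delta_A=1$, your $X$-bound gives only $\tfrac{1}{e}(1-e^{-1})\approx 0.23$ and your $Y$-bound has nothing concrete to offer, whereas Lemmas~\ref{lem:greedyDelta1}--\ref{lem:Yguar} give $f(Y_{k_A})\geq 1-\tfrac{1}{2e}$ there.

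Regarding your ``main obstacle'': the paper does not resolve it by tracking Alice's marginal sequence $\delta_1\geq\cdots\geq\delta_{2k}$ to ``mimic'' $\Oset\cap V_A$ inside $A_{2k}$. Instead it applies the Greedy $(1-e^{-1})$ extension guarantee to show $f(G_A^{2k_A})\geq f(G_A) + (1-e^{-1})f(\Oset\cap V_A\mid G_A)$ — an infeasible but high-value set obtained precisely because Alice is allowed to go up to $2k$ elements — and then invokes the non-increasing-marginals halving argument on \emph{Bob's} second Greedy call (adding $k_A$ of the $2k_A$ elements to $G_B$), not Alice's. So even your sketch of how to fill the gap points to the wrong part of the algorithm.
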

%
%
%
Our focus here is on highlighting some key arguments explaining why one can beat the factor of $1/2$. Thus, we keep our analysis simple instead of aiming for the best approximation ratio achievable.

\subsection{Proof of \texorpdfstring{\cref{thm:compRatio1TwoPl}}{Theorem~\ref*{thm:compRatio1TwoPl}}}

We define $k_A= |\mathcal{O}\cap V_A|$, $k_B = |\mathcal{O}\cap V_B|$, and let $\{a_1,a_2,\ldots, a_{2k}\}\subseteq V_A$ be the $2k$ elements computed by Alice, i.e., they are the first $2k$ elements chosen by \Greedy when maximizing $f$ over $V_A$, numbered according to the order in which they were chosen. For convenience, we define the following notation for prefixes of $\{a_1, a_2,\ldots, a_{2k}\}$. For any $i\in \{0,\ldots, 2k\}$, let
\begin{align*}
G_A^i &= \{a_1,\ldots, a_i\}\enspace,\text{ and}\\
G_A   &= G_A^{k_A}\enspace.
\end{align*}
In particular, $G_A^0=\varnothing$.
Finally, we denote by $G_B\subseteq V_B$ the set $Q_{k_A}$ computed by Bob, i.e., this is a \Greedy solution of the problem $\max\{f(S): S\subseteq V_B, |S|\leq k_B\}$.

To show \cref{thm:compRatio1TwoPl}, we prove that one of the two sets $X_{k_A}$ or $Y_{k_A}$ leads to the desired guarantee, i.e.,
\begin{equation}\label{eq:XorYisGood}
\max\{f(X_{k_A}), f(Y_{k_B})\} \geq 0.514 \cdot f(\Oset)\enspace.
\end{equation}

Without loss of generality, we assume that $f(\Oset\cap V_A) >0$ and $k_A > 0$. For otherwise, $f(\Oset) = f(\Oset\cap V_B)$ and the optimal value can be achieved with a set fully contained in Bob's elements $V_B$. Hence, by the approximation guarantee of \Greedy, we get $f(Y_{k_B}) \geq (1-e^{-1})\cdot f(\Oset)$, and~\eqref{eq:XorYisGood} clearly holds.

\medskip

A key quantity that we use in our analysis is the following.
\begin{equation*}
\Delta_A = \frac{f(\Oset\cap V_A \mid G_A)}{f(\Oset \cap V_A)}\enspace.
\end{equation*}
In words, $\Delta_A$ is a normalized way (normalized by $f(\Oset\cap V_A)$) to measure by how much the greedy solution $G_A$ would further improve when adding all elements of $\Oset\cap V_A$ to it. Notice that the monotonicity and submodularity of $f$ guarantee together $f(\Oset \cap V_A \mid G_A) \in [0, f(\Oset \cap V_A)]$, and thus, after the normalization, we get $\Delta_A\in [0,1]$.

\medskip

As we show in the following, we can exploit both small and large values of $\Delta_A$ to improve over the factor $1/2$. To build up intuition, consider first the following ostensibly natural candidate for a hard instance. Assume that
\begin{equation}\label{eq:optPartInAlice}
\max\left\{ f(S): S\subseteq V_A, |S|=k_A \right\}
\end{equation}
is a submodular maximization instance with maximizer $\Oset\cap V_A$ and the \Greedy solution, which is $G_A$, has value $f(G_A)$ very close to  $(1-e^{-1})\cdot f(\Oset\cap V_A)$, which is the worst-case guarantee for \Greedy.  By looking into the analysis of \Greedy, this happens only when $\Delta_A$ is very close to $e^{-1}$. However, it turns out that in this ostensibly bad case, our algorithm is even about $(1-e^{-1})$-approximate due to the following. The small value of $\Delta_A \approx e^{-1}$ implies that $G_A$ and $\Oset_A$ behave similarly in terms of how the submodular value changes when adding elements from $V_B$. This is important to make sure that Bob can complement $G_A$ to a strong solution through adding elements from $V_B$. More precisely, adding $\Oset\cap V_B$ to $G_A$ increases the submodular value by 
\begin{align*}
f(\Oset \cap V_B \mid G_A) &\geq f(\Oset\cap V_B \mid G_A \cup (\Oset\cap V_A))\\
&\geq f(\Oset) - f(G_A \cup (\Oset \cap V_A))\\
&= f(\Oset) - f(G_A) - f(\Oset\cap V_A \mid G_A)\\
&= f(\Oset) - f(G_A) - \Delta_A \cdot f(\Oset\cap V_A)\enspace,
\end{align*}
where the first inequality follows by submodularity and the second one by monotonicity of $f$. Finally, if we have, as discussed, that~\eqref{eq:optPartInAlice} is close to a worst-case instance in terms of approximability, then $f(G_A)\approx (1-e^{-1})\cdot f(\Oset\cap V_A)$ and $\Delta_A\approx e^{-1}$, and hence,
\begin{equation*}
f(\Oset \cap V_B \mid G_A) \gtrapprox f(\Oset) - f(\Oset\cap V_A)\enspace.
\end{equation*}
Thus, when augmenting $G_A$ with $k_B$ elements of $V_B$ through \Greedy, the increase $f(X_{k_A}\mid G_A)$ in submodular value is at least around $(1-e^{-1})\cdot (f(\Oset)-f(\Oset\cap V_A))$. Together with the fact that $f(G_A)\approx (1-e^{-1})\cdot f(\Oset\cap V_A)$, this implies $f(X_{k_A}) \gtrapprox (1-e^{-1})\cdot f(\Oset)$. Hence, our protocol computed a set $X_{k_A}$ that is even close to a $(1-e^{-1})$-approximation for this case.

The above example highlights that small values for $\Delta_A$---and this includes the worst-case value of $\Delta_A=e^{-1}$ for a classical submodular maximization problem with a cardinality constraint---allow Bob to complement $G_A$ in a strong way. To complete the above intuitive reasoning to a full formal proof, we proceed as follows. We first quantify in \cref{lem:greedyDelta1} the performance of \Greedy on Alice's side depending on the parameter $\Delta_A$. This will in particular imply that $\Delta_A \approx e^{-1}$ is indeed worst-case if the task is for Alice to select $k_A$ elements of highest submodular value. It also quantifies how \Greedy, run on Alice's side, improves for values of $\Delta_A$ bounded away from $e^{-1}$.

We then generalize and formalize in \cref{lem:Xguar} the above discussion, done for $\Delta_A\approx e^{-1}$, to arbitrary $\Delta_A \in [0,1]$. This shows that our protocol has a good approximation guarantee whenever $\Delta_A$ is small, and also covers the case when $f(\Oset \cap V_A)$ is small. Finally, \cref{lem:Yguar} covers the case when both $\Delta_A$ and $f(\Oset\cap V_A)$ are large. In this case, the set $G_A\cup (\Oset\cap V_A)$---which may have up to $2k_A$ elements and is thus not necessarily feasible---has large submodular value. This is useful to show that $f(Y_{k_A})$ is large due to the following. Recall that $Y_{k_A}$ is constructed by first applying \Greedy to Bob's elements to select $k_B$ elements $Q_p$ and then complement $Q_p$ with elements from $\{a_1,\ldots, a_{2k}\}$. Knowing that $G_A \cup (\Oset\cap V_A)$ has large submodular value implies a significant increase in submodular value when adding to $Q_p$ the highest-valued elements of $\{a_1,\ldots, a_{2k}\}$. Notice that this reasoning relies on the fact that Alice considers sets of cardinality larger than $k$.

\begin{lemma}\label{lem:greedyDelta1}
\begin{equation*}
f(G_A) \geq 
\left(1+\Delta_A\ln\Delta_A\right) \cdot f(\Oset\cap V_A)
\enspace,
\end{equation*}
where, for $\Delta_A=0$, we interpret $\Delta_A \ln\Delta_A =0$.
\end{lemma}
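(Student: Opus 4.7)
The plan is to abbreviate $O_A := \Oset\cap V_A$, $\Delta := f(O_A)$, and $\delta_i := f(\{a_1,\ldots,a_i\})$, so that $\delta_{k_A}=f(G_A)$. I would first dispatch the degenerate cases: $\Delta=0$ is trivial, and if $\Delta_A=0$ then $f(O_A\mid G_A)=0$ together with monotonicity gives $\delta_{k_A}\geq f(G_A\cup O_A)\geq f(O_A)=\Delta$, matching $(1+0)\cdot\Delta$. From here on one can assume $\Delta>0$ and $\Delta_A\in(0,1]$. The workhorse of the main proof is a single per-step inequality that combines greedy averaging with both submodularity and monotonicity: for every $i\in\{0,\ldots,k_A-1\}$,
\[
\delta_{i+1}-\delta_i \;\geq\; \tfrac{1}{k_A}\, f(O_A\mid G_A^i) \;\geq\; \tfrac{1}{k_A}\,F(\delta_i),\qquad F(\delta):=\max\{\Delta-\delta,\;\Delta_A\Delta\},
\]
where the first bound is standard greedy averaging and the lower bound by $F(\delta_i)$ follows from $f(O_A\mid G_A^i)\geq\Delta-\delta_i$ (monotonicity) together with $f(O_A\mid G_A^i)\geq f(O_A\mid G_A)=\Delta_A\Delta$ (submodularity, using $G_A^i\subseteq G_A$).

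The core idea is to feed this per-step inequality into a potential-function argument instead of resorting to a discrete two-phase analysis. I would define
\[
g(\delta)\;:=\;\int_0^{\delta}\frac{d\delta'}{F(\delta')}\,,
\]
which is strictly increasing and convex because $F$ is strictly positive (since $\Delta_A>0$) and non-increasing. The supporting-line inequality for convex functions together with the per-step bound then gives $g(\delta_{i+1})-g(\delta_i)\geq (\delta_{i+1}-\delta_i)/F(\delta_i)\geq 1/k_A$. Telescoping over $i=0,\ldots,k_A-1$ yields $g(\delta_{k_A})\geq 1$, and strict monotonicity of $g$ gives $\delta_{k_A}\geq g^{-1}(1)$. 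Computing $g$ in closed form by splitting the integral at the kink $\delta^{\star}=(1-\Delta_A)\Delta$ gives $g(\delta)=-\ln(1-\delta/\Delta)$ on $[0,\delta^{\star}]$ and $g(\delta)=\ln(1/\Delta_A)+(\delta-\delta^{\star})/(\Delta_A\Delta)$ beyond, from which a short calculation reveals $g^{-1}(1)=\Delta(1+\Delta_A\ln\Delta_A)$ precisely when $\Delta_A\geq 1/e$ (equivalently $\ln(1/\Delta_A)\leq 1$, so the solution lies past the kink). This is exactly the target inequality in that regime.

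In the remaining regime $\Delta_A<1/e$ the potential argument only delivers the weaker $g^{-1}(1)=\Delta(1-1/e)$, so here I would finish with a direct monotonicity argument: $f(G_A\cup O_A)\geq f(O_A)=\Delta$ combined with $f(G_A\cup O_A)=\delta_{k_A}+\Delta_A\Delta$ yields $\delta_{k_A}\geq(1-\Delta_A)\Delta$, and the elementary inequality $1-\Delta_A\geq 1+\Delta_A\ln\Delta_A$ for $\Delta_A\leq 1/e$ (which holds because $1+\ln\Delta_A\leq 0$ there) closes this case. The step I expect to be the main obstacle is really the choice of the right potential $g$: a straightforward discrete two-phase argument incurs an unavoidable $\Theta(1/k_A)$ slack that prevents one from hitting the clean target $1+\Delta_A\ln\Delta_A$, whereas the convexity of the specific choice $g=\int d\delta'/F(\delta')$ absorbs precisely this slack by bundling both lower bounds on $f(O_A\mid G_A^i)$ into a single inequality that telescopes without loss.
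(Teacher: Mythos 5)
Your proof is correct, and while the skeleton matches the paper's (same per-step inequality combining greedy averaging with the submodularity bound $f(\Oset\cap V_A\mid G_A^i)\geq f(\Oset\cap V_A\mid G_A)=\Delta_A\,f(\Oset\cap V_A)$ and the monotonicity bound $f(\Oset\cap V_A\mid G_A^i)\geq f(\Oset\cap V_A)-f(G_A^i)$, and the same case split at $\Delta_A=e^{-1}$ with the identical direct argument for $\Delta_A\leq e^{-1}$), your treatment of the regime $\Delta_A>e^{-1}$ is genuinely different from the paper's. The paper uses a two-phase decomposition: it runs the classical Nemhauser--Wolsey--Fisher bound $f(G_A^p)\geq(1-(1-1/k_A)^p)\,f(\Oset\cap V_A)$ for the first $p$ steps and the uniform per-step gain $\Delta_A\, f(\Oset\cap V_A)/k_A$ for the remaining $k_A-p$ steps, takes a maximum over $p\in\{0,\dots,k_A\}$, and then---in order to pass to the continuous optimizer $x=-k_A\ln\Delta_A$ without incurring any discretization loss---introduces the piecewise-linear interpolation $\eta$ and proves $\eta(x)\leq e^{-x/k_A}$ as a separate technical claim. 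Your potential function $g(\delta)=\int_0^\delta d\delta'/F(\delta')$ with $F(\delta)=\max\{\Delta-\delta,\Delta_A\Delta\}$ sidesteps this entirely: the telescoping $\sum_i (g(\delta_{i+1})-g(\delta_i))\geq k_A\cdot\frac{1}{k_A}=1$ is inherently discrete and uses only the convexity of $g$ (via the supporting-line inequality) to bundle both per-step lower bounds, so the bound $\delta_{k_A}\geq g^{-1}(1)=\Delta(1+\Delta_A\ln\Delta_A)$ appears in closed form with no interpolation argument needed. The trade-off is that the paper's route is more elementary and self-contained (a direct discrete optimization), whereas yours requires introducing and justifying the auxiliary convex potential but yields a cleaner computation. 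One small gap: you should state explicitly that $g$ is $C^1$ (which follows because $F$ is continuous and strictly positive for $\Delta_A>0$) so that the supporting-line step with derivative $g'(\delta_i)=1/F(\delta_i)$ is valid even across the kink at $\delta^\star=(1-\Delta_A)\Delta$; this is a one-line remark but worth making.
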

\begin{proof}
We first observe that the result clearly holds for $\Delta_A \leq e^{-1}$. Indeed, in this case we have
\begin{align*}
(1+\Delta_A\ln\Delta_A)\cdot f(\Oset\cap V_A) &\leq (1-\Delta_A)\cdot f(\Oset\cap V_A)\\
&= f(\Oset\cap V_A) - f(\Oset\cap V_A \mid G_A)\\
&= f(\Oset\cap V_A) - f(\Oset\cap V_A \cup G_A) + f(G_A)\\
&\leq f(G_A)\enspace,
\end{align*}
where the last inequality uses the monotonicity of $f$. Thus, from now on we assume $\Delta_A > e^{-1}$.

Recall that the classical analysis of the greedy algorithm (see~\eqref{eq:greedyvalue}) shows that if \Greedy is used to select $p\in \mathbb{Z}_{\geq 0}$ elements from $V_A$, then a set $G_A^p$ is obtained that satisfies
\begin{equation}\label{eq:guarGreedy}
f(G_A^p) \geq \left(1-\left(1-\frac{1}{k_A}\right)^{p}\right)\cdot f(\OAset) 
\geq 
\left(1-\left(1-\frac{1}{k_A}\right)^{p}\right)\cdot f(\Oset\cap V_A)
\enspace,
\end{equation}
where
\begin{equation*}
\OAset \in \argmax\left\{f(S) : S\subseteq V_A: |S|\leq k_A\right\}\enspace.
\end{equation*}
Moreover, because \Greedy successively adds the element with largest marginal return, we have 
\begin{equation}\label{eq:greedyHighestMarginal}
f(a_i \mid \{a_1, \ldots, a_{i-1}\}) \geq f(a \mid \{a_1,\ldots, a_{i-1}\}) \qquad
\forall i\in [k_A],
\forall a\in V_A\enspace.
\end{equation}
Using the above observation and the submodularity of $f$, we obtain for every $i\in [k_A]$
\begin{align*}
f(a_i \mid \{a_1,\ldots, a_{i-1}\}) &\geq
\frac{1}{k_A} \cdot \sum_{a\in \Oset\cap V_A} f(a \mid \{a_1,\ldots, a_{i-1}\})\\
 &\geq \frac{1}{k_A} \cdot f(\Oset\cap V_A \mid \{a_1,\ldots, a_{i-1}\})\\
 &\geq \frac{1}{k_A}\cdot f(\Oset\cap V_A \mid G_A)\\
 &= \Delta_A \cdot \frac{f(\Oset\cap V_A)}{k_A}\enspace,
\end{align*}
where the first inequality is due to~\eqref{eq:greedyHighestMarginal}, and the other two follow from the submodularity of $f$ and the fact that $\{a_1, \ldots, a_{i-1}\}\subseteq G_A$.

Therefore, we can lower bound $f(G_A)$ by considering the partial solution $G_A^p$, and assuming that all of the other $k_A-p$ elements $\{a_{p+1},\dots, a_{k_A}\}$ each had a marginal contribution of $\Delta_A\cdot f(\Oset\cap V_A)/k_A$, which leads to the following lower bound:
\begin{align}
f(G_A) &\geq \max\left\{
f(G_A^p) + (k_A-p)\cdot \Delta_A\cdot \frac{f(\Oset \cap V_A)}{k_A}
\;\middle\vert\; p\in \{0,\ldots, k_A\}\right\}\notag\\
&\geq f(\Oset \cap V_A)\cdot \max\left\{
1-\left(1-\frac{1}{k_A}\right)^{p} + \frac{k_A-p}{k_A}\cdot\Delta_A \;\middle\vert\; p\in \{0,\ldots, k_A\}
\right\}\enspace,\label{eq:lowerBoundGADisc}
\end{align}
where the second inequality follows from~\eqref{eq:guarGreedy}.
To get a clean and easy way of evaluating the maximum in~\eqref{eq:lowerBoundGADisc}, we would like to allow $p$ to take continuous values within $[0,k_A]$, in which case we could set the derivative to zero to obtain the maximum. However, because we need a lower bound for this maximum, just lifting the integrality requirements on $p$ does not work out, as it would lead to a larger value for the maximum. To work around this problem, we define the function $\eta\colon [0,k_A] \to \mathbb{R}_{\geq 0}$ such that $\eta(p)$ for $p\in [0,k_A]$ is the piece-wise linear interpolation of the values $(1-\frac{1}{k_A})^p$ for $p\in \{0,\ldots, k_A\}$. Formally, 
\begin{align*}
\eta(p) =   (\lfloor p + 1 \rfloor - p)    \left(1-\frac{1}{k_A}\right)^{\lfloor p\rfloor}
       + (p - \lfloor p \rfloor)  \left(1-\frac{1}{k_A}\right)^{\lceil p \rceil}\qquad \forall p\in [0,k_A]\enspace.
\end{align*}
Now, we can replace the maximum in~\eqref{eq:lowerBoundGADisc} with a continuous version using $\eta$ as follows.
\begin{equation}\label{eq:makeMaxContinuous}
\begin{aligned}
\frac{f(G_A)}{f(\Oset \cap V_A)} \geq \max\bigg\{
1-\left(1-\frac{1}{k_A}\right)^{p} &+ \frac{k_A-p}{k_A}\cdot\Delta_A \;\bigg\vert\; p\in \{0,\ldots, k_A\}
\bigg\}\\
&= \max\left\{
1- \eta(p) + \frac{k_A-p}{k_A}\cdot\Delta_A \;\middle\vert\; p\in \{0,\ldots, k_A\}
\right\}\\
&= \max\left\{
1-\eta(x) + \frac{k_A - x}{k_A} \cdot \Delta_A \;\middle\vert\; x \in [0,k_A]\right\}\enspace,
\end{aligned}
\end{equation}
where the first equality follows because $\eta(p) = (1-\frac{1}{k_A})^p$ for $p\in \{0,\ldots, k_A\}$, and the second equality follows from the fact that the expression we maximize over is piece-wise linear with all break-points contained in $\{0,\ldots, k_A\}$; hence, there is a maximizer within $\{0,\ldots, k_A\}$.
The following claim, which we show in \cref{app:poly2players}, allows for replacing $\eta$ by a smooth approximation.

\begin{restatable}{claim}{claimUpperBoundEta} \label{claim:upperBoundEta}
 $\eta(x) \leq e^{-\frac{x}{k_A}}$ for all $x\in [0,k_A]$.
\end{restatable}

Using the claim, we can now further expand~\eqref{eq:makeMaxContinuous} to obtain
\begin{align*}
\frac{f(G_A)}{f(\Oset\cap V_A)}
&\geq \max\left\{
1-\eta(x) + \frac{k_A-x}{k_A}\cdot\Delta_A \;\middle\vert\; x\in [0,k_A]
\right\}\\
&\geq \max\left\{
1- e^{-\frac{x}{k_A}} + \frac{k_A - x}{k_A} \cdot \Delta_A \;\middle\vert\; x \in [0,k_A]\right\}\\
&=1+\Delta_A\ln\Delta_A\enspace, \label{eq:lowerBoundGACont}
\end{align*}
where the equality holds by observing that the maximum of the concave function $1-e^{-\frac{x}{k_A}} + \frac{k_A-x}{k_A}\cdot\Delta_A$ over $x\in \mathbb{R}$ is achieved for $x=-k_A\ln\Delta_A$. (Notice that $-k_A\ln\Delta_A\in [0,k_A]$ because we assumed $\Delta_A > e^{-1}$.) This proves the result.
\end{proof}

We now show how the intuitive discussion for the case $\Delta_A \approx e^{-1}$ can be made formal and be generalized to arbitrary $\Delta_A\in [0,1]$.
\begin{lemma}\label{lem:Xguar}
\begin{align*}
f(X_{k_A}) &\geq
 (1-e^{-1})\cdot f(\Oset) + e^{-1} f(G_A) - (1-e^{-1})\cdot \Delta_A\cdot f(\Oset\cap V_A)\\
&\geq \left(1-e^{-1}\right)\cdot f(\Oset) -\left(
(1-e^{-1})\Delta_A - e^{-1} - e^{-1}\Delta_A\ln\Delta_A
\right)\cdot f(\Oset\cap V_A)\enspace,
\end{align*}
where, as usual, we interpret $\Delta_A \ln\Delta_A = 0$ for $\Delta_A=0$.
\end{lemma}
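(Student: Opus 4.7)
The plan is to combine the classical greedy performance guarantee with a short submodular/monotonicity argument that routes through the set $G_A\cup(\Oset\cap V_A)$, and then substitute the bound on $f(G_A)$ from \cref{lem:greedyDelta1}. Concretely, recall that $X_{k_A}=G_A\cup S_{k_A}$, where $S_{k_A}$ consists of $k_B=k-k_A$ elements selected by \Greedy from $V_B$ as an augmentation of the starting set $G_A$. I would first apply the standard greedy guarantee~\eqref{eq:greedyvalue} to the (still nonnegative, monotone, submodular) function $T\mapsto f(T\cup G_A)-f(G_A)$ on ground set $V_B$ with cardinality constraint $k_B$, comparing against $\Oset\cap V_B$, which has size exactly $k_B$ since the WLOG assumption $|\Oset|=k$ forces $k_A+k_B=k$. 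This yields
\[
f(X_{k_A})-f(G_A)\;\geq\;\bigl(1-(1-1/k_B)^{k_B}\bigr)\cdot f(\Oset\cap V_B\mid G_A)\;\geq\;(1-e^{-1})\cdot f(\Oset\cap V_B\mid G_A).
\]

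Next, I would lower bound the marginal $f(\Oset\cap V_B\mid G_A)$ in terms of $\Delta_A$. By submodularity, $f(\Oset\cap V_B\mid G_A)\geq f(\Oset\cap V_B\mid G_A\cup(\Oset\cap V_A))=f(G_A\cup\Oset)-f(G_A\cup(\Oset\cap V_A))$; by monotonicity, $f(G_A\cup\Oset)\geq f(\Oset)$; and by the very definition of $\Delta_A$, $f(G_A\cup(\Oset\cap V_A))=f(G_A)+\Delta_A\cdot f(\Oset\cap V_A)$. Chaining these three facts gives
\[
f(\Oset\cap V_B\mid G_A)\;\geq\;f(\Oset)-f(G_A)-\Delta_A\cdot f(\Oset\cap V_A).
\]
Substituting into the previous display and rearranging produces the first stated inequality, since the $f(G_A)$ terms combine to $f(G_A)-(1-e^{-1})f(G_A)=e^{-1}f(G_A)$.

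For the second inequality, I would simply plug the bound $f(G_A)\geq (1+\Delta_A\ln\Delta_A)\cdot f(\Oset\cap V_A)$ from \cref{lem:greedyDelta1} into the $e^{-1}f(G_A)$ coefficient of the first inequality, and group the resulting $f(\Oset\cap V_A)$ terms. I do not expect a real obstacle in this lemma: the nontrivial ``worst-case of Greedy with slack $\Delta_A$'' work is already absorbed into \cref{lem:greedyDelta1}, and the rest is a two-line application of the greedy guarantee together with a standard submodular/monotone detour through $G_A\cup(\Oset\cap V_A)$. The only points worth double-checking are that $|\Oset\cap V_B|=k_B$ (to make the greedy comparison tight on Bob's side) and that the $p=\ell=k_B$ substitution in $1-(1-1/\ell)^p$ gives the $(1-e^{-1})$ factor uniformly in $k_B$.
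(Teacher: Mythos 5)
Your proposal is correct and follows essentially the same route as the paper's proof: apply the greedy guarantee to Bob's augmentation of $G_A$ over $V_B$ against the feasible set $\Oset\cap V_B$, then use submodularity to replace the conditioning set $G_A$ by $G_A\cup(\Oset\cap V_A)$, apply monotonicity and the definition of $\Delta_A$, and finally substitute \cref{lem:greedyDelta1} to obtain the second inequality. The two points you flag for double-checking (that $|\Oset\cap V_B|=k_B$ follows from the WLOG normalization $|\Oset|=k$, and that $(1-1/k_B)^{k_B}\leq e^{-1}$ uniformly) are indeed immediate and pose no obstacle.
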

\begin{proof}
We recall that $X_{k_A} = G_A \cup S_{k_A}$, where $G_A\subseteq V_A$ is a set constructed by Alice and $S_{k_A}$ is a set constructed by Bob as described in \cref{alg:simple2pAlg}. Observe that the following inequality holds.
\begin{equation}\label{eq:basicLBForXkA}
f(X_{k_A}) \geq f(G_A) + (1-e^{-1})\cdot f(\Oset\cap V_B \mid G_A)\enspace.
\end{equation}
Indeed, Bob extends the solution $G_A$ by adding the $k_B$ elements $S_{k_A}$ from $V_B$ using \Greedy. Hence, this corresponds to applying \Greedy to the submodular function maximization problem
\begin{equation*}
\max\left\{f(S \mid  G_A) : S\subseteq V_B, |S|\leq k_B \right\}\enspace.
\end{equation*}
Because \Greedy is a $(1-e^{-1})$-approximation, and $\Oset\cap V_B$ is a feasible solution to the above problem, we have
\begin{equation*}
f(S_{k_A} \mid G_A) \geq \left(1-e^{-1}\right)\cdot f(\Oset \cap V_B \mid G_A)\enspace,
\end{equation*}
thus implying~\eqref{eq:basicLBForXkA} because $f(X_{k_A}) = f(G_A) + f(S_{k_A} \mid G_A)$.

The result now follows due to the following chain of inequalities:
\begin{align*}
f(X_{k_A}) &\geq f(G_A) + \left(1-e^{-1}\right)\cdot f(\Oset\cap V_B \mid G_A)\\
&\geq f(G_A) + (1-e^{-1})\cdot f(\Oset\cap V_B \mid G_A \cup (\Oset\cap V_A))\\
&= f(G_A) + (1-e^{-1})\cdot \left[ f(\Oset\cup G_A) - f(G_A\cup (\Oset\cap V_A)) \right]\\
&\geq f(G_A) + (1-e^{-1})\cdot \left(
f(\Oset) - f(G_A) - \Delta_A\cdot f(\Oset\cap V_A)
\right)\\
&= (1-e^{-1})\cdot f(\Oset) + e^{-1} f(G_A) - (1-e^{-1})\cdot \Delta_A\cdot f(\Oset\cap V_A)\\
&\geq (1-e^{-1})\cdot f(\Oset) - ((1 - e^{-1})\Delta_A - e^{-1} - e^{-1}\Delta_A\ln\Delta_A)\cdot f(\Oset\cap V_A)\enspace,
\end{align*}
where the first inequality comes from~\eqref{eq:basicLBForXkA}, the second one uses the submodularity of $f$, the third inequality uses the definition of $\Delta_A = \frac{f(\Oset\cap V_A \mid G_A)}{f(\Oset \cap V_A)}$ and the monotonicity of $f$, and the last inequality is implied by \cref{lem:greedyDelta1}.
\end{proof}

We now show that if both $\Delta_A$ and $f(\Oset\cap V_A)$ are large, then there are (possibly infeasible) sets on Alice's side of very large submodular value, which can be exploited to complement a greedy solution on Bob's side, leading to a set $Y_{k_A}$ with high submodular value.
\begin{lemma}\label{lem:Yguar}
\begin{equation}\label{eq:Yguar}
f(Y_{k_A}) \geq \frac{1}{2}(1-e^{-1})\cdot f(\Oset)
+ \frac{1}{2}\left(e^{-1} + \Delta_A\ln\Delta_A + (1-e^{-1})\Delta_A\right)\cdot f(\Oset\cap V_A)\enspace,
\end{equation}
where, for $\Delta_A=0$, we interpret $\Delta_A\ln\Delta_A$ as zero.
\end{lemma}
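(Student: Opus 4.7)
The plan is to mirror the structural approach used in the proof of \cref{lem:Xguar}, exploiting the key feature that Alice sends Bob $2k$ greedy elements rather than only $k$. The set $\{a_1,\ldots, a_{2k_A}\}$ plays the role that $G_A\cup(\Oset\cap V_A)$ played in the $X$-argument: a set of cardinality $2k_A$ whose submodular value is provably large when $\Delta_A$ and $f(\Oset\cap V_A)$ are large, and which Bob can leverage because his greedy addition is performed over the pool $\{a_1,\ldots, a_{2k}\}\supseteq\{a_1,\ldots, a_{2k_A}\}$.

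First I would apply the $(1-e^{-1})$ approximation guarantee of \Greedy to Bob's second phase: for every $T\subseteq \{a_1,\ldots, a_{2k}\}$ with $|T|\leq k_A$,
\[
f(Y_{k_A}) \geq f(G_B) + (1-e^{-1})\bigl(f(G_B\cup T) - f(G_B)\bigr).
\]
I would instantiate this with the two complementary sets $T_1 = G_A$ and $T_2 = \{a_{k_A+1},\ldots, a_{2k_A}\}$ that partition $\{a_1,\ldots, a_{2k_A}\}$, take the average of the two resulting inequalities, and invoke submodularity (using $T_1\cap T_2=\varnothing$ and $G_B\cap(T_1\cup T_2)=\varnothing$) to conclude $f(G_B\cup T_1)+f(G_B\cup T_2)\geq f(G_B\cup\{a_1,\ldots, a_{2k_A}\})+f(G_B)$. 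This yields a lower bound of the form
\[
f(Y_{k_A}) \geq \tfrac{1-e^{-1}}{2}\,f(G_B\cup\{a_1,\ldots, a_{2k_A}\}) + \tfrac{1+e^{-1}}{2}\,f(G_B).
\]

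Next I would bound each of the two terms. By monotonicity, $f(G_B\cup\{a_1,\ldots, a_{2k_A}\})\geq f(\{a_1,\ldots, a_{2k_A}\})$, and by viewing $a_{k_A+1},\ldots, a_{2k_A}$ as a $k_A$-step continuation of \Greedy from $G_A$, the standard greedy analysis applied to the problem of $k_A$-augmenting $G_A$ within $V_A$, with $\Oset\cap V_A$ as a candidate, yields $f(\{a_1,\ldots, a_{2k_A}\})\geq f(G_A)+(1-e^{-1})\Delta_A f(\Oset\cap V_A)\geq[1+\Delta_A\ln\Delta_A+(1-e^{-1})\Delta_A]\,f(\Oset\cap V_A)$ after invoking \cref{lem:greedyDelta1}. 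For the second term, since $\Oset\cap V_B$ is a feasible $k_B$-subset of $V_B$, the standard greedy bound combined with submodularity gives $f(G_B)\geq(1-e^{-1})f(\Oset\cap V_B)\geq(1-e^{-1})(f(\Oset)-f(\Oset\cap V_A))$. Substituting these and simplifying should produce a lower bound on $f(Y_{k_A})$ of the form $c_1 f(\Oset) + c_2 f(\Oset\cap V_A)$ for coefficients $c_1,c_2$ depending on $e^{-1}$ and $\Delta_A$. The main obstacle---and the most delicate part of the argument---is to arrange the constants so that $c_1$ lands on the sharp target $\tfrac{1-e^{-1}}{2}$ and $c_2$ lands on $\tfrac{1}{2}(e^{-1}+\Delta_A\ln\Delta_A+(1-e^{-1})\Delta_A)$: a direct substitution yields only $\tfrac{1-e^{-2}}{2}$ on $f(\Oset)$ with a correspondingly weaker $f(\Oset\cap V_A)$-coefficient, so a careful rebalancing between the two summands (e.g., transferring weight from $f(G_B)$ to $f(G_B\cup\{a_1,\ldots, a_{2k_A}\})$ via the monotonicity inequality $f(G_B\cup\{a_1,\ldots, a_{2k_A}\})\geq f(G_B)$, analogous to the ``$e^{-1}f(G_A)$'' manipulation in the proof of \cref{lem:Xguar}) or a tighter exploitation of the decreasing-marginal structure of $a_1,\ldots, a_{2k_A}$ beyond the greedy approximation factor alone is needed. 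The boundary case $\Delta_A=0$ is handled by interpreting $\Delta_A\ln\Delta_A$ as zero, exactly as in \cref{lem:greedyDelta1}.
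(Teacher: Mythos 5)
Your structural outline matches the paper's, and the ancillary estimates are all correct: $f(G_A^{2k_A})\geq f(G_A)+(1-e^{-1})\cdot f(\Oset\cap V_A\mid G_A)$ (then \cref{lem:greedyDelta1}), $f(G_B)\geq (1-e^{-1})\cdot f(\Oset\cap V_B)$, and $f(\Oset\cap V_A)+f(\Oset\cap V_B)\geq f(\Oset)$. The gap is in the very first step. Averaging the $(1-e^{-1})$-approximation guarantee of \Greedy over the complementary pair $T_1,T_2$ and using submodularity yields
\begin{equation*}
f(Y_{k_A}) \geq f(G_B) + \frac{1-e^{-1}}{2}\cdot f\bigl(G_A^{2k_A}\mid G_B\bigr)\enspace,
\end{equation*}
and this coefficient $\frac{1-e^{-1}}{2}$ on the marginal gain is too small: feeding it through your remaining steps gives a bound of the form $\frac{1-e^{-2}}{2}\cdot f(\Oset) + \frac{1-e^{-1}}{2}\bigl[\Delta_A\ln\Delta_A + (1-e^{-1})\Delta_A - e^{-1}\bigr]\cdot f(\Oset\cap V_A)$, which is not dominated by, nor dominates, the claimed inequality~\eqref{eq:Yguar} (for $\Delta_A$ near $1$ and $f(\Oset\cap V_A)$ near $f(\Oset)$ it is strictly weaker). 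The ``rebalancing via monotonicity'' you float also cannot repair this: monotonicity gives $f(G_B\cup G_A^{2k_A})\geq f(G_B)$, which would let you \emph{weaken} an established bound by shifting weight toward $f(G_B)$, but you need the reverse shift, i.e.\ a stronger starting inequality.

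The paper's fix is precisely the other option you mention but do not execute. It does not invoke the $(1-e^{-1})$ approximation ratio of \Greedy at all in this step; instead it uses that the marginal returns of elements added by \Greedy are non-increasing, so in a hypothetical $2k_A$-step greedy extension of $G_B$ over the pool $G_A^{2k_A}$ the first $k_A$ steps capture at least half the total marginal gain, and Bob's actual $k_A$-step run over the larger pool $\{a_1,\ldots, a_{2k}\}$ can only do better. That yields
\begin{equation*}
f(Y_{k_A}) \geq f(G_B) + \frac{1}{2}\cdot f\bigl(G_A^{2k_A}\mid G_B\bigr)\enspace,
\end{equation*}
i.e.\ coefficient $\frac{1}{2}$ rather than $\frac{1-e^{-1}}{2}$, and plugging this into your subsequent estimates gives~\eqref{eq:Yguar} exactly. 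So you have correctly identified where the slack is and what the missing idea looks like, but the derivation as proposed does not reach the stated bound.
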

\begin{proof}
First, observe that the set $G_A^{2k_A}$ is obtained by adding $k_A$ elements from $V_A\setminus G_A$ to $G_A$ using \Greedy. Since \Greedy has an approximation guarantee of $1-e^{-1}$ and the set $\Oset\cap V_A$ is a set of cardinality $k_A$, the $k_A$ elements added by \Greedy to $G_A$ increase the submodular value of the set by at least $(1-e^{-1})\cdot f(\Oset\cap V_A \mid G_A)$. Thus,
\begin{equation*}
f(G_A^{2k_A}) \geq f(G_A) + (1-e^{-1})\cdot f(\Oset\cap V_A \mid G_A)\enspace.
\end{equation*}
Together with \cref{lem:greedyDelta1}, the above inequality implies
\begin{equation}\label{eq:lowBoundGA2}
f(G_A^{2k}) \geq (1+\Delta_A\ln\Delta_A + (1-e^{-1})\Delta_A)\cdot f(\Oset\cap V_A)\enspace.
\end{equation}

We now observe that
\begin{equation}\label{eq:YkABasicLB}
f(Y_{k_A}) \geq f(G_B) + \frac{1}{2} f(G_A^{2k_A} \mid G_B)\enspace.
\end{equation}
Indeed, $Y_{k_A}$ was obtained by starting from $G_B$ and adding $k_A$ elements among the $2k_A$ elements of $G_A^{2k_A}$ to it using \Greedy. Adding all $2k_A$ elements of $G_A^{2k_A}$ to $G_B$ would have increased the submodular value by $f(G_A^{2k_A} \mid G_B)$. Adding half of the $2k_A$ elements of $G_A^{2k_A}$ to $G_B$ increases the value by at least half that much due to the following. When running \Greedy for $2k_A$ steps, each element added in the first half has a marginal return at least as large as each element added in the second half, because marginal returns of elements added by \Greedy are non-increasing. Hence, the first half has a total marginal increase at least as large as the second one, which implies~\eqref{eq:YkABasicLB}.

The statement now follows from the following chain of inequalities.
\begin{align*}
f(Y_{k_A}) &\geq f(G_B) + \frac{1}{2}f(G_A^{2k_A} \mid G_B)\\
&\geq f(G_B) + \frac{1}{2}\left[ (1+\Delta_A\ln\Delta_A +(1-e^{-1})\Delta_A) f(\Oset \cap V_A) - f(G_B) \right]\\
&\geq \frac{1}{2} (1-e^{-1}) \cdot f(\Oset\cap V_B) + \frac{1}{2}\left(
1 + \Delta_A\ln\Delta_A + (1-e^{-1})\Delta_A
\right) \cdot f(\Oset\cap V_A)\\
&\geq \frac{1}{2} (1-e^{-1}) \cdot f(\Oset) + \frac{1}{2} \left(
e^{-1} + \Delta_A \ln\Delta_A + (1-e^{-1})\Delta_A
\right)\cdot f(\Oset\cap V_A)\enspace,
\end{align*}
where the first inequality is due to~\eqref{eq:YkABasicLB} , the second one follows from~\eqref{eq:lowBoundGA2} and the monotonicity of $f$, the third one uses the fact that $G_B$ was obtained by \Greedy, and therefore, $f(G_B) \geq (1-e^{-1})\cdot f(\Oset\cap V_B)$, and the last one follows from $f(\Oset\cap V_A) + f(\Oset\cap V_B) \geq f(\Oset)$.
\end{proof}

Finally, the approximation factor claimed by \cref{thm:compRatio1TwoPl} is obtained by combining the lower bounds provided by \cref{lem:Xguar,lem:Yguar} to bound $\max\{f(X_{k_A}),f(Y_{k_A})\}/f(\Oset)$. To this end, we compute the worst-case value of the two lower bounds for all possibilities of $\Delta_A\in [0,1]$ and $\frac{f(\Oset\cap V_A)}{f(\Oset)} \in [0,1]$. This can be captured through the following nonlinear optimization problem, where, for brevity, we use $x$ for $\Delta_A$ and $y$ for $f(\Oset\cap V_A)/f(\Oset)$.
\begin{equation}\label{eq:optProbFor2PLB}
\begingroup
\renewcommand\arraystretch{1.2}
\begin{array}{rr@{\;\;}c@{\;\;}l}
\min & z & &\\
     & z &\geq &1 - e^{-1} - \left[ (1-e^{-1})x - e^{-1} - e^{-1} x \ln x \right]\cdot y \\
     & z &\geq &\frac{1}{2} (1-e^{-1}) + \frac{1}{2}\left[ e^{-1} + x\ln x + (1-e^{-1}) x \right]\cdot y\\
& x,y &\in &[0,1]\\
& z   &\in &\mathbb{R}
\end{array}
\endgroup
\end{equation}
Thus, the optimal value of~\eqref{eq:optProbFor2PLB} is a lower bound on the approximation ratio of \cref{alg:simple2pAlg}. Together with the following statement, this completes the proof of \cref{thm:compRatio1TwoPl}.

\begin{restatable}{lemma}{lemboundOptProbFor2PLP}\label{lem:boundOptProbFor2PLP}
The optimal value $\alpha$ of Problem~\eqref{eq:optProbFor2PLB} satisfies $\alpha \geq 0.514$.
\end{restatable}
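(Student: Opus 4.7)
The plan is to eliminate $y$ from the two constraints of~\eqref{eq:optProbFor2PLB} by a weighted combination, reducing the question to a one-variable inequality that I then verify on $[e^{-1},1]$. Abbreviate the coefficients multiplying $y$ in the two constraints by $A(x) = (1-e^{-1})x - e^{-1} - e^{-1} x \ln x$ and $B(x) = e^{-1} + x\ln x + (1-e^{-1})x$, so they read $z \geq 1 - e^{-1} - A(x)\,y$ and $z \geq \tfrac{1}{2}(1-e^{-1}) + \tfrac{1}{2} B(x)\,y$. A direct calculation gives $A(e^{-1}) = 0$ with $A'(x) = (1-2e^{-1}) - e^{-1}\ln x > 0$ on $(0,1]$, and $B(x) > 0$ throughout $[0,1]$ (its unique critical point is a minimum taking the positive value $e^{-1} - e^{e^{-1}-2}$). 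Hence $A(x) \leq 0$ on $[0,e^{-1}]$, and in that range the first constraint alone gives $z \geq 1 - e^{-1} > 0.514$, so only $x \in [e^{-1},1]$ needs further analysis.

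For $x \in [e^{-1},1]$, I would use the elementary fact that if $z \geq R_1$ and $z \geq R_2$, then $z \geq (\lambda R_1 + \mu R_2)/(\lambda + \mu)$ for any $\lambda,\mu \geq 0$ with $\lambda + \mu > 0$. Choosing $\lambda = B(x)$ and $\mu = 2A(x)$ annihilates the coefficient of $y$ and yields the $y$-free bound
\begin{equation*}
z \;\geq\; (1-e^{-1})\cdot \frac{A(x)+B(x)}{2A(x)+B(x)} \;=:\; \psi(x).
\end{equation*}
A pleasant simplification gives $A(x)+B(x) = (1-e^{-1})\,x\,(2+\ln x)$, exhibiting $\psi$ as an explicit, smooth function of $x$ alone. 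It remains to show $\psi(x) \geq 0.514$ on $[e^{-1}, 1]$.

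For this final step I would define $\phi(x) := (1-e^{-1})(A(x)+B(x)) - 0.514\,(2A(x)+B(x))$ and verify $\phi \geq 0$ on $[e^{-1},1]$. Differentiating twice gives $\phi''(x) = [(1-e^{-1})^2 - 0.514(1-2e^{-1})]/x > 0$, so $\phi$ is strictly convex on $[e^{-1},1]$ and attains its minimum at a unique interior point $x^*$ determined by $\phi'(x^*) = 0$, i.e.\ $\ln x^* = \bigl[0.514(1-2e^{-1}) - 3(1-e^{-1})\bigl((1-e^{-1})-0.514\bigr)\bigr] \big/ \bigl[(1-e^{-1})^2 - 0.514(1-2e^{-1})\bigr]$. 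The endpoint estimates $\phi(e^{-1}) = e^{-1}(1-e^{-1})\bigl((1-e^{-1}) - 0.514\bigr) > 0$ and $\phi(1) = 2(1-e^{-1})^2 - 0.514(3-4e^{-1}) > 0$ are easy to verify by hand, so the whole inequality reduces to proving $\phi(x^*) \geq 0$. The main obstacle is that this last check is numerically very tight: the minimum of $\psi$ lies only at about $0.5143$, so the margin at $x^* \approx 0.72$ is on the order of $10^{-4}$. A rigorous certificate can nonetheless be obtained by substituting the closed-form expression for $\ln x^*$ back into $\phi(x^*)$ and bounding with sharp rational enclosures of $e^{-1}$ and of the relevant logarithm; alternatively, exploiting the strict convexity of $\phi$, one verifies $\phi \geq 0$ at a small finite grid of rational test points bracketing $x^*$ and extends the bound to the full interval by a chord/tangent argument.
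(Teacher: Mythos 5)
Your proposal is correct in outline, and it takes a genuinely different route from the paper. The paper identifies the exact optimizer $(x^*,y^*)$ of the two-variable program by ruling out boundary solutions and then applying the necessary Karush--Kuhn--Tucker conditions; this produces the stationarity equation $\ln x^* + 3 - (e+1)x^* = 0$, after which the optimal $y^*$ is obtained from $f_1(x^*,y^*)=f_2(x^*,y^*)$ and the claim $z^* \geq 0.514$ is checked by substitution. You instead eliminate $y$ at the outset by the nonnegative combination $\lambda=B(x)$, $\mu=2A(x)$ (valid on $[e^{-1},1]$, where you correctly observe both weights are nonnegative since $A(e^{-1})=0$ and $A'>0$), yielding the clean $y$-free lower bound $z\geq\psi(x)=(1-e^{-1})(A+B)/(2A+B)$, with the pleasant identity $A+B=(1-e^{-1})x(2+\ln x)$; the case $x\leq e^{-1}$ is handled directly from the first constraint since $A\leq 0$ there and $y\geq 0$. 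This collapses the problem to a single-variable convexity argument: $\phi''(x)=\bigl[(1-e^{-1})^2 - 0.514(1-2e^{-1})\bigr]/x >0$, so it suffices to certify $\phi\geq 0$ at the two endpoints and at the unique interior critical point. Your approach buys a simpler structure — no Lagrange multipliers, no boundary case analysis for $y$, and a convexity argument that localizes the minimum cleanly — at the cost of working with a lower bound $\psi$ rather than the exact optimum (though, since both constraints are active at the true optimum, $\psi$ is in fact tight there, so nothing is lost). The trade-off you flag is real and shared with the paper: in both approaches the final numerical margin at $x^*\approx 0.72$ is of order $10^{-4}$, so a rigorous certificate requires tight rational enclosures of $e^{-1}$ and $\ln x^*$ (or the grid-plus-chord argument you sketch). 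The paper does not carry this out rigorously either — it only asserts that ``one can easily check'' — so in terms of what is actually verified your proposal is on equal footing, while being somewhat more self-contained in its optimality reasoning.
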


One easy way to do a quick sanity check of \cref{lem:boundOptProbFor2PLP} is by solving~\eqref{eq:optProbFor2PLB} via standard numerical optimization methods, which is not difficult due to the fact that the problem has only $3$ variables and only $2$ non-trivial constraints (which are smooth). Nevertheless, we formally prove \cref{lem:boundOptProbFor2PLP} by providing an analytical description of the unique optimal solution of~\eqref{eq:optProbFor2PLB} through the use of the necessary Karush-Kuhn-Tucker optimality conditions. Because this is a standard approach for deriving optima, we defer the proof of \cref{lem:boundOptProbFor2PLP} to \cref{app:poly2players}.



\bibliographystyle{alpha}
\bibliography{ref}

\newpage
\appendix
\section{Formal Connections to the Applications}

In this section we give the proofs of the theorems from \cref{sec:intro} showing the formal connections between \maxcard and the applications we show for it.
\subsection{Proof of \texorpdfstring{\cref{thm:streaming-hardness}}{Theorem~\ref*{thm:streaming-hardness}}}\label{app:streaming-hardness}
\thmstreaminghardness*
\begin{proof}
    Let $\cA$ be a data stream algorithm for \maxcard with an approximation guarantee of $1/2+\varepsilon$, and consider the following $p$-player protocol. 
    The first player feeds to $\cA$ the elements of her input $V_1 \subseteq W$ in any order. 
    She then sends the memory state of $\cA$ to the second player, who feeds her own elements $V_2 \subseteq W$ to $\cA$ before forwarding the resulting memory state of $\cA$ to the next player, and so on. 
    The last player finally feeds $V_p \subseteq W$ to $\cA$ and outputs the same set as $\cA$. It is clear that the output of the last player is the same as that of running $\cA$ on a stream consisting of elements $V_1 \cup V_2 \cup \cdots \cup V_p$.   
    Therefore, for any $p\geq 2$ the protocol satisfies that (i)  its approximation guarantee  is $1/2+\varepsilon$ and, by definition, (ii) the size of any message sent by the players is at most the memory usage of $\cA$.
    Now, by selecting $p=k$ large enough as a function of $\varepsilon$,   \cref{thm:intro_gen_hardness} implies that the memory usage of $\cA$  must be $\Omega\rb{\varepsilon N/ p^3} = \Omega\rb{\varepsilon s/k^3}$, where we used the equality $p=k$ and the fact that $s = |V_1| + |V_2| + \ldots +  |V_p|$ is at most  $ |W| = N$ since the sets $V_1, V_2, \ldots, V_p$ are disjoint.
\end{proof}
\subsection{Proof of \texorpdfstring{\cref{thm:robust_reduction_simplified}}{Theorem~\ref*{thm:robust_reduction}}} \label{app:robustness}

We begin this section we a formal restatement of \cref{thm:robust_reduction_simplified}.

\let\oldthetheorem\thetheorem
\renewcommand{\thetheorem}{\getrefnumber{thm:robust_reduction_simplified}}
\begin{theorem} \label{thm:robust_reduction}
Assume we are given a protocol $\cP$ for \maxcard in the two-player model which has the properties that
\begin{enumerate}[label=(\roman*),itemsep=0em]
	\item Alice does not access in any way elements that do not belong to the set $V_A$. In particular, she does not query $f$ on subsets including such elements and neither includes such elements in the message to Bob.
	\item Let $M$ be the set of elements that explicitly appear in the message of Alice. Then, Bob does not access in any way elements that do not belong to the set $M \cup V_B$. In particular, he does not query $f$ on subsets including such elements and neither includes such elements in the output set he generates.
\end{enumerate}
Then, there exists an algorithm $\cA$ for robust submodular maximization whose approximation guarantee is at least as good as the approximation guarantee of $\cP$, and the number of elements in the summary of $\cA$ is $O(d \cdot \text{\{communication complexity of $\cP$ in elements\}})$. Furthermore, if $\cP$ runs in polynomial time, then so does $\cA$.
\end{theorem}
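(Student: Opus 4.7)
The plan is to adapt a peeling scheme in the spirit of~\cite{pmlr-v70-mirzasoleiman17a}: run Alice of $\cP$ sequentially $d+1$ times, each time pretending that the ground set has been reduced by the elements already committed to the summary. Concretely, the summary procedure sets $R_1 := W$ and, for $j = 1,\ldots,d+1$, simulates Alice of $\cP$ on input $V_A := R_j$ to obtain a message $M_j$ (a set of elements), then updates $R_{j+1} := R_j \setminus M_j$. The stored summary is $M := M_1 \cup \cdots \cup M_{d+1}$. By property~(i), $M_j \subseteq R_j$, so the sets $M_1,\ldots,M_{d+1}$ are pairwise disjoint; in particular, $|M| \leq (d+1)\cdot c(\cP)$, where $c(\cP)$ denotes the communication complexity of $\cP$ in elements, giving the desired summary bound.

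\medskip

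\noindent After the adversary reveals $D$ with $|D|\leq d$, pairwise disjointness together with $|D|\leq d$ yields, by pigeonhole, a smallest index $j^{\star}\in\{1,\ldots,d+1\}$ with $M_{j^{\star}}\cap D = \emptyset$. The query procedure will then replay Bob of $\cP$ on the virtual two-player instance defined by
\[
V_A := R_{j^{\star}},\qquad V_B := (M_1 \cup \cdots \cup M_{j^{\star}-1})\setminus D,\qquad \text{Alice's message} := M_{j^{\star}}\enspace,
\]
and output whatever Bob returns. The key consistency check will be that the stored $M_{j^{\star}}$ is exactly the message Alice would emit in this virtual instance: the summary procedure genuinely executed Alice on input $V_A = R_{j^{\star}}$ with random bits fixed once and for all at the start, and by property~(i) that execution accessed only elements of $R_{j^{\star}}$, so the produced message is a function purely of $R_{j^{\star}}$ and the pre-fixed randomness.

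\medskip

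\noindent The three advertised conclusions then follow. \emph{Validity}: by property~(ii) Bob's output is contained in $M_{j^{\star}}\cup V_B$, which lies in $M\setminus D$. \emph{Approximation}: since $V_A\cap V_B = \emptyset$ and $V_A\cup V_B = W\setminus\bigl(D\cap(M_1\cup\cdots\cup M_{j^{\star}-1})\bigr) \supseteq W\setminus D$, the approximation guarantee of $\cP$ gives
\[
\expected{f(\text{output})} \;\geq\; \alpha\cdot\max_{T\subseteq V_A\cup V_B,\,|T|\leq k} f(T) \;\geq\; \alpha\cdot\max_{Z\subseteq W\setminus D,\,|Z|\leq k} f(Z)\enspace,
\]
matching the approximation $\alpha$ of $\cP$. \emph{Runtime}: the summary issues $d+1$ sequential calls to Alice and the query one call to Bob, so a polynomial-time $\cP$ yields a polynomial-time $\cA$.

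\medskip

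\noindent\textbf{Where the work lives.} The creative step is designing the peeling so that the messages $M_1,\ldots,M_{d+1}$ are forced to be pairwise disjoint, which in turn ensures by pigeonhole that at least one of them escapes $D$; this disjointness is precisely what property~(i) buys us. Once a safe $j^{\star}$ is identified, property~(ii) automatically traps Bob's output inside $M\setminus D$, and the two-player guarantee transfers cleanly because the virtual $V_A\cup V_B$ is a superset of $W\setminus D$. Beyond these two structural conditions on $\cP$, everything reduces to pigeonhole and simple set arithmetic.
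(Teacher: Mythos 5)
Your proposal reproduces the paper's own proof: both run $d+1$ copies of Alice on a peeling sequence $R_1\supseteq R_2\supseteq\dotsb$, observe via property~(i) that the messages $M_1,\dotsc,M_{d+1}$ are pairwise disjoint, invoke pigeonhole to find some $\ell$ with $M_\ell\cap D=\varnothing$, and then replay Bob of that copy on $V_B=(M_1\cup\dotsb\cup M_{\ell-1})\setminus D$, using property~(ii) for feasibility and the inclusion $V_A\cup V_B\supseteq W\setminus D$ for the approximation guarantee. The only cosmetic difference is that the paper initializes $d+1$ fresh independent copies $\cP_1,\dotsc,\cP_{d+1}$ whereas you speak of committing randomness once up front; this does not change the argument.
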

\let\thetheorem\oldthetheorem \addtocounter{theorem}{-1}

Observe that the properties required from $\cP$ by \cref{thm:robust_reduction} imply that $\cP$ completely ignores the sets $W$, $W_A$, and $W_B$ even though these sets are part of the global information in the formal description of the two-player model in \cref{sec:prelim_model}. Thus, the algorithm $\cA$ we design may use $\cP$ without specifying these sets. We give below, as \cref{alg:robustness}, the algorithm $\cA$. The design of this algorithm is based on a technique of~\cite{pmlr-v70-mirzasoleiman17a}. In a nutshell, the algorithm uses $d + 1$ independent copies of the protocol $\cP$ and manages to guarantee that one of them gets exactly the elements that have not been deleted.

\begin{algorithm}
\caption{Algorithm $\cA$ of \cref{thm:robust_reduction}} \label{alg:robustness}
\textbf{Summary Procedure}
\begin{algorithmic}[1]
\State Let $V$ be the ground set.
\For{$i = 1$ to $d + 1$} 
    \State Initialize a new independent copy $\cP_i$ of $\cP$.
    \State Pass $V \setminus (\bigcup_{j = 1}^{i - 1} M_j)$ as the input $V_A$ of Alice of $\cP_i$.
    \State Let $M_i$ be the set of elements explicitly appearing in the message sent by Alice in $\cP_i$.
\EndFor
\State The summary is the set of the messages sent by the Alices of all the protocols $\cP_1, \cP_2, \dotsc, \cP_{d + 1}$.
\end{algorithmic}
\textbf{Query Procedure}
\begin{algorithmic}[1]
\State Let $D$ be the set of elements that have been deleted.
\State Let $\ell$ be a value in $[d + 1]$ such that $M_{\ell} \cap D = \varnothing$ (we prove in the main text that such a value exists).
\State Pass the set $(\bigcup_{j = 1}^{\ell - 1} M_j) \setminus D$ as the input $V_B$ to Bob of $\cP_\ell$.\\
\Return{The set produced by Bob of $\cP_\ell$}.
\end{algorithmic}
\end{algorithm}

We begin the analysis of \cref{alg:robustness} with the following technical observation.
\begin{observation}
The sets $M_1, M_2, \dotsc, M_{d + 1}$ are pairwise disjoint.
\end{observation}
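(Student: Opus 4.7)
The plan is to derive the disjointness directly from property~(i) of the protocol $\cP$ guaranteed in the statement of \cref{thm:robust_reduction}. Recall that this property states that Alice never accesses elements outside her input set $V_A$, and in particular the message she produces only explicitly mentions elements of $V_A$. Thus the set $M_i$ of elements appearing in the message of Alice of $\cP_i$ is always contained in whatever set was passed to that Alice as $V_A$.

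I would carry out the argument in the following short sequence of steps. First, fix an index $i \in [d+1]$ and note that, by construction of \cref{alg:robustness}, Alice of $\cP_i$ receives as input exactly $V \setminus \bigl(\bigcup_{j=1}^{i-1} M_j\bigr)$. Second, applying property~(i) of $\cP$ to this copy yields
\[
    M_i \;\subseteq\; V \setminus \bigcup_{j=1}^{i-1} M_j\enspace,
\]
which immediately implies $M_i \cap M_j = \varnothing$ for every $j < i$. Third, since this holds for every $i \in [d+1]$ and disjointness is symmetric in the two indices, we conclude that the sets $M_1, M_2, \dotsc, M_{d+1}$ are pairwise disjoint, as claimed.

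There is really no substantive obstacle here: the observation is essentially a bookkeeping consequence of the ``non-leakage'' property~(i) combined with the inductive definition of the inputs fed to the successive copies $\cP_1, \dotsc, \cP_{d+1}$. The main content of the reduction (and the place where the other natural property~(ii) of $\cP$ will eventually be invoked) lies in analyzing the query procedure; but for the observation itself only property~(i) is needed, and only in the most direct way. Note also that this observation is exactly what is required to justify the existence of an index $\ell \in [d+1]$ with $M_\ell \cap D = \varnothing$ in the query procedure: the $M_i$'s are $d+1$ pairwise disjoint subsets of $V$, while $|D| \leq d$, so by pigeonhole at least one of them avoids $D$.
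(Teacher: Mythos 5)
Your proof is correct and matches the paper's argument essentially verbatim: both derive $M_i \subseteq V \setminus \bigcup_{j<i} M_j$ from property~(i) applied to the input fed to Alice of $\cP_i$, and conclude pairwise disjointness immediately.
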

\begin{proof}
For every $i \in [d + 1]$, the set $M_i$ includes only elements that appear in the message generated by Alice of $\cP_i$. By the assumptions on $\cP$ in \cref{thm:robust_reduction}, $M_i$ must be a subset of the set $V_A$ passed to Alice of $\cP_i$, which implies
\[
    M_i \subseteq V \setminus \left(\bigcup_{j = 1}^{i - 1} M_j\right)
    \implies
    M_i \cap M_j = \varnothing \quad \forall\; j \in [i - 1]
    \enspace.
    \qedhere
\]
\end{proof}
\begin{corollary}
At least one of the sets $M_1, M_2, \dotsc, M_{d + 1}$ is disjoint from $D$ because $|D| = d$.
\end{corollary}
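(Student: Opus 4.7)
The plan is to argue by contradiction using the pairwise disjointness established in the preceding observation, together with a simple pigeonhole count. Concretely, suppose toward contradiction that $M_i \cap D \neq \varnothing$ for every $i \in [d+1]$. Then for each such $i$, I can pick some element $e_i \in M_i \cap D$.

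Next, I would use the fact that the sets $M_1, M_2, \dotsc, M_{d+1}$ are pairwise disjoint (by the previous observation) to conclude that the chosen elements $e_1, e_2, \dotsc, e_{d+1}$ are pairwise distinct: if $e_i = e_j$ for some $i \neq j$, then this common element would lie in $M_i \cap M_j$, contradicting disjointness. Hence $\{e_1, \dotsc, e_{d+1}\}$ is a set of $d+1$ distinct elements all contained in $D$, which gives $|D| \geq d+1$ and contradicts the hypothesis $|D| = d$.

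This forces the existence of at least one index $\ell \in [d+1]$ with $M_\ell \cap D = \varnothing$, which is exactly what the corollary claims (and which justifies the choice of $\ell$ used in the query procedure of \cref{alg:robustness}). There is no real obstacle here: the argument is a one-line pigeonhole application, and the only subtlety is making explicit that disjointness of the $M_i$ turns $d+1$ nonempty intersections into $d+1$ distinct elements of $D$.
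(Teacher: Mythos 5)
Your argument is correct and is exactly the pigeonhole reasoning the paper leaves implicit: it states the corollary with the in-line justification ``because $|D| = d$'' immediately after the observation that the $M_i$ are pairwise disjoint, and your proof simply spells that out. No difference in approach.
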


The last corollary implies that \cref{alg:robustness} can indeed find a value $\ell$ with the promised properties, and therefore, it is well defined. One can also observe that the summary produced by \cref{alg:robustness} consists of $d + 1$ messages of $\cP$, and thus, the number of elements in this summary is $d + 1 = O(d)$ times the communication complexity in elements of $\cP$. Finally, it is clear from the description of \cref{alg:robustness} that this algorithm runs in polynomial time when the algorithms of Alice and Bob in $\cP$ run in polynomial time. Hence, to prove \cref{thm:robust_reduction}, it only remains to argue that \cref{alg:robustness} always outputs a feasible solution and that it inherits the approximation guarantee of $\cP$.

From this point on, we denote by $\widehat{S}$ the output set of \cref{alg:robustness} and by $V_A$ and $V_B$ the input sets of Alice and Bob of $\cP_\ell$, respectively. Using this notation, we get the following observation.
\begin{observation}
$\widehat{S} \subseteq V \setminus D$, and therefore, \cref{alg:robustness} is guaranteed to produce a feasible solution.
\end{observation}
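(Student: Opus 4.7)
The plan is to combine the two access restrictions imposed on $\cP$ in Theorem~\ref{thm:robust_reduction} with the construction of the sets $V_B$ and $\ell$ used in Algorithm~\ref{alg:robustness}. First I would apply property~(ii) of $\cP$ to the execution of $\cP_\ell$: this property guarantees that the output $\widehat{S}$ of Bob of $\cP_\ell$ is contained in $M_\ell \cup V_B$, where, in this instance, $V_B = (\bigcup_{j=1}^{\ell-1} M_j) \setminus D$ by definition of the Query Procedure.

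Next, the core step is to argue that both components of this superset are disjoint from $D$, and that both lie inside $V$. The set $M_\ell$ is disjoint from $D$ by the very choice of $\ell$, whose existence is guaranteed by the preceding corollary; the set $V_B$ is disjoint from $D$ by construction, since its defining expression explicitly subtracts $D$. Moreover, by property~(i) applied to each copy $\cP_i$, every $M_i$ is a subset of the corresponding $V_A$ passed to Alice of $\cP_i$, which by the Summary Procedure is in turn a subset of $V$. Combining these inclusions gives $\widehat{S} \subseteq M_\ell \cup V_B \subseteq V \setminus D$, which is exactly the containment we need.

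Feasibility then requires one final ingredient, namely that $|\widehat{S}| \leq k$. This is immediate from the specification of $\cP$ as a protocol for \maxcard: Bob always outputs a set of size at most $k$, regardless of the inputs fed into the protocol. There is no genuine obstacle in this argument; the observation is a short syntactic consequence of the hypotheses placed on $\cP$ in Theorem~\ref{thm:robust_reduction} together with the careful construction in Algorithm~\ref{alg:robustness} which ensures (via the pairwise disjointness of the $M_j$'s and the counting bound $|D|=d$) that a suitable index $\ell$ with $M_\ell \cap D = \varnothing$ exists.
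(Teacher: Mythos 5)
Your proposal is correct and follows essentially the same route as the paper: invoke property~(ii) to get $\widehat{S}\subseteq M_\ell\cup V_B$, then use the choice of $\ell$ and the explicit subtraction of $D$ in the definition of $V_B$ to conclude. The small additional step you spell out (that each $M_i\subseteq V$ via property~(i)) is indeed the implicit justification the paper relies on for $M_\ell\cup V_B\subseteq V$.
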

\begin{proof}
By the assumptions about $\cP$ in \cref{thm:robust_reduction}, the output set of $\cP_\ell$ is a subset of $M_\ell \cup V_B$. Since this output set becomes the output set $\widehat{S}$ of \cref{alg:robustness}, we get
\[
    \widehat{S}
    \subseteq
    M_\ell \cup V_B
    =
    M_\ell \cup \left[\left(\bigcup_{j = 1}^{\ell - 1} M_j\right) \setminus D\right]
    \subseteq
    V \setminus D
    \enspace,
\]
where the last inclusion holds because $M_\ell$ does not include any elements of $D$ by the choice of $\ell$.
\end{proof}

We now get to analyzing the approximation ratio of \cref{alg:robustness}.
\begin{lemma}
Let $\alpha$ denote the approximation guarantee of $\cP$, then the approximation guarantee of \cref{alg:robustness} is at least $\alpha$.
\end{lemma}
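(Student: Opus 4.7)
The plan is to show, for any deletion set $D$ with $|D|\leq d$, that the two-player instance on which the selected copy $\mathcal{P}_\ell$ is executed has an optimum of value at least $\mathrm{OPT}_D := \max_{Z\subseteq V\setminus D,\,|Z|\leq k} f(Z)$; invoking the approximation guarantee of $\mathcal{P}$ applied to $\mathcal{P}_\ell$ will then deliver the bound $\mathbb{E}[f(\widehat{S})] \geq \alpha\cdot \mathrm{OPT}_D$, which is exactly the robustness guarantee required.

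First I would set $U_\ell = \bigcup_{j=1}^{\ell-1} M_j$ and observe that Alice of $\mathcal{P}_\ell$ is fed $V_A = V\setminus U_\ell$ while Bob of $\mathcal{P}_\ell$ is fed $V_B = U_\ell\setminus D$. These two sets are disjoint---the first lies outside $U_\ell$ whereas the second lies inside it---so $\mathcal{P}_\ell$ is run on a legitimate two-player instance. The two assumed properties of $\mathcal{P}$ guarantee that the abstract partition $W_A, W_B$ of the model is irrelevant to the behavior of the protocol, so any partition with $V_A\subseteq W_A$ and $V_B\subseteq W_B$ can be used.

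Next I would verify the key set identity
\[
V_A\cup V_B \;=\; (V\setminus U_\ell)\cup (U_\ell\setminus D) \;=\; V\setminus (U_\ell\cap D) \;\supseteq\; V\setminus D,
\]
from which the inequality $\max_{S\subseteq V_A\cup V_B,\,|S|\leq k} f(S) \geq \mathrm{OPT}_D$ is immediate, since any feasible robust solution is also feasible inside the $\mathcal{P}_\ell$-instance.

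Finally, applying the approximation guarantee of $\mathcal{P}$ to the copy $\mathcal{P}_\ell$ yields
\[
\mathbb{E}[f(\widehat{S})] \;\geq\; \alpha\cdot \max_{S\subseteq V_A\cup V_B,\,|S|\leq k} f(S) \;\geq\; \alpha\cdot \mathrm{OPT}_D.
\]
The main subtlety is that $\ell$ itself is determined by the (possibly random) messages $M_1,\ldots,M_{d+1}$, so transferring $\mathcal{P}$'s approximation guarantee to the data-dependent copy $\mathcal{P}_\ell$ requires a small conditioning argument: conditioning on $M_1,\ldots,M_{\ell-1}$ fixes the inputs $V_A$ and $V_B$ of $\mathcal{P}_\ell$, after which the per-input approximation guarantee of $\mathcal{P}$ can be applied to the remaining randomness of this copy.
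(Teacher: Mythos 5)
Your proof is essentially identical to the paper's: the same set identity $V_A\cup V_B = V\setminus\bigl(D\cap\bigcup_{j<\ell}M_j\bigr)\supseteq V\setminus D$, after which $\mathcal{P}$'s per-input approximation guarantee is invoked directly. The closing conditioning remark is extra care that the paper itself omits, though as phrased it does not fully close the subtlety it raises (the event $\{\ell=i\}$ also depends on $M_i$, so conditioning on $\ell$ can bias Alice's randomness in copy $i$); this is moot for the deterministic protocols the paper actually plugs in for $\mathcal{P}$.
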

\begin{proof}
We begin by observing that
\[
    V_A \cup V_B
    =
    \left[V \setminus \left(\bigcup_{j = 1}^{\ell - 1} M_j\right)\right] \cup \left[\left(\bigcup_{j = 1}^{\ell - 1} M_j\right) \setminus D\right]
    =
    V \setminus \left[D \cap \left(\bigcup_{j = 1}^{\ell - 1} M_j\right)\right]
    \supseteq
    V \setminus D
    \enspace.
\]
Thus, by the approximation guarantee of $\cP$,
\[
    f(\hat{S})
    \geq
    \alpha \cdot \max_{\substack{S \subseteq V_A \cup V_B \\ |S| \leq k}} \mspace{-9mu} f(S)
    \geq
    \alpha \cdot \max_{\substack{S \subseteq V \setminus D \\ |S| \leq k}} \mspace{-9mu} f(S)
    \enspace.
    \qedhere
\]
\end{proof}
\section{A Sketch of an Alternative Model} \label{app:alternative_model}

As mentioned in \cref{sec:prelim_model}, there are multiple natural models that can be used to formulate our problem. In this section we sketch one such model which appears quite different from the model used throughout the rest of the paper. Nevertheless, our results can be extended to this model (up to minor changes in the proved bounds). For brevity, we present the model directly for the $p$-player case instead of presenting first its two-player version.

The global information in an instance of this model consists of a ground set $W$, a non-negative set function $f\colon 2^W \to \nnR$ and a partition of $W$ into $p$ disjoint sets $W_1, W_2, \dotsc, W_p$, where as usual, one should think of $W_i$ as the set of elements player number $i$ might get. In addition, each player $i \in [p]$ has access to private information consisting of a set $V_i \subseteq W_i$ of elements that this player actually gets. It is also guaranteed that $f$ is a monotone and submodular function when restricted to the domain $2^{\bigcup_{i = 1}^p V_i}$. Like in our regular model, the objective of the players in this model is to find a set $S \subseteq \bigcup_{i = 1}^p V_i$ of size $k$ maximizing $f$ among all such sets; and they can do that via a one-way communication protocol in which player $1$ sends a message to player $2$, player $2$ sends a message to player $3$ and so on, until player $p$ gets a message from player $p - 1$, and based on this message produces the output set.

There are two main differences between this model and our regular model.
\begin{itemize}
    \item In our regular model, the objective function is part of the private information, and each player $i \in [p]$ has access only to the restriction of this function to $\bigcup_{j = 1}^i W_i$. In contrast, in the model suggested here, the objective function is a global information accessible to all players, which simplifies the model and creates a nice symmetry between the players.
    \item Since the objective function is now available to all the players from the very beginning, to avoid leaking information to the early players about the parts of the instance that should be revealed only to the later players, the sets $W_i$ corresponding to these late players should include many elements that might end up in $V_i$ in different scenarios. In particular, since $W_i$ is large, many of the elements in $W_i$ might never end up in $V_i$ together, and thus, there is no reason to require all these elements to form together one large submodular function. Accordingly, the model does not require $f$ to be monotone and submodular on all of $W$. Instead, it only requires $f$ to be monotone and submodular over the elements that really arrive to some player.
\end{itemize}
\section{Hardness of the \texorpdfstring{\chainP}{CHAINp} Problem}
\label{app:pindex}

For completeness, we adapt  one of the many hardness proofs for the standard INDEX problem  to get the mentioned hardness result for the \chainP. See \cref{sec:index-model} for the definition of the \chainP problem. For convenience, we recall first the theorem statement.
\thmpindex*
The proof that we present is both based on insightful discussions with Michael Kapralov and on lecture notes from his course ``Sublinear Algorithms for Big Data Analysis''. We start with some preliminaries. We then introduce a distribution of instances $D^p$ and,  finally, we show that no protocol can have a ``good'' success probability on instances sampled from  $D^p$ without having a ``large'' communication complexity. Throughout this section, we  simply refer to the \chainPn problem as the \chainP problem because $n$ will be clear from context.

\paragraph{Basic notation and Fano's inequality.}
For discrete random variables $X,Y$, and $Z$, define
\begin{itemize}
    \item the entropy $H(X) = \sum_{x}  \Pr[X = x] \log_2(1/\Pr[X=x])$;
    \item the conditional entropy $H(X \mid Y) = \sum_y \Pr[Y= y] H(X \mid Y= y) = \E_Y H(X \mid Y = y)$; 
    \item the mutual information $I(X; Y) = H(X)- H(X\mid Y)$; and 
    \item the conditional mutual information $I(X;Y \mid Z) = H(X \mid Z) - H(X \mid Y, Z)$.
\end{itemize}  
We also use the following well-known relations:
\begin{itemize}
    \item symmetry of mutual information: $I(X;Y) = I(Y;X)$; and
    \item chain rule for mutual information: $I(X; Y, Z) = I(X;Z) + I(X;Y \mid Z)$.
\end{itemize}
The proof relies on Fano's inequality.
\begin{theorem}
    Let $X$ and $Y$ be discrete random variables and $g$ an estimator (based on $Y$) of $X$ such that 
    $\Pr[g(Y) \neq  X] \leq \delta$ and $g(Y)$ only takes values in the support $\supp(X)$ of $X$. Then,
    \begin{align*}
        H(X \mid Y) \leq H_2(\delta)  + \delta\log_2(|\supp(X)| -1)\enspace,
    \end{align*}
    where $H_2(\delta) = \delta \log_2(1/\delta) + (1-\delta) \log_2(1/(1-\delta))$ is the binary entropy at $\delta$.
\end{theorem}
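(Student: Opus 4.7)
The plan is to bound $H(X \mid Y)$ by introducing the binary indicator random variable $E = \mathbbm{1}[g(Y) \neq X]$ and then applying the chain rule for conditional entropy in two different orderings. The key observation is that $E$ is a deterministic function of $X$ and $Y$ (assuming $g$ is deterministic; otherwise augment $Y$ with the independent randomness used by $g$, which does not affect $H(X\mid Y)$), so $H(E \mid X, Y) = 0$. Applying the chain rule to the joint conditional entropy $H(X, E \mid Y)$ in both possible orders would yield the identity
\[
H(X \mid Y) + H(E \mid X, Y) \;=\; H(E \mid Y) + H(X \mid E, Y),
\]
which simplifies to $H(X \mid Y) = H(E \mid Y) + H(X \mid E, Y)$.

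Next I would bound each of the two resulting terms separately. For the first term, since conditioning reduces entropy and $E$ is a binary random variable with $\Pr[E = 1] = \Pr[g(Y) \neq X] \leq \delta$, we get $H(E \mid Y) \leq H(E) \leq H_2(\delta)$, where the last step uses monotonicity of $H_2$ on $[0,1/2]$ (or handles $\delta > 1/2$ trivially). For the second term, I would expand over the two values of $E$:
\[
H(X \mid E, Y) \;=\; (1-\delta)\,H(X \mid E = 0, Y) \;+\; \delta\, H(X \mid E = 1, Y).
\]
When $E = 0$, the equality $X = g(Y)$ makes $X$ a deterministic function of $Y$, so $H(X \mid E = 0, Y) = 0$. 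When $E = 1$, the value of $X$ lies in $\supp(X) \setminus \{g(Y)\}$, a set of cardinality at most $|\supp(X)| - 1$, so by the standard log-cardinality upper bound, $H(X \mid E = 1, Y) \leq \log_2(|\supp(X)| - 1)$.

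Combining these bounds yields $H(X \mid Y) \leq H_2(\delta) + \delta \log_2(|\supp(X)| - 1)$, which is exactly the claimed inequality. The only subtle point is ensuring that $E$ is genuinely a function of $(X, Y)$ — this is automatic for deterministic $g$, and for randomized $g$ one absorbs the private coins into a suitably extended $Y$ before running the same argument. No other obstacles arise: the proof is a direct two-step entropy calculation that only uses the chain rule, non-negativity of entropy, the fact that conditioning does not increase entropy, and the trivial upper bound $H(Z) \leq \log_2|\supp(Z)|$.
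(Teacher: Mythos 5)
The paper cites this result as Fano's inequality and does not prove it, so there is no internal argument to compare against; your proof is the standard textbook derivation and is correct. Two small remarks. First, a bookkeeping point: in the expansion of $H(X\mid E,Y)$ the weight on the $E=1$ branch is the actual error probability $P_e = \Pr[g(Y)\neq X]\leq \delta$ rather than $\delta$ itself, but since the $E=0$ branch contributes $0$ (there $X=g(Y)$ is determined by $Y$) this only tightens the bound, so the final inequality still follows. Second, your parenthetical that ``$\delta>1/2$ is handled trivially'' is too quick: the inequality as written is actually \emph{false} for $\delta>1/2$ --- take $X$ uniform on $\{0,1\}$, $Y$ constant, and $g\equiv 0$, so that $P_e=1/2\leq\delta$ but $H(X\mid Y)=1>H_2(\delta)+\delta\log_2 1$. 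The standard statement carries the implicit restriction $\delta\leq 1/2$ (equivalently, one replaces $\delta$ by $P_e$ throughout), which is harmless here since the paper only ever invokes Fano with $\delta\in(0,1/2)$.
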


\paragraph{Distribution $D^p$ of \chainP instances.}

Our hardness uses Yao's minimax principle. Specifically, we give a distribution over \chainP instances so that any deterministic protocol with a ``good'' success probability must have  a ``large'' communication complexity.
To define a \chainP instance, we use the superscript to indicate the input of each player. For example, $x^i \in \{0,1\}^n, t^i \in [n]$ denotes the $n$-bit string $x^i$ and index $t^i$ given to the $i$-th  player.   A \chainP instance is thus defined by $x^1, t^2, x^2, t^3, \ldots, x^{p-1}, t^p$ where $x^i \in \{0,1\}^n$ and $t^{i+1} \in [n]$ for $i\in [p-1]$. 

We now define a distribution $D^p$ over such instances and let $X^i, T^{i+1}$ be the discrete random variables corresponding to $x^i$ and $t^{i+1}$, respectively. 
For $z\in \{0, 1\}$, let $B_z = \{ (x, t) \in \{0,1\}^n \times [n] : x_t = z\}$. The joint distribution $D^p$ over the 
the discrete random variables $X^1, T^2, X^2, T^3,  \ldots, X^{p-1}, T^p$ is now defined by the following sampling procedure.
\begin{itemize}[itemsep=0.1em, parsep=0em, topsep=0.2em]
    \item Select $z \in \{0,1\}$ uniformly at random.
    \item For $i \in [p-1]$ select $(x^i, t^{i+1}) \in B_z$ uniformly at random.
    \item Output  $x^1, t^2, \ldots, x^{p-1}, t^p$. 
\end{itemize}
In other words, $\Pr[X^1 = x^1, T^2 = t^2 ,\ldots, X^{p-1}= x^{p-1}, T^p = t^p]$ equals the probability that the above procedure outputs $x^1,t^2, \ldots, x^{p-1}, t^p$.

Note that an alternative equivalent procedure for obtaining a sample from $D^p$ is
    \begin{itemize}[itemsep=0.1em, parsep=0em, topsep=0.2em]
        \item Select $x^1 \in \{0,1\}^n$ uniformly at random.
        \item Select $t^2 \in [n]$ uniformly at random.
        \item Set $z = x^1_{t^2}$.
        \item For $i = 2, \ldots, p-1$ select $(x^i, t^{i+1}) \in B_z$ uniformly at random.
        \item Output $x^1, t^2, \ldots, x^{p-1}, t^p$. 
    \end{itemize}
This immediately implies the following observation:
\begin{observation}
    The random variables $X^1, T^2$ are equivalently distributed in $D^p$  as in $D^2$ for any $p\geq 2$, i.e., they are uniformly distributed in $\{0,1\}^n \times [n]$.
    \label{obs:pindex_uniform}
\end{observation}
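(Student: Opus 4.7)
The plan is to prove the observation by a short direct calculation of the marginal distribution of $(X^1, T^2)$ under $D^p$ and showing it equals the uniform distribution on $\{0,1\}^n \times [n]$. Since the description of this marginal will not depend on $p$, the equivalence across $D^2$ and $D^p$ is immediate.

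First I would establish the symmetry fact $|B_0| = |B_1| = n \cdot 2^{n-1}$: for any fixed index $t \in [n]$, exactly $2^{n-1}$ strings $x \in \{0,1\}^n$ have $x_t = 0$, so summing over $t \in [n]$ yields $n \cdot 2^{n-1}$ elements in each of $B_0$ and $B_1$. Next, fix an arbitrary pair $(x^1, t^2) \in \{0,1\}^n \times [n]$ and note that it belongs to exactly one of the $B_z$'s, namely to $B_{x^1_{t^2}}$. Using the first sampling procedure that defines $D^p$, the law of total probability gives
\begin{align*}
\Pr[X^1 = x^1, T^2 = t^2]
&= \sum_{z \in \{0,1\}} \Pr[Z = z]\cdot\Pr[(X^1, T^2) = (x^1, t^2) \mid Z = z]\\
&= \tfrac{1}{2} \cdot \tfrac{1}{n \cdot 2^{n-1}} = \tfrac{1}{n \cdot 2^n},
\end{align*}
which is the uniform distribution on $\{0,1\}^n \times [n]$. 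This is precisely the distribution of $(X^1, T^2)$ under $D^2$ (just take $p=2$ in the above calculation, or observe that in $D^2$ the pair $(X^1, T^2)$ is sampled uniformly from $B_0 \cup B_1 = \{0,1\}^n \times [n]$ via the same argument). Hence the two marginals coincide.

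I do not foresee any real obstacle here, since the proof is essentially a symmetry argument: the factor $\tfrac12$ from the choice of $Z$ cancels against the factor $\tfrac{1}{|B_z|}$ for sampling within $B_z$, producing the uniform distribution. As a sanity check I would also present the alternative sampling procedure already noted in the excerpt, in which $x^1$ and $t^2$ are drawn uniformly and independently at the outset with $z$ then being the \emph{determined} value $x^1_{t^2}$; showing that this procedure yields the same joint distribution as the original (straightforward from the identity $|B_0| = |B_1|$) gives an immediate proof of the observation without any calculation on the marginals.
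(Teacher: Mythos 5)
Your proof is correct and matches the paper's reasoning: both hinge on the symmetry $|B_0|=|B_1|=n\cdot 2^{n-1}$, which makes the factor $\tfrac{1}{2}$ from choosing $z$ cancel against $\tfrac{1}{|B_z|}$. The paper simply asserts the equivalence of the alternative sampling procedure (where $x^1,t^2$ are drawn uniformly first) and lets the observation fall out immediately; your direct marginal calculation makes the same cancellation explicit, and you also note the alternative-procedure argument at the end, so the two write-ups are essentially the same proof.
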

We also have the following 
\begin{lemma}
    Let $M$ be a function of $X^1$ and $m$ one of its possible values. Further fix some $t^2 \in [n]$. Then the total variation distance between the distribution of $X^2, T^3, \ldots, X^{p-1}, T^p$ conditioned on $M = m, T^2 = t^2$ and the unconditional distribution of $X^2, T^3, \ldots, X^{p-1}, T^p $ equals
    \begin{align*}
        \left| 1/2 - \Pr[X^1_{t^2} = 0 \mid M = m] \right|\enspace.
    \end{align*}
    \label{lem:pindex_tvd}
\end{lemma}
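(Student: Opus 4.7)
The plan is to decompose both distributions as mixtures indexed by the hidden bit $Z$, using the fact that $Z = X^1_{T^2}$ deterministically and that, given $Z$, the tuple $(X^2, T^3, \ldots, X^{p-1}, T^p)$ is conditionally independent of $(X^1, T^2)$. First I would set $q = \Pr[X^1_{t^2} = 0 \mid M = m]$. Since $M$ is a function of $X^1$ alone and $T^2$ is uniform on $[n]$ and independent of $X^1$ (by the analog of \cref{obs:pindex_uniform}), $T^2$ is independent of the pair $(M, X^1)$. In particular,
\begin{equation*}
    \Pr[Z = 0 \mid M = m, T^2 = t^2] = \Pr[X^1_{t^2} = 0 \mid M = m, T^2 = t^2] = \Pr[X^1_{t^2} = 0 \mid M = m] = q.
\end{equation*}

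Next I would write, for $z \in \{0,1\}$, $P_z$ for the conditional distribution of $Y \eqdef (X^2, T^3, \ldots, X^{p-1}, T^p)$ given $Z = z$. Because the sampling procedure defining $D^p$ generates the pairs $(X^i, T^{i+1})$ for $i = 2, \ldots, p-1$ independently from $B_z$ once $z$ is fixed, $Y$ is conditionally independent of $(X^1, T^2, M)$ given $Z$. Consequently, the conditional distribution of $Y$ given $(M = m, T^2 = t^2)$ is the mixture $q P_0 + (1-q) P_1$, whereas the unconditional distribution of $Y$ is $\tfrac{1}{2}(P_0 + P_1)$ since $Z$ is uniform. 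Subtracting and using the triangle-equality structure,
\begin{equation*}
   \tvd{q P_0 + (1-q) P_1}{\tfrac{1}{2}(P_0 + P_1)} = \bigl|q - \tfrac{1}{2}\bigr| \cdot \tvd{P_0}{P_1}.
\end{equation*}

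Finally I would argue $\tvd{P_0}{P_1} = 1$ by observing that $P_0$ and $P_1$ have disjoint supports: every realization drawn from $P_z$ satisfies $x^i_{t^{i+1}} = z$ for all $i \in \{2, \ldots, p-1\}$ by construction of $B_z$, so no realization lies in the supports of both $P_0$ and $P_1$. Combining the two displays yields the claimed equality. The main subtlety is the disjoint-supports step, which requires at least one pair $(X^i, T^{i+1})$ to appear in $Y$, i.e.\ $p \geq 3$; this is the regime in which the lemma will actually be applied in the reduction (the $p = 2$ case reduces directly to INDEX and does not need this lemma). Everything else is routine bookkeeping about conditional independence and uniform marginals, so no further obstacles are anticipated.
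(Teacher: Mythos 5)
Your proof is correct and takes essentially the same approach as the paper: both view the conditional and unconditional laws of $Y=(X^2,T^3,\ldots,X^{p-1},T^p)$ as mixtures $q\,P_0+(1-q)\,P_1$ and $\tfrac12(P_0+P_1)$ of the conditional distributions $P_z$ of $Y$ given $Z=z$, and conclude via $\tvd{P_0}{P_1}=1$. You merely make explicit the mixture decomposition and the disjoint-supports step that the paper compresses into ``it follows,'' and your observation that disjointness requires $p\geq 3$ is a genuine subtlety the paper leaves unstated but that is satisfied everywhere the lemma is invoked (the inductive step of \cref{lem:pindex_hardness}).
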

\begin{proof}

Consider the aforementioned alternative sampling procedure for $D^p$ where we first sample $x^1, t^2$, and then let $z = x^1_{t^2}$.
    As $M$ is a function of $X^1$,  we have that the distribution of $X^2, T^3, \ldots, X^{p-1}, T^p$ conditioned on $M = m$ and $T^2 = t^2$ can be defined by the following sampling procedure.
    \begin{itemize}[itemsep=0.1em, parsep=0em, topsep=0.2em]
        \item Select $x^1 \in \{0,1\}^n$ at random from the conditional distribution $M = m$.
        \item Set $z = x^1_{t^2}$.
        \item For $i = 2, \ldots, p-1$ select $(x^i, t^{i+1}) \in B_z$ uniformly at random.
        \item Output the outcome $x^2, t^3, \ldots, x^{p-1}, t^p$. 
    \end{itemize}
    In other words, $z$ is selected to be $0$ with probability $\Pr[X^1_{t^2} = 0  \mid M = m]$  and 
                                        $1$ with probability $\Pr[X^1_{t^2} = 1  \mid M = m]$. 
                                        
    The unconditional distribution is defined in the same way, except that $x^1$ and $t_2$ are selected uniformly at random. It follows that the total variation distance between the conditional distribution and unconditional distribution of $X^2, T^3, \ldots, X^{p-1}, T^p$ equals
    \begin{align*}
         \left| 1/2 - \Pr[X^1_{t^2} = 0 \mid M = m] \right|\enspace.
         \tag*{\qedhere}
    \end{align*}
\end{proof}

\paragraph{Hardness proof of \chainP.}
We are now ready to prove the mentioned hardness result for the \chainP problem. As already noted, we show that 
any deterministic protocol with a ``good'' success probability over instances sampled from $D^p$ must have  a ``large'' communication complexity.
By Yao's minimax principle, this then implies \cref{thm:pindex_hardness}.

For a deterministic $p$-player protocol $\cP$, let $\mbox{Z}_{\cP}(x^1, t^2, \ldots, x^{p-1}, t^{p})$ denote the indicator function that $\cP$ outputs the correct prediction on instance $x^1, t^2, \ldots, x^{p-1}, t^{p}$. 
We start by analyzing the $2$-player case, which we later use as a building block in the more general $p$-player case.
\begin{lemma}
    Let $\delta \in (0, 1/2)$.
    Any protocol $\cP$ for the \chainOnlyP{2} problem with communication complexity at most $c = (1-H_2(\delta))\cdot n - 1$ satisfies
    \begin{align*}
       \Pr_{x^1, t^2 \sim D^2}[ \cP \mbox{ outputs correct prediction on instance $x^1, t^2$}] = \E[ \mbox{Z}_{\cP}(X^1, T^2)] \leq 1-\delta\enspace.
    \end{align*}
    \label{lem:index}
\end{lemma}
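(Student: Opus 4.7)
The plan is to follow the standard information-theoretic lower bound for \chainOnlyP{2} (which is the INDEX problem). First, let $M = M(X^1)$ denote Alice's message, viewed as a random variable; since $\cP$ is deterministic and $M$ fits in at most $c$ bits, $H(M) \leq c$, and because $M$ is a function of $X^1$, $I(X^1; M) = H(M) - H(M \mid X^1) = H(M) \leq c$. By \cref{obs:pindex_uniform}, under $D^2$ the pair $(X^1, T^2)$ is uniform over $\{0,1\}^n \times [n]$, so the coordinates $X^1_1, \ldots, X^1_n$ are i.i.d.\ uniform bits, $X^1$ is independent of $T^2$, and $H(X^1) = n$. Together these give $H(X^1 \mid M) \geq n - c$.

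Next, I would pass from $X^1$ to the single target bit $X^1_{T^2}$. Subadditivity of entropy yields $\sum_{i=1}^n H(X^1_i \mid M) \geq H(X^1 \mid M) \geq n - c$. Since $T^2$ is independent of $(X^1, M)$ and uniform on $[n]$,
\[
H(X^1_{T^2} \mid M, T^2) \;=\; \frac{1}{n}\sum_{i=1}^n H(X^1_i \mid M) \;\geq\; 1 - \frac{c}{n}.
\]
Plugging in $c \leq (1-H_2(\delta))\cdot n - 1$ then gives $H(X^1_{T^2} \mid M, T^2) \geq H_2(\delta) + 1/n$.

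The conclusion will come from Fano's inequality applied to Bob's predictor. Bob's output is a deterministic function $g(M, T^2) \in \{0,1\}$, so suppose, toward a contradiction, that $\expected{Z_\cP(X^1, T^2)} > 1 - \delta$, meaning the error probability $\delta' := \prob{g(M, T^2) \neq X^1_{T^2}}$ satisfies $\delta' < \delta$. Because $X^1_{T^2}$ takes values in a $2$-element support, Fano's inequality (with $X = X^1_{T^2}$ and $Y = (M, T^2)$, so the $\delta'\log_2(|\supp(X)|-1)$ term vanishes) yields $H(X^1_{T^2} \mid M, T^2) \leq H_2(\delta')$. Since $H_2$ is strictly increasing on $[0, 1/2]$ and $\delta' < \delta \leq 1/2$, we have $H_2(\delta') < H_2(\delta)$, contradicting the lower bound $H_2(\delta) + 1/n$ from the previous paragraph. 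Hence $\expected{Z_\cP(X^1, T^2)} \leq 1 - \delta$.

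The only subtle point I anticipate is the bookkeeping: the ``$-1$'' in the hypothesis $c \leq (1-H_2(\delta))\cdot n - 1$ is needed to supply precisely the $1/n$ slack that promotes Fano's weak inequality into a strict contradiction, so this constant cannot be dropped. Everything else (subadditivity, the independence of $T^2$ from $(X^1, M)$, the uniformity of $(X^1, T^2)$ under $D^2$, and the computation of $H(X^1_{T^2} \mid M, T^2)$ by conditioning on $T^2$) is immediate from the definition of $D^2$ and standard entropy identities; extending from deterministic to randomized protocols is not needed here, since the reduction to \cref{thm:pindex_hardness} invokes Yao's minimax principle separately.
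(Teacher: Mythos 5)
Your proof is correct but reaches the conclusion by a somewhat different (and arguably cleaner) route than the paper. The paper computes the error bound coordinate by coordinate: it applies Fano's inequality separately to each $X^1_i$ given $M$, using per-index error probabilities $\delta_i$, and then combines those $n$ bounds via concavity of the binary entropy function $H_2$. You instead observe that, because $T^2$ is independent of $(X^1,M)$ and uniform on $[n]$, the \emph{exact} identity $H(X^1_{T^2}\mid M,T^2)=\frac{1}{n}\sum_i H(X^1_i\mid M)$ holds, and then apply Fano a single time to the compound variable $X^1_{T^2}$ given $(M,T^2)$. This removes the concavity step altogether, and your proof-by-contradiction also sidesteps the paper's explicit case split on $\delta' \geq 1/2$. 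Your subadditivity bound $\sum_i H(X^1_i\mid M)\geq H(X^1\mid M)$ is equivalent to the paper's chain-rule manipulation of $I(X^1;M)$, so both routes arrive at the same key inequality; yours packages the averaging more transparently.

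One bookkeeping point to fix: the bound $H(M)\leq c$ would hold if $M$ were a fixed-length $c$-bit string, but the hypothesis only says the message has length at most $c$ bits, so $M$ ranges over up to $2^{c+1}-1$ values and hence $H(M)\leq \log_2(2^{c+1}-1)<c+1$; this is exactly what the paper accounts for with its ``$c+1\geq H(M)$'' step. Re-running your chain with $H(M)\leq c+1$ gives $H(X^1_{T^2}\mid M,T^2)\geq H_2(\delta)$ rather than $H_2(\delta)+1/n$, and the contradiction against Fano's $H(X^1_{T^2}\mid M,T^2)\leq H_2(\delta')$ still follows purely from strict monotonicity of $H_2$ on $[0,1/2)$ applied to $\delta'<\delta$. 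So your closing comment that the ``$-1$'' in the hypothesis exists to supply a $1/n$ slack for strictness is a misdiagnosis: the $-1$ is there to absorb the extra bit in the entropy of a variable-length message, not to provide cushion. No slack beyond strict monotonicity of $H_2$ is needed.
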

\begin{proof}
    We abbreviate $X^1$ by $X$ to simplify notation.
    The message $M$ that the first player, Alice, sends is a discrete random variable  that is a function of $X$. Note that
    \begin{align*}
        c + 1 \geq H(M) \geq I(M; X)
        \enspace
        .
    \end{align*}
    The first inequality holds because Alice sends at most $c$ bits and so the entropy of $M$ is at most $\log_2(2^c + 2^{c-1} + \ldots + 1) \leq c+1$. The second inequality follows from the definition of mutual information and the non-negativity of entropy: $I(M;X) = H(M) - H(M \mid X) \leq H(M)$.

    Letting, for any $i\in [n]$, $X_{<i}$ denote the vector $(X_1, \ldots, X_{i-1})$, we get
{\def\bw{48mm}
    \begin{align*}
        I(M; X) &=  \parbox{\bw}{$I(X;M)$} \mbox{(symmetry of mutual information)}   \\
        & = \parbox{\bw}{$\displaystyle\sum_{i=1}^n I(X_i; M \mid X_{<i})$} \mbox{(chain rule for mutual information)} \\
        & = \displaystyle\sum_{i=1}^n \left(H(X_i \mid X_{<i}) - H(X_i \mid M , X_{<i})\right) \\
        & \geq \parbox{\bw}{$\displaystyle\sum_{i=1}^n \left( H(X_i) - H(X_i \mid M)\right)$} \mbox{($X_i$'s are iid, and conditioning does not increase entropy)} \\
        & = \parbox{\bw}{$\displaystyle\sum_{i=1}^n \left( 1- H(X_i \mid M)\right)\enspace.$} \mbox{($X_i$ is a uniform binary random variable)} 
    \end{align*}}%
Now, if we let $1-\delta_i$ denote the probability that $\cP$ outputs the correct prediction when $t^2 = i$, then by Fano's  inequality  $H(X_i \mid M) \leq H_2(\delta_i)$, and hence,
    \begin{align*}
         \sum_{i=1}^n \left( 1- H(X_i \mid M)\right) \geq \sum_{i=1}^n \left( 1- H_2(\delta_i)\right)\enspace.
    \end{align*}
    Further, let $\delta' = \frac{1}{n}\sum_{i=1}^n \delta_i$. Note that $\delta'$ equals the probability that the protocol $\cP$ makes the incorrect prediction. Hence, all that remains to be shown is $\delta' \geq \delta$. To this end, first observe that if $\delta'\geq 1/2$, then this trivially holds because $\delta\in (0,1/2)$. Otherwise, observe that the concavity of the binary entropy function implies $\frac{1}{n} \sum_{i=1}^n H_2(\delta_i) \leq H_2(\delta')$, and hence
    \begin{align*}
        \sum_{i=1}^n \left( 1- H_2(\delta_i)\right) &= n - n\cdot \left( \frac{1}{n} \sum_{i=1}^n H_2(\delta_i)\right)  
         \geq n  \left(1- H_2(\delta')\right) \enspace.
    \end{align*}
    Thus, $n(1-H_2(\delta)) - 1 = c \geq I(M; X) - 1 \geq n(1-H_2(\delta')) - 1$. This implies $\delta \leq \delta'$ since $n(1 - H_2(y))$ is a strictly decreasing function of $y$ within the range $[0, 1/2]$.
\end{proof}
We now use the above lemma and induction to prove the hardness result for \chainP. 

\begin{lemma}
    Let $\delta \in (0, 1/2)$ and $s = 1/2 - \delta$.
    For any integer $p\geq 2$, any protocol $\cP$ for the \chainP problem with communication complexity at most $(1- H_2(\delta)) \cdot n -  1$ satisfies 
    \begin{align*}
        \E[ Z_{\cP}(X^1, T^2, \ldots, X^{p-1}, T^p)]  \leq  1/2 + p\cdot s\enspace.
    \end{align*}
    \label{lem:pindex_hardness}
\end{lemma}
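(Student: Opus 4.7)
The plan is to proceed by induction on $p$. The base case $p = 2$ is precisely \cref{lem:index}, since $1/2 + s = 1 - \delta$.

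For the inductive step, fix a deterministic $p$-player protocol $\cP$ of maximum message length at most $c = (1-H_2(\delta))n - 1$ (randomized protocols then follow by Yao's averaging). Let $M = M(X^1)$ be the message of player~$1$, write $\sigma = (X^2, T^3, \ldots, X^{p-1}, T^p)$, and let $Z = X^1_{T^2}$. The key structural property of $D^p$ is that, conditional on $Z = z$, the tuple $\sigma$ is independent of $(X^1, T^2)$, and its marginal (call it $\mu_z$) is, after relabeling player $i+1$ as player $i$, the law of a $D^{p-1}$ sample conditioned on its hidden case being $z$. Consequently, for every fixed $(m, t^2)$, players $2$ through $p$ of $\cP$ constitute a deterministic $(p-1)$-player protocol $\cP_{m,t^2}$ on $\sigma$ whose message lengths are inherited from $\cP$ (hence at most $c$). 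Applying the inductive hypothesis to $\cP_{m,t^2}$, and letting $\alpha_{m,t^2} = \Pr_{\sigma \sim \mu_0}[\cP_{m,t^2}(\sigma) = 0]$ and $\beta_{m,t^2} = \Pr_{\sigma \sim \mu_1}[\cP_{m,t^2}(\sigma) = 1]$, this yields $\tfrac{1}{2}(\alpha_{m,t^2} + \beta_{m,t^2}) \leq \tfrac{1}{2} + (p-1)s$.

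Next, I will unpack $\Pr[A = Z]$ by conditioning. Writing $p_0(m,t^2) = \Pr[Z = 0 \mid M = m, T^2 = t^2]$ and using the conditional independence of $\sigma$ and $(M,T^2)$ given $Z$, one has
\[
\Pr[A = Z \mid M = m, T^2 = t^2] = p_0(m,t^2)\,\alpha_{m,t^2} + (1 - p_0(m,t^2))\,\beta_{m,t^2}.
\]
The elementary identity $p_0 \alpha + (1 - p_0)\beta = \tfrac{\alpha + \beta}{2} + (p_0 - \tfrac{1}{2})(\alpha - \beta)$ combined with $|\alpha - \beta| \leq 1$ yields the upper bound $\tfrac{\alpha + \beta}{2} + |p_0 - \tfrac{1}{2}|$, and taking expectations over $(M, T^2)$ gives $\Pr[A = Z] \leq \tfrac{1}{2} + (p-1)s + \mathbb{E}[|p_0(M, T^2) - \tfrac{1}{2}|]$.

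The only remaining step is to show that $\mathbb{E}[|p_0(M, T^2) - 1/2|] \leq s$. For this I will invoke the already-proved base case: the pair $(M, T^2)$, combined with the optimal Bayes predictor of $X^1_{T^2}$ from $(M, T^2)$, constitutes a $2$-player INDEX protocol of communication at most $c$, and its success probability equals $\tfrac{1}{2} + \mathbb{E}[|p_0(M, T^2) - 1/2|]$; by \cref{lem:index} this is at most $1 - \delta = \tfrac{1}{2} + s$. Combining everything gives $\mathbb{E}[Z_\cP] = \Pr[A = Z] \leq \tfrac{1}{2} + ps$, closing the induction. The main obstacle I anticipate is simply the careful bookkeeping of the conditional independence and relabeling that lets $\cP_{m,t^2}$ be viewed as a legitimate $(p-1)$-player protocol on a genuine $D^{p-1}$ instance; once these are pinned down, everything else is the short linear calculation sketched above.
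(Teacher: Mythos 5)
Your proof is correct and follows essentially the same route as the paper's: induction on $p$; condition on $(M, T^2)$; view players $2,\ldots,p$ under that conditioning as a $(p-1)$-player protocol to which the inductive hypothesis applies on the unconditional $D^{p-1}$ law; and then quantify the mismatch between conditional and unconditional laws by interpreting its expectation as the advantage of a $2$-player Bayes predictor, controlled by the base case. The only presentational difference is that you make the discrepancy bound explicit by conditioning on $Z$ and the identity $p_0\alpha+(1-p_0)\beta=\tfrac{\alpha+\beta}{2}+(p_0-\tfrac12)(\alpha-\beta)$, whereas the paper packages precisely the same quantity $|p_0-\tfrac12|$ as a total-variation distance via its \cref{lem:pindex_tvd}.
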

\begin{proof}
    We prove the statement by induction on $p$. For $p=2$, the statement is implied by \cref{lem:index}. Now assume it holds for $2, \ldots, p-1$.

    Consider a \chainP protocol $\cP$, and  let $M$ denote the random variable corresponding to the message sent by the first player. Note that $M$ is a function of the random variable $X^1$.
    Fix a message $m$ sent by the first player, and let $t^2$ be the index received by the second player.  
    Denote by $\cP(m, t^2)$ the $(p-1)$-player protocol that proceeds in the same way as $\cP$ proceeds for players $2, \ldots, p$ after the first player sent message $m$ and the second player received the index $T^2 = t^2$.
    Thus, with this notation, we can write $\E[ Z_{\cP}(X^1, T^2, \ldots, X^{p-1}, T^p)]$ as
    \begin{align*}
         \sum_{m, t^2} \Pr[M = m, T^2 = t^2] \cdot \E  \left[ Z_{\cP(m, t^2)}(X^2, T^3, \ldots, X^{p-1}, T^p) \mid M = m, T^2 = t^2 \right] \enspace,
    \end{align*}
    or more concisely as
    \begin{align*}
        \E_{M, T^2} \left[ \E  \left[ Z_{\cP(m, t^2)}(X^2, T^3, \ldots, X^{p-1}, T^p) \mid M = m, T^2 = t^2 \right] \right]\enspace.
    \end{align*}
    Recall that the total variation distance between two distributions is the largest difference the two distributions can assign to an event. It follows (by considering the event ``$\cP(m, t^2)$ gives the correct prediction'') that
    \begin{align*}
        \E  \left[ Z_{\cP(m, t^2)}(X^2, T^3, \ldots, X^p , T^{p-1}) \mid M = m, T^2 = t^2 \right] 
    \end{align*}
    is upper bounded by 
    \begin{align*}
         \E  \left[ Z_{\cP(m, t^2)}(X^2, T^3, \ldots, X^p , T^{p-1}) \right] + \mbox{TVD}(m, t^2)\enspace,
    \end{align*}
     where $\mbox{TVD}(m, t^2)$ denotes the total variation distance between the unconditional distribution of $X^2, T^3, \ldots, X^{p-1}, T^p$ and the distribution of these random variables conditioned on $M= m$ and $T^2 = t^2$.

    Note that the unconditional distribution is equivalent to taking a random $(p-1)$-player instance from $D^{p-1}$.
    By the induction hypothesis, we have that any $(p-1)$-player protocol succeeds with probability at most $1/2 + s \cdot (p-1) $ over this distribution of instances, and  so
    \begin{align*}
        \E  \left[ Z_{\cP(m, t^2)}(X^2, T^3, \ldots, X^p , T^{p-1}) \right] \leq 1/2 + s \cdot (p-1)\enspace.
    \end{align*}
   
   We proceed to analyze the total variation distance $\mbox{TVD}(m, t^2)$.  By \cref{lem:pindex_tvd},
    \begin{align*}
        \mbox{TVD}(m, t^2) =  \left| 1/2 - \Pr[X^1_{t^2} = 0 \mid M = m] \right|\enspace.
    \end{align*}
    Consider now the protocol $\cP'$ for the \chainOnlyP{2} problem where the first player is identical to the first player in the $p$-player protocol $\cP$ and the second player, given the message $m$ and the index $t^2$, predicts that $x^1_{t^2}$ equals $0$ if and only if  $\Pr[X^1_{t^2} = 0 \mid M = m] \geq 1/2$.
    Conditioned on $M=m$ and $T^2=t^2$, we thus have that $\cP'$ succeeds with probability $1/2+\mbox{TVD}(m, t^2)$. 
    Furthermore, the probability that $\cP'$ succeeds on a random instance is
    \begin{align*}
        \E_{X^1, T^2}\left[ \E_{M} [ 1/2 + \mbox{TVD}(M, T^2) \mid X^1]\right] & =  \E_{M, T^2} [1/2 + \mbox{TVD}(M, T^2)]\enspace.
    \end{align*}
    As no player is a allowed to send messages consisting of more than $(1-H_2(\delta))n -1$ bits and $X^1, T^2$ are uniformly distributed (\cref{obs:pindex_uniform}), \cref{lem:index} says that the success probability is at most $1-\delta $ and so 
    \begin{align*}
        \E_{M, T^2}[ \mbox{TVD}(M, T^2)] \leq 1/2- \delta = s\enspace.
    \end{align*}
    This concludes the inductive step and the proof  of the lemma since we upper bounded 
    \begin{align*}
        \E[ Z_{\cP}(X^1, T^2, \ldots, X^p, T^{p-1})] & = \E_{M, T^2} \left[ \E  \left[ Z_{\cP(m, t^2)}(X^2, T^3, \ldots, X^p , T^{p-1}) \mid M = m, T^2 = t^2 \right] \right]
    \end{align*}
     by 
    \begin{align*}
        1/2 + (p-1)\cdot s + \E_{M, T^2}[ \mbox{TVD}(M, T^2)] \leq 1/2 +  p \cdot s\enspace. 
        \tag*{\qedhere}
    \end{align*}

\end{proof}

We now finalize the proof of \cref{thm:pindex_hardness}.

\begin{proof}[Proof of \cref{thm:pindex_hardness}]
    First note that we can assume that $n\geq 36 p^2$ since any protocol needs communication complexity at least $1$ to have a success probability above $1/2$. 
    Now, let $s = 1/(6p)$ and $\delta = 1/2 - s$.
    \cref{lem:pindex_hardness} says that any (potentially randomized) protocol  for the \chainP problem with success probability at least $1/2 + p\cdot s$  has communication complexity at least $(1- H_2(\delta))n - 1$.
    
    With these parameters we have $1/2 + p \cdot s = 2/3$. So any protocol with success probability $2/3$ must have communication complexity at least $(1- H_2(\delta))n -1$. By the definition of the binary entropy,
    \begin{align*}
        1 - H_2\left( \frac{1 - 2s}{2} \right) &  =    \frac{1-2s}{2} \log_2(1-2s) + \frac{1+2s}{2} \log_2(1+2s) \enspace, 
    \end{align*}
    which, by using Taylor series, can be bounded by  
    \begin{align*}
        \frac{1-2s}{2} \left(- \sum_{i=1}^\infty \frac{(2s)^i}{i}\right)  + \frac{1+2s}{2} \left( \sum_{i=1}^\infty (-1)^{i+1} \frac{(2s)^i}{i} \right) & =  2s \cdot\left( \sum_{i=1}^\infty \frac{(2s)^{2i-1}}{2i-1}\right) - \left( \sum_{i=1}^\infty \frac{(2s)^{2i}}{2i} \right) \\
        & = \sum_{i=1}^\infty (2s)^{2i} \left(\frac{1}{2i-1} - \frac{1}{2i} \right) \\
        & \geq (2s)^2 \left( \frac{1}{2-1} - \frac{1}{2}\right) = 2s^2\enspace.
    \end{align*}

    We thus have that any protocol for the \chainP problem with success probability at least $2/3$ has communication complexity at least
    \begin{align}
        (1- H_2(\delta))n -1 & \geq (2s^2) n -1  = \frac{n}{18p^2} -1 \geq \frac{n}{36 p^2} \enspace, 
        \label{eq:gen_hardness_complexity}
    \end{align}
    where the last inequality holds because $n \geq 36 p^2$.
%
%
    \end{proof}

\section{Proof of the First Part of \texorpdfstring{\cref{thm:uppertwoplayersub}}{Theorem~\ref*{thm:uppertwoplayersub}}} \label{app:geometric_grouping}

\cref{alg:many_sizes_exponential} is a variant of \cref{alg:many_sizes} modified to use exponential grouping as described in the end of \cref{ssc:two_players_algs}.

\begin{protocol}
\caption{Repeated Solving with Varying Sizes and Exponential Grouping} \label{alg:many_sizes_exponential}
\textbf{Alice's Algorithm}
\begin{algorithmic}[1]
\State $I \gets \{0\} \cup \{\lfloor (1 + \eps)^j \rfloor, 2\lfloor (1 + \eps)^j \rfloor \mid j\in \{0,\ldots, \lfloor \log_{1+\eps}k \rfloor\} \}$.
\For{every $i \in I$} 
    \State	Let $S_i$ be the set maximizing $f$ among all subsets of $V_A$ of size at most $i$.
\EndFor
\State Send all the sets $\{S_i\}_{i \in I}$ as the message to Bob.
\end{algorithmic}
\textbf{Bob's Algorithm}
\begin{algorithmic}[1]
\State Let $\widehat{S}$ be the subset of $V_B \cup \bigcup_{i\in I} S_{i}$ of size at most $k$ maximizing $f$ ($I$ is not part of the message sent by Alice, but it can easily be recomputed by Bob).\\
\Return{$\widehat{S}$}.
\end{algorithmic}
\end{protocol}

One can observe that, like \cref{alg:many_sizes}, \cref{alg:many_sizes_exponential} is also guaranteed to output a feasible set. Furthermore, the number of elements Alice sends to Bob under this protocol is upper bounded by
\begin{align*}
    \sum_{i \in I} |S_i|
    \leq{} &
    2k \cdot |I|
    \leq
    2k \cdot (3 + 2 \cdot \log_{1 + \eps} k)\\
    ={} &
    2k \cdot \left(3 + \frac{2 \cdot \ln k}{\ln(1 + \eps)}\right)
    \leq
    2k \cdot \left(3 + \frac{2 \cdot \ln k}{\eps / 2}\right)
    =
    O(k\eps^{-1} \cdot \log k)
    \enspace.
\end{align*}
Thus, to complete the proof of the first part of \cref{thm:uppertwoplayersub}, we are only left to show that \cref{alg:many_sizes_exponential} produces a $(\nicefrac{2}{3} - \eps)$-approximation, which is our objective in the rest of this section. We use towards this goal the following well-known lemma.
\begin{lemma}[A rephrased version of Lemma~2.2 of~\cite{feige2011maximizing}] \label{lem:sampling}
Let $g\colon 2^X \to \mathbb{R}$ be a submodular function. Denote by $A(p)$ a random subset of $A$ in which each element appears with probability p (not necessarily independently). Then, $\expected{g(A(p))} \geq (1−p) \cdot g(\varnothing) + p \cdot g(A)$.
\end{lemma}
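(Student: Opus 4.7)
The plan is to use the classical polymatroid (greedy) bound associated with a submodular function, which turns out to give the result in one clean step. First, I would fix any ordering $a_1,\ldots,a_n$ of $A$ and set $\lambda_{a_i} = g(a_i\mid\{a_1,\ldots,a_{i-1}\})$. I would then establish two properties of $\lambda$: (i) $\sum_{a\in A}\lambda_a = g(A)-g(\varnothing)$, which is immediate by telescoping; and (ii) $\sum_{a\in S}\lambda_a \leq g(S)-g(\varnothing)$ for every $S\subseteq A$.

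The only nontrivial step is (ii). To handle it, I would write $S=\{a_{i_1},\ldots,a_{i_k}\}$ with $i_1<\cdots<i_k$, and for each $j$ invoke the fact that marginals of a submodular function are non-increasing in the base set to obtain
\[
\lambda_{a_{i_j}} \;=\; g(a_{i_j}\mid \{a_1,\ldots,a_{i_j-1}\}) \;\leq\; g(a_{i_j}\mid \{a_{i_1},\ldots,a_{i_{j-1}}\}),
\]
because $\{a_{i_1},\ldots,a_{i_{j-1}}\}\subseteq \{a_1,\ldots,a_{i_j-1}\}$. Summing the right-hand sides over $j$ telescopes to $g(S)-g(\varnothing)$, yielding (ii). In polymatroid language, this is just the statement that the greedy assignment lies in the base polytope of $g-g(\varnothing)$; it uses only submodularity, and requires neither monotonicity nor non-negativity of $g$.

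Once (i) and (ii) are in hand, the lemma follows in a single line. For every realization $A(p)=S$, (ii) gives $g(S)\geq g(\varnothing)+\sum_{a\in S}\lambda_a$. Taking expectations and using only the marginal assumption $\prob{a\in A(p)} = p$ together with linearity of expectation, I obtain
\begin{align*}
\expected{g(A(p))} \;\geq\; g(\varnothing) + \sum_{a\in A}\prob{a\in A(p)}\cdot \lambda_a \;=\; g(\varnothing) + p\sum_{a\in A}\lambda_a \;=\; (1-p)\,g(\varnothing) + p\cdot g(A),
\end{align*}
where the last equality uses (i). It is worth emphasizing where the ``not necessarily independent'' clause is absorbed: the indicators $\mathbbm{1}[a\in A(p)]$ may be arbitrarily correlated, because the only probabilistic input the argument uses is the single-element marginal $\prob{a\in A(p)}$. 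I do not foresee any substantive obstacle; the only care needed is to organize the telescoping in (ii) so that the sum of upper bounds collapses exactly to $g(S)-g(\varnothing)$ rather than to something looser.
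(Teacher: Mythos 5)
Your proof is correct. Note, however, that the paper does not actually prove this lemma: it is stated as ``a rephrased version of Lemma~2.2 of~\cite{feige2011maximizing}'' and used as a black box, so there is no internal proof to compare against. What you have given is a clean, self-contained derivation of that cited fact. Your argument is the classical polymatroid one: $\lambda$ is the greedy vertex of the base polytope of $g - g(\varnothing)$, property~(i) says it has total mass $g(A)-g(\varnothing)$, property~(ii) is exactly the statement that the greedy vertex lies in the extended polymatroid (i.e.\ it is a linear underestimate of $g-g(\varnothing)$ pointwise on $2^A$), and the final step is just linearity of expectation against the single-element marginals $\prob{a\in A(p)}=p$. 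This is equivalent in content to the more commonly seen argument via convexity of the Lov\'asz extension, but more elementary since it reproves the one needed fact directly rather than invoking a general theorem. You are also right that the ``not necessarily independent'' clause carries no weight here: only the marginals enter, through linearity of expectation, and~(ii) holds pointwise for every realization of $A(p)$. All steps are sound, including the use of submodularity in~(ii) (the order chosen for $A$ is a refinement of the order chosen for $S$, so $\{a_{i_1},\ldots,a_{i_{j-1}}\}\subseteq\{a_1,\ldots,a_{i_j-1}\}$ and the marginal can only go up when passing to the smaller base set), and the telescoping collapses exactly as you claim.
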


Recall that by $\Oset$ we denote a subset of $V_A \cup V_B$ of size $k$ maximizing $f$ among all such subsets, and $\Oval$ denotes the value of this set. Also, let $M = V_B \cup \bigcup_{i \in I} S_{i}$ be the set of elements that Bob gets either from Alice or directly, which is also the set of elements in which Bob looks for $\widehat{S}$. Finally, let $\hi$ be the maximum value in $\{0\} \cup \{\lfloor (1 + \eps)^j \rfloor \mid j\in \{0,\ldots, \lfloor \log_{1+\eps} k \rfloor\}\}$ that is not larger than $k - |\Oset \cap M|$. Observe that the definition of $\hi$ guarantees that $\hi \in I$ and $(1 + \eps)\hi \geq k - |\Oset \cap M|$. Furthermore, the definition of $I$ implies $2\hi \in I$.

We are now ready to state the following claims, which correspond to \cref{obs:remaining_OPT_approximation}, \cref{lem:large_set_guarantee} and \cref{lem:secondary_relationship} from \cref{ssc:two_players_algs}, respectively.
\begin{observation} \label{obs:remaining_OPT_approximation_exponential}
$f(S_{\hi}) \geq (1 - \eps) \cdot f(\Oset \setminus M)$.
\end{observation}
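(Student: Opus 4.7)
My plan is to mimic the argument of Observation~4.5, but to account for the fact that $\hi$ need not equal $k - |\Oset \cap M|$; rather, by the definition of $\hi$ and the geometric grid, we only have $\hi \leq k - |\Oset \cap M| \leq (1+\eps)\,\hi$. So $S_\hi$ cannot necessarily accommodate the whole set $\Oset \setminus M$, and I will instead compare $S_\hi$ to an appropriately sampled subset of $\Oset \setminus M$.

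I split into two cases based on how $\hi$ compares to $|\Oset \setminus M|$. If $\hi \geq |\Oset \setminus M|$, then $\Oset \setminus M$ is itself a subset of $V_A$ of size at most $\hi$, so by the choice of $S_\hi$ (and since $\hi \in I$ ensures that $S_\hi$ is indeed computed and sent by Alice), $f(S_\hi) \geq f(\Oset \setminus M) \geq (1-\eps)\,f(\Oset \setminus M)$, as desired.

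The interesting case is $\hi < |\Oset \setminus M|$. Let $R$ be a uniformly random subset of $\Oset \setminus M$ of cardinality $\hi$. Then $R \subseteq V_A$ and $|R| = \hi$, so by the definition of $S_\hi$ we have $f(S_\hi) \geq \expected{f(R)}$. Since each element of $\Oset \setminus M$ is included in $R$ with probability $p = \hi / |\Oset \setminus M|$, Lemma~\ref{lem:sampling} applied to $f$ restricted to $\Oset \setminus M$, combined with $f(\varnothing) \geq 0$, yields
\[
    \expected{f(R)} \geq (1-p)\,f(\varnothing) + p \cdot f(\Oset \setminus M) \geq \frac{\hi}{|\Oset \setminus M|} \cdot f(\Oset \setminus M).
\]
Finally, since $|\Oset \setminus M| = |\Oset| - |\Oset \cap M| \leq k - |\Oset \cap M| \leq (1+\eps)\,\hi$, the coefficient above is at least $1/(1+\eps) \geq 1 - \eps$, completing the proof.

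The only step that might appear delicate is ensuring the existence of $\hi$ and the fact that $\hi \in I$, but this follows directly from the definition of $\hi$ and from the observation made in the main text that $(1+\eps)\,\hi \geq k - |\Oset \cap M|$; no further work is needed here. The remaining calculations are routine consequences of Lemma~\ref{lem:sampling} and monotonicity.
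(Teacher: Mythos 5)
Your proof is correct and takes essentially the same approach as the paper: compare $S_{\hi}$ against a uniformly random $\hi$-element subset of $\Oset \setminus M$ and apply Lemma~\ref{lem:sampling}, using $|\Oset \setminus M| \leq k - |\Oset \cap M| \leq (1+\eps)\hi$ to bound the sampling probability. The paper folds your first case into the second by letting the random set be deterministically equal to $\Oset \setminus M$ when it is too small, but this is purely cosmetic.
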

\begin{proof}
Let $T$ be a uniformly random subset of $\Oset \setminus M$ of size $\hi$ (if $|\Oset \setminus M| < \hi$, we set $T$ to be deterministically equal to $\Oset \setminus M$). Since every element of $\Oset \setminus M$ belongs to $T$ with some probability $p \geq \hi / (k - |\Oset \cap M|) \geq 1 / (1 + \eps)$, we get by \cref{lem:sampling} that $\expected{f(T)} \geq f(\Oset \setminus M) / (1 + \eps) \geq (1 - \eps) \cdot f(\Oset \setminus M)$. Thus, there exists some realization $T'$ of $T$ obeying $f(T') \geq (1 - \eps) \cdot f(\Oset \setminus M)$. The observation now follows from the choice of $S_{\hi}$ by \cref{alg:many_sizes_exponential} since the set $T'$ is a subset of $(V_A \cup V_B) \setminus M \subseteq V_A$ of size at most $\hi$.
\end{proof}

\begin{lemma} \label{lem:large_set_guarantee_exponential}
$f(S_{2\hi}) \geq (1 - \eps) \cdot [\Oval + f(S_{\hi}) - f(\widehat{S})]$.
\end{lemma}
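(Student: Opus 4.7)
The proof will adapt the argument of Lemma~\ref{lem:large_set_guarantee} via random subsampling, since the exponential grouping means that $\hi$ may be strictly smaller than $k - |\Oset \cap M|$. The plan is to exhibit a random subset of $V_A$ of size at most $2\hi$ whose value is at least $(1-\eps)[\Oval + f(S_{\hi}) - f(\widehat{S})]$ in expectation; since $2\hi \in I$, $f(S_{2\hi})$ will dominate any realization of this random set.

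First I would let $T$ be a uniformly random subset of $\Oset \setminus M$ of size $\hi' = \min\{\hi, |\Oset \setminus M|\}$. Then $T \cup S_{\hi} \subseteq V_A$ has cardinality at most $2\hi$, so $f(S_{2\hi}) \geq f(T \cup S_{\hi})$ for any realization. I would next apply submodularity to the sets $A = T \cup S_{\hi}$ and $B = (\Oset \cap M) \cup S_{\hi}$. Because $T \subseteq \Oset \setminus M$ is disjoint from $\Oset \cap M$, one has $A \cap B = S_{\hi}$, and therefore
\[
f(T \cup S_{\hi}) \;\geq\; f(T \cup (\Oset \cap M) \cup S_{\hi}) + f(S_{\hi}) - f((\Oset \cap M) \cup S_{\hi})\,.
\]
Since $(\Oset \cap M) \cup S_{\hi} \subseteq M$ has at most $|\Oset \cap M| + \hi \leq k$ elements, $f((\Oset \cap M) \cup S_{\hi}) \leq f(\widehat S)$ by the choice of $\widehat S$.

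The key step is to take expectations in the right way. Rather than applying Lemma~\ref{lem:sampling} to $g(X) = f(X \cup (\Oset\cap M) \cup S_{\hi})$ directly, I would apply it to the \emph{shifted} non-negative submodular function $h\colon 2^{\Oset\setminus M}\to \mathbb{R}$ defined by $h(X) = f(X \cup (\Oset\cap M)\cup S_{\hi}) - f((\Oset\cap M)\cup S_{\hi})$, which satisfies $h(\varnothing) = 0$. Every element of $\Oset \setminus M$ is included in $T$ with probability $p = \hi'/|\Oset\setminus M|$ (interpreted as $1$ when $|\Oset\setminus M| = 0$). Since $(1+\eps)\hi \geq k - |\Oset\cap M| \geq |\Oset\setminus M|$ by the definition of $\hi$, we have $p \geq 1/(1+\eps) \geq 1 - \eps$. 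By Lemma~\ref{lem:sampling} and monotonicity,
\[
\ee{h(T)} \geq p\cdot h(\Oset\setminus M) = p\bigl(f(\Oset\cup S_{\hi}) - f((\Oset\cap M)\cup S_{\hi})\bigr) \geq (1-\eps)\bigl(\Oval - f(\widehat S)\bigr)\,.
\]
Translating back to $f$ and combining with the submodularity inequality above (after taking expectations) gives $\ee{f(T\cup S_{\hi})} \geq (1-\eps)(\Oval - f(\widehat S)) + f(S_{\hi})$. A short algebraic manipulation using $f(\widehat S) \geq f(S_{\hi})$ (which holds since $S_{\hi}\subseteq M$ has size at most $k$) upgrades this to $\ee{f(T\cup S_{\hi})} \geq (1-\eps)[\Oval + f(S_{\hi}) - f(\widehat S)]$. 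Picking a realization $T'$ attaining the expectation finishes the argument.

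The main obstacle, and the reason the naive adaptation of the proof of Lemma~\ref{lem:large_set_guarantee} fails, is that a direct application of Lemma~\ref{lem:sampling} to $g$ produces the bound $(1-\eps)\Oval + f(S_{\hi}) - f(\widehat S)$, which is \emph{weaker} than the target $(1-\eps)[\Oval + f(S_{\hi}) - f(\widehat S)]$ precisely because $f(\widehat S) \geq f(S_{\hi})$ makes the difference $f(S_{\hi}) - f(\widehat S)$ non-positive and scaling it by $(1-\eps)$ enlarges it. Shifting the submodular function by the constant $f((\Oset\cap M)\cup S_{\hi})$ so that the "zero set" already attains the desired baseline value is the essential trick that fixes the placement of the $(1-\eps)$ factor.
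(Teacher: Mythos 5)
Your proof is correct, and its core decomposition — the random subset $T\subseteq\Oset\setminus M$ of size $\hi'=\min\{\hi,|\Oset\setminus M|\}$, the comparison set $(\Oset\cap M)\cup S_{\hi}$ of cardinality at most $k$, the use of Lemma~\ref{lem:sampling}, and the final chain combining submodularity with $f(\widehat S)\geq f((\Oset\cap M)\cup S_{\hi})$ — is the same as the paper's. The only meaningful difference is bookkeeping: the paper applies Lemma~\ref{lem:sampling} to $g(S)=f(S\cup S_{\hi})$, discards the non-negative baseline term $(1-p)f(S_{\hi})$, and then runs the whole submodularity/monotonicity chain on $f((\Oset\setminus M)\cup S_{\hi})$, whereas you take the submodularity step first and then apply the sampling lemma to the shifted function $h$. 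Both variants work.

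Where your writeup goes astray is the closing diagnosis. First, $f(\widehat S)\geq f(S_{\hi})$ plays no role in your final algebraic step: the quantity you derive, $(1-\eps)(\Oval-f(\widehat S))+f(S_{\hi})$, already dominates the target $(1-\eps)\left[\Oval+f(S_{\hi})-f(\widehat S)\right]$, with slack exactly $\eps f(S_{\hi})\geq 0$; no comparison of $f(\widehat S)$ to $f(S_{\hi})$ is needed. Second, the ``naive adaptation'' you warn against does \emph{not} fail. Applying Lemma~\ref{lem:sampling} directly to $g(X)=f(X\cup(\Oset\cap M)\cup S_{\hi})$ and combining with the same submodularity step yields $\ee{f(T\cup S_{\hi})}\geq p\left[f(\Oset\cup S_{\hi})-f((\Oset\cap M)\cup S_{\hi})\right]+f(S_{\hi})$; the bracketed quantity is non-negative by monotonicity and at least $\Oval-f(\widehat S)$, so invoking $p\geq 1-\eps$ on this product gives exactly your bound. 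The weaker estimate $(1-\eps)\Oval+f(S_{\hi})-f(\widehat S)$ only arises if one wastefully relaxes $-p\cdot f((\Oset\cap M)\cup S_{\hi})$ to $-f((\Oset\cap M)\cup S_{\hi})$ before applying $p\geq 1-\eps$. Your shift by $f((\Oset\cap M)\cup S_{\hi})$ is a clean, systematic way to avoid this suboptimal move, but it is not essential, and the paper's choice of $g(S)=f(S\cup S_{\hi})$ never encounters the sign issue in the first place.
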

\begin{proof}
To prove the lemma, we have to show that $V_A$ includes a set of size at most $2\hi$ whose value is at least $(1 - \eps) \cdot [\Oval + f(S_{\hi}) - f(\widehat{S})]$. Let $T$ be a uniformly random subset of $\Oset \setminus M$ of size $\hi$ (like in the previous proof, if $|\Oset \setminus M| < \hi$, then we define $T$ to be deterministically equal to $\Oset \setminus M$), and consider the set $T \cup S_{\hi}$. First, we observe that this set is a subset of $V_A$ because $V_B \subseteq M$. Second, the size of this set is at most $|T| + |S_{\hi}| \leq 2\hi$. Thus, to prove the lemma it remains to show that the value of this set is at least $(1 - \eps) \cdot [\Oval + f(S_{\hi}) - f(\widehat{S})]$ for some realization of $T$, which we do by showing that the expected value of this set is at least that large. Note that $(\Oset \cap M) \cup S_{\hi}$ is a subset of $M$ of size
\[
    |(\Oset \cap M) \cup S_{\hi}|
    \leq
    |\Oset \cap M| + \hi
    \leq
    (k - \hi) + \hi
    =
    k
    \enspace,
\]
and thus, $f(\widehat{S}) \geq f((\Oset \cap M) \cup S_{\hi})$ by the definition of $\widehat{S}$. Using the last inequality, we get
\begin{align*}
	\frac{\expected{f(T \cup S_{\hi})}}{1 - \eps}
	\geq{} &
	\frac{1}{(1 + \eps)(1 - \eps)} \cdot f((\Oset \setminus M) \cup S_{\hi})
	\geq
	f((\Oset \setminus M) \cup S_{\hi})\\
	\geq{} &
	f((\Oset \setminus M) \cup S_{\hi}) + f((\Oset \cap M) \cup S_{\hi}) - f(\widehat{S})\\
	\geq{} &
	f(\Oset \cup S_{\hi}) + f(S_{\hi}) - f(\widehat{S})
	\geq
	f(\Oset) + f(S_{\hi}) - f(\widehat{S}) \enspace,
\end{align*}
where the second inequality follows from the non-negativity of $f$ and the last two inequalities follow from the submodularity and monotonicity of $f$, respectively. The first inequality follows from \cref{lem:sampling} by defining $g(S) = f(S \cup S_{\hi})$ since $T$ includes every element of $\Oset \setminus M$ with some probability $p \geq \hi / (k - |\Oset \cap M|) \geq 1 / (1 + \eps)$.
\end{proof}

\begin{lemma}\label{lem:secondary_relationship_exponential}
$2 \cdot f(\widehat{S}) \geq f(\Oset \cap M) + f(S_{2\hi})$.
\end{lemma}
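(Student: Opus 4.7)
The plan is to mimic the proof of \cref{lem:secondary_relationship} essentially verbatim, the only subtlety being that we must check that the relevant size bound, which previously used $|\Oset\cap M| + (k - |\Oset\cap M|) \leq k$, still works when the second summand is replaced by $\hi$ rather than $k - |\Oset\cap M|$. But this is immediate from the definition of $\hi$, which gives $\hi \leq k - |\Oset\cap M|$, so $|\Oset\cap M| + \hi \leq k$.

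More concretely, first I would partition $S_{2\hi}$ arbitrarily into two disjoint sets $S^1_{2\hi}$ and $S^2_{2\hi}$, each of size at most $\hi$. Since $2\hi\in I$, we have $S_{2\hi}\subseteq M$ and hence each $(\Oset\cap M)\cup S^h_{2\hi}$ is a subset of $M$; moreover, its cardinality is at most $|\Oset\cap M| + \hi \leq k$. Thus, by the definition of $\widehat{S}$, we get $f(\widehat{S}) \geq f((\Oset\cap M)\cup S^h_{2\hi})$ for $h\in\{1,2\}$.

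Next, submodularity of $f$ applied to the sets $(\Oset\cap M)\cup S^1_{2\hi}$ and $(\Oset\cap M)\cup S^2_{2\hi}$ gives
\begin{align*}
\sum_{h=1}^{2} f((\Oset\cap M) \cup S^h_{2\hi})
&\geq f(\Oset\cap M) + f((\Oset\cap M) \cup S_{2\hi})\\
&\geq f(\Oset\cap M) + f(S_{2\hi})\enspace,
\end{align*}
where the last inequality uses monotonicity of $f$. Combining this with the two upper bounds $f(\widehat{S})\geq f((\Oset\cap M)\cup S^h_{2\hi})$ yields $2f(\widehat{S}) \geq f(\Oset\cap M) + f(S_{2\hi})$, as desired.

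There is no real obstacle here: the only place the $\eps$-factor loss from exponential grouping could in principle enter is through needing $2\hi\in I$ and $|\Oset\cap M|+\hi \leq k$, both of which were explicitly arranged in the definition of $I$ and $\hi$. So this particular lemma goes through cleanly without any $(1-\eps)$ degradation, unlike \cref{obs:remaining_OPT_approximation_exponential} and \cref{lem:large_set_guarantee_exponential}; the approximation loss in the overall analysis is absorbed entirely by those two statements.
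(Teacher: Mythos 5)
Your proof is correct and follows exactly the approach the paper has in mind, which is simply the proof of \cref{lem:secondary_relationship} with $k - |\Oset\cap M|$ replaced everywhere by $\hi$; you also correctly identify the two minor points that need checking, namely that $2\hi \in I$ and $\hi \leq k - |\Oset\cap M|$, both of which hold by the definitions of $I$ and $\hi$.
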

We omit the proof of the last lemma since it is identical to the proof of \cref{lem:secondary_relationship} up to a replacement of every occurrence of the expression $k - |\Oset \cap M|$ with $\hi$.

We are now ready to prove the approximation guarateee of \cref{alg:many_sizes_exponential} (and thus, also complete the proof of the first part of \cref{thm:uppertwoplayersub}).
\begin{corollary}
\cref{alg:many_sizes_exponential} is a $(\nicefrac{2}{3} - \eps)$-approximation protocol.
\end{corollary}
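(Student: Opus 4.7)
The plan is to mirror the proof of the $\nicefrac{2}{3}$-approximation in \cref{ssc:two_players_algs}, but carry the $(1-\eps)$ slack introduced by \cref{obs:remaining_OPT_approximation_exponential} and \cref{lem:large_set_guarantee_exponential} through the calculation. First I would combine \cref{lem:secondary_relationship_exponential} with \cref{lem:large_set_guarantee_exponential} to get
\[
    2f(\widehat{S}) \;\geq\; f(\Oset \cap M) + f(S_{2\hi}) \;\geq\; f(\Oset \cap M) + (1-\eps)\left[\Oval + f(S_{\hi}) - f(\widehat{S})\right],
\]
and then move the $f(\widehat{S})$ term from the right-hand side to the left, yielding
\[
    (3-\eps)\, f(\widehat{S}) \;\geq\; f(\Oset \cap M) + (1-\eps)\,\Oval + (1-\eps)\, f(S_{\hi}).
\]

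Next I would substitute the lower bound $f(S_{\hi}) \geq (1-\eps)\, f(\Oset \setminus M)$ from \cref{obs:remaining_OPT_approximation_exponential}, obtaining
\[
    (3-\eps)\, f(\widehat{S}) \;\geq\; f(\Oset \cap M) + (1-\eps)^2 f(\Oset \setminus M) + (1-\eps)\,\Oval.
\]
Because $f$ is submodular and non-negative, $f(\Oset \cap M) + f(\Oset \setminus M) \geq f(\Oset) = \Oval$; and since the coefficient of $f(\Oset \cap M)$ is $1 \geq (1-\eps)^2$, I can lower bound the first two terms by $(1-\eps)^2 \Oval$. This gives
\[
    (3-\eps)\, f(\widehat{S}) \;\geq\; (1-\eps)^2\,\Oval + (1-\eps)\,\Oval \;=\; (1-\eps)(2-\eps)\,\Oval,
\]
so
\[
    f(\widehat{S}) \;\geq\; \frac{(1-\eps)(2-\eps)}{3-\eps}\,\Oval.
\]

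The final step is to observe that the expression $\tfrac{(1-\eps)(2-\eps)}{3-\eps}$ equals $\tfrac{2}{3} - \tfrac{7\eps - 3\eps^2}{3(3-\eps)}$, which is at least $\tfrac{2}{3} - C\eps$ for some absolute constant $C$ (e.g., $C = 3$) whenever $\eps \in (0, 1/2)$. Re-parameterizing the algorithm by replacing $\eps$ with $\eps/C$ (a change that affects the message length only by a constant factor, hence still $\widetilde{O}(k/\eps)$) yields the claimed $(\nicefrac{2}{3}-\eps)$ approximation ratio. The main (mild) obstacle is simply bookkeeping of the $(1-\eps)$ factors and verifying that the slack from the geometric grouping of the indices $I$ propagates additively rather than multiplicatively into the final guarantee; everything else is exactly as in the proof for \cref{alg:many_sizes}.
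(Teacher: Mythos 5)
Your argument is correct and follows essentially the paper's own proof: the same three auxiliary claims (\cref{obs:remaining_OPT_approximation_exponential}, \cref{lem:large_set_guarantee_exponential}, \cref{lem:secondary_relationship_exponential}) chained in the same order, with only cosmetic differences in how the $(1-\eps)$ factors are combined (you keep $3-\eps$ in the denominator; the paper weakens it to $3$). The re-parameterization at the end is unnecessary: your bound $\tfrac{(1-\eps)(2-\eps)}{3-\eps}$ is already at least $\tfrac{2}{3}-\eps$, since the gap $\tfrac{7\eps-3\eps^2}{3(3-\eps)}\le\eps$ is equivalent to $7\eps-3\eps^2\le 9\eps-3\eps^2$, which holds for all $\eps\ge 0$.
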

\begin{proof}
Combining \cref{lem:large_set_guarantee_exponential,lem:secondary_relationship_exponential}, we get
\[
	2 \cdot f(\widehat{S})
	\geq
	f(\Oset \cap M) + f(S_{2\hi})
	\geq
	f(\Oset \cap M) + (1 - \eps) \cdot [\Oval + f(S_{\hi}) - f(\widehat{S})]
	\enspace.
\]
Rearranging this inequality, and plugging into it the lower bound on $f(S_{\hi})$ given by \cref{obs:remaining_OPT_approximation_exponential}, yields
\begin{align*}
	f(\widehat{S})
	\geq{} &
	\frac{f(\Oset \cap M) + (1 - \eps) \cdot [\Oval + (1 - \eps) \cdot f(\Oset \setminus M)]}{3}\\
	\geq{} &
	\frac{(1 - \eps) \cdot \Oval + (1 - 2\eps) \cdot [f(\Oset \cap M) + f(\Oset \setminus M)]}{3}
	\geq
	(\nicefrac{2}{3} - \eps) \cdot \Oval
	\enspace,
\end{align*}
where the third inequality follows from the submodularity and non-negativity of $f$. Because $\widehat{S}$ is the output of \cref{alg:many_sizes_exponential}, this concludes the proof.
\end{proof}

\section{Missing Proofs}
\subsection{Missing Proof of \texorpdfstring{\cref{sec:two-player-sub-hardness}}{Section~\ref*{sec:two-player-sub-hardness}}} \label{app:missing_proofs}

\lemTwoSidesReductionProperties*
\begin{proof}
It is clear from its definition that $g_i$ is non-negative. To prove that $g_i$ is also monotone and submodular we need to show 
\begin{enumerate}[label=(\roman*),itemsep=0em]
\item\label{item:giMonotone} \parbox{40mm}{$g_i(v \mid S) \geq 0$} for every $S\subseteq W'$ and $v\in W'\setminus S$, and

\item\label{item:giSubmodular} \parbox{40mm}{$g_i(v \mid S_1) \geq g_i(v \mid S_2)$} for every $S_1\subseteq S_2 \subseteq W'$ and $v\in W'\setminus S_2$.
\end{enumerate}
We show that by considering a few cases. For $v = w$, we get
\[
    g_i(w \mid S)
    =
    \begin{cases*}
        \frac{1}{3} & \text{if $S \subseteq \{v_i\}$} \enspace,\\
        0 & \text{otherwise} \enspace.
    \end{cases*}
\]
For $v = v_i \not\in S$, we get
\[
    g_i(v_i \mid S)
    =
    \begin{cases*}
        \frac{2}{3} & \text{if $S \subseteq \{w\}$} \enspace,\\
        \frac{1}{3} & \text{if $|S \setminus \{w\}| = 1$} \enspace,\\
        0 & \text{otherwise} \enspace.
    \end{cases*}
\]
Finally, for $j \neq i$ and $v = v_j\not\in S$,
\[
    g_i(v_j \mid S)
    =
    \begin{cases*}
        \frac{2}{3} & \text{if $S = \varnothing$} \enspace,\\
        \frac{1}{3} & \text{if $S = \{w\}$} \enspace,\\
        \frac{1}{3} & \text{if $|S \setminus \{w\}| = 1$ and $\{v_i, w\} \not \subseteq S$} \enspace,\\
        0 & \text{otherwise} \enspace.
    \end{cases*}
\]
One can verify that in all the above cases $g_i(v \mid S)$ indeed fulfills~\ref{item:giMonotone} and~\ref{item:giSubmodular}.

It remains to show that $g_i$ being non-negative, monotone and submodular implies that $f_i$ has these properties as well. Recall that $f_i(S) = G_i(y^S)$. Thus, the fact that $g_i$ is non-negative (and therefore, so is its multilinear extension $G_i$) directly implies non-negativity of $f_i$. Consider now two sets $S_1 \subseteq S_2 \subseteq W$. One can observe that the definition of $y^S$ implies $y^{S_1} \leq y^{S_2}$ component-wise. Hence, by the monotonicity of $g_i$ we obtain
\begin{equation*}
    f_i(S_1)
    =
    G_i\left(y^{S_1}\right)
    =
    \ee{g_i\left(\RSet\left(y^{S_1}\right)\right)}
    \leq
    \ee{g_i\left(\RSet\left(y^{S_2}\right)\right)}
    =
    G_i\left(y^{S_2}\right)
    =
    f_i(S_2)
    \enspace,
\end{equation*}
which implies that $f_i$ is also monotone.

We now check the submodularity of $f_i$. For that purpose, let $v$ be an arbitrary element of $W \setminus S_2$. If $v = w$, then
\begin{align*}
    f_i(w \mid S_1)
    ={} &
    G_i\left(y^{S_1 \cup \{w\}}\right) - G_i\left(y^{S_1}\right)
    =
    \ee{g_i\left(w \mid \RSet\left(y^{S_1}\right)\right)}\\
    \geq{} &
    \ee{g_i\left(w \mid \RSet\left(y^{S_2}\right)\right)}
    =
    G_i(y^{S_2 \cup \{w\}}) - G_i(y^{S_2})
    =
    f_i(w \mid S_2)
    \enspace,
\end{align*}
where the second and third equalities hold by linearity of expectation, and the inequality follows from the submodularity of $g_i$ and the inequality $y^{S_1} \leq y^{S_2}$. Similarly, if $v =u_i^j$ for some $i\in [n]$ and $j\in [k-1]$, then
\begin{align*}
    f_i(u_i^j \mid S_1)
    &=
    G_i\left(y^{S_1} + \characteristic_{v_i} / (k - 1)\right) - G_i\left(y^{S_1}\right)\\
    ={} &
    \frac{G_i\left(y^{S_1 \cup \{u_i^1, u_i^2, \dotsc, u_i^{k - 1}\}}\right) - G_i\left(y^{S_1 \setminus \{u_i^1, u_i^2, \dotsc, u_i^{k - 1}\}}\right)}{k - 1}
    =
    \frac{\ee{g_i\left(v_i \mid \RSet\left(y^{S_1}\right) \setminus \{v_i\}\right)}}{k-1}\\
    \geq{} &
    \frac{\ee{g_i\left(v_i \mid \RSet\left(y^{S_2}\right) \setminus \{v_i\}\right)}}{k-1}
    =
    \frac{G_i\left(y^{S_2 \cup \{u_i^1, u_i^2, \dotsc, u_i^{k - 1}\}}\right) - G_i\left(y^{S_2 \setminus \{u_i^1, u_i^2, \dotsc, u_i^{k - 1}\}}\right)}{k - 1}\\
    ={} &
    G_i\left(y^{S_2} + \characteristic_{v_i} / (k - 1)\right) - G_i\left(y^{S_2}\right)
    =
    f_i(u \mid S_2)
    \enspace,
\end{align*}
where the second and penultimate equalities hold by the multilinearity of $G_i$. This completes the proof that $f_i$ is submodular.
\end{proof}

\subsection{Missing Proofs of \texorpdfstring{\cref{sec:gen_hardness}}{Section~\ref*{sec:gen_hardness}}} 

\subsubsection{Weighted Fractional Coverage Functions are Weighted Coverage Functions}
\label{app:frac-cover}
The following lemma was stated informally in the section. Here, we state it formally, and provide a proof for it.

\begin{lemma}
Every weighted fractional coverage function is a weighted coverage function.
\end{lemma}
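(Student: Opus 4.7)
The plan is to explicitly construct, for any weighted fractional coverage function $f$ on ground set $W$ with universe $U$, weights $a\colon U\to\nnR$, and probability functions $\{p_v\colon U\to [0,1]\}_{v\in W}$, an equivalent weighted coverage function. The construction is not required to be efficient (the lemma only asserts existence), so we may blow up the universe as needed.

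First I would define a new universe $U'$ and weights by ``splitting'' each $u\in U$ into one sub-element for every possible subset $T\subseteq W$ of potential coverers, weighted by the probability of that coverage pattern under independent $\{0,1\}$-draws with parameters $p_v(u)$. Formally, set
\[
U' = \{(u,T) : u\in U,\, T\subseteq W\}, \quad a'(u,T) = a_u \cdot \prod_{v\in T} p_v(u) \cdot \prod_{v\in W\setminus T}(1-p_v(u)),
\]
and declare that element $v\in W$ covers exactly the sub-elements in $S'_v = \{(u,T)\in U' : v\in T\}$. This defines a (standard) weighted coverage function $f'$ on $W$ via $f'(S) = \sum_{(u,T) \in \bigcup_{v\in S} S'_v} a'(u,T)$.

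Next I would verify $f'=f$ by a direct computation for arbitrary $S\subseteq W$. Grouping by $u$, the contribution of the sub-elements $\{(u,T)\}_{T\subseteq W}$ to $f'(S)$ equals $a_u$ times the probability that $T\cap S\neq\varnothing$ under the product distribution on $T$ with marginals $p_v(u)$. Splitting this probability as $1$ minus the probability that $T\cap S=\varnothing$, and using independence of the coordinates to factor the latter as $\prod_{v\in S}(1-p_v(u))$, yields
\[
\sum_{T\subseteq W} a'(u,T)\cdot \mathbbm{1}\{T\cap S\neq\varnothing\} = a_u\cdot\bigl(1 - \prod_{v\in S}(1-p_v(u))\bigr).
\]
Summing over $u\in U$ and comparing with~\eqref{eq:fracWeightedCoverageFunction} gives $f'(S)=f(S)$.

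The argument is entirely elementary; the only mild obstacle is making sure the bookkeeping for the product distribution on $2^W$ is set up so that the factorization of the ``not covered'' event into the $v\in S$ coordinates is clean. Since we place no efficiency constraints on the construction, the exponential size of $U'$ is immaterial, and we obtain the desired equivalence.
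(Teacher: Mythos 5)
Your proof is correct and takes essentially the same approach as the paper's: both build a new universe indexed by coverage patterns $T\subseteq W$, assign each pattern the weight $a_u\prod_{v\in T}p_v(u)\prod_{v\notin T}(1-p_v(u))$, map $v$ to the set of patterns containing it, and verify the identity by factoring the ``not covered'' event over the coordinates in $S$. The only cosmetic difference is that you keep a separate sub-element $(u,T)$ for each $u\in U$, whereas the paper uses $\overline{U}=2^W$ and aggregates the weight over $u$; your per-$u$ bookkeeping makes the final factorization slightly more immediate but is otherwise the same argument.
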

\begin{proof}
Consider a weighted fractional coverage function $f\colon 2^V \to \mathbb{R}_{\geq 0}$. Hence, $f$ is of the form
\begin{equation*}
f(S) = \sum_{u\in U} a_u \cdot \left( 1- \prod_{v\in S} (1-p_v(u)) \right) \qquad \forall S\subseteq V\enspace,
\end{equation*}
where $U$ is a finite universe with non-negative weights $a\colon U\to \mathbb{R}_{\geq 0}$, each element $v\in V$ is a subset of $U$, i.e., $V\subseteq 2^{U}$, and $p_v\colon U\to [0,1]$ for $v\in V$.

To show that $f$ is a weighted coverage function, we interpret each element $v\in V$ as a subset $\overline{v}$ of a new universe $\overline{U}$ with non-negative weights $\overline{a}\colon U\to \mathbb{R}_{\geq 0}$, such that
\begin{equation}\label{eq:fIsCoverage}
f(S) = \sum_{\overline{u}\in \bigcup\limits_{v\in S}\overline{v}}\overline{a}(\overline{u})\qquad \forall S\subseteq V\enspace.
\end{equation}
We now define a universe $\overline{U}$ with weights $\overline{a}:U\to\mathbb{R}_{\geq 0}$ and the mapping from $v\in V$ to $\overline{v} \subseteq \overline{U}$ such that~\eqref{eq:fIsCoverage} holds. The universe $\overline{U}$ is 
\begin{equation*}
\overline{U} = 2^V\enspace,
\end{equation*}
and the weight $\overline{a}(\overline{u})\in \mathbb{R}_{\geq 0}$ of an element $\overline{u} \in \overline{U}$ is set to
\begin{equation*}
\overline{a}(\overline{u}) = \sum_{u\in U} \left(a_u \prod_{v\in \overline{u}} p_v(u) \prod_{v\in V\setminus \overline{u}} (1-p_v(u))\right)\enspace.
\end{equation*} 
Moreover, an element $v\in V$, which is a subset of $U$, gets mapped to the subset $\overline{v}$ of $\overline{U}$ given by
\begin{equation*}
\overline{v} = \left\{ \overline{u} \in \overline{U} : v\in \overline{u} \right\}\enspace.
\end{equation*}
Notice that this implies (in particular) that, for any set $S\subseteq V$, 
\begin{equation}\label{eq:Sbar}
\bigcup_{v\in S} \overline{v} = \left\{\overline{u}\in \overline{U}: S\cap \overline{u} \neq \varnothing \right\}\enspace.
\end{equation}
We now show that~\eqref{eq:fIsCoverage} holds. Hence, let $S\subseteq V$. We have
\begin{equation}\label{eq:expandABarValue}
\begin{aligned}
&\sum_{\overline{u}\in \bigcup\limits_{v\in S}\overline{v}}\overline{a}(\overline{u})
= \sum_{\substack{\overline{u}\in \overline{U}:\\ S\cap \overline{u}\neq\varnothing}} \overline{a}(\overline{u})
= \sum_{\substack{X\subseteq V:\\ S\cap X\neq\varnothing}} \overline{a}(X)
=\sum_{\substack{X_1\subseteq S:\\ X_1\neq\varnothing}}\; \sum_{X_2\subseteq V\setminus S} \overline{a}(X_1\cup X_2)\\
&=\sum_{u\in U} \left[a_u
\left(\sum_{\substack{X_1\subseteq S:\\X_1\neq\varnothing}}\;
\prod_{v\in X_1} p_v(u) \prod_{v\in S\setminus X_1} (1-p_v(u))\right)
\left(\sum_{X_2\subseteq V\setminus S}\; \prod_{v\in X_2} p_v(u) \prod_{v\in (V\setminus S)\setminus X_2}\mspace{-18mu}(1-p_v(u))\right)\right]\enspace,
\end{aligned}
\end{equation}
where the first equality follows from~\eqref{eq:Sbar}. To further expand~\eqref{eq:expandABarValue}, we observe the following basic fact.
\begin{claim}\label{claim:totalProbIsOne}
Let $Z$ be a finite set, and let $p\colon Z\to [0,1]$. Then,
\begin{equation*}
\sum_{X\subseteq Z} \left(\prod_{z\in X} p_z \prod_{z\in Z\setminus X} (1-p_z)\right) = 1\enspace.
\end{equation*}
\end{claim}
\begin{proof}
One can interpret the values $p\colon Z\to [0,1]$ as probabilities. Let $Q$ be a random subset of $Z$ containing element $z\in Z$ with probability $\Pr[z\in Q] = p_z$, independently of the other elements. Then, for any fixed $X\subseteq Z$,
\begin{equation*}
\Pr[Q = X] = \prod_{z\in X} p_z \prod_{z\in Z\setminus X} (1-p_z)\enspace.
\end{equation*}
The claim now follows by observing that
\begin{equation*}
\sum_{X\subseteq Z}\Pr[Q=X] = 1\enspace,
\end{equation*}
because $Q$ realizes to some subset of $Z$ with probability $1$.
\end{proof}

Claim~\ref{claim:totalProbIsOne} allows for the following simplifications of terms from~\eqref{eq:expandABarValue}:
\begin{align*}
\sum_{\substack{X_1\subseteq S:\\ X_1\neq \varnothing}}
\left(\prod_{v\in X_1} p_v(u) \prod_{v\in S\setminus X_1} (1-p_v(u))\right)
&= 1 - \prod_{v\in S} (1-p_v(u))\enspace, \text{ and}\\
\sum_{X_2\subseteq V\setminus S}
\left(\prod_{v\in X_2} p_v(u) \prod_{v\in (V\setminus S)\setminus X_2} (1-p_v(u))\right)
&= 1\enspace,
\end{align*}
thus leading to
\begin{align*}
\sum_{\overline{u}\in \bigcup\limits_{v\in S}\overline{v}}\overline{a}(\overline{u}) 
 &= \sum_{u\in U} a_u \left(1 - \prod_{v\in S} p_v(u) \right) = f(S) \enspace,
\end{align*}
which shows~\eqref{eq:fIsCoverage} as desired.
\end{proof}

\subsubsection{Bounding the Partial Derivatives of \texorpdfstring{\boldmath{$\widehat{F}$}}{F}}
\label{app:bounding_partial_derivatives}
In this section we perform the rather mechanical calculations that bound the partial derivatives of $\widehat{F}$ at $\vec{1} = (1,1, \ldots, 1)$. For convenience, recall the definition of $\hat F$ is
\begin{align*}
    \widehat{F}(s_1, \ldots, s_p) =a_p +  \sum_{j=1}^{p-1} a_j \cdot  \left( 1 - \prod_{i=1}^j \left(1 - \frac{a_i}{A_{\geq i}} \right)^{s_i} \right)\,.
\end{align*}

We start by showing the following identity, which follows by first taking the partial derivative, and then applying the identities of  \cref{lem:gen_hardness_select_a}.
\begin{claim*}
    For $\ell = 1, \ldots, p-1$,
    \begin{align*}
        \frac{\partial \hat F}{\partial s_\ell}(\vec{1}) & =  -\ln\left( 1- \frac{a_\ell}{A_{\geq \ell}}\right) \left( \frac{A_{\geq \ell}}{a_\ell} - (p+1-\ell) \right)\,.
    \end{align*}
\end{claim*}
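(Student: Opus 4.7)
The plan is a straightforward calculation in two stages: first differentiate $\widehat{F}$ term-by-term, then repeatedly apply the identities from \cref{lem:gen_hardness_select_a} to bring the expression into the claimed closed form. I do not expect any real obstacles, only the need to keep index ranges straight because the summand with $j=p$ is special (it contributes the bare $a_p$ rather than a product).

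First I would compute the partial derivative directly. Since only the summands with $j \geq \ell$ in $\widehat{F}$ depend on $s_\ell$, and $\frac{d}{dx} c^{x} = c^x \ln c$, differentiating gives
\begin{align*}
\frac{\partial \widehat{F}}{\partial s_\ell}(s_1,\ldots,s_p)
= -\ln\!\left(1-\tfrac{a_\ell}{A_{\geq \ell}}\right)\cdot \sum_{j=\ell}^{p-1} a_j \prod_{i=1}^j \left(1-\tfrac{a_i}{A_{\geq i}}\right)^{s_i}.
\end{align*}
Evaluating at $\vec{1}$ removes the $s_i$ exponents, leaving
\begin{align*}
\frac{\partial \widehat{F}}{\partial s_\ell}(\vec{1})
= -\ln\!\left(1-\tfrac{a_\ell}{A_{\geq \ell}}\right)\cdot \sum_{j=\ell}^{p-1} a_j \prod_{i=1}^{j} \left(1-\tfrac{a_i}{A_{\geq i}}\right).
\end{align*}

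Next I would simplify the inner sum using \eqref{eq:gen_hardness_become_one} of \cref{lem:gen_hardness_select_a}, which states $a_j \prod_{i=1}^{j-1}(1-a_i/A_{\geq i}) = 1$. Multiplying both sides by $(1-a_j/A_{\geq j})$ yields $a_j \prod_{i=1}^{j}(1-a_i/A_{\geq i}) = 1 - a_j/A_{\geq j}$, so the sum collapses to $\sum_{j=\ell}^{p-1}(1 - a_j/A_{\geq j})$.

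Finally, I would evaluate this last sum via \eqref{eq:gen_hardness_nice_sum}, which asserts $\sum_{i=\ell}^p (2 - a_i/A_{\geq i}) = A_{\geq \ell}/a_\ell$. Writing $1 - a_j/A_{\geq j} = (2 - a_j/A_{\geq j}) - 1$ and observing that $A_{\geq p} = a_p$ (so the $j=p$ term equals $2 - 1 = 1$), we get
\begin{align*}
\sum_{j=\ell}^{p-1}\left(1 - \tfrac{a_j}{A_{\geq j}}\right)
= \sum_{j=\ell}^{p}\left(2 - \tfrac{a_j}{A_{\geq j}}\right) - 1 - (p-\ell)
= \tfrac{A_{\geq \ell}}{a_\ell} - (p+1-\ell).
\end{align*}
Substituting back yields exactly the claimed expression, completing the proof.
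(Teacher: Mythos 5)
Your proposal is correct and follows essentially the same route as the paper's proof: differentiate $\widehat{F}$ term-by-term, use \eqref{eq:gen_hardness_become_one} to collapse $a_j\prod_{i=1}^{j}(1-a_i/A_{\geq i})$ to $1-a_j/A_{\geq j}$, and then invoke \eqref{eq:gen_hardness_nice_sum} to evaluate the resulting sum. The only cosmetic difference is that the paper first extends the sum to $j=p$ (noting that the extra term vanishes since $a_p/A_{\geq p}=1$) and then subtracts $(p+1-\ell)$ ones, whereas you write $1-c = (2-c)-1$ and handle the $j=p$ term separately; the algebra is identical.
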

\begin{proof}
We have
    \begin{align*}
        \frac{\partial \hat F}{\partial s_\ell}(\vec{s}) & =  \frac{\partial }{\partial s_\ell} \left( \sum_{j=1}^{p-1} a_j \cdot  \left( 1 - \prod_{i=1}^j \left(1 - \frac{a_i}{A_{\geq i}} \right)^{s_i} \right)+ a_p\right)\\
         & = - \ln\left( 1- \frac{a_\ell}{A_{\geq \ell}}\right) \sum_{j=\ell}^{p-1} a_j \prod_{i=1}^j \left(1 - \frac{a_i}{A_{\geq i}} \right)^{s_i}\enspace.
    \end{align*}
    Using  \cref{lem:gen_hardness_select_a}, we can simplify this expression for $\vec{s} = \vec{1}$ as follows.
    \begin{align*}
        \frac{\partial \hat F}{\partial s_\ell}(\vec{1})  & = - \ln\left( 1- \frac{a_\ell}{A_{\geq \ell}}\right) \sum_{j=\ell}^{p-1} a_j \prod_{i=1}^j \left(1 - \frac{a_i}{A_{\geq i}} \right) \\
        & = - \ln\left( 1- \frac{a_\ell}{A_{\geq \ell}}\right) \sum_{j=\ell}^{p-1}  \left(1 - \frac{a_j}{A_{\geq j}} \right) & \mbox{\small $a_j\prod_{i=1}^{j-1} \left(1 - \frac{a_i}{A_{\geq i}}\right) = 1$ by \cref{lem:gen_hardness_select_a}} \\
        & = -\ln\left( 1- \frac{a_\ell}{A_{\geq \ell}}\right) \sum_{j=\ell}^{p}  \left(1 - \frac{a_j}{A_{\geq j}} \right) & \mbox{\small $a_p/A_{\geq p} = a_p/a_p = 1$}\\
        & =-\ln\left( 1- \frac{a_\ell}{A_{\geq \ell}}\right) \left( \frac{A_{\geq \ell}}{a_\ell} - (p+1-\ell) \right)\enspace. & \mbox{\small $\sum_{j=\ell}^p  \left(2 - \frac{a_j}{A_{\geq j}} \right) = \frac{A_{\geq \ell}}{a_\ell}$ by  \cref{lem:gen_hardness_select_a}}
        \tag*{\qedhere}
    \end{align*}
\end{proof}
The bound on the partial derivatives, which we restate here for convenience, now follows from the above identity, the fact that $\ln(1-x) = -\sum_{i=1}^\infty \frac{x^i}{i}$ for $|x| < 1$, and the bounds of \cref{lem:gen_hardness_select_a}.
\claimpartialderivatives*
\begin{proof}
    Note that for $\ell = p$ the statement trivially holds because $\frac{\partial \widehat{F}}{\partial s_p} (\vec{1}) = 0$. 
    For $\ell = 1, \ldots, p-1$, we use the identity of the previous claim together with the fact that $\ln(1-x) = -\sum_{i=1}^\infty \frac{x^i}{i}$ for $|x| < 1$. 
    For brevity, we let $x = a_\ell/A_{\geq \ell}$. Then,
    \begin{align*}
        \frac{\partial \hat F}{\partial s_\ell}(\vec{1}) &= -\ln\left( 1- x\right) \left( \frac{1}{x} - (p+1-\ell) \right) \\
        & = \sum_{i=1}^\infty \left(\frac{x^i}{i}\right) \left( \frac{1}{x} - (p+1-\ell) \right) \\
        & = \sum_{i=1}^\infty \left(\frac{x^{i-1}}{i}\right) \left( 1 - x\cdot (p+1-\ell) \right)\enspace.
    \end{align*}
    By \cref{lem:gen_hardness_select_a}, we have $x = a_{\ell}/A_{\geq \ell} \leq \left( 2(p+1-\ell) - H_{p+1-\ell}\right)^{-1}$. Since $\sum_{i=1}^\infty \left(\frac{x^{i-1}}{i}\right)\geq \frac{x^0}{1} = 1$, this gives the lower bound
    \begin{align*}
        \frac{\partial \hat F}{\partial s_\ell}(\vec{1}) &\geq 1 - x\cdot (p+1-\ell) \\
        &\geq 1 - \frac{(p+1-\ell)}{2(p+1-\ell) - H_{p+1-\ell}}\\
        &= \frac{1}{2} - \frac{1}{2}\cdot \frac{H_{p+1-\ell}}{2(p+1-\ell) - H_{p+1-\ell}}\\
 	&\geq \frac{1}{2} - \frac{1}{2} \cdot \frac{H_{p+1-\ell}}{p+1-\ell} \\
 	&\geq \frac{1}{2} - \frac{H_{p+1-\ell}}{p+1-\ell}\enspace. 
    \end{align*}
    For the upper bound, we have
    \begin{align*}
        \frac{\partial \hat F}{\partial s_\ell}(\vec{1})  & = \sum_{i=1}^\infty \left(\frac{x^{i-1}}{i}\right) \left( 1 - x\cdot (p+1-\ell) \right) \\
        & \leq \left(1 + \frac{1}{2} \sum_{i=1}^\infty \left(x^{i}\right)\right) \left( 1 -x \cdot (p+1-\ell)\right) \\
        & = \left(1 + \frac{1}{2}\cdot \frac{x}{1 -x}\right) \left( 1 - x\cdot (p+1-\ell)\right) \\
        & = 1 - x\cdot (p+1-\ell)  +\frac{1}{2} x \cdot \frac{1 - x\cdot (p+1-\ell)}{1-x} \\
        & \leq 1 - x\cdot\left( (p+1-\ell) - \frac{1}{2} \right)\\
        & \leq 1/2\enspace,
    \end{align*}
where the last inequality is due to the bound $x = a_\ell/A_{\geq \ell} \geq \left(2(p+1 - \ell) - 1\right)^{-1}$ of \cref{lem:gen_hardness_select_a}.
\end{proof}

\subsection{Missing Proofs of \texorpdfstring{\cref{sec:poly2players}}{Section~\ref*{sec:poly2players}}} \label{app:poly2players}

\claimUpperBoundEta*
\begin{proof}
Notice that the claim clearly holds for $x\in \{0,\ldots, k_A\}$ as $(1-\sfrac{1}{k_A})^x \leq e^{-\frac{x}{k_A}}$ holds because $1+y \leq e^y$ for all $y\in \mathbb{R}$.
\smallskip

To prove the claim also for fractional values of $x$, we fix an integer $p\in \{0,\ldots, k_A-1\}$ and show that $\eta(x)\leq e^{-\frac{x}{k_A}}$ holds for all $x\in [p,p+1]$. Hence, let $x=p+\lambda$ with $\lambda\in [0,1]$.
By the construction of $\eta$, we have
\begin{equation*}
\eta(x) =   (1-\lambda)\cdot \left(1 - \frac{1}{k_A}\right)^{p}
       + \lambda \cdot \left(1 - \frac{1}{k_A}\right)^{p+1}
   \triangleq g(\lambda)
\enspace.
\end{equation*}
Hence, our goal is to show $e^{-\frac{p+\lambda}{k_A}} - g(\lambda) \geq 0$ for all $\lambda\in [0,1]$. To find a minimizer of
\begin{equation*}
h(\lambda) = e^{- \frac{p+\lambda}{k_A}} - g(\lambda)
\end{equation*}
over $\lambda\in [0,1]$, we consider the border values $\lambda \in \{0,1\}$, and points in between for which $h$ has a derivative of zero. Because we already proved the statement for integer $x$, the border values $\lambda\in \{0,1\}$ fulfill $h(\lambda)\geq 0$. Moreover,
\begin{equation*}
h'(\lambda) = - \frac{1}{k_A} \cdot e^{-\frac{p+\lambda}{k_A}}
  + \frac{1}{k_A} \left( 1 - \frac{1}{k_A} \right)^p\enspace.
\end{equation*}
Setting this derivative to zero leads to a value $\overline{\lambda}$ satisfying
\begin{equation*}
e^{-\frac{p+\overline{\lambda}}{k_A}} = \left(1 - \frac{1}{k_A}\right)^p\enspace.
\end{equation*}
Notice that such a $\overline{\lambda}$ must satisfy $\overline{\lambda}\geq 0$ because $e^{-\frac{p+\lambda}{k_A}}$ is strictly decreasing in lambda and $e^{-\frac{p}{k_A}} \geq (1-\frac{1}{k_A})^p$.
We now obtain $h(\overline{\lambda}) \geq 0$ as desired because $g(\overline{\lambda}) \leq g(0) \leq (1-\frac{1}{k_A})^p = e^{-\frac{p + \bar{\lambda}}{k_A}}$, where the first inequality holds because $g(\lambda)$ is a non-increasing function, and the equality follows from the choice of $\bar{\lambda}$. This completes the proof of the claim.
\end{proof}

We now would like to prove \cref{lem:boundOptProbFor2PLP}. However, since it is more convenient for our proof technique, we prove a slightly stronger version of this lemma, stated below, where the variables $y$ can take values within $[0,10]$ instead of just $[0,1]$.

\begin{lemma}
The optimal value $\alpha$ of the nonlinear program
\begin{equation}\label{eq:optProbFor2PLBApp}
\begingroup
\renewcommand\arraystretch{1.2}
\begin{array}{rr@{\;\;}c@{\;\;}l}
\min & z & &\\
     & z &\geq &1 - e^{-1} - \left[ (1-e^{-1})x - e^{-1} - e^{-1} x \ln x \right]\cdot y \\
     & z &\geq &\frac{1}{2} (1-e^{-1}) + \frac{1}{2}\left[ e^{-1} + x\ln x + (1-e^{-1}) x \right]\cdot y\\
& x &\in &[0,1] \\
& y &\in &[0,10] \\
&   z &\in &\mathbb{R}
\end{array}
\endgroup
\end{equation}
satisfies $\alpha\geq 0.514$.
\end{lemma}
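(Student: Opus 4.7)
The plan is to exploit the structure of the problem~\eqref{eq:optProbFor2PLBApp}: both constraints are lower bounds on $z$, so for every fixed $(x,y)$ the optimal $z$ equals $\max\{h_1(x,y),h_2(x,y)\}$, where $h_1(x,y) = 1 - e^{-1} - A(x)\cdot y$ and $h_2(x,y) = \tfrac{1}{2}(1-e^{-1}) + \tfrac{1}{2}B(x)\cdot y$, with $A(x) = (1-e^{-1})x - e^{-1} - e^{-1}x\ln x$ and $B(x) = e^{-1} + x\ln x + (1-e^{-1})x$. My first step is to show that the minimum is attained at a pair $(x^*, y^*)$ lying in the interior of the box and at which \emph{both} constraints are tight. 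For this I would verify directly that on the boundary ($x\in\{0,1\}$ or $y\in\{0,10\}$) the value of $\max\{h_1,h_2\}$ is already strictly larger than $0.514$, and that for $x$ bounded away from $0$ and $1$ one has $A(x)>0$ and $B(x)>0$, so that $h_1$ strictly decreases and $h_2$ strictly increases in $y$, forcing the optimum to sit on their intersection curve.

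Solving $h_1(x,y) = h_2(x,y)$ for $y$ gives a closed form
\[
y^*(x) \;=\; \frac{(1-e^{-1})/2}{A(x) + B(x)/2}\enspace,
\]
so the problem reduces to minimizing the single-variable function $\varphi(x) = \tfrac{1}{2}(1-e^{-1}) + \tfrac{1}{2}B(x)\,y^*(x)$ over $x\in[0,1]$. The KKT condition collapses to $\varphi'(x) = 0$, which is a transcendental equation in $x$ (because of the $\ln x$ terms), but standard unconstrained calculus shows it admits a unique interior critical point $x^*$.

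To obtain the bound $\alpha \geq 0.514$ rigorously, I would next locate $x^*$ in a small interval: a direct sign computation of $\varphi'$ at two points (around $x\approx 0.75$) shows $x^*\in(0.74,0.76)$. On such a small subinterval, $\varphi$ can be bracketed either by verifying convexity of $\varphi$ via its second derivative, or by replacing $\ln x$ with a rational upper/lower bound obtained from a truncated Taylor expansion about a rational anchor point. This reduces the verification of $\varphi(x^*) \geq 0.514$ to a finite set of algebraic inequalities.

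The main obstacle is the last step: because $x^*$ has no closed form, the proof must convert a soft numerical observation ($\varphi(x^*)\approx 0.514$ at $x^*\approx 0.75$) into a rigorous inequality. I expect the cleanest route is an explicit description of the unique KKT point (two tight constraints plus $\varphi'=0$), combined with convexity of $\varphi$ on a narrow interval around $x^*$, which then allows $\varphi(x^*)$ to be sandwiched between two rational quantities differing by less than $10^{-3}$ and both $\geq 0.514$.
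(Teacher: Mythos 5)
Your proposal follows essentially the same route as the paper: both reduce to finding the unique interior point at which both constraints are tight and then show the corresponding value exceeds $0.514$. Where the paper writes the Karush--Kuhn--Tucker conditions in two variables $(x,y)$ with two multipliers $\lambda_1, \lambda_2$ and eliminates the multipliers to arrive at the scalar equation $\ln x^* + 3 - (e+1)x^* = 0$, you propose to first eliminate $y$ via $h_1 = h_2$, obtaining $y^*(x)$ in closed form, and then set $\varphi'(x)=0$; these two calculations are equivalent and lead to the same critical point.

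Two concrete issues would derail your plan as written. First, the claim that $A(x) > 0$ for all $x$ bounded away from $0$ is false: $A(x) = (1-e^{-1})x - e^{-1} - e^{-1}x\ln x$ satisfies $A(0)=-e^{-1}<0$ and remains negative roughly up to $x\approx 0.35$, so on that range $h_1$ is \emph{increasing} in $y$ and the minimum over $y$ sits at $y=0$, not on the intersection curve. This is harmless but needs an explicit case split: for such $x$ the objective is $\max\{h_1(x,0),h_2(x,0)\} = 1-e^{-1} \geq 0.63$, which already exceeds the target, so these $x$ can be discarded before passing to $\varphi$. Second, your numerical anchor is wrong. The unique relevant critical point is the root of $\ln x + 3 - (e+1)x = 0$ in $[0.5,1]$, namely $x^*\approx 0.7176$, not $\approx 0.75$. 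Consequently the sign check of $\varphi'$ at $0.74$ and $0.76$ will \emph{not} bracket $x^*$ (both values are already to its right), and the bracketing argument collapses. Once the interval is corrected, say to $(0.71,0.72)$, the remainder of your plan (bracketing $\varphi(x^*)$ using monotonicity / convexity on a small interval, or rational bounds on $\ln x$) is a reasonable way to make the paper's final numerical claim $z^*\in[0.514,0.515)$ rigorous.
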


\begin{proof}
We define the two functions $f_1,f_2\colon [0,1]\times [0,10]\to \mathbb{R}$ as the right-hand sides of the two non-trivial equations of~\eqref{eq:optProbFor2PLBApp}, i.e.,
\begin{align*}
f_1(x,y) &= 1 - e^{-1} - \left[ (1-e^{-1})x - e^{-1} - e^{-1} x \ln x \right]\cdot y\enspace, \text{ and} \\
f_2(x,y) &= \frac{1}{2} (1-e^{-1}) + \frac{1}{2}\left[ e^{-1} + x\ln x + (1-e^{-1}) x \right]\cdot y\enspace.\\
\end{align*}
We show below that~\eqref{eq:optProbFor2PLBApp} has a unique minimizer $(x^*, y^*)$, defined as follows.
\begin{enumerate}[label=(\roman*)]
\item $x^*$ is the unique root of $\ln x + 3 - (e+1) x$ in the interval $[0.5,1]$, i.e., 
\begin{equation*}
x^* \approx 0.7175647\enspace.
\end{equation*}
\item $y^*$ is the unique value of $y\in \mathbb{R}$ for which $f_1(x^*,y) = f_2(x^*,y)$, which is
\begin{equation*}
y^* = \frac{1-e}{1 + x^* \left(
3 - 3e + (2-e)\ln x^*
\right)}  \approx 0.6797341\enspace.
\end{equation*}
\end{enumerate}
By plugging these values into~\eqref{eq:optProbFor2PLBApp}, one can easily check that the corresponding optimal $z$ value, denoted by $z^*$ and satisfing $z^* = \max\{f_1(x^*, y^*), f_2(x^*, y^*)\}$, fulfills
\begin{equation*}
z^* \in [0.514, 0.515)\enspace.
\end{equation*}

To show that the minimizer of~\eqref{eq:optProbFor2PLBApp} is indeed the tuple $(x^*, y^*)$ described above, we show the following.
\begin{enumerate}
\item Problem~\eqref{eq:optProbFor2PLBApp} does not have a minimizer $x,y$ at the boundary of the area $[0,1]\times [0,10]$, i.e., any minimizer satisfies $x\in (0,1)$ and $y\in (0,10)$.

\item We then apply the (necessary) Karush-Kuhn-Tucker conditions to a modified version of Problem~\eqref{eq:optProbFor2PLBApp}, where we drop the requirements $x\in [0,1]$ and $y\in [0,10]$, i.e., we only consider the remaining two constraints, described by the right-hand sides given by $f_1$ and $f_2$, to show that $(x^*, y^*)$ is the unique minimizer.
\end{enumerate}

For the first point, we start by observing that
\begin{equation*}
\begingroup
\renewcommand\arraystretch{2.0}
\begin{array}{r@{\;}c@{\;}>{\displaystyle}ll}
f_1(0,y) &\geq &1 - \frac{1}{e} \geq 0.63  &\forall y\in [0,10]\enspace,\text{ and}\\
f_1(x,0) &= &1 - \frac{1}{e} \geq 0.63  &\forall x\in [0,1]\enspace.
\end{array}
\endgroup
\end{equation*}
Consequently, there is no minimizer $(x,y)$ for~\eqref{eq:optProbFor2PLBApp} with either $x=0$ or $y=0$.
Moreover, for $x=1$ and $y\in [0,10]$ we obtain
\begin{align*}
f_1(1,y) &= 1 - \frac{1}{e} - y \left(1 - \frac{2}{e}\right)\enspace,\text{ and} \\
f_2(1,y) &= \frac{1}{2}\left(1 - \frac{1}{e} +y\right)\enspace.
\end{align*}
The minimizer of $\max\{f_1(1,y), f_2(1,y)\}$ for $y\in \mathbb{R}$ is achieved for $\overline{y}$ such that $f_1(1,\overline{y})=f_2(1,\overline{y})$, which is
\begin{equation*}
\overline{y} = \frac{1-e}{4-3e}\enspace.
\end{equation*}
However, when setting $x=1$ and $y=\overline{y}$, the smallest value that $z$ can take in~\eqref{eq:optProbFor2PLBApp} is $f_1(1,\overline{y}) = f_2(1,\overline{y}) \geq 5.2$, which is larger than the value we obtain with $(x^*, y^*)$. Finally, there is also no minimizer of~\eqref{eq:optProbFor2PLBApp} with $y=10$. Indeed, in this case, the objective value of~\eqref{eq:optProbFor2PLBApp} must be at least
\begin{align*}
f_2(x,10) &= \frac{1}{2}  + \frac{9}{2} e^{-1} + 5 (x\ln x + (1-e^{-1})x)\\
&\geq \frac{1}{2} + \frac{9}{2} e^{-1} - 5 e^{e^{-1}-2}\\
&\geq 1\enspace,
\end{align*}
where the first inequality follows by observing that $x\ln x + (1-e^{-1})x$ is a convex function with minimizer at $e^{e^{-1} -2}$; plugging in this minimizer leads to the inequality.

Hence, the minimizer $(x^*, y^*)$ of~\eqref{eq:optProbFor2PLBApp} satisfies $x^*\in (0,1)$ and $y^*\in (0,10)$.

Consequently, it suffices to write the necessary Karush-Kuhn-Tucker conditions for an optimal solution with respect to the two constraints corresponding to $f_1$ and $f_2$. Thus, an optimal solution $(x^*, y^*)$ to~\eqref{eq:optProbFor2PLBApp} must satisfy that there are two multipliers $\lambda_1, \lambda_2 \in \mathbb{R}_{\geq 0}$ such that
\begin{align*}
\lambda_1 + \lambda_2 &= 1  &&\text{(constraint for variable $z$)}\\
\lambda_1 \cdot \nabla_{x} f_1(x^*, y^*) + \lambda_2 \cdot \nabla_{x} f_2(x^*, y^*) &= 0 &&\text{(constraint for variable $x$)}\\
\lambda_1 \cdot \nabla_{y} f_1(x^*, y^*) + \lambda_2 \cdot \nabla_{y} f_2(x^*, y^*) &= 0 &&\text{(constraint for variable $y$)}\enspace.\\
\end{align*}

Evaluating the above derivatives leads to the following system of equations.
\begin{align}
\lambda_1 + \lambda_2 &= 1 \label{eq:kkt_z}\\
-\lambda_1\left(1 - 2 e^{-1} - e^{-1}\ln x^*\right)
-\frac{1}{2}\lambda_2 \left( e^{-1} -2 - \ln x^* \right) &=0\label{eq:kkt_x}\\
-\lambda_1 \left((1-e^{-1}) x^* - e^{-1} - e^{-1}x^*\ln x^*\right)
+\frac{1}{2} \lambda_2 \left( e^{-1} + x^* \ln x^* + (1-e^{-1}) x^* \right) &= 0\label{eq:kkt_y}
\end{align}
To derive~\eqref{eq:kkt_x}, we used the fact that $y^* >0$, which allowed us to divide both right-hand side and left-hand side by $y^*$.

Multiplying~\eqref{eq:kkt_x} by $-x^*$ and adding it to~\eqref{eq:kkt_y} leads to the equation
\begin{align}
\lambda_1 e^{-1} \left( x^* - 1 \right) &= \frac{1}{2}\lambda_2 (e^{-1} - x^*)\enspace,\notag
\intertext{which implies}
\lambda_1 &= \frac{\lambda_2 (1 - e x^*)}{2(x^*-1)}\label{eq:lam1lam2Rel}\enspace.
\end{align}
Finally, we use~\eqref{eq:lam1lam2Rel} to substitute $\lambda_1$ in~\eqref{eq:kkt_y}, and simplify the expression (by multiplying by $2(x^*-1)/\lambda_2$, expanding and then dividing by $(1/e - 1)x$) to obtain
\begin{equation}\label{eq:optCondOnX}
\ln x^* + 3 - (e+1) x^* = 0\enspace,
\end{equation}
as desired. Hence, any optimal solution $(x^*,y^*)$ to~\eqref{eq:optProbFor2PLBApp} must satisfy~\eqref{eq:optCondOnX} due to necessity of the Karush-Kuhn-Tucker conditions.

It remains to observe that~\eqref{eq:optCondOnX} has two solutions. One has $x$-value below $0.07$, and clearly does not lead to a minimizer because $f_1(x,y) \geq 1 - e^{-1}$ for any $x\leq 0.07$ and $y\in \mathbb{R}_{\geq 0}$. Thus, the $x^*$-value of the minimizer is unique and corresponds to the second solution of~\eqref{eq:optCondOnX}, which is $x^* \approx 0.7175647$, i.e., the value stated at the beginning of the proof. Finally, the minimizing value $y^*$ for $y$ can be obtained by computing the unique minimizer of $\max\{f_1(x^*,y), f_2(x^*, y)\}$, which is a maximum of two linear function, one with strictly positive and one with strictly negative derivative. Hence, the unique minimizer $y^*$ is the $y$-value that solves $f_1(x^*,y) = f_2(x^*, y)$, which leads to the expression for $y^*$ highlighted at the beginning of the proof.
\end{proof}

\end{document}